\documentclass[11pt]{amsbook}

\usepackage{amsmath}
\usepackage{amssymb}
\usepackage{amsthm}
\usepackage{enumerate}
\usepackage{epsf}
\usepackage{psfrag}
\usepackage{graphics}
\usepackage{mathrsfs}
\usepackage{hyperref}

\newcommand{\arctanh}{\operatorname{arctanh}}

\newcommand{\bbR}{\mathbb{R}}

\newcommand{\bbN}{\mathbb{N}}

\newcommand{\cL}{\mathcal{L}}

\newcommand{\contr}{\lrcorner \,}

\newcommand{\diag}{\operatorname{diag}}
\newcommand{\grad}{\operatorname{grad}}
\newcommand{\curl}{\operatorname{curl}}
\newcommand{\dive}{\operatorname{div}}

\newcommand{\tr}{\operatorname{tr}}

\newcommand{\cof}{\operatorname{cof}}

\newcommand{\inte}{\operatorname{int}}

\newtheorem{Thm}{Theorem}[section]
\newtheorem{Prop}[Thm]{Proposition}
\newtheorem{Lemma}[Thm]{Lemma}

\newtheorem{Cor}[Thm]{Corollary}

\newtheorem{Def}[Thm]{Definition}

\theoremstyle{definition}

\newtheorem{Example}[Thm]{Example}

\newtheorem{Remark}[Thm]{Remark}

\makeindex

\begin{document}
\begin{titlepage}

\vspace*{5cm}

\begin{center}
{\Huge \bf Mathematical Relativity} 
\end{center}

\vspace{2.5cm}

\begin{center}

{\normalsize \Large \bf Jos\'e Nat\'ario}\\

\vspace{4.5cm}
Lisbon, 2020

\end{center}

\end{titlepage}

\tableofcontents


\chapter*{Preface}

These lecture notes were written for a one-semester course in mathematical relativity aimed at mathematics and physics students, which has been taught at Instituto Superior T\'ecnico (Universidade de Lisboa) since 2010. They are not meant as an introduction to general relativity, but rather as a complementary, more advanced text, much like Part II of Wald's textbook \cite{W84}, on which they are loosely based. It is assumed that the reader is familiar at least with special relativity, and has taken a course either in Riemannian geometry (typically the mathematics students) or in general relativity (typically the physics students). In other words, the reader is expected to be proficient in (some version of) differential geometry and to be acquainted with the basic principles of relativity. 

I thank the many colleagues and students who read this text, or parts of it, for their valuable comments and suggestions. Special thanks are due to my colleague and friend Pedro Gir\~ao.

\chapter{Preliminaries} \label{chapter1}

In this initial chapter we give a very short introduction to special and general relativity for mathematicians. In particular, we relate the index-free differential geometry notation used in Mathematics (e.g.~\cite{ONeill83, Carmo93, Boothby03, GN14}) to the index notation used in Physics (e.g.~\cite{MTW73, W84, HE95}). As an exercise in index gymnastics, we derive the contracted Bianchi identities.

\section{Special relativity} \label{sec1.1}

Consider an inertial frame $S'$ moving with velocity $v$ with respect to another inertial frame $S$ along their common $x$-axis (Figure~\ref{inercial}). According to classical mechanics, coordinate $x'$ of a point $P$ on the frame $S'$ is related to its $x$ coordinate on the frame $S$ by
\[
x' = x - vt.
\]
Moreover, a clock in $S'$ initially synchronized with a clock in $S$ is assumed to keep the same time:
\[
t' = t.
\]
Thus the spacetime coordinates of events are related by a so-called {\bf Galileo transformation}
\[
\begin{cases}
x' = x - vt \\
t' = t
\end{cases}.
\]

\begin{figure}[h!]
\begin{center}
\psfrag{S}{$S$}
\psfrag{S'}{$S'$}
\psfrag{y}{$y$}
\psfrag{y'}{$y'$}
\psfrag{x'}{$x'$}
\psfrag{vt}{$vt$}
\psfrag{P}{$P$}
\epsfxsize=.6\textwidth
\leavevmode
\epsfbox{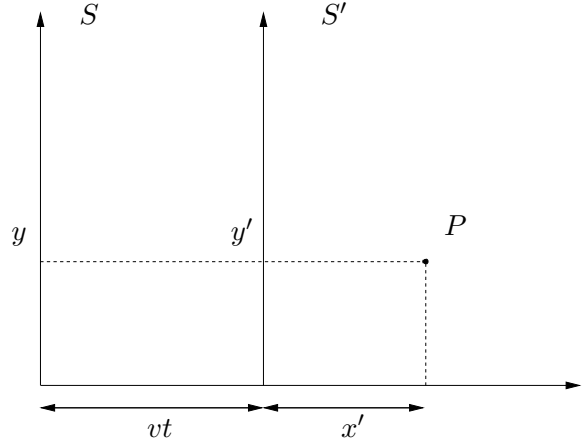}
\end{center}
\caption{Galileo transformation.} \label{inercial}
\end{figure}

If the point $P$ is moving, its velocity in $S'$ is related to its velocity in $S$ by
\[
\frac{dx'}{dt'} = \frac{dx - vdt}{dt} = \frac{dx}{dt} - v.
\]
This is in conflict with the experimental fact that the speed of light is the same in every inertial frame, indicating that classical mechanics is not correct. Einstein solved this problem in 1905 by replacing the Galileo transformation by the so-called {\bf Lorentz transformation}:
\[
\begin{cases}
x' = \gamma(x - vt) \\
t' = \gamma(t - vx)
\end{cases}.
\]
Here
\[
\gamma = \frac{1}{\sqrt{1-v^2}},
\]
and we are using units such that the speed of light is $c=1$ (for example measuring time in years and distance in light-years). Note that if $|v|$ is much smaller than the speed of light, $|v| \ll 1$, then $\gamma \simeq 1$, and we retrieve the Galileo transformation (assuming $\left|v\frac{x}{t}\right| \ll 1$).

Under the Lorentz transformation velocities transform as
\[
\frac{dx'}{dt'} = \frac{\gamma(dx - vdt)}{\gamma(dt - vdx)} = \frac{\frac{dx}{dt} - v}{1 - v \frac{dx}{dt}}.
\]
In particular,
\[
\frac{dx}{dt} = 1 \Rightarrow \frac{dx'}{dt'} = \frac{1 - v}{1 - v } = 1,
\]
that is, the speed of light is the same in the two inertial frames.

In 1908, Minkowski noticed that
\[
- (dt')^2 + (dx')^2 = - \gamma^2 (dt - vdx)^2 + \gamma^2 (dx - vdt)^2 = - dt^2 + dx^2,
\]
that is, the Lorentz transformations could be seen as isometries of $\bbR^4$ with the indefinite metric
\[
ds^2 = - dt^2 + dx^2 + dy^2 + dz^2 = - dt \otimes dt + dx \otimes dx + dy \otimes dy + dz \otimes dz.
\]

\begin{Def}
The pseudo-Riemannian manifold $(\bbR^4, ds^2) \equiv (\bbR^4, \langle \cdot, \cdot \rangle)$ is called the {\bf Minkowski spacetime}.
\end{Def}

Note that the set of vectors with zero square form a cone (the so-called {\bf light cone}):
\[
\langle v,v \rangle = 0 \Leftrightarrow -(v^0)^2 + (v^1)^2 + (v^2)^2 + (v^3)^2 = 0.
\]

\begin{Def}
A vector $v\in\bbR^4$ is said to be:
\begin{enumerate}
\item
{\bf timelike} if $\langle v, v \rangle < 0$;
\item
{\bf spacelike} if $\langle v, v \rangle > 0$;
\item
{\bf lightlike}, or {\bf null}, if $\langle v, v \rangle = 0$.
\item
{\bf causal} if it is timelike or null;
\item
{\bf future-pointing} if it is causal and $\left\langle v, \frac{\partial}{\partial t} \right\rangle < 0$.
\end{enumerate}
The same classification applies to (smooth) curves $c:[a,b] \to \bbR^4$ according to its tangent vector.
\end{Def}

\begin{figure}[h!]
\begin{center}
\psfrag{p}{$p$}
\psfrag{null}{null vector}
\psfrag{timelike future-pointing}{timelike future-pointing vector}
\psfrag{spacelike}{spacelike vector}
\psfrag{dt}{$\frac{\partial}{\partial t}$}
\psfrag{dx}{$\frac{\partial}{\partial x}$}
\psfrag{dy}{$\frac{\partial}{\partial y}$}
\epsfxsize=.8\textwidth
\leavevmode
\epsfbox{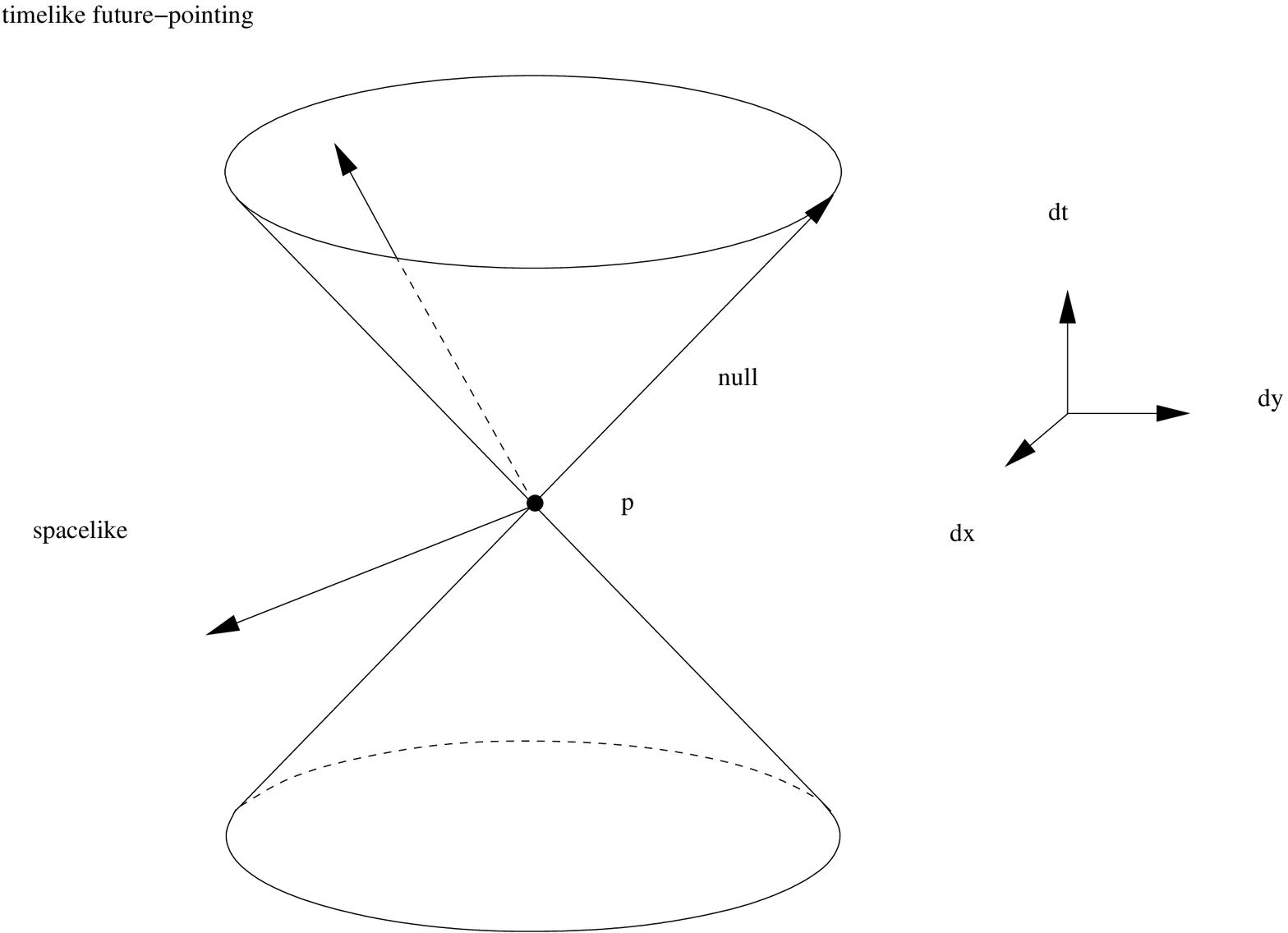}
\end{center}
\caption{Minkowski geometry (traditionally represented with the $t$-axis pointing upwards).}
\end{figure}

The length $|\langle v, v \rangle|^\frac12$ of a timelike (resp.~spacelike) vector $v \in \bbR^4$ represents the time (resp.~distance) measured between two events $p$ and $p + v$ in the inertial frame where these events happen in the same location (resp.~are simultaneous). If $c:[a,b] \to \bbR^4$ is a timelike curve then its length
\[
\tau(c) = \int_a^b |\langle \dot{c}(s), \dot{c}(s) \rangle|^\frac12 ds
\]
represents the {\bf proper time} measured by the particle between events $c(a)$ and $c(b)$. We have:

\begin{Prop} {\em (Twin paradox)}
Of all timelike curves connecting two events $p, q \in \bbR^4$, the curve with {\bf maximal} length is the line segment (representing inertial motion).
\end{Prop}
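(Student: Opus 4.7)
The plan is to reduce to a convenient inertial frame using the isometry group of Minkowski spacetime, and then bound the length integrand pointwise by a total derivative. Since spacetime translations and Lorentz transformations are isometries of $(\bbR^4, ds^2)$, they preserve the length functional $\tau$, so we may assume without loss of generality that $p$ is the origin and $q=(T,0,0,0)$ lies on the positive $t$-axis.

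The preliminary issue is to justify this reduction, i.e., to verify that $q-p$ is itself timelike so that a Lorentz transformation can align it with $\partial/\partial t$. Write any timelike curve as $c(s)=(t(s),\vec{x}(s))$. The timelike condition $\dot{t}^2 > |\dot{\vec{x}}|^2 \ge 0$ forces $\dot{t}\neq 0$, so by continuity $\dot{t}$ has constant sign, and after possibly reversing the parametrization we may take $\dot{t}>0$. The triangle inequality then gives
\[
|\vec{x}(b)-\vec{x}(a)| \;\le\; \int_a^b |\dot{\vec{x}}|\, ds \;<\; \int_a^b \dot{t}\, ds \;=\; t(b)-t(a),
\]
so $q-p$ is timelike (and future-pointing), confirming the reduction.

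With $p=0$ and $q=(T,0,0,0)$, $T>0$, let $c:[a,b]\to\bbR^4$ be any timelike curve from $p$ to $q$, parametrized so that $\dot{t}>0$. Then
\[
\tau(c) \;=\; \int_a^b \sqrt{\dot{t}^2 - |\dot{\vec{x}}|^2}\, ds \;\le\; \int_a^b \dot{t}\, ds \;=\; t(b)-t(a) \;=\; T,
\]
and the line segment $c_0(s)=(s,0,0,0)$, $s\in[0,T]$, attains $\tau(c_0)=T$. Moreover, equality in the pointwise bound $\sqrt{\dot{t}^2-|\dot{\vec{x}}|^2}\le \dot{t}$ holds if and only if $\dot{\vec{x}}\equiv 0$, i.e., $\vec{x}$ is constant, so the segment is the unique maximizer up to reparametrization.

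The only subtle step is the preliminary one: confirming that the mere existence of a timelike curve from $p$ to $q$ forces $q-p$ to be timelike, which is what legitimizes the choice of frame in which $q$ lies on the time axis. Once this is in place, the rest is a one-line application of $\sqrt{\dot{t}^2-|\dot{\vec{x}}|^2}\le \dot{t}$ followed by the fundamental theorem of calculus.
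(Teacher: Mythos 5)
Your proof is correct and follows essentially the same route as the paper's: reduce via an isometry to a frame where $p$ is the origin and $q=(T,0,0,0)$, then bound the length integrand pointwise by $\dot{t}$ and integrate. The only addition is your careful verification that the existence of a timelike connecting curve forces $q-p$ to be timelike (legitimizing the choice of frame), a point the paper's proof passes over with ``we may assume''; you also make the equality case explicit.
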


\begin{proof}
We may assume $p=(0,0,0,0)$ and $q=(T,0,0,0)$ on some inertial frame, and parameterize any timelike curve connecting $p$ to $q$ by the time coordinate:
\[
c(t)=(t,x(t),y(t),z(t)).
\]
Therefore
\[
\tau(c) = \int_0^T \left|-1+\dot{x}^2+\dot{y}^2+\dot{z}^2\right|^\frac12 dt = \int_0^T \left(1-\dot{x}^2-\dot{y}^2-\dot{z}^2\right)^\frac12 dt \leq \int_0^T 1 dt = T.
\]
\end{proof}

Most problems in special relativity can be recast as questions about the geometry of the Minkowski spacetime.

\begin{Prop} {\em (Doppler effect)}
An observer moving with velocity $v$ away from a source of light of period $T$ measures the period to be
\[
T' = T \sqrt{\frac{1+v}{1-v}} .
\]
\end{Prop}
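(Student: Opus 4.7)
The plan is to work entirely in the rest frame of the source, set up the two reception events on the observer's worldline, and then recover $T'$ as the length of the observer's worldline segment between them (exactly in the spirit of the twin paradox computation).

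Concretely, I would place the source at the spatial origin and align the observer's motion with the $x$-axis, so the observer's worldline is $c(t) = (t, d+vt, 0, 0)$ for some initial separation $d>0$. The source emits two consecutive wavefronts, one at coordinate time $0$ and the next at coordinate time $T$, each traveling outward along the null line $x=t-t_{\text{emit}}$. I would solve for the reception times $t_1,t_2$ by intersecting these null lines with the observer's worldline: $t_1 = d+vt_1$ gives $t_1 = d/(1-v)$, and $t_2 - T = d + vt_2$ gives $t_2 = (d+T)/(1-v)$. Hence the coordinate-time interval between the two receptions in the source frame is
\[
\Delta t \;=\; t_2 - t_1 \;=\; \frac{T}{1-v}.
\]

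The observed period $T'$ is, by definition, the proper time the observer measures between the two reception events, i.e.\ the length of the timelike segment of $c$ joining $c(t_1)$ and $c(t_2)$. Since $\dot c = (1,v,0,0)$ is constant and $\langle \dot c, \dot c\rangle = -1+v^2$, the length formula from the preceding discussion gives
\[
T' \;=\; \int_{t_1}^{t_2} \sqrt{1-v^2}\,dt \;=\; \sqrt{1-v^2}\,\Delta t \;=\; \frac{\sqrt{(1-v)(1+v)}}{1-v}\,T \;=\; T\sqrt{\frac{1+v}{1-v}},
\]
as claimed.

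There is essentially no obstacle here; the only conceptual point to be careful about is distinguishing the coordinate-time gap $\Delta t$ between the two reception events (which gives a naive factor $1/(1-v)$ and would be a purely classical Doppler shift from the receding observer) from the proper-time gap actually measured by the moving observer, which introduces the extra $\sqrt{1-v^2}$ time-dilation factor and produces the relativistic square-root expression. Framing the argument as a length computation on a specific timelike curve, as in the proof of the twin paradox above, makes this distinction automatic.
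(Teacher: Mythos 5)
Your proof is correct and follows essentially the same route as the paper: intersect the two null emission lines with the moving observer's worldline in the source's rest frame, then take the proper time (Minkowski length) of the observer's worldline segment between the two reception events. The only cosmetic differences are your offset $d$ in the observer's initial position and your writing the proper time as $\sqrt{1-v^2}\,\Delta t$ rather than $\sqrt{(\Delta t)^2-(\Delta x)^2}$, which are the same thing since $\Delta x = v\,\Delta t$.
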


\begin{proof}
Figure~\ref{Doppler} represents two light signals emitted by an observer at rest at $x=0$ with a time difference $T$. These signals are detected by an observer moving with velocity $v$, who measures a time difference $T'$ between them. Now, if the first signal is emitted at $t=t_0$, its history is the line $t = t_0 + x$. Consequently, the moving observer detects the signal at the event with coordinates
\[
\begin{cases}
t = t_0 + x \\ \\
\displaystyle x = vt
\end{cases}
\Leftrightarrow
\begin{cases}
\displaystyle t = \frac{t_0}{1-v} \\
\\
\displaystyle x = \frac{vt_0}{1-v}
\end{cases}.
\]

\begin{figure}[h!]
\begin{center}
\psfrag{t}{$t$}
\psfrag{x}{$x$}
\psfrag{x=vt}{$x=vt$}
\psfrag{T}{$T$}
\psfrag{T'}{$T'$}
\epsfxsize=.5\textwidth
\leavevmode
\epsfbox{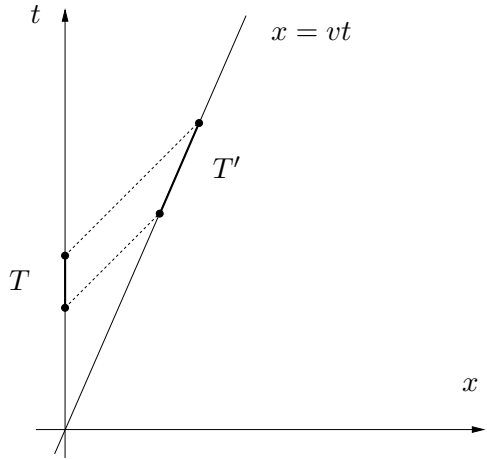}
\end{center}
\caption{Doppler effect.} \label{Doppler}
\end{figure}

Similarly, the second light signal is emitted at $t=t_0 + T$, its history is the line $t = t_0 + T + x$, and it is detected by the moving observer at the event with coordinates
\[
\begin{cases}
\displaystyle t = \frac{t_0 + T}{1-v} \\
\\
\displaystyle x = \frac{v (t_0 + T)}{1-v}
\end{cases}.
\]
Therefore the time difference between the signals as measured by the moving observer is
\begin{align*}
\hspace{2cm} T' & = \sqrt{\left(\frac{t_0 + T}{1-v} - \frac{t_0}{1-v} \right)^2 - \left(\frac{v (t_0 + T)}{1-v} - \frac{v t_0}{1-v} \right)^2} \\
& = \sqrt{\frac{T^2}{(1-v)^2} - \frac{v^2 T^2}{(1-v)^2}} = T \sqrt{\frac{1-v^2}{(1-v)^2}} = T \sqrt{\frac{1+v}{1-v}}.
\end{align*}
\end{proof}

In particular, two observers at rest in an inertial frame measure the same frequency for a light signal (Figure~\ref{desvio}). However, because the gravitational field couples to all forms of energy (as $E=mc^2$), one expects that a photon climbing in a gravitational field to lose energy, hence frequency. In 1912, Einstein realized that this could be modelled by considering curved spacetime geometries, so that equal line segments in a (flat) spacetime diagram do not necessarily correspond to the same length.

\begin{figure}[h!]
\begin{center}
\psfrag{t}{$t$}
\psfrag{x}{$x$}
\psfrag{T}{$T$}
\psfrag{T'}{$T'=T$}
\epsfxsize=.4\textwidth
\leavevmode
\epsfbox{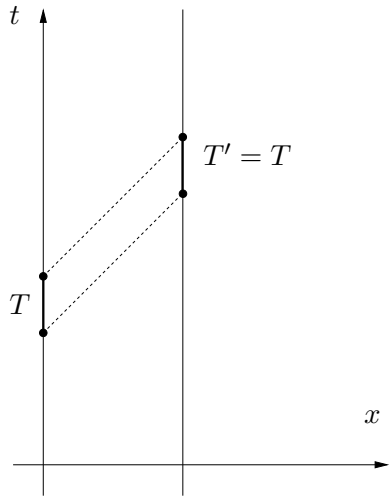}
\end{center}
\caption{Minkowski geometry is incompatible with the gravitational redshift.} \label{desvio}
\end{figure}

\section{Differential geometry: Mathematicians vs physicists} \label{sec1.2}

Einstein's idea to incorporate gravitation into relativity was to replace the Minkowski spacetime $(\bbR^4, \langle \cdot, \cdot \rangle)$ by a curved four-dimensional {\bf Lorentzian manifold} $(M,g)\equiv(M, \langle \cdot, \cdot \rangle)$. Here $g$ is a {\bf Lorentzian metric}, that is, a symmetric $2$-tensor field such that at each tangent space $g = \diag(-1,1,1,1)$ in an appropriate basis. Just like in Riemannian geometry, $g$ determines a {\bf Levi-Civita connection}, the unique connection $\nabla$ which is symmetric and compatible with $g$:
\begin{align*}
& \nabla_X Y - \nabla_Y X = [X,Y]; \\
& X\cdot\langle Y, Z \rangle = \langle \nabla_X Y, Z \rangle + \langle Y, \nabla_X Z \rangle,
\end{align*}
for all vector fields $X,Y,Z$. The {\bf curvature} of this connection is then given by the operator
\[
R(X,Y)Z = \nabla_X \nabla_Y Z - \nabla_Y \nabla_X Z - \nabla_{[X,Y]} Z.
\]

The formulas above were written using the abstract notation usually employed by mathematicians. It is often very useful (especially when dealing with contractions) to use the more explicit notation usually adopted by physicists, which emphasizes the indices of the various objects when written in local coordinates:

\begin{table}[h]
\begin{tabular}{ccccc}
{\bf Object} & & {\bf Mathematicians} & & {\bf Physicists} \\
Vector field & & $X$ & & $X^\mu$ \\
Tensor product & & $X \otimes Y$ & & $X^\mu Y^\nu$ \\
Metric & & $g \equiv \langle \cdot, \cdot \rangle$ & & $g_{\mu\nu}$ \\
Inner product & & $g(X,Y) \equiv \langle X, Y \rangle$ & & $g_{\mu\nu} X^\mu Y^\nu$ \\
Associated covector & & $X^\sharp \equiv g(X, \cdot)$ & & $X_\nu \equiv g_{\mu\nu} X^\mu$ \\
Covariant derivative & & $\nabla_X Y$ & & $X^\mu \nabla_\mu Y^\nu$ \\
Covariant derivative tensor & & $\nabla X$ & & $\nabla_\mu X^\nu \equiv \partial_\mu X^\nu + \Gamma^\nu_{\mu\alpha} X^\alpha$
\end{tabular}
\end{table}

Here $\Gamma^\alpha_{\mu\nu}$ are the {\bf Christoffel symbols} of the Levi-Civita connection; they can be computed from the components $g_{\mu\nu}$ of the metric tensor by the formula
\[
\Gamma^\alpha_{\mu\nu} = \frac12 g^{\alpha\beta} \left( \partial_\mu g_{\nu\beta} + \partial_\nu g_{\mu\beta} - \partial_\beta g_{\mu\nu} \right),
\]
and in turn be used to compute the components of the Riemann curvature tensor:
\[
R_{\alpha\beta\,\,\,\,\nu}^{\,\,\,\,\,\,\,\,\mu} = dx^\mu \left( R(\partial_\alpha,\partial_\beta) \partial_\nu \right) = \partial_\alpha \Gamma^\mu_{\beta\nu} - \partial_\beta \Gamma^\mu_{\alpha\nu} + \Gamma^\mu_{\alpha\gamma} \Gamma^\gamma_{\beta\nu} - \Gamma^\mu_{\beta\gamma} \Gamma^\gamma_{\alpha\nu} .
\]

The covariant derivative tensor of a vector field $X$, not always emphasized in differential geometry courses for mathematicians, is simply the $(1,1)$-tensor field defined by
\[
\nabla X (Y) = \nabla_Y X.
\]
Also not always emphasized in differential geometry courses for mathematicians is the fact that any connection can be naturally extended to act on tensor fields (via the Leibnitz rule). For instance, if $\omega$ is a covector field ($1$-form) then one defines
\[
(\nabla_X \omega) (Y) = X \cdot [\omega(Y)] - \omega(\nabla_X Y).
\]
In local coordinates, this is
\begin{align*}
(X^\mu \nabla_\mu \omega_\nu) Y^\nu & = X^\mu \partial_\mu (\omega_\nu Y^\nu) - \omega_\nu (X^\mu \nabla_\mu Y^\nu) \\
& = X^\mu (\partial_\mu \omega_\nu) Y^\nu + X^\mu \omega_\nu \partial_\mu Y^\nu - \omega_\nu (X^\mu \partial_\mu Y^\nu + X^\mu \Gamma_{\mu\alpha}^\nu Y^\alpha) \\
& = (\partial_\mu \omega_\nu - \Gamma_{\mu\nu}^\alpha \omega_\alpha) X^\mu Y^\nu,
\end{align*}
that is,
\[
\nabla_\mu \omega_\nu = \partial_\mu \omega_\nu - \Gamma_{\mu\nu}^\alpha \omega_\alpha.
\]
The generalization for higher rank tensors is obvious: for instance, if $T$ is a $(2,1)$-tensor then
\[
\nabla_\alpha T^\beta_{\mu\nu} = \partial_\alpha T^\beta_{\mu\nu} + \Gamma_{\alpha\gamma}^\beta T^\gamma_{\mu\nu} - \Gamma_{\alpha\mu}^\gamma T^\beta_{\gamma\nu} - \Gamma_{\alpha\nu}^\gamma T^\beta_{\mu\gamma}.
\]
Note that the condition of compatibility of the Levi-Civita connection with the metric is simply 
\[
\nabla g = 0.
\]
In particular, the operations of raising and lowering indices commute with covariant differentiation.

As an exercise in index gymnastics, we will now derive a series of identities involving the Riemann curvature tensor. We start by rewriting its definition in the notation of the physicists:
\begin{align*}
& R_{\alpha\beta\,\,\,\,\nu}^{\,\,\,\,\,\,\,\,\mu} X^\alpha Y^\beta Z^\nu = \\
& = X^\alpha \nabla_\alpha (Y^\beta \nabla_\beta Z^\mu) - Y^\alpha \nabla_\alpha (X^\beta \nabla_\beta Z^\mu) - (X^\alpha \nabla_\alpha Y^\beta - Y^\alpha \nabla_\alpha X^\beta) \nabla_\beta Z^\mu \\
& = (X^\alpha \nabla_\alpha Y^\beta) (\nabla_\beta Z^\mu) + X^\alpha Y^\beta \nabla_\alpha \nabla_\beta Z^\mu - (Y^\alpha \nabla_\alpha X^\beta) (\nabla_\beta Z^\mu) \\
& \,\,\,\, - Y^\alpha X^\beta \nabla_\alpha \nabla_\beta Z^\mu - (X^\alpha \nabla_\alpha Y^\beta) \nabla_\beta Z^\mu + (Y^\alpha \nabla_\alpha X^\beta) \nabla_\beta Z^\mu \\
& = X^\alpha Y^\beta (\nabla_\alpha \nabla_\beta - \nabla_\beta \nabla_\alpha)  Z^\mu.
\end{align*}
In other words,
\[
R_{\alpha\beta\,\,\,\,\nu}^{\,\,\,\,\,\,\,\,\mu} Z^\nu = (\nabla_\alpha \nabla_\beta - \nabla_\beta \nabla_\alpha)  Z^\mu,
\]
or, equivalently, 
\begin{equation} \label{commutator}
2\nabla_{[\alpha} \nabla_{\beta]} Z_\mu = R_{\alpha\beta\mu\nu} Z^\nu,
\end{equation}
where the square brackets indicate anti-symmetrization\footnote{Thus $T_{[\alpha\beta]}=\frac12\left(T_{\alpha\beta}-T_{\beta\alpha}\right)$, $T_{[\alpha\beta\gamma]}=\frac16\left(T_{\alpha\beta\gamma} + T_{\beta\gamma\alpha} + T_{\gamma\alpha\beta} - T_{\beta\alpha\gamma} - T_{\alpha\gamma\beta} - T_{\gamma\beta\alpha}\right)$, etc.}. This is readily generalized for arbitrary tensors: from
\begin{align*}
2\nabla_{[\alpha} \nabla_{\beta]} (Z_\mu W_\nu) & = (2\nabla_{[\alpha} \nabla_{\beta]} Z_\mu) W_\nu + (2\nabla_{[\alpha} \nabla_{\beta]} W_\nu) Z_\mu \\
& = R_{\alpha\beta\mu\sigma} Z^\sigma W_\nu + R_{\alpha\beta\nu\sigma} W^\sigma Z_\mu
\end{align*}
one readily concludes that
\[
2\nabla_{[\alpha} \nabla_{\beta]} T_{\mu\nu} = R_{\alpha\beta\mu\sigma} T^\sigma_{\,\,\,\,\nu} + R_{\alpha\beta\nu\sigma} T_{\mu}^{\,\,\,\,\sigma}.
\]

Let us choose
\[
Z_\mu = \nabla_\mu f \equiv \partial_\mu f
\]
in equation~\eqref{commutator}. We obtain 
\[
R_{[\alpha\beta\mu]\nu} Z^\nu = 2 \nabla_{[[\alpha} \nabla_{\beta]} Z_{\mu]} = 2 \nabla_{[\alpha} \nabla_{[\beta} Z_{\mu]]} = 0,
\]
because
\[
\nabla_{[\mu} Z_{\nu]} = \partial_{[\mu} Z_{\nu]} - \Gamma^\alpha_{[\mu\nu]} Z_\alpha = \partial_{[\mu} \partial_{\nu]} f = 0.
\]
Since we can choose $Z$ arbitrarily at a given point, it follows that
\[
R_{[\alpha\beta\mu]\nu} = 0 \Leftrightarrow R_{\alpha\beta\mu\nu} + R_{\beta\mu\alpha\nu} + R_{\mu\alpha\beta\nu} = 0.
\]
This is the so-called {\bf first Bianchi identity}, and is key for obtaining the full set of symmetries of the Riemann curvature tensor:
\[
R_{\alpha\beta\mu\nu} = - R_{\beta\alpha\mu\nu} = - R_{\alpha\beta\nu\mu} = R_{\mu\nu\alpha\beta}.
\]
In the notation of the mathematicians, it is written as
\[
R(X,Y)Z + R(Y,Z) X + R(Z,X) Y = 0
\]
for all vector fields $X,Y,Z$.

Let us now take the covariant derivative of equation~\eqref{commutator}:
\[
\nabla_\gamma R_{\alpha\beta\mu\nu} Z^\nu + R_{\alpha\beta\mu\nu} \nabla_\gamma Z^\nu = 2 \nabla_\gamma \nabla_{[\alpha} \nabla_{\beta]} Z_\mu.
\]
At any given point we can choose $Z$ such that
\[
\nabla_\gamma Z^\nu \equiv \partial_\gamma Z^\nu + \Gamma^\nu_{\gamma\delta} Z^\delta = 0.
\]
Assuming this, we then obtain\footnote{In the formula below the indices between vertical bars are not anti-symmetrized.}
\begin{align*}
\nabla_{[\gamma} R_{\alpha\beta]\mu\nu} Z^\nu & = 2 \nabla_{[\gamma} \nabla_{[\alpha} \nabla_{\beta]]} Z_\mu = 2 \nabla_{[[\gamma} \nabla_{\alpha]} \nabla_{\beta]} Z_\mu \\
& = R_{[\gamma\alpha\beta]\delta} \nabla^\delta Z_\mu + R_{[\gamma\alpha|\mu\delta|} \nabla_{\beta]} Z^\delta = 0.
\end{align*}
Since we can choose $Z$ arbitrarily at a given point, it follows that
\begin{equation} \label{second}
\nabla_{[\alpha}R_{\beta\gamma]\mu\nu} = 0 \Leftrightarrow \nabla_{\alpha}R_{\beta\gamma\mu\nu} + \nabla_{\beta}R_{\gamma\alpha\mu\nu} + \nabla_{\gamma}R_{\alpha\beta\mu\nu} = 0
\end{equation}
This is the so-called {\bf second Bianchi identity}. In the notation of the mathematicians, it is written as
\[
\nabla R (X,Y,Z,\cdot,\cdot) + \nabla R (Y,Z,X,\cdot,\cdot) + \nabla R (Z,X,Y,\cdot,\cdot) = 0
\]
for all vector fields $X,Y,Z$.

Recall that the Riemann curvature tensor has only one independent contraction, called the {\bf Ricci tensor}:
\[
R_{\mu\nu} = R_{\alpha\mu\,\,\,\,\nu}^{\,\,\,\,\,\,\,\,\alpha}.
\]
The trace of the Ricci tensor, in turn, is known as the {\bf scalar curvature}:
\[
R = g^{\mu\nu} R_{\mu\nu}.
\]
These quantities satisfy the so-called {\bf contracted Bianchi identity}, which is obtained from \eqref{second} by contracting the pairs of indices $(\beta,\mu)$ and $(\gamma,\nu)$:
\[
\nabla_\alpha R - \nabla^\beta R_{\alpha\beta}  - \nabla^\gamma R_{\alpha\gamma} = 0 \Leftrightarrow \nabla^\beta R_{\alpha\beta} - \frac12 \nabla_\alpha R = 0 \Leftrightarrow \nabla^\beta \left(R_{\alpha\beta} - \frac12 R g_{\alpha\beta}\right) = 0.
\]
The contracted Bianchi identity is equivalent to the statement that the {\bf Einstein tensor}
\[
G_{\mu\nu} = R_{\mu\nu} - \frac12 R g_{\mu\nu} 
\] 
is divergenceless:
\[
\nabla^\mu G_{\mu\nu} = 0.
\]

\section{General relativity} \label{sec1.3}

Newtonian gravity is described by a scalar function $\phi$, called the {\bf gravitational potential}. The equation of motion for a free-falling particle of mass $m$ in Cartesian coordinates is
\[
m \frac{d^2x^i}{dt^2} = - m \partial_i \phi \Leftrightarrow \frac{d^2x^i}{dt^2} = - \partial_i \phi.
\]
Note that all free-falling particles describe the same trajectories (an observation dating back to Galileo). The gravitational potential is determined from the matter mass density $\rho$ by the {\bf Poisson equation}
\[
\Delta \phi = 4 \pi \rho
\]
(using units such that Newton's gravitational constant is $G=1$; this choice, together with $c=1$, defines the so-called {\bf geometrized units}, where lengths, time intervals and masses all have the same dimensions).

To implement his idea of describing gravity via a curved four-dimensional Lorentzian manifold $(M,g)$, Einstein had to specify $(i)$ how free-falling particles would move on this manifold, and $(ii)$ how to determine the curved metric $g$. Since free particles move along straight lines in the Minkowski spacetime, Einstein proposed that free falling particles should move along timelike geodesics. In other words, he suggested replacing the Newtonian equation of motion by the geodesic equation
\[
\ddot{x}^\mu + \Gamma^\mu_{\alpha\beta}\dot{x}^\alpha\dot{x}^\beta = 0.
\]
Moreover, Einstein knew that it is possible to define the {\bf energy-momentum tensor} $T_{\mu\nu}$ of the matter content of the Minkowski spacetime, so that the conservation of energy and momentum is equivalent to the vanishing of its divergence:
\[
\nabla^\mu T_{\mu\nu} = 0.
\]
This inspired Einstein to propose that $g$ should satisfy the so-called {\bf Einstein field equations}:
\[
G_{\mu\nu} + \Lambda g_{\mu\nu} = 8 \pi T_{\mu\nu}.
\]
Here $\Lambda$ is a constant, known as the {\bf cosmological constant}. Note that the Einstein field equations imply, via the contracted Bianchi identity, that the energy-momentum tensor is divergenceless.

As a simple example, we consider a pressureless perfect fluid, known as {\bf dust}. Its energy-momentum tensor is
\[
T_{\mu\nu} = \rho U_\mu U_\nu,
\]
where $\rho$ is the dust rest density and $U$ is a unit timelike vector field tangent to the histories of the dust particles. The equations of motion for the dust can be found from
\begin{align*}
\nabla^\mu T_{\mu\nu} = 0 & \Leftrightarrow \left[ \nabla^\mu (\rho U_\mu) \right] U_ \nu + \rho U_\mu \nabla^\mu U_ \nu = 0 \\
& \Leftrightarrow \dive (\rho U) U + \rho \nabla_U U = 0.
\end{align*}
Since $U$ and $\nabla_UU$ are orthogonal (because $\langle U, U \rangle = -1$), we find
\[
\begin{cases}
\dive (\rho U) = 0 \\
\nabla_U U = 0
\end{cases}
\]
in the support of $\rho$. These are, respectively, the equation of conservation of mass and the geodesic equation. Thus the fact that free-falling particles move along geodesics can be seen as a consequence of the Einstein field equations (at least in this model).

\section{Exercises} \label{sec1.4}

\begin{enumerate}

\item
{\bf Twin paradox:} Two twins, Alice and Bob, are separated on their $20^\text{th}$ birthday. While Alice remains on Earth (which is an inertial frame to a very good approximation), Bob departs at $80\%$ of the speed of light towards Planet X, $8$ light-years away from Earth. Therefore Bob reaches his destination $10$ years later (as measured on the Earth's frame). After a short stay, he returns to Earth, again at $80\%$ of the speed of light. Consequently Alice is $40$ years old when she sees Bob again.
\begin{enumerate}
\item
How old is Bob when they meet again?
\item
How can the asymmetry in the twins' ages be explained? Notice that from Bob's point of view he is at rest in his spaceship and it is the Earth which moves away and then back again.
\item
Imagine that each twin watches the other trough a very powerful telescope. What do they see? In particular, how much time do they experience as they see one year elapse for their twin?
\end{enumerate}

\item
A particularly simple matter model is that of a smooth {\bf massless scalar field} $\phi:M\to \bbR$, whose energy-momentum tensor is
\[
T_{\mu\nu} = \partial_\mu \phi \, \partial_\nu \phi - \frac12 (\partial_\alpha \phi \, \partial^\alpha \phi) g_{\mu\nu}.
\]
Show that if the Lorentzian manifold $(M,g)$ satisfies the Einstein equations with this matter model then $\phi$ satisfies the {\bf wave equation}
\[
\Box \, \phi = 0 \Leftrightarrow \nabla^\mu \partial_\mu \phi = 0.
\]

\item
The energy-momentum tensor for {\bf perfect fluid} is
\[
T_{\mu\nu} = (\rho+p) U_\mu U_\nu + p g_{\mu\nu},
\]
where $\rho$ is the fluid's rest density, $p$ is the fluid's rest pressure, and $U$ is a unit timelike vector field tangent to the histories of the fluid particles. Show that:
\begin{enumerate}
\item
$\left( T_{\mu\nu} \right) = \diag(\rho,p,p,p)$ in any orthonormal frame including $U$;
\item
the motion equations for the perfect fluid are
\[
\begin{cases}
\dive(\rho U) + p \dive U = 0 \\
(\rho+p) \nabla_U U = - (\grad p)^\perp
\end{cases},
\]
where $^\perp$ represents the orthogonal projection on the spacelike hyperplane orthogonal to $U$.
\end{enumerate}

\end{enumerate}


\chapter{Exact solutions} \label{chapter2}

In this chapter we present a number of exact solutions of the Einstein field equations, as well as their Penrose diagrams. These solutions will be used as examples or counter-examples to the theorems in the subsequent chapters. We also discuss the matching of two different solutions across a timelike hypersurface. A different perspective on Penrose diagrams can be found in \cite{HE95}.

\section{Minkowski spacetime} \label{sec2.1}

The simplest solution of the Einstein field equations with zero cosmological constant in vacuum (i.e.~with vanishing energy-momentum tensor) is the Minkowski spacetime, that is, $\bbR^4$ with the metric
\[
ds^2 = - dt^2 + dx^2 + dy^2 + dz^2.
\]
Since this metric is flat, its curvature vanishes, and so do its Ricci and Einstein tensors. It represents a universe where there is no gravity whatsoever. Transforming the Cartesian coordinates $(x,y,z)$ to spherical coordinates $(r,\theta,\varphi)$ yields
\[
ds^2 = - dt^2 + dr^2 + r^2 \left(d\theta^2 + \sin^2\theta d\varphi^2\right).
\]
Performing the additional change of coordinates
\[
\begin{cases}
u = t - r \qquad \text{(retarded time)} \\
v = t + r \qquad \text{(advanced time)} 
\end{cases}
\]
we obtain
\[
ds^2 = - du \, dv + r^2 \left(d\theta^2 + \sin^2\theta d\varphi^2\right),
\]
where
\[
r(u,v) = \frac12 (v-u).
\]
The coordinates $(u,v)$ are called {\bf null coordinates}: their level sets are null cones formed by outgoing/ingoing null geodesics emanating from the center. Note that they are subject to the constraint
\[
r \geq 0 \Leftrightarrow v \geq u.
\]
Finally, the coordinate change
\begin{equation} \label{tanh}
\begin{cases}
\tilde{u} = \tanh u \\
\tilde{v} = \tanh v 
\end{cases}
\Leftrightarrow
\begin{cases}
u = \arctanh \tilde{u} \\
v = \arctanh \tilde{v}
\end{cases}
\end{equation}
brings the metric into the form
\[
ds^2 = - \frac{1}{\left(1-\tilde{u}^2\right)\left(1-\tilde{v}^2\right)} d\tilde{u} \, d\tilde{v} + r^2 \left(d\theta^2 + \sin^2\theta d\varphi^2\right),
\]
where now
\[
r\left(\tilde{u},\tilde{v}\right) = \frac12 (\arctanh \tilde{v}-\arctanh \tilde{u})
\]
and
\begin{equation} \label{range}
-1 < \tilde{u} \leq \tilde{v} < 1.
\end{equation}
Because $(\tilde{u},\tilde{v})$ are also null coordinates, it is common to represent their axes tilted by $45^\circ$. The plane region defined by \eqref{range} is then represented in Figure~\ref{uv}.

\begin{figure}[h!]
\begin{center}
\psfrag{u}{$\tilde{u}$}
\psfrag{v}{$\tilde{v}$}
\epsfxsize=.4\textwidth
\leavevmode
\epsfbox{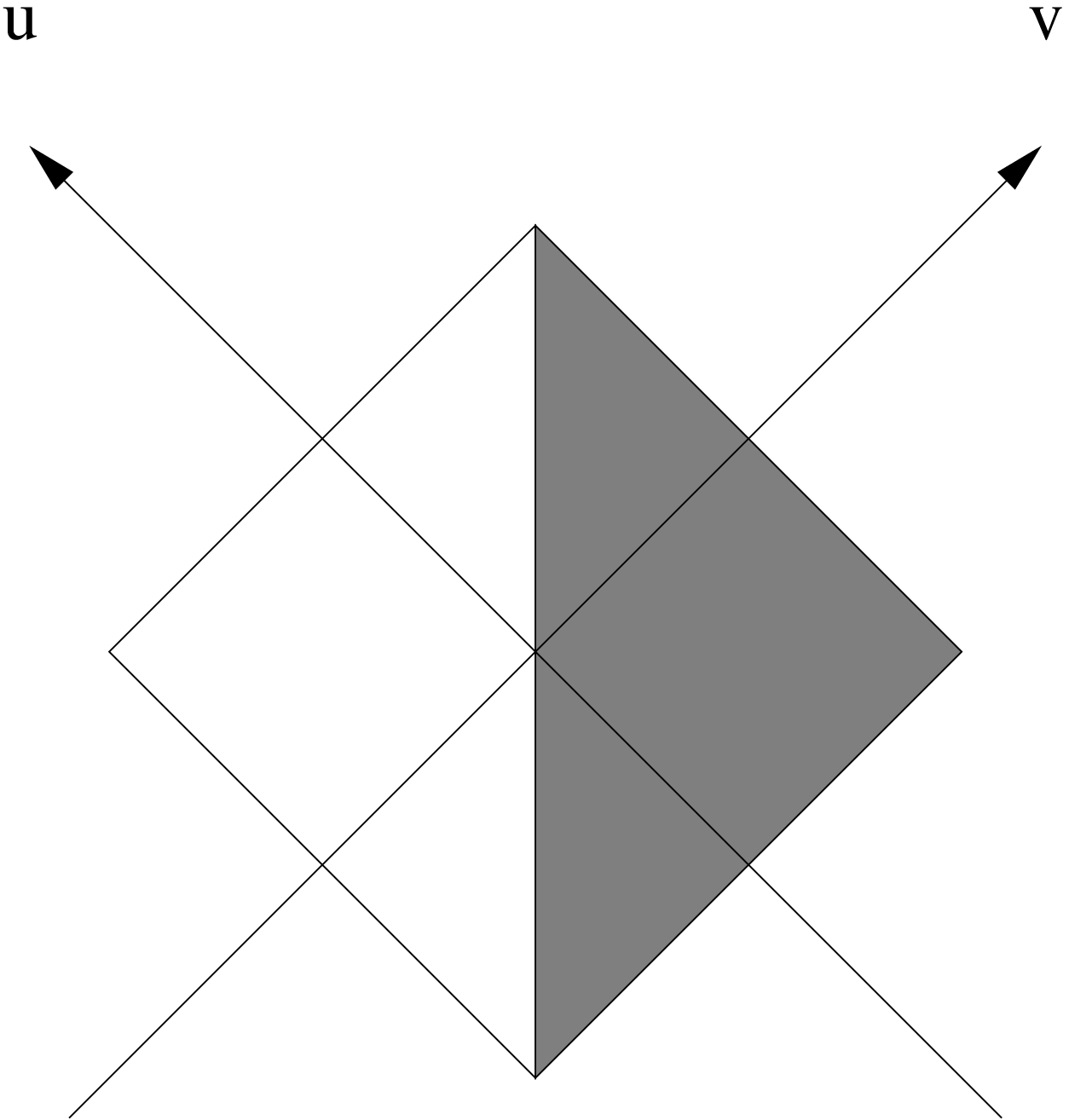}
\end{center}
\caption{Range of the coordinates $(\tilde{u},\tilde{v})$.} \label{uv}
\end{figure}

This region is usually called the {\bf Penrose diagram} for the Minkowski spacetime. If we take each point in the diagram to represent a sphere $S^2$ of radius $r\left(\tilde{u},\tilde{v}\right)$, the diagram itself represents the full spacetime manifold, in a way that makes causality relations apparent: any causal curve is represented in the diagram by a curve with tangent at most $45^\circ$ from the vertical. In Figure~\ref{Pen_Mink} we represent some level hypersurfaces of $t$ and $r$ in the Penrose diagram. The former approach the point $i^0$ in the boundary of the diagram, called the {\bf spacelike infinity}, whereas the later go from the boundary point $i^-$ ({\bf past timelike infinity}) to the boundary point  $i^+$ ({\bf future timelike infinity}). Finally, null geodesics start at the null boundary line $\mathscr{I^-}$ ({\bf past null infinity}) and end at the null boundary line $\mathscr{I^+}$ ({\bf future null infinity}). These boundary points and lines represent ideal points at infinity, and do not correspond to actual points in the Minkowski spacetime.

\begin{figure}[h!]
\begin{center}
\psfrag{i+}{$i^+$}
\psfrag{i0}{$i^0$}
\psfrag{i-}{$i^-$}
\psfrag{r=0}{$r=0$}
\psfrag{I+}{$\mathscr{I^+}$}
\psfrag{I-}{$\mathscr{I^-}$}
\epsfxsize=.4\textwidth
\leavevmode
\epsfbox{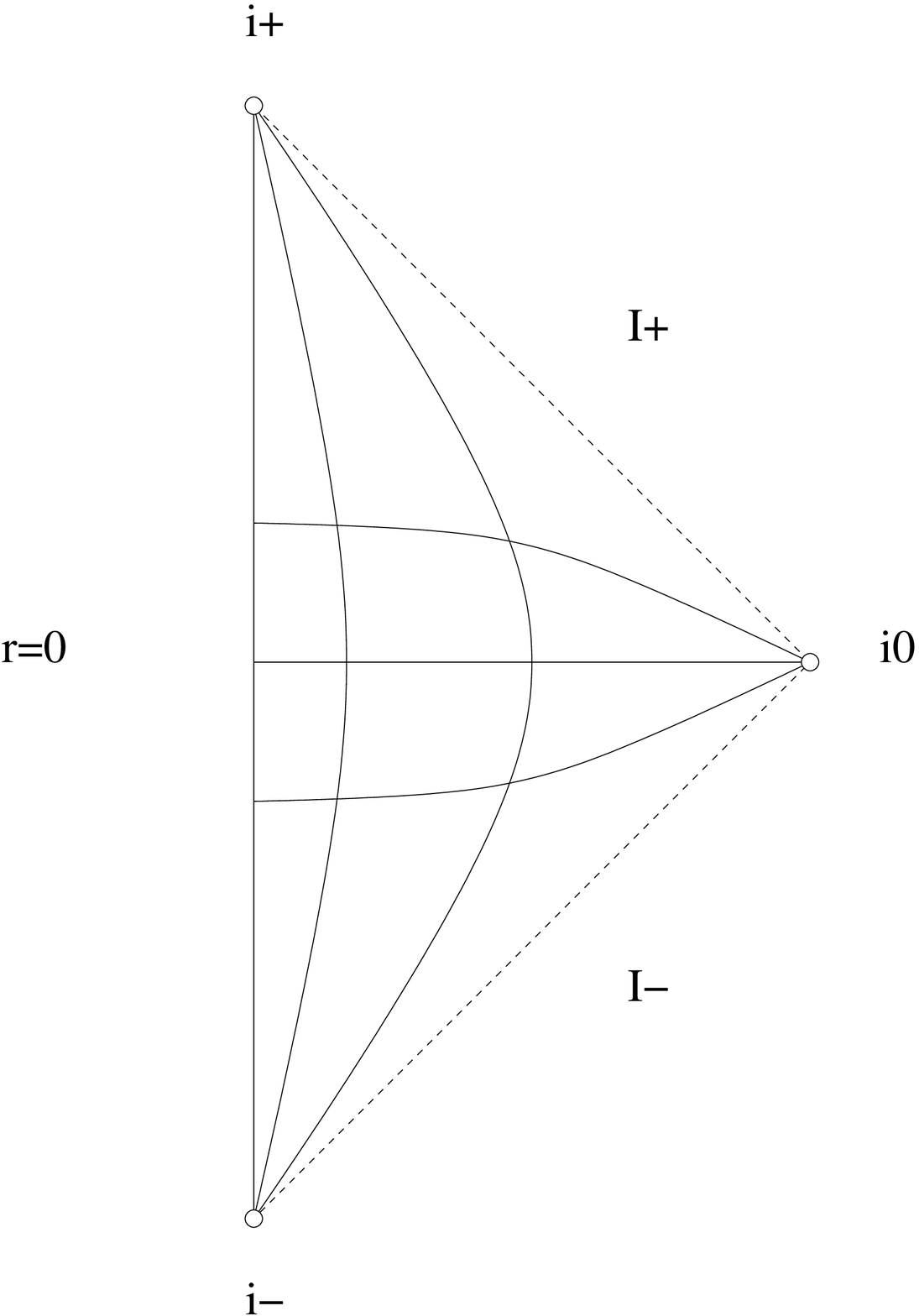}
\end{center}
\caption{Penrose diagram for the Minkowski spacetime, with some level hypersurfaces of $t$ and $r$ represented.} \label{Pen_Mink}
\end{figure}

\section{Penrose diagrams} \label{sec2.2}

The concept of Penrose diagram can be easily generalized for any spherically symmetric space-time. Such spacetimes have metric
\[
ds^2 = g_{AB} dx^A dx^B + r^2 \left(d\theta^2 + \sin^2\theta d\varphi^2\right), \qquad r=r(x^0,x^1),
\]
where $g_{AB} dx^A dx^B$ is a Lorentzian metric on a $2$-dimensional quotient manifold with boundary (which we assume to be diffeomorphic to a region of the plane). It turns out that any such metric is conformal to the Minkowski metric:
\begin{equation} \label{confmet}
g_{AB} dx^A dx^B = - \Omega^2 du \, dv, \qquad \Omega=\Omega(u,v).
\end{equation}
This can be seen locally as follows: choose a spacelike line $S$, a coordinate $u$ along it, and a coordinate $w$ along a family of null geodesics emanating from $S$, so that $S$ corresponds to $w=0$ (Figure~\ref{conformal}). Then near $S$ we have
\[
g_{uu} = \left\langle \frac{\partial}{\partial u}, \frac{\partial}{\partial u} \right\rangle > 0
\]
and
\[
g_{ww} = \left\langle \frac{\partial}{\partial w}, \frac{\partial}{\partial w} \right\rangle = 0.
\] 

\begin{figure}[h!]
\begin{center}
\psfrag{u}{$u$}
\psfrag{w}{$w$}
\epsfxsize=.6\textwidth
\leavevmode
\epsfbox{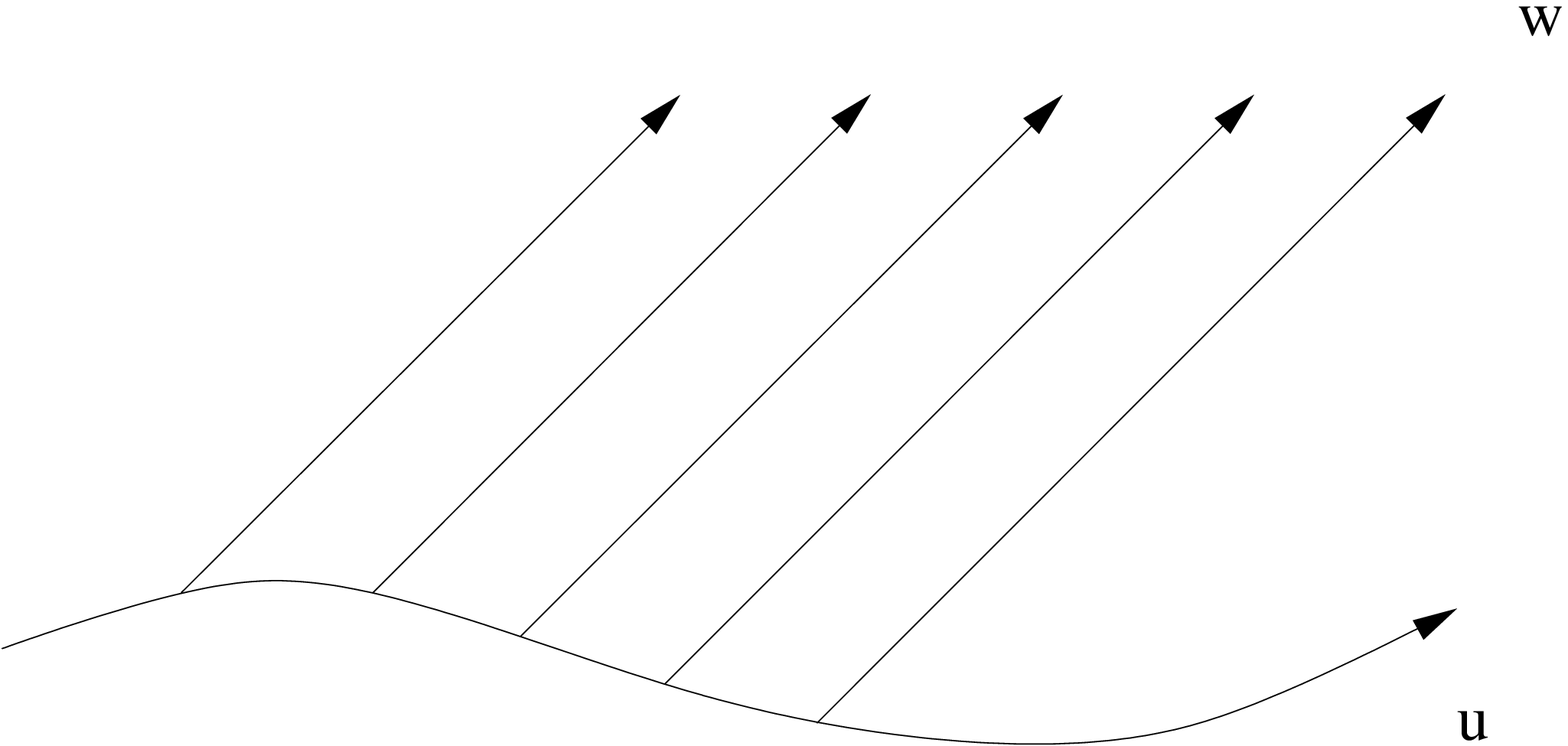}
\end{center}
\caption{Choice of the coordinates $(u,w)$.} \label{conformal}
\end{figure}

Therefore the $2$-dimensional metric is written in these coordinates
\[
g_{AB} dx^A dx^B = g_{uu} du^2 + 2g_{uw} du dw = g_{uu} du \left( du + \frac{2g_{uw}}{g_{uu}} dw \right).
\]
As for any $1$-form in a $2$-dimensional manifold, we have
\[
du + \frac{2g_{uw}}{g_{uu}} dw = f dv
\]
for suitable functions $f$ and $v$. Note that $f$ cannot vanish, because $(u,w)$ are local coordinates. Moreover, we can assume $f<0$ by replacing $v$ with $-v$ if necessary. Choosing $\Omega^2 = -fg_{uu}$ then yields \eqref{confmet}.

We then see that any spherically symmetric metric can be written as
\begin{equation} \label{Penrose_met}
ds^2 = - \Omega^2 du \, dv + r^2 \left(d\theta^2 + \sin^2\theta d\varphi^2\right)
\end{equation}
with $\Omega=\Omega(u,v)$ and $r=r(u,v)$. By rescaling $u$ and $v$ if necessary, we can assume that the range of $(u,v)$ is bounded, and hence obtain a Penrose diagram depicting the causal geometry. As we will see, this is extremely helpful in more complicated spherical symmetric solutions of the Einstein field equations.

\begin{Remark}
From
\[
\left( g_{AB} \right) = 
\left(
\begin{matrix}
0 & -\frac{\Omega^2}2 \\
-\frac{\Omega^2}2 & 0
\end{matrix}
\right)
\Rightarrow
\left( g^{AB} \right) = 
\left(
\begin{matrix}
0 & -\frac2{\Omega^2} \\
-\frac2{\Omega^2} & 0
\end{matrix}
\right)
\]
it is easily seen that
\[
\partial_A \left( \sqrt{ - \det\left(g_{CD}\right)} \, g^{AB} \partial_B u \right) = 0 \Leftrightarrow \nabla_A \nabla^A u = 0.
\]
and similarly for $v$. In other words, the null coordinates $u$ and $v$ are solutions of the wave equation in the $2$-dimensional Lorentzian manifold:
\[
\Box u = \Box v = 0.
\]
This is the Lorentzian analogue of the so-called {\bf isothermal coordinates} for Riemannian surfaces. The proof that the later exist locally is however slightly more complicated: given a point $p$ on the surface, one chooses a local harmonic function with nonvanishing derivative,
\[
\Delta u = 0, \qquad (du)_p \neq 0,
\]
and considers the equation
\begin{equation} \label{dv}
dv = \star du.
\end{equation}
Here $\star$ is the Hodge star, which for generic orientable $n$-dimensional pseudo-Riemannian manifolds is defined as follows: if $\{\omega^1, \ldots, \omega^n \}$ is any positively oriented orthonormal coframe then
\[
\star (\omega^1 \wedge \cdots \wedge \omega^k) = \langle \omega^1, \omega^1 \rangle \cdots \langle \omega^k, \omega^k \rangle \, \omega^{k+1} \wedge \cdots \wedge \omega^n.
\]
By the Poincar\'e Lemma, equation~\eqref{dv} can be locally solved, since
\[
d \star du =  \star \star d \star du = \star (\Delta u) = 0.
\]
Moreover, $v$ is itself harmonic, because
\[
\Delta v = \star d \star dv = \star d \star \star du =  \star d (-du) = 0.
\]
Finally,
\[
\| du \| = \| dv \| = \frac1{\Omega}
\]
for some local function $\Omega > 0$, and so the metric is written is these coordinates as
\[
ds^2 = \Omega^2 \left( du^2 + dv^2 \right).
\]
\end{Remark}

\section{The Schwarzschild solution} \label{sec2.3}

If we try to solve the vacuum Einstein field equations with zero cosmological constant for a spherically symmetric Lorentzian metric, we obtain, after suitably rescaling the time coordinate, the {\bf Schwarzschild metric}
\[
ds^2 = -\left(1 - \frac{2M}{r}\right) dt^2 + \left(1 - \frac{2M}{r}\right)^{-1} dr^2 + r^2 \left( d\theta^2 + \sin^2 \theta d\varphi^2 \right)
\]
(where $M \in \bbR$ is a constant). Note that for $M=0$ we retrieve the Minkowski metric in spherical coordinates. Note also that if $M>0$ then the metric is defined in two disconnected domains of coordinates, corresponding to $r \in (0, 2M)$ and $r \in (2M, + \infty)$.

The physical interpretation of the Schwarzschild solution can be found by considering the proper time of a timelike curve parameterized by the time coordinate:
\[
\tau = \int_{t_0}^{t_1} \left[ \left(1 - \frac{2M}{r}\right) - \left(1 - \frac{2M}{r}\right)^{-1} \dot{r}^2 - r^2 \dot{\theta}^2 - r^2 \sin^2 \theta \dot{\varphi}^2 \right]^{\frac12} dt,
\]
where $\dot{r}=\frac{dr}{dt}$, etc. The integrand $L_S$ is the Lagrangian for geodesic motion in the Schwarzschild spacetime when parameterized by the time coordinate. Now for motions with speeds much smaller than the speed of light we have $\dot{r}^2 \ll 1$, etc. Assuming $\frac{M}{r} \ll 1$ as well we have
\begin{align*}
L_S & = \left[ 1 - \frac{2M}{r} - \left(1 - \frac{2M}{r}\right)^{-1} \dot{r}^2 - r^2 \dot{\theta}^2 - r^2 \sin^2 \theta \dot{\varphi}^2 \right]^{\frac12} \\
& \simeq 1 - \frac{M}{r} - \frac12 \left( \dot{r}^2 + r^2 \dot{\theta}^2 + \sin^2 \theta \dot{\varphi}^2 \right) = 1 - L_N, 
\end{align*}
where
\[
L_N = \frac12 \left( \dot{r}^2 + r^2 \dot{\theta}^2 + r^2 \sin^2 \theta \dot{\varphi}^2 \right) + \frac{M}{r}
\]
is precisely the Newtonian Lagrangian for the motion of a particle in the gravitational field of a point mass $M$. The Schwarzschild solution should therefore be considered the relativistic analogue of this field.

To write the Schwarzschild metric in the form \eqref{Penrose_met} we note that the quotient metric is
\begin{align*}
ds^2 & = -\left(1 - \frac{2M}{r}\right) dt^2 + \left(1 - \frac{2M}{r}\right)^{-1} dr^2 \\
& = -\left(1 - \frac{2M}{r}\right) \left[ dt^2 - \left(1 - \frac{2M}{r}\right)^{-2} dr^2 \right] \\
& = -\left(1 - \frac{2M}{r}\right) \left[ dt - \left(1 - \frac{2M}{r}\right)^{-1} dr \right] \left[ dt + \left(1 - \frac{2M}{r}\right)^{-1} dr \right] \\
& = -\left(1 - \frac{2M}{r}\right) du \, dv,
\end{align*}
where we define
\[
u = t - \int \left(1 - \frac{2M}{r}\right)^{-1} dr = t - r - 2M \log |r - 2M|
\]
and
\[
v = t + \int \left(1 - \frac{2M}{r}\right)^{-1} dr = t + r + 2M \log |r - 2M|.
\]
In the domain of coordinates $r>2M$ we have $1 - \frac{2M}{r} > 0$, and so the quotient metric is already in the required form. Note however that, unlike what happened in the Minkowski spacetime, we now have
\[
v - u = 2r + 4M \log |r - 2M| \in (-\infty, +\infty).
\]
Consequently, by applying the coordinate rescaling \eqref{tanh} we obtain the full square, instead of a triangle (Figure~\ref{Pen_Schw_out}). Besides the infinity points and null boundaries also present in the Penrose diagram for the Minkowski spacetime, there are two new null boundaries, $\mathscr{H^-}$ ({\bf past event horizon}) and $\mathscr{H^+}$ ({\bf future event horizon}), where $r=2M$. 

\begin{figure}[h!]
\begin{center}
\psfrag{i+}{$i^+$}
\psfrag{i0}{$i^0$}
\psfrag{i-}{$i^-$}
\psfrag{H+}{$\mathscr{H^+}$}
\psfrag{H-}{$\mathscr{H^-}$}
\psfrag{I+}{$\mathscr{I^+}$}
\psfrag{I-}{$\mathscr{I^-}$}
\epsfxsize=.5\textwidth
\leavevmode
\epsfbox{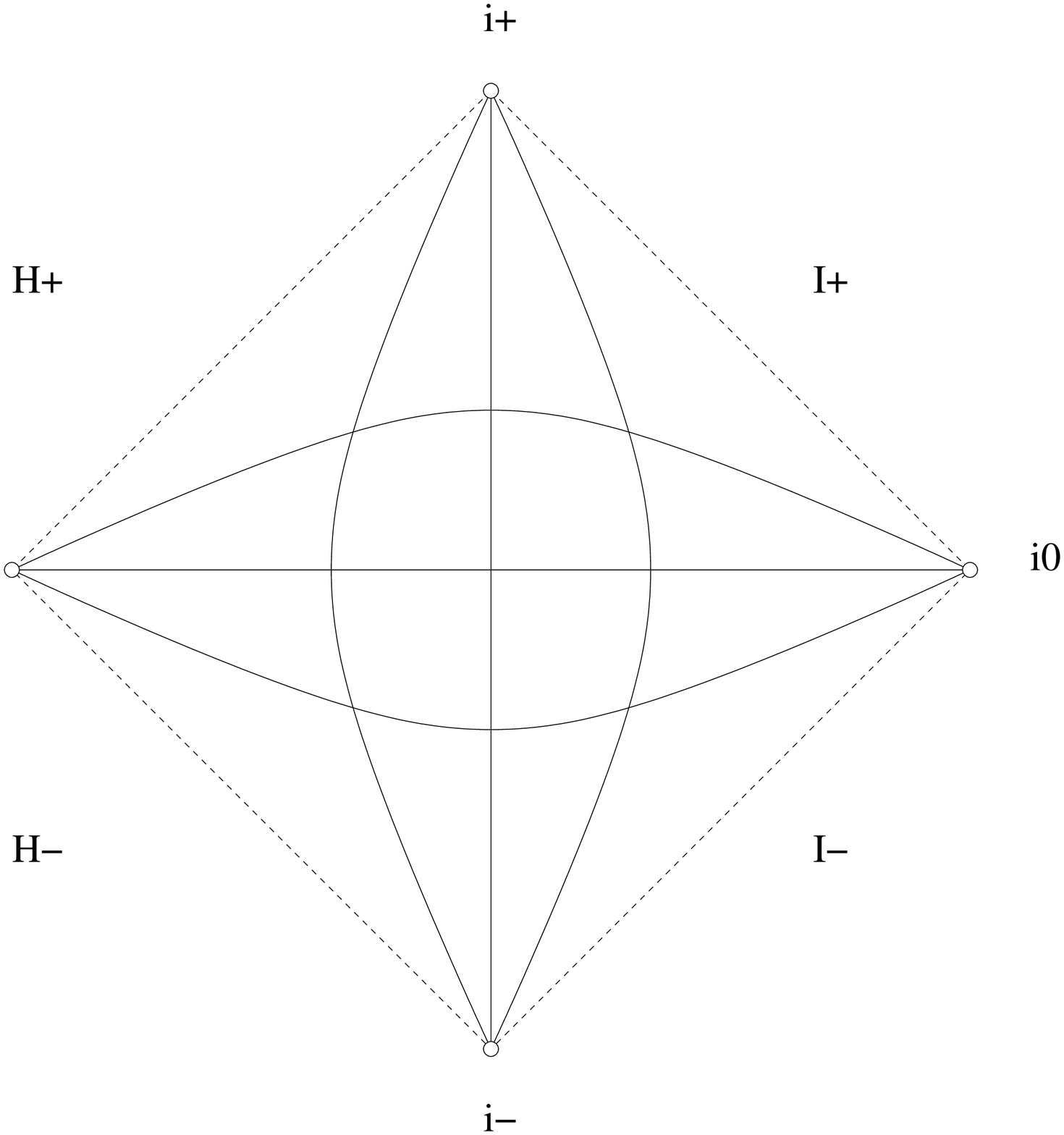}
\end{center}
\caption{Penrose diagram for the region $r>2M$ of the Schwarzschild spacetime, with some level hypersurfaces of $t$ and $r$ represented.} \label{Pen_Schw_out}
\end{figure}

It seems reasonable to expect that the metric can be extended across the horizons, since $r$ does not tend to zero nor to infinity there; this expectation is confirmed by calculating the so-called {\bf Kretschmann scalar}:
\[
R_{\alpha\beta\mu\nu} R^{\alpha\beta\mu\nu} = \frac{48M^2}{r^6}.
\]
This is perfectly well behaved as $r \to 2M$, and seems to indicate that the horizons are mere singularities of the coordinate system $(t,r)$. To show that this is indeed the case, note that in the $(u,r)$ coordinate system the quotient metric is written
\[
ds^2 = - \left(1 - \frac{2M}{r}\right) du^2 - 2 du \, dr.
\]
Since
\[
\det
\left(
\begin{matrix}
 - 1 + \frac{2M}{r} & -1 \\
-1 & 0
\end{matrix}
\right)
= -1,
\]
we see that the metric is well defined across $r=2M$ in this coordinate system. Moreover, we know that it solves the Einstein equations in the coordinate domains $r<2M$ and $r>2M$; by continuity, it must solve it in the whole domain $r \in (0, +\infty)$. Note that the coordinate domains $r<2M$ and $r>2M$ are glued along $r=2M$ so that the outgoing null geodesics $u=\text{constant}$ go from $r=0$ to $r=+\infty$; in other words, the gluing is along the past event horizon $\mathscr{H^-}$. 

To obtain the Penrose diagram for the coordinate domain $r<2M$ we note that the quotient metric can be written as
\begin{align*}
ds^2 & = -\left(1 - \frac{2M}{r}\right) du \, dv = - \left(\frac{2M}{r} - 1\right) du \, (-dv) = - \left(\frac{2M}{r} - 1\right) du \, dv',
\end{align*}
where $v'=-v$. Since in this coordinate domain $\frac{2M}{r} - 1 > 0$, the quotient metric is in the required form. Note however that we now have
\[
u + v' = - 2r - 4M \log |r - 2M| \in (- 4M \log (2M), +\infty),
\]
and by setting
\[
v'' = v' + 4M \log (2M)
\]
we obtain
\[
u + v'' > 0.
\]
Consequently, by applying the coordinate rescaling \eqref{tanh} we obtain a triangle (Figure~\ref{Pen_Schw_down}). There is now a spacelike boundary, where $r=0$, and two null boundaries $\mathscr{H^-}$, where $r=2M$. The Penrose diagram for the domain of the coordinates $(u,r)$ can be obtained gluing the Penrose diagrams in Figures~\ref{Pen_Schw_out} and \ref{Pen_Schw_down} along $\mathscr{H^-}$, so that the null geodesics $u=\text{constant}$ match (Figure~\ref{Pen_Schw_right}).

\begin{figure}[h!]
\begin{center}
\psfrag{r=0}{$r=0$}
\psfrag{i-}{$i^-$}
\psfrag{H-}{$\mathscr{H^-}$}
\epsfxsize=.6\textwidth
\leavevmode
\epsfbox{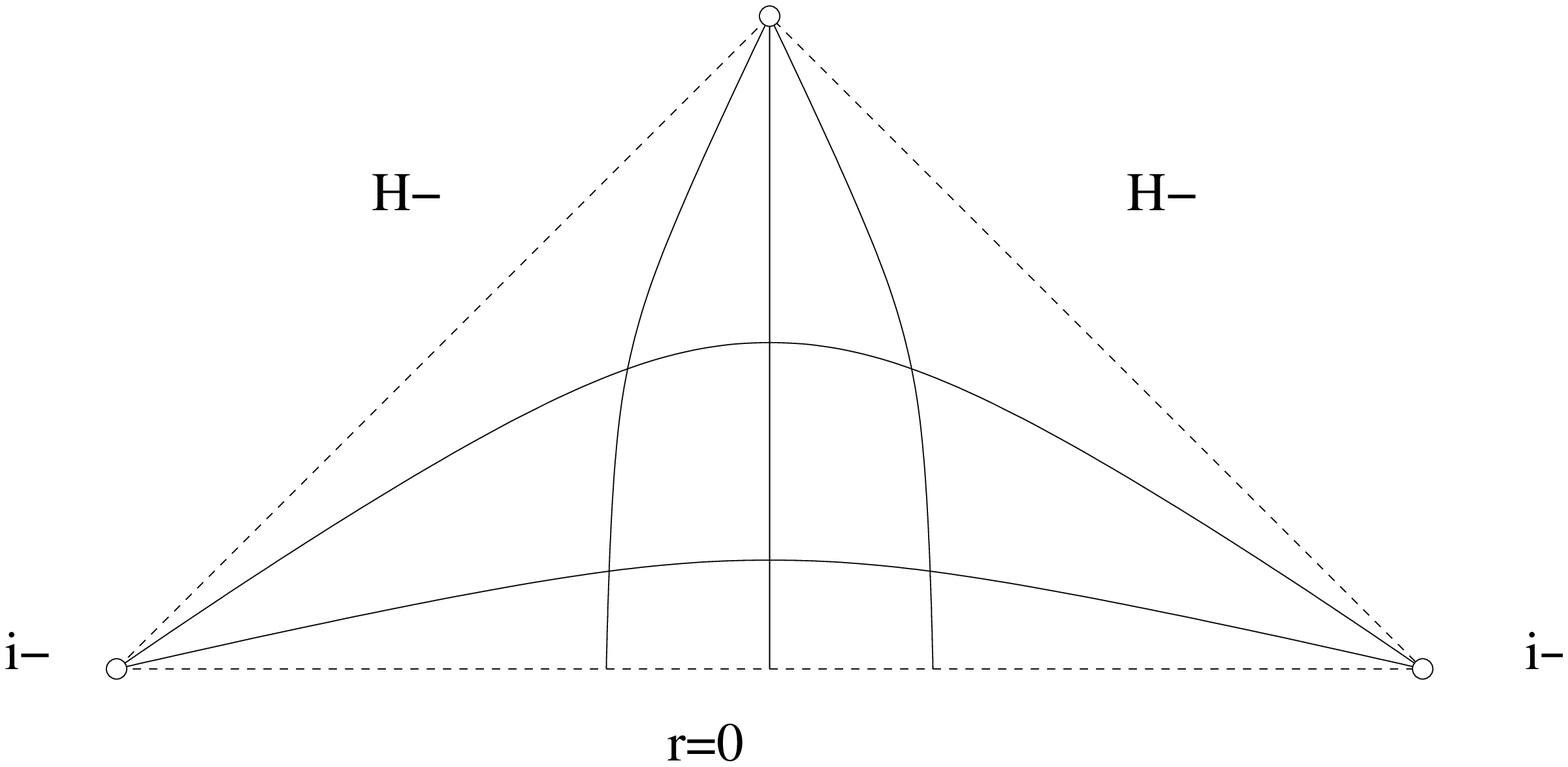}
\end{center}
\caption{Penrose diagram for a region $r<2M$ of the Schwarzschild spacetime, with some level hypersurfaces of $t$ and $r$ represented.} \label{Pen_Schw_down}
\end{figure}

\begin{figure}[h!]
\begin{center}
\psfrag{i+}{$i^+$}
\psfrag{i0}{$i^0$}
\psfrag{i-}{$i^-$}
\psfrag{r=0}{$r=0$}
\psfrag{H+}{$\mathscr{H^+}$}
\psfrag{H-}{$\mathscr{H^-}$}
\psfrag{I+}{$\mathscr{I^+}$}
\psfrag{I-}{$\mathscr{I^-}$}
\epsfxsize=.7\textwidth
\leavevmode
\epsfbox{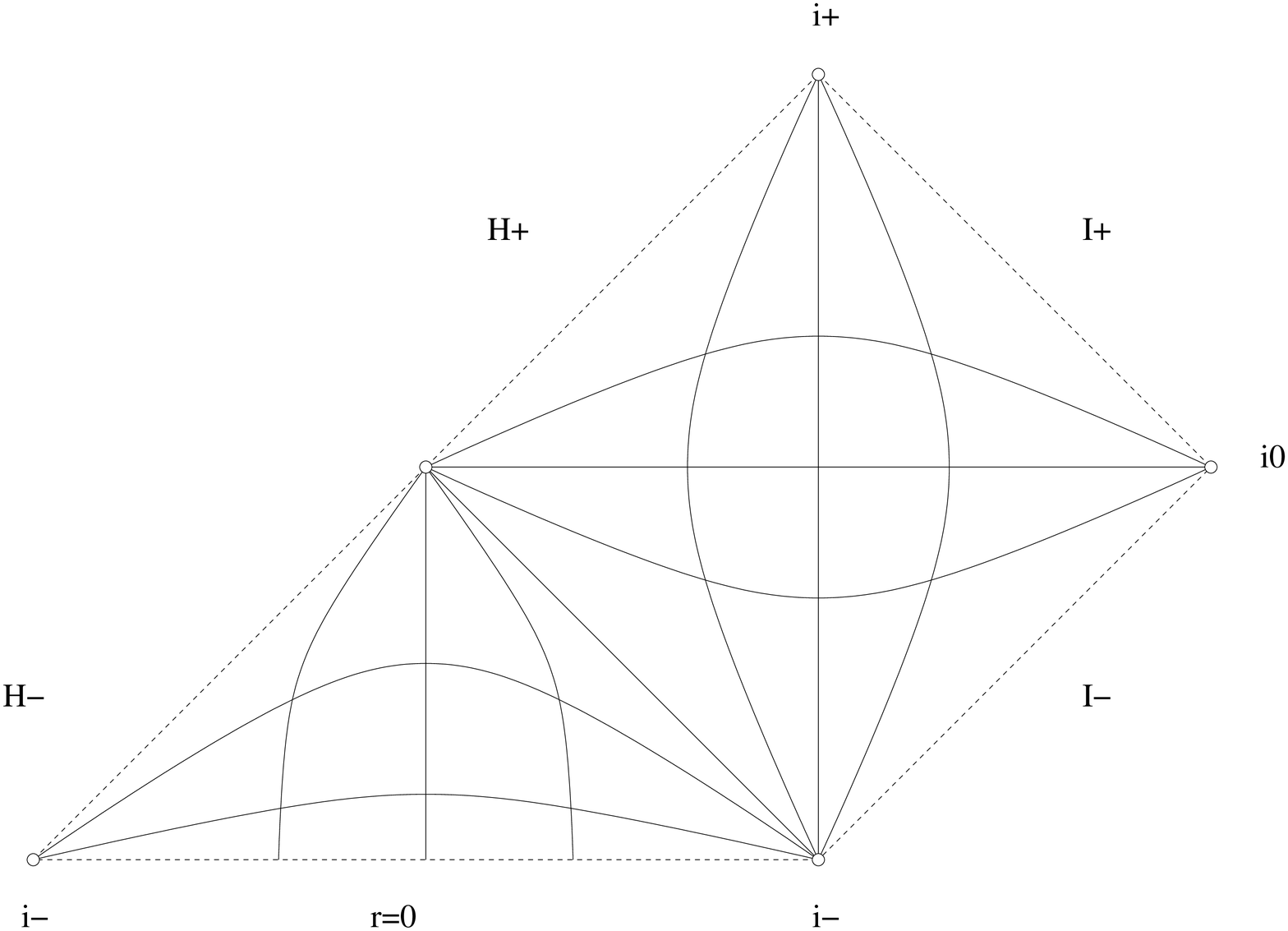}
\end{center}
\caption{Penrose diagram for the domain of the coordinates $(u,r)$ in the Schwarzschild spacetime, with some level hypersurfaces of $t$ and $r$ represented.} \label{Pen_Schw_right}
\end{figure}

If instead we use the $(v,r)$ coordinate system, the quotient metric is written
\[
ds^2 = - \left(1 - \frac{2M}{r}\right) dv^2 + 2 dv \, dr.
\]
Again the metric is well defined across $r=2M$ in this coordinate system, since
\[
\det
\left(
\begin{matrix}
 - 1 + \frac{2M}{r} & 1 \\
1 & 0
\end{matrix}
\right)
= - 1,
\]
and solves the Einstein equations in the whole coordinate domain $r \in (0, +\infty)$. The coordinate domains $r<2M$ and $r>2M$ are now glued along $r=2M$ so that the ingoing null geodesics $v=\text{constant}$ go from $r=+\infty$ to $r=0$; in other words, the gluing is along the future event horizon $\mathscr{H^+}$. 

To obtain the Penrose diagram for the coordinate domain $r<2M$ we note that the quotient metric can be written as
\begin{align*}
ds^2 & = -\left(1 - \frac{2M}{r}\right) du \, dv = - \left(\frac{2M}{r} - 1\right) (- du) \, dv = - \left(\frac{2M}{r} - 1\right) du' \, dv,
\end{align*}
where $u'=-u$. Since in this coordinate domain $\frac{2M}{r} - 1 > 0$, the quotient metric is in the required form. We have
\[
u' + v = 2r + 4M \log |r - 2M| \in (- \infty, 4M \log (2M) ),
\]
and by setting
\[
u'' = u' - 4M \log (2M)
\]
we obtain
\[
u'' + v < 0.
\]
Consequently, by applying the coordinate rescaling \eqref{tanh} we obtain a triangle (Figure~\ref{Pen_Schw_up}). Again there is a spacelike boundary, where $r=0$, and two null boundaries $\mathscr{H^+}$, where $r=2M$. The Penrose diagram for the domain of the coordinates $(v,r)$ can be obtained gluing the Penrose diagrams in Figures~\ref{Pen_Schw_out} and \ref{Pen_Schw_up} along $\mathscr{H^+}$, so that the null geodesics $v=\text{constant}$ match (Figure~\ref{Pen_Schw_right_up}).

\begin{figure}[h!]
\begin{center}
\psfrag{r=0}{$r=0$}
\psfrag{i+}{$i^+$}
\psfrag{H+}{$\mathscr{H^+}$}
\epsfxsize=.6\textwidth
\leavevmode
\epsfbox{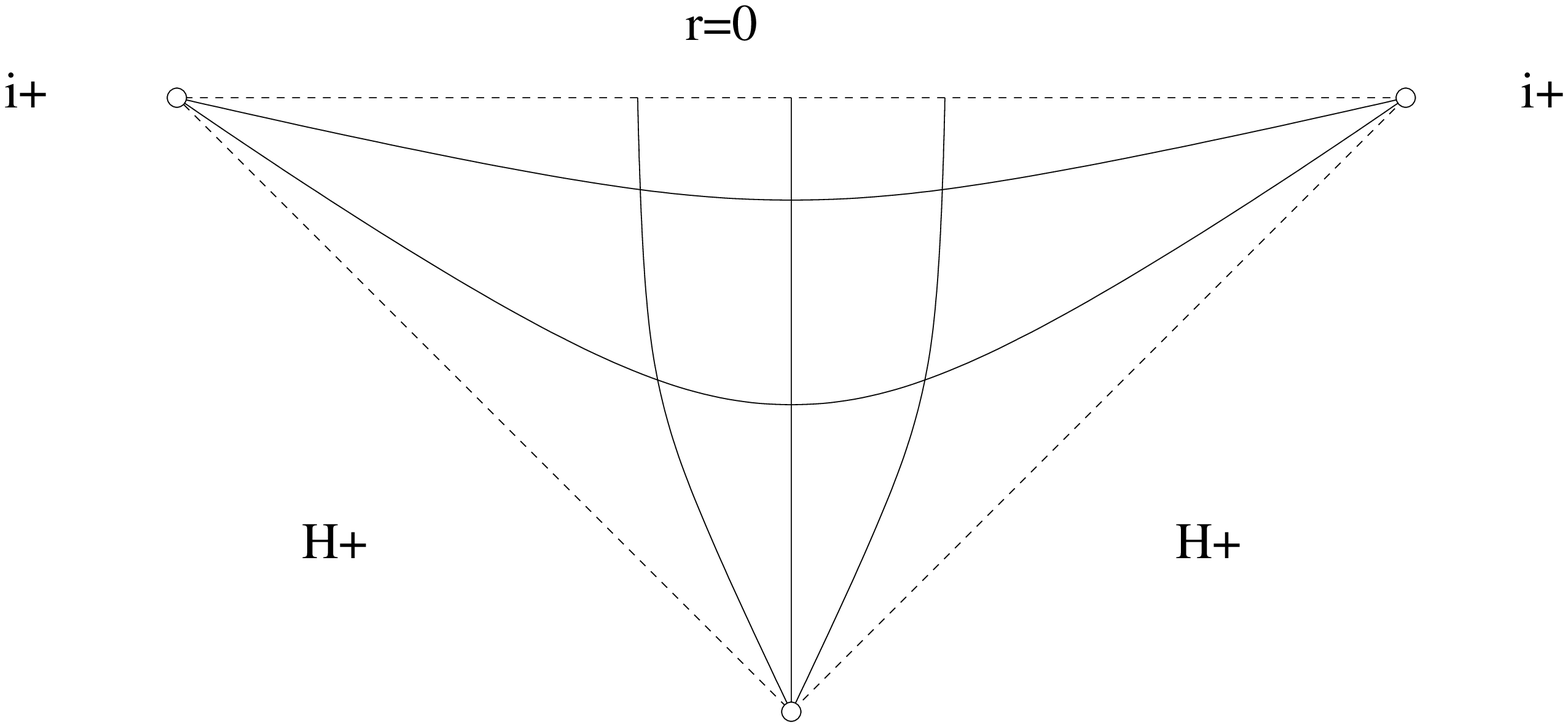}
\end{center}
\caption{Penrose diagram for a region $r<2M$ of the Schwarzschild spacetime, with some level hypersurfaces of $t$ and $r$ represented.} \label{Pen_Schw_up}
\end{figure}

\begin{figure}[h!]
\begin{center}
\psfrag{i+}{$i^+$}
\psfrag{i0}{$i^0$}
\psfrag{i-}{$i^-$}
\psfrag{r=0}{$r=0$}
\psfrag{H+}{$\mathscr{H^+}$}
\psfrag{H-}{$\mathscr{H^-}$}
\psfrag{I+}{$\mathscr{I^+}$}
\psfrag{I-}{$\mathscr{I^-}$}
\epsfxsize=.7\textwidth
\leavevmode
\epsfbox{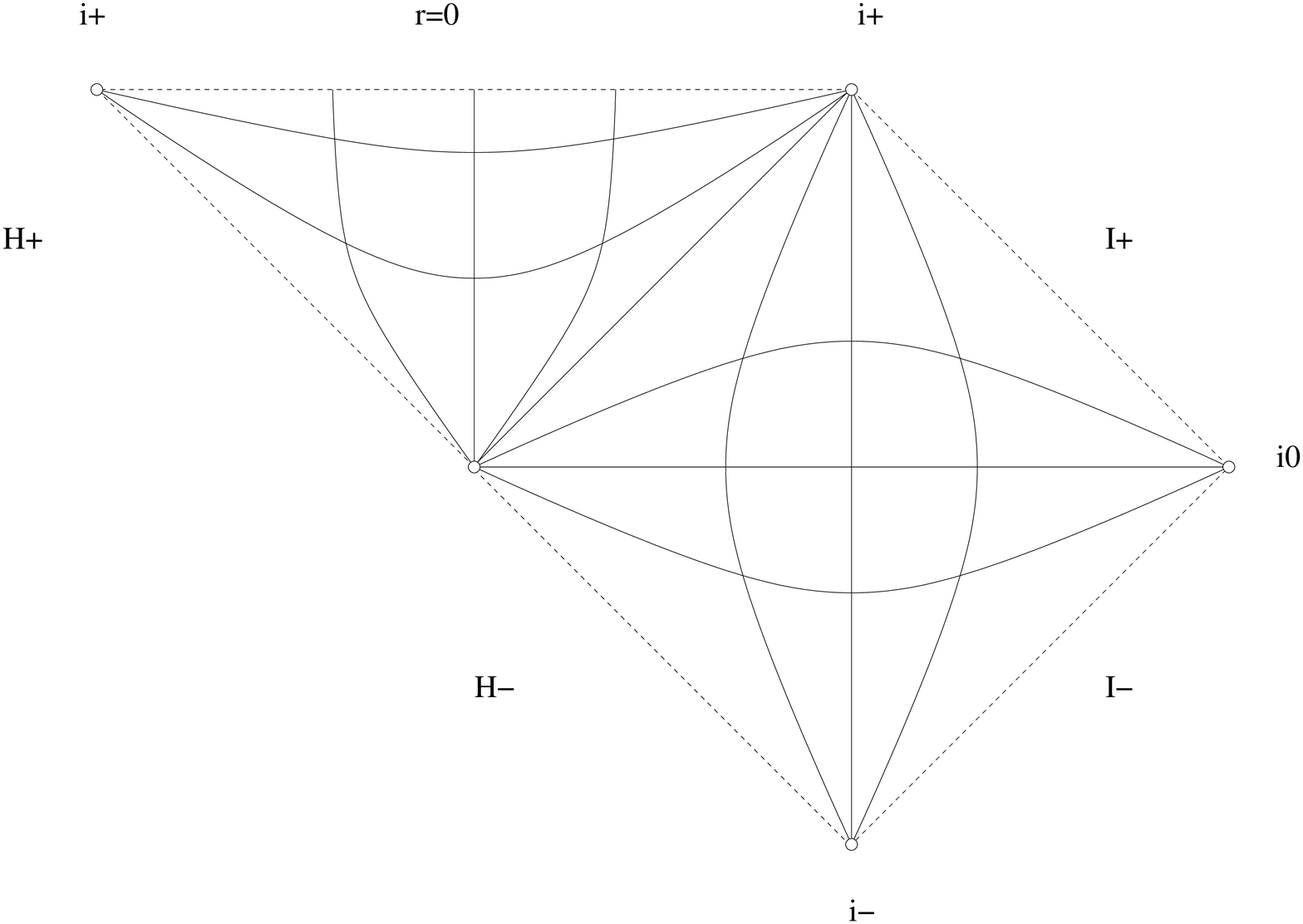}
\end{center}
\caption{Penrose diagram for the domain of the coordinates $(v,r)$ in the Schwarzschild spacetime, with some level hypersurfaces of $t$ and $r$ represented.} \label{Pen_Schw_right_up}
\end{figure}

Both regions $r<2M$ can of course be glued to the region $r>2M$ simultaneously. Since they are invariant under reflections with respect to $t=0$ (the vertical line through their common vertex), it is then clear that a mirror-reversed copy of the region $r>2M$ can be glued to the surviving null boundaries $\mathscr{H^-}$ and $\mathscr{H^+}$ (Figure~\ref{Pen_Schw}). The resulting spacetime, known as the {\bf maximal analytical extension} of the Schwarzschild solution, is a solution of the Einstein equations which cannot be extended any further, since $r \to 0$ or $r \to + \infty$ on the boundary of its Penrose diagram. Note that by continuity the Einstein equations hold at the point where the four Penrose diagrams intersect (known as the {\bf bifurcate sphere}).

\begin{figure}[h!]
\begin{center}
\psfrag{i+}{$i^+$}
\psfrag{i0}{$i^0$}
\psfrag{i-}{$i^-$}
\psfrag{r=0}{$r=0$}
\psfrag{I+}{$\mathscr{I^+}$}
\psfrag{I-}{$\mathscr{I^-}$}
\epsfxsize=.7\textwidth
\leavevmode
\epsfbox{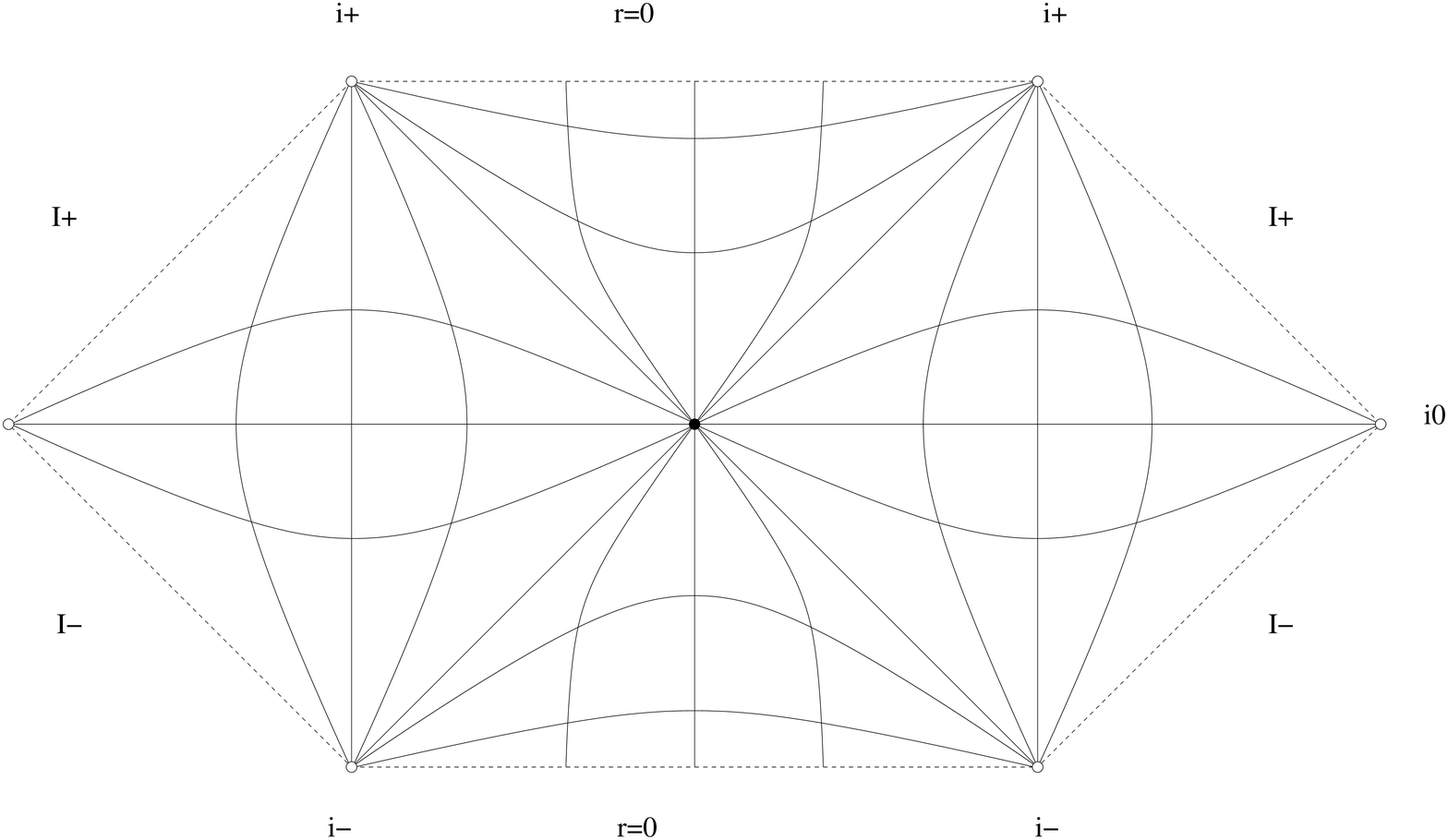}
\end{center}
\caption{Penrose diagram for the maximal analytic extension of the Schwarzschild spacetime, with some level hypersurfaces of $t$ and $r$ represented.} \label{Pen_Schw}
\end{figure}

Let us now analyze in detail the Penrose diagram for the maximal analytic extension of the Schwarzschild spacetime. There are two {\bf asymptotically flat} regions $r > 2M$, corresponding to two causally disconnected universes, joined by a {\bf wormhole}. There are also two regions where $r < 2M$: a {\bf black hole region}, bounded by the future event horizons $\mathscr{H^+}$, from which no causal curve can escape; and a {\bf white hole region}, bounded by the past event horizons $\mathscr{H^-}$, from which every causal curve must escape. Note that the horizons themselves correspond to spheres which are propagating at the speed of light, but whose radius remains constant, $r = 2M$. 

The black hole in the maximal analytic extension of the Schwarzschild spacetime is an {\bf eternal black hole}, that is, a black hole which has always existed (as opposed to having formed by some physical process). We will see shortly how to use the Schwarzschild solution to model physically realistic black holes.

\section{Friedmann-Lema\^\i tre-Robertson-Walker models} \label{sec2.4}

The simplest models of {\bf cosmology}, the study of the Universe as a whole, are obtained from the assumption that space is homogeneous and isotropic (which is true on average at very large scales). It is well known that the only isotropic $3$-dimensional Riemannian metrics are, up to scale, given by
\[
dl^2 = \frac{dr^2}{1-kr^2} + r^2\left( d\theta^2 + \sin^2 \theta d\varphi^2 \right),
\]
where
\[
k =
\begin{cases}
\,\,\,\, 1 \qquad \text{for the standard metric on } S^3 \\
\,\,\,\, 0 \qquad \text{for the standard metric on } \bbR^3 \\
-1 \qquad \text{for the standard metric on } H^3 \\
\end{cases}.
\]
Allowing for a time-dependent scale factor $a(t)$ (also known as the ``radius of the Universe''), we arrive at the {\bf Friedmann-Lema\^\i tre-Robertson-Walker} (FLRW) family of Lorentzian metrics:
\begin{equation} \label{FLRW}
ds^2 = -dt^2 + a^2(t) \left[\frac{dr^2}{1-kr^2} + r^2\left( d\theta^2 + \sin^2 \theta d\varphi^2 \right)\right].
\end{equation}

To interpret these metrics, we consider a general Lorentzian metric of the form
\[
ds^2 = - e^{2 \phi} dt^2 + h_{ij} dx^i dx^j = - e^{2 \phi} dt^2 + dl^2.
\]
The Riemannian metric $dl^2$ is readily interpreted as giving the distances measured between nearby observers with fixed space coordinates $x^i$ in radar experiments: indeed, such observers measure proper time $\tau$ given by
\[
d\tau^2 = e^{2\phi} dt^2.
\]
The null geodesics representing a radar signal bounced by a given observer from a nearby observer (Figure~\ref{radar}) satisfy
\[
ds^2 = 0 \Leftrightarrow  e^{2 \phi} dt^2 = dl^2 \Leftrightarrow d \tau^2 = dl^2 \Leftrightarrow d\tau = \pm dl.
\]

\begin{figure}[h!]
\begin{center}
\psfrag{2dl}{$2dl$}
\epsfxsize=.15\textwidth
\leavevmode
\epsfbox{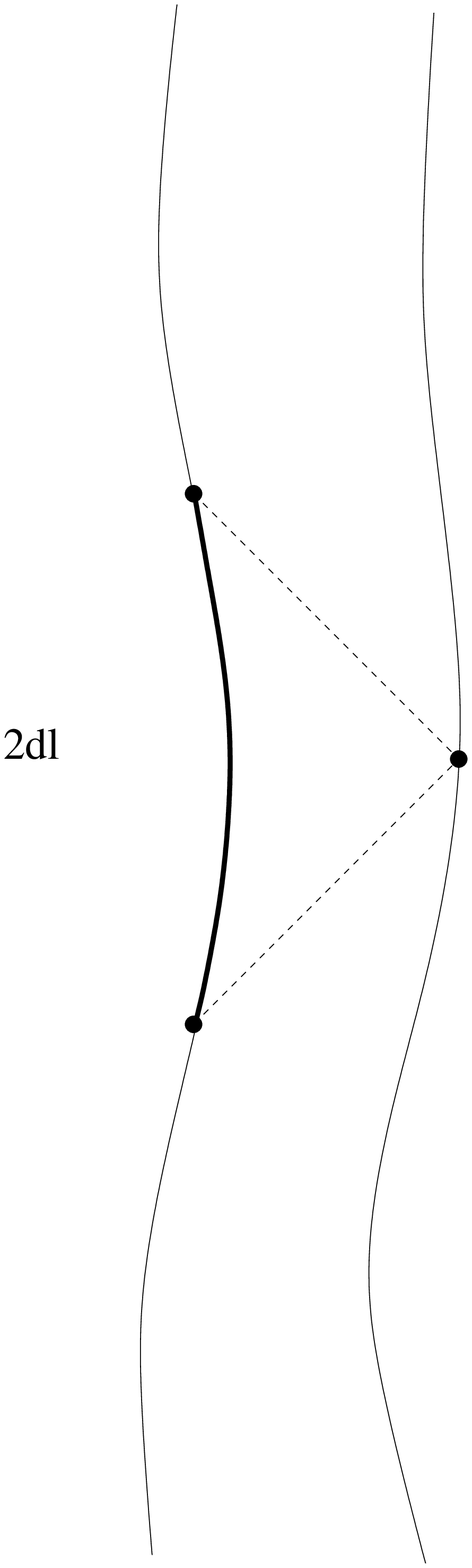}
\end{center}
\caption{Distance between nearby observers.} \label{radar}
\end{figure}

Since the speed of light is $c=1$, the distance traveled between the observers will be half the time between the emission and the reception of the signal:
\[
\frac{(\tau + dl) - (\tau - dl)}{2} = dl.
\]
Moreover, the unit timelike covector field tangent to the trajectories of the the observers with fixed space coordinates $x^i$ is 
\[
U = - e^\phi dt \Leftrightarrow U_\mu = - e^\phi \nabla_ \mu t.
\]
Therefore
\begin{align*}
\nabla_U U^\mu & = - U^\nu \nabla_\nu ( e^\phi \nabla^\mu t) = (U \cdot \phi) U^\mu - e^\phi U^\nu \nabla_\nu \nabla^\mu t \\
& = (U \cdot \phi) U^\mu - e^\phi U^\nu \nabla^\mu \nabla_\nu t  = (U \cdot \phi) U^\mu + e^\phi U^\nu \nabla^\mu (e^{-\phi} U_\nu) \\
& = (U \cdot \phi) U^\mu - U^\nu U_\nu \nabla^\mu \phi + U^\nu \nabla^\mu U_\nu \\
& = \nabla^\mu \phi + (U^\nu \nabla_\nu \phi) U^\mu + \frac12 \nabla^\mu (U^\nu U_\nu) = \nabla^\mu \phi + (U^\nu \nabla_\nu \phi) U^\mu,
\end{align*}
since $U^\nu U_\nu = -1$. In other words,
\[
\nabla_U U = (\grad \phi)^\perp,
\]
where $^\perp$ represents the orthogonal projection on the spacelike hyperplane orthogonal to $U$.

Therefore the observers with fixed space coordinates in the FLRW models have zero acceleration, that is, they are free-falling (by opposition to the corresponding observers in the Schwarzschild spacetime, who must accelerate to remain at fixed $r>2M$). Moreover, the distance between two such observers varies as
\[
d(t) = a(t) \frac{d_0}{a_0} \Rightarrow \dot{d} = \dot{a} \, \frac{d_0}{a_0} = \frac{\dot{a}}{a} \, d.
\]
This relation, known as the {\bf Hubble law}, is often written as
\[
v = H d,
\]
where $v$ is the relative velocity and
\[
H = \frac{\dot{a}}{a}
\]
is the so-called {\bf Hubble constant} (for historical reasons, since it actually varies in time).

We will model the matter content of the universe as an uniform dust of galaxies placed at fixed space coordinates (hence free-falling):
\begin{equation} \label{TFLRW}
T = \rho(t) dt \otimes dt.
\end{equation}
Plugging the metric~\eqref{FLRW} and the energy-momentum tensor~\eqref{TFLRW} into the Einstein equations, and integrating once, results in the so-called {\bf Friedmann equations} 
\[
\begin{cases}
\displaystyle \frac12 \dot{a}^2 - \frac{\alpha}{a} - \frac{\Lambda}6 a^2 = - \frac{k}2 \\
\\
\displaystyle \frac{4\pi}3 \rho a^3 = \alpha
\end{cases}
\]
(where $\alpha$ is an integration constant). The first Friedmann equation is a first order ODE for $a(t)$; it can be seen as the equation of conservation of energy for a particle moving in the $1$-dimensional effective potential
\[
V(a) = - \frac{\alpha}{a} - \frac{\Lambda}6 a^2
\]
with energy $- \frac{k}2$. Once this equation has been solved, the second Friedmann equation yields $\rho(t)$ from $a(t)$. We now examine in detail the FLRW models arising from the solutions of these equations.

\subsection{Milne universe}

If we set $\alpha=\Lambda=0$ then the first Friedmann equation becomes
\[
\dot{a}^2 = - k.
\]
Therefore either $k=0$ and $\dot{a} = 0$, which corresponds to the Minkowski spacetime, or $k = -1$ and $\dot{a}^2 = 1$, that is
\[
ds^2 = - dt^2 + t^2 dl^2_{H^3},
\]
where $dl^2_{H^3}$ represents the metric of the unit hyperbolic $3$-space; this is the so-called {\bf Milne universe}. It turns out that the Milne universe is isometric to an open region of the Minkowski spacetime, namely the region limited by the future (or past) light cone of the origin. This region is foliated by hyperboloids $S_t$ of the form
\[
T^2 - X^2 - Y^2 - Z^2 = t^2,
\]
whose induced metric is that of a hyperbolic space of radius $t$ (Figure~\ref{Milne}). Note that the light cone corresponds to $a(t)=t=0$, that is, the {\bf Big Bang} of the Milne universe.

\begin{figure}[h!]
\begin{center}
\psfrag{u}{$T$}
\psfrag{x}{$X$}
\psfrag{y}{$Y$}
\psfrag{St}{$S_t$}
\epsfxsize=.6\textwidth
\leavevmode
\epsfbox{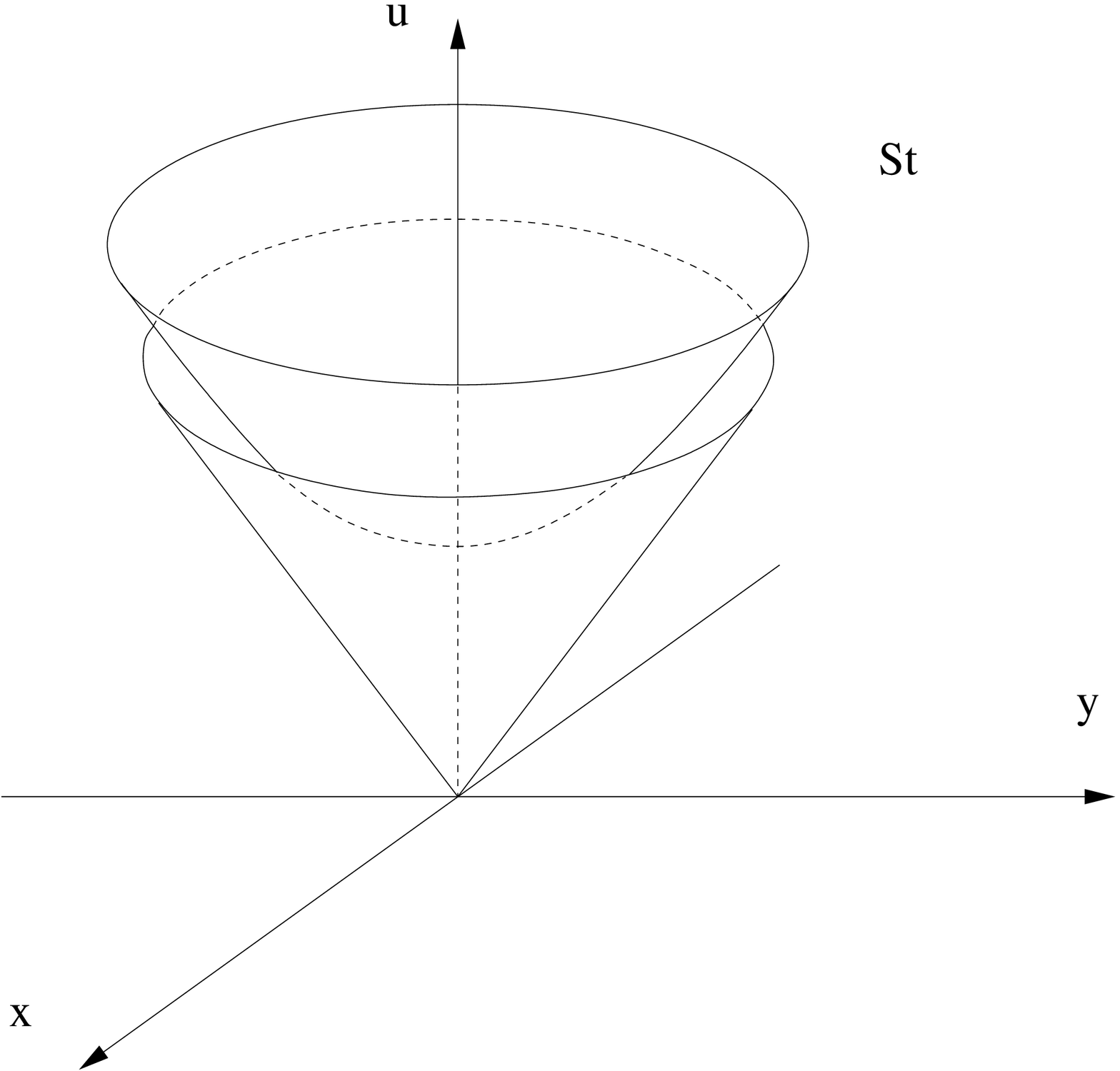}
\end{center}
\caption{Milne universe.} \label{Milne}
\end{figure}

\subsection{de Sitter universe}

If $\alpha=0$ and $\Lambda>0$ we can choose units such that $\Lambda=3$. The first Friedmann equation then becomes
\[
\dot{a}^2 - a^2 = - k.
\]
In this case all three values $k=1$, $k=0$ and $k=-1$ are possible; the corresponding metrics are, respectively,
\begin{align*}
& ds^2 = - dt^2 + \cosh^2t \, dl^2_{S^3}; \\
& ds^2 = - dt^2 + e^{2t} dl^2_{\bbR^3}; \\
& ds^2 = - dt^2 + \sinh^2t \, dl^2_{H^3},
\end{align*}
where $dl^2_{S^3}$, $dl^2_{\bbR^3}$ and $dl^2_{H^3}$ represent the metric of the unit $3$-sphere, the Euclidean $3$-space and the the unit hyperbolic $3$-space. 

It turns out that the last two models correspond to open regions of the first, which is then called the {\bf de Sitter universe}. It represents a spherical universe which contracts to a minimum radius ($1$ in our units) and then re-expands. It is easily seen to be isometric to the unit hyperboloid
\[
-T^2 + X^2 + Y^2 + Z^2 + W^2 = 1
\]
in the Minkowski $5$-dimensional spacetime (Figure~\ref{dS}).

\begin{figure}[h!]
\begin{center}
\psfrag{u}{$T$}
\psfrag{x}{$X$}
\psfrag{y}{$Y$}
\epsfxsize=.6\textwidth
\leavevmode
\epsfbox{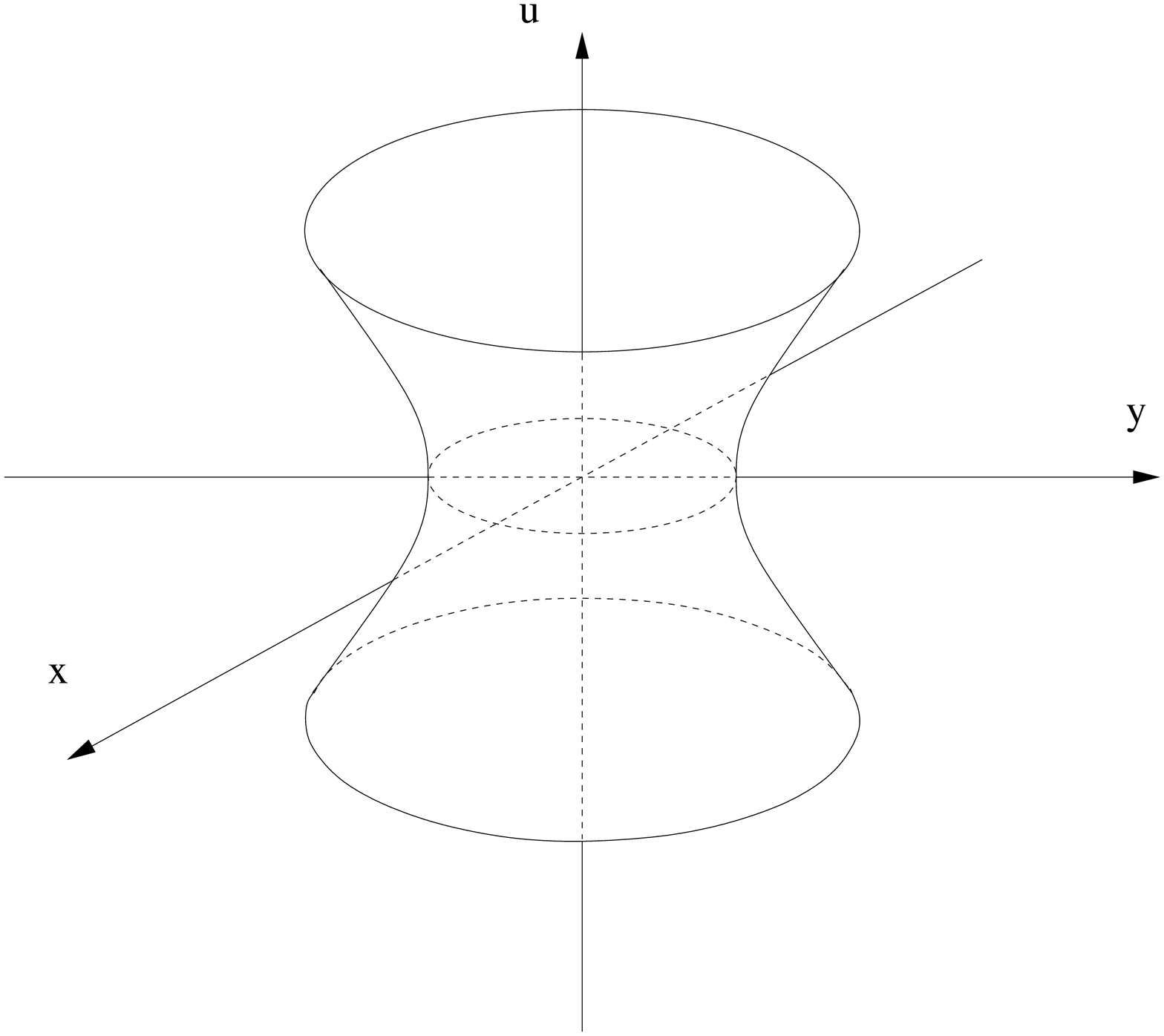}
\end{center}
\caption{de Sitter universe.} \label{dS}
\end{figure}

To obtain the Penrose diagram for the de Sitter universe we write its metric as
\begin{align*}
ds^2 & = - dt^2 + \cosh^2t \left[ d \psi^2 + \sin^2 \psi  \left(d\theta^2 + \sin^2\theta d\varphi^2\right) \right] \\
& = \cosh^2t \left[ - d\tau^2 + d \psi^2 + \sin^2 \psi  \left(d\theta^2 + \sin^2\theta d\varphi^2\right) \right] \\
& = \cosh^2t \left[ - d\tau^2 + d \psi^2 \right] + r^2 \left(d\theta^2 + \sin^2\theta d\varphi^2\right),
\end{align*}
where $\psi \in [0,\pi]$,
\[
\tau = \int_{- \infty}^t \frac{dt}{\cosh t}
\]
and
\[
r = \cosh t \sin \psi.
\]
Since
\[
\int_{- \infty}^{+ \infty} \frac{dt}{\cosh t} = \pi,
\]
we see that the quotient metric is conformal to the square $(0,\pi) \times [0, \pi]$ of the Minkowski $2$-dimensional spacetime, and so the Penrose diagram is as depicted in Figure~\ref{dS}. Note that there are two lines where $r=0$, corresponding to two antipodal points of the $3$-sphere. A light ray emitted from one of these points at $t=-\infty$ has just enough time to reach the other point at $t=+\infty$ (dashed line in the diagram). Note also that in this case $\mathscr{I^-}$ and $\mathscr{I^+}$ (defined as the past and future boundary points approached by null geodesics along which $r \to + \infty$) are spacelike boundaries.

\begin{figure}[h!]
\begin{center}
\psfrag{r=0}{$r=0$}
\psfrag{I+}{$\mathscr{I^+}$}
\psfrag{I-}{$\mathscr{I^-}$}
\epsfxsize=.5\textwidth
\leavevmode
\epsfbox{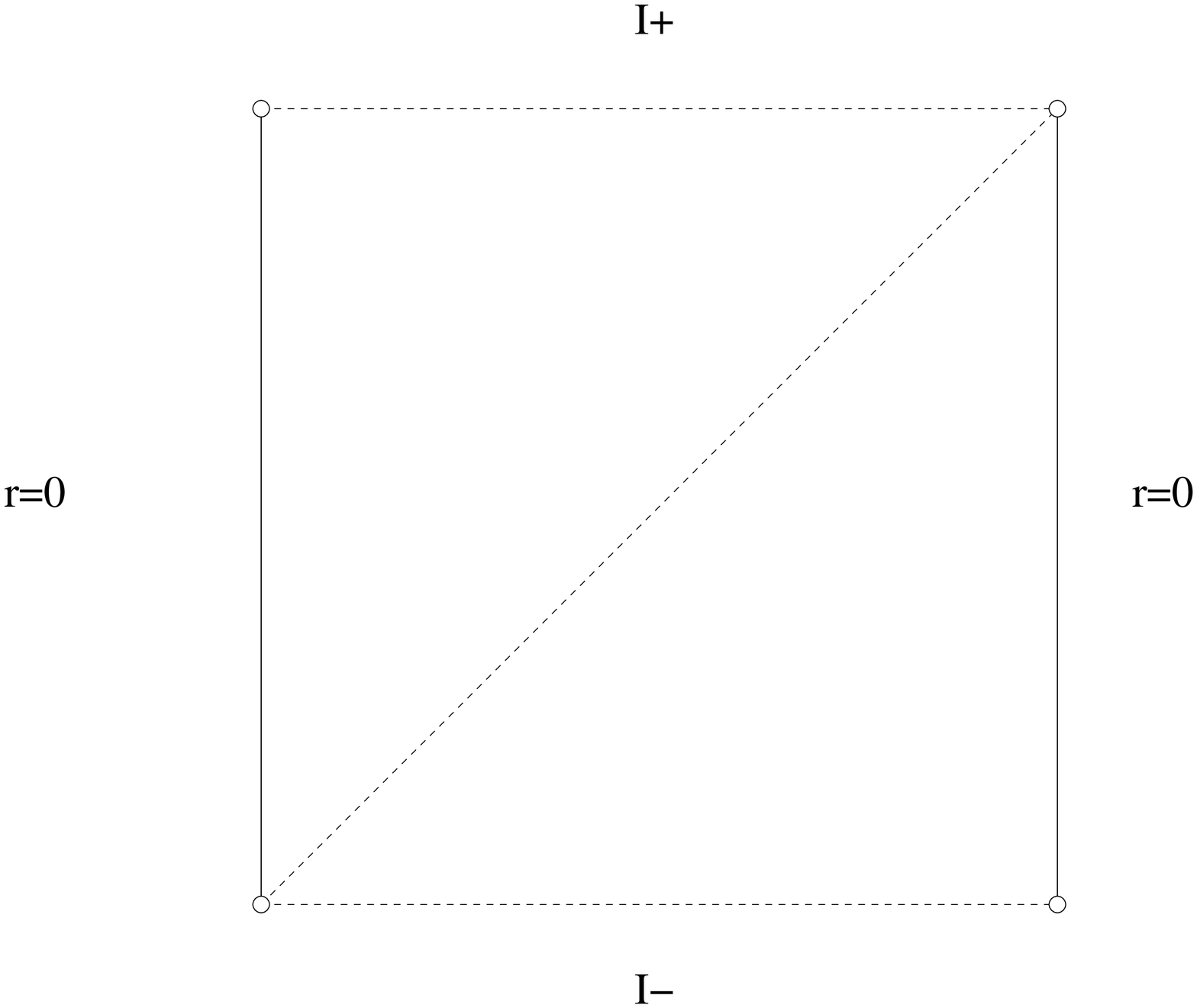}
\end{center}
\caption{Penrose diagram for the de Sitter universe.} \label{Pen_dS}
\end{figure}

\subsection{Anti-de Sitter universe}

If $\alpha=0$ and $\Lambda<0$ we can choose units such that $\Lambda=-3$. The the first Friedmann equation then becomes
\[
\dot{a}^2 + a^2 = - k.
\]
In this case only $k=-1$ is possible; the corresponding metric is
\[
ds^2 = - dt^2 + \cos^2t \, dl^2_{H^3}.
\]
It turns out (see Exercise~\ref{development} in Chapter~\ref{chapter5}) that this model is an open region of the spacetime with metric
\[
ds^2 = - \cosh^2\psi dt^2 + d \psi^2 + \sinh^2 \psi  \left(d\theta^2 + \sin^2\theta d\varphi^2\right)
\]
(where $\psi \in [0,+\infty)$), called the {\bf anti-de Sitter universe}. It represents a static hyperbolic universe (with radius $1$ in our units).

To obtain the Penrose diagram for the anti-de Sitter universe we write its metric as
\begin{align*}
ds^2 & = \cosh^2\psi \left[ - dt^2 + \frac{d \psi^2}{\cosh^2 \psi} \right] + \sinh^2 \psi  \left(d\theta^2 + \sin^2\theta d\varphi^2\right)  \\
& = \cosh^2\psi \left[ - dt^2 + dx^2 \right] + r^2  \left(d\theta^2 + \sin^2\theta d\varphi^2\right)
\end{align*}
where 
\[
x = \int_0^\psi \frac{d\psi}{\cosh \psi}
\]
and
\[
r = \sinh \psi.
\]
Since
\[
\int_0^{+ \infty} \frac{d\psi}{\cosh \psi} = \frac{\pi}2,
\]
we see that the quotient metric is conformal to the strip $\bbR \times [0, \frac{\pi}2)$ of the Minkowski $2$-dimensional spacetime, and so the Penrose diagram is as depicted in Figure~\ref{Pen_AdS}. The FLRW model above corresponds to the triangular region in the diagram. Note also that in this case $\mathscr{I^-} \equiv \mathscr{I^+} \equiv \mathscr{I}$ is a timelike boundary.

\begin{figure}[h!]
\begin{center}
\psfrag{r=0}{$r=0$}
\psfrag{I}{$\mathscr{I}$}
\epsfxsize=.4\textwidth
\leavevmode
\epsfbox{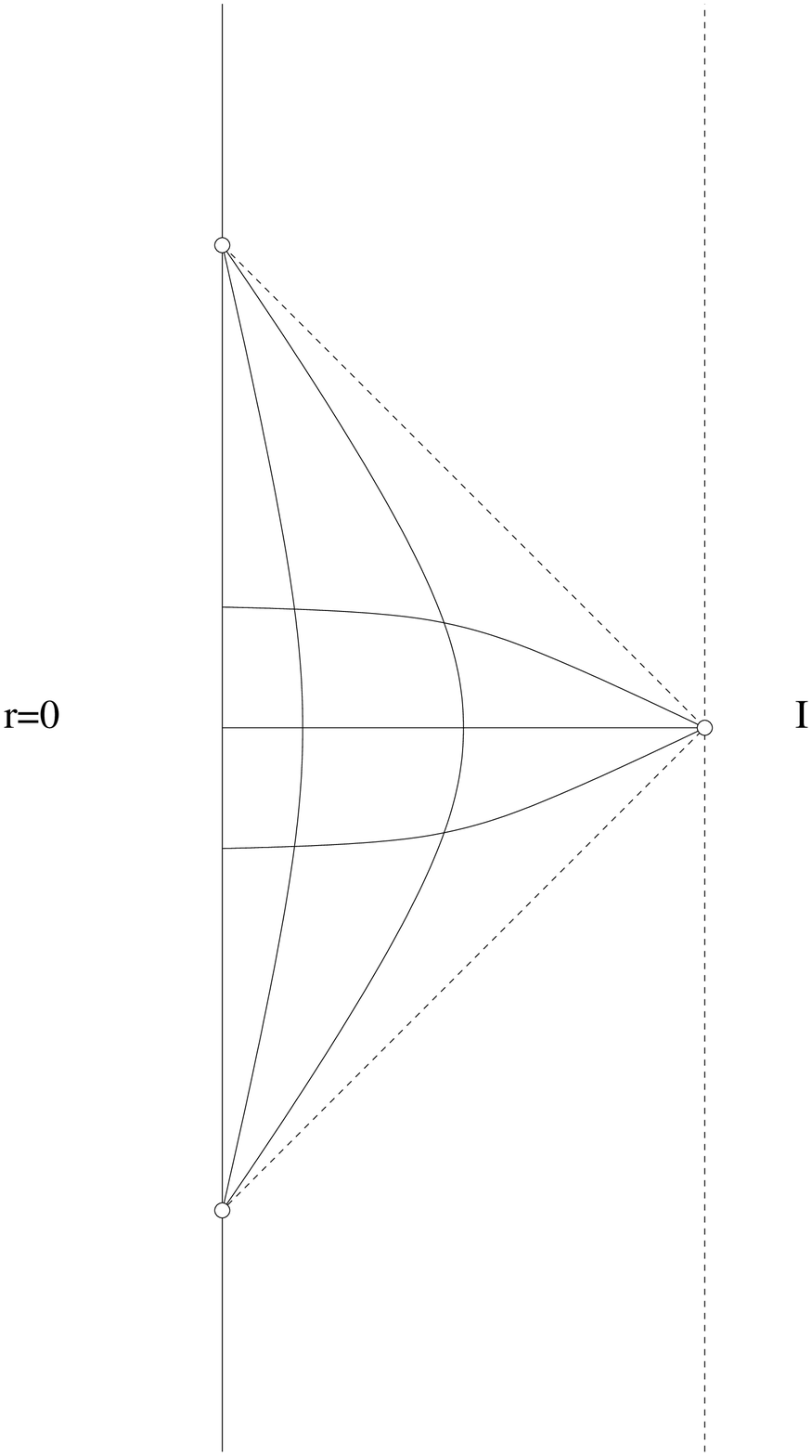}
\end{center}
\caption{Penrose diagram for the anti-de Sitter universe.} \label{Pen_AdS}
\end{figure}

\subsection{Universes with matter and $\Lambda=0$}

If $\alpha>0$ and $\Lambda=0$, the first Friedmann equation becomes
\[
\dot{a}^2 - \frac{2\alpha}a = - k.
\]
In this case all three values $k=1$, $k=0$ and $k=-1$ are possible. Although it is possible to obtain explicit formulas for the solutions of these equations, it is simpler to analyze the graph of the effective potential $V(a)$ (Figure~\ref{Lzero_graph}). Possibly by reversing and translating $t$, we can assume that all solutions are defined for $t>0$, with $\lim_{t\to 0} a(t) = 0$, implying $\lim_{t\to 0} \rho(t) = + \infty$. Therefore all three models have a true singularity at $t=0$, known as the {\bf Big Bang}, where the scalar curvature $R=8\pi \rho$ also blows up; this is not true for the Milne universe or the open region in the anti-de Sitter universe, which can be extended across the Big Bang. The spherical universe ($k=1$) reaches a maximum radius $2 \alpha$ and re-collapses, forming a second singularity (the {\bf Big Crunch}); the radius of the flat ($k=0$) and hyperbolic ($k=-1$) universes increases monotonically.

\begin{figure}[h!]
\begin{center}
\psfrag{a}{$a$}
\psfrag{V(a)}{$V(a)$}
\psfrag{k=1}{$k=1$}
\psfrag{k=0}{$k=0$}
\psfrag{k=-1}{$k=-1$}
\epsfxsize=.6\textwidth
\leavevmode
\epsfbox{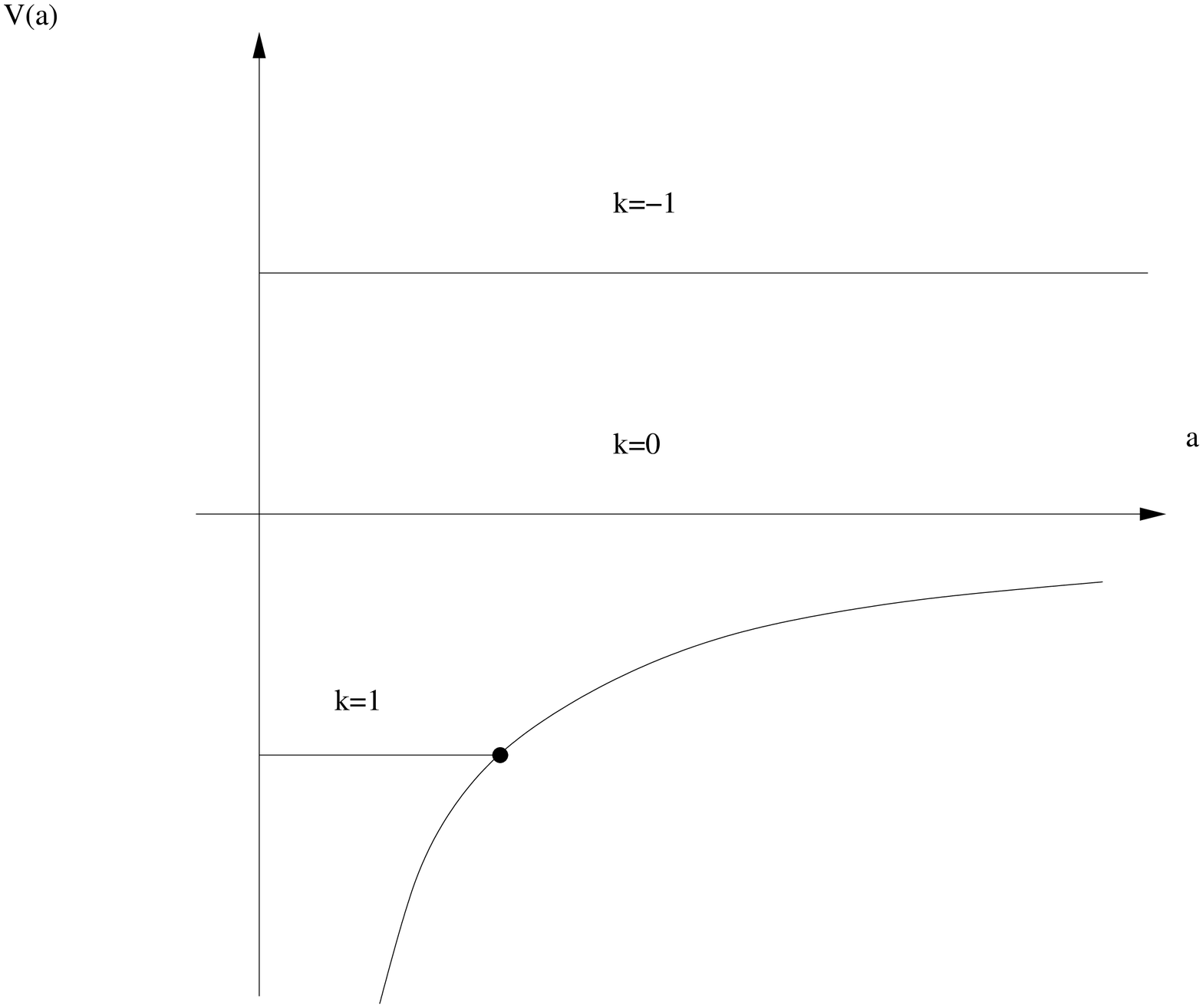}
\end{center}
\caption{Effective potential for FLRW models with $\Lambda=0$.} \label{Lzero_graph}
\end{figure}

To obtain the Penrose diagram for the spherical universe we write its metric as
\begin{align*}
ds^2 & = - dt^2 + a^2(t) \left[ d \psi^2 + \sin^2 \psi  \left(d\theta^2 + \sin^2\theta d\varphi^2\right) \right] \\
& = a^2(t) \left[ - d\tau^2 + d \psi^2 + \sin^2 \psi  \left(d\theta^2 + \sin^2\theta d\varphi^2\right) \right] \\
& = a^2(t) \left[ - d\tau^2 + d \psi^2 \right] + r^2 \left(d\theta^2 + \sin^2\theta d\varphi^2\right),
\end{align*}
where $\psi \in [0,\pi]$,
\[
\tau = \int_{0}^t \frac{dt}{a(t)}
\]
and
\[
r = a(t) \sin \psi.
\]
Since
\[
\int_{0}^{t_\text{max}} \frac{dt}{a(t)} = 2 \int_0^{a_\text{max}} \frac{da}{a\dot{a}} = 2 \int_0^{2 \alpha} \frac{da}{a\sqrt{\frac{2 \alpha}{a} - 1}} = 2\pi,
\]
we see that the quotient metric is conformal to the rectangle $(0,2\pi) \times [0, \pi]$ of the Minkowski $2$-dimensional spacetime, and so the Penrose diagram is as depicted in Figure~\ref{Pen_spherical}. Note that there are two lines where $r=0$, corresponding to two antipodal points of the $3$-sphere. A light ray emitted from one of these points at $t=0$ has just enough time to circle once around the universe an return at $t=t_\text{max}$ (dashed line in the diagram). Note also that the Big Bang and the Big Crunch are spacelike boundaries.

\begin{figure}[h!]
\begin{center}
\psfrag{r=0}{$r=0$}
\psfrag{Big Bang}{Big Bang}
\psfrag{Big Crunch}{Big Crunch}
\epsfxsize=.5\textwidth
\leavevmode
\epsfbox{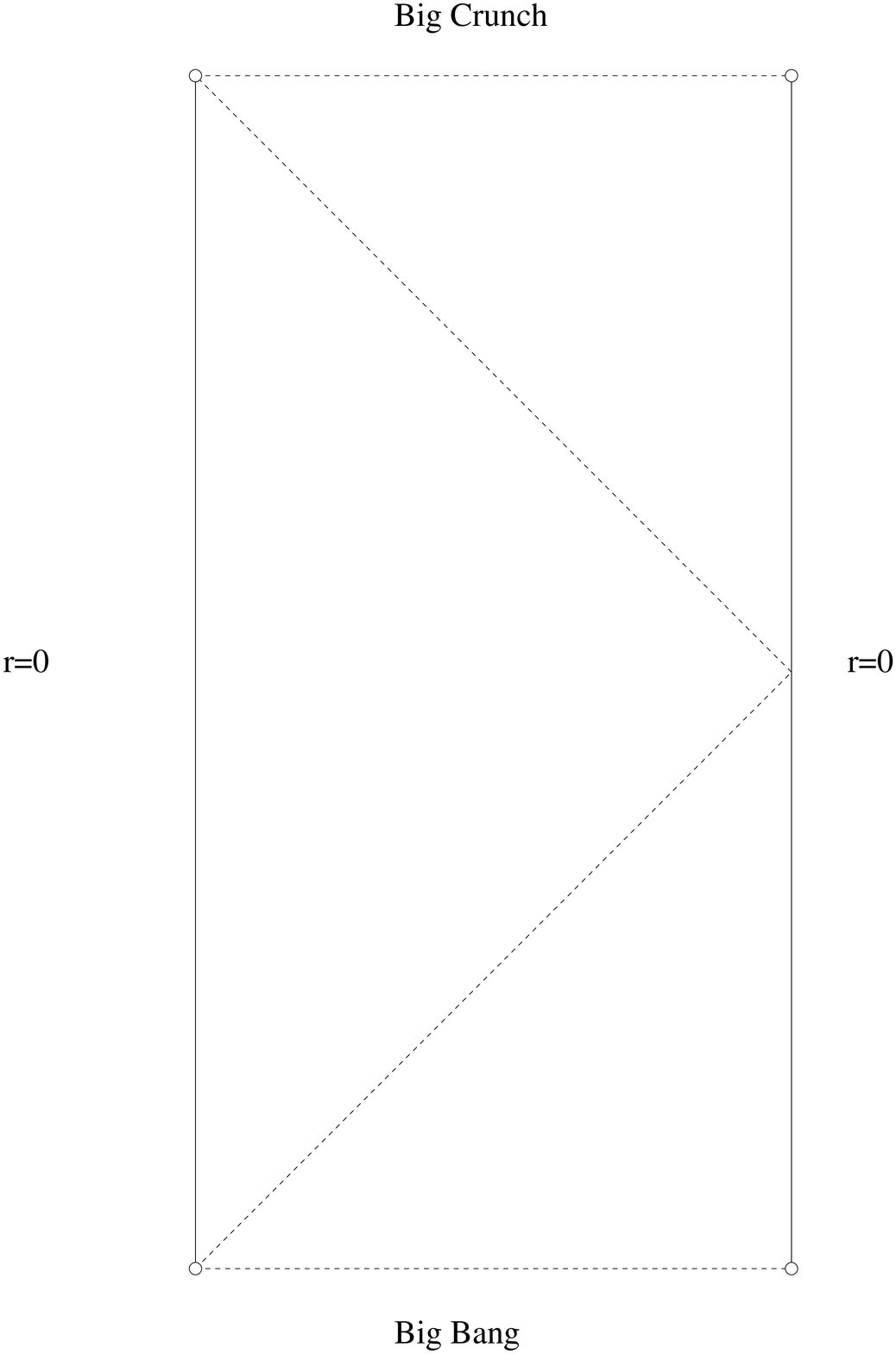}
\end{center}
\caption{Penrose diagram for the spherical universe.} \label{Pen_spherical}
\end{figure}

To obtain the Penrose diagram for the flat universe we write its metric as
\begin{align*}
ds^2 & = - dt^2 + a^2(t) \left[ d \rho^2 + \rho^2 \left(d\theta^2 + \sin^2\theta d\varphi^2\right) \right] \\
& = a^2(t) \left[ - d\tau^2 + d \rho^2 + \rho^2 \left(d\theta^2 + \sin^2\theta d\varphi^2\right) \right] \\
& = a^2(t) \left[ - d\tau^2 + d \rho^2 \right] + r^2 \left(d\theta^2 + \sin^2\theta d\varphi^2\right),
\end{align*}
where $\rho \in [0,+\infty)$ and $r = a(t) \rho$. Since
\[
\int_{0}^{+\infty} \frac{dt}{a(t)} = \int_0^{+\infty} \frac{da}{a\dot{a}} = \int_0^{+ \infty} \frac{da}{a\sqrt{\frac{2 \alpha}{a}}} = + \infty,
\]
we see that the quotient metric is conformal to the region $(0,+\infty) \times [0,+\infty)$ of the Minkowski $2$-dimensional spacetime, and so the Penrose diagram is as depicted in Figure~\ref{Pen_flat}. Note that the Big Bang is a spacelike boundary.

\begin{figure}[h!]
\begin{center}
\psfrag{r=0}{$r=0$}
\psfrag{Big Bang}{Big Bang}
\psfrag{i+}{$i^+$}
\psfrag{i0}{$i^0$}
\psfrag{I+}{$\mathscr{I^+}$}
\epsfxsize=.6\textwidth
\leavevmode
\epsfbox{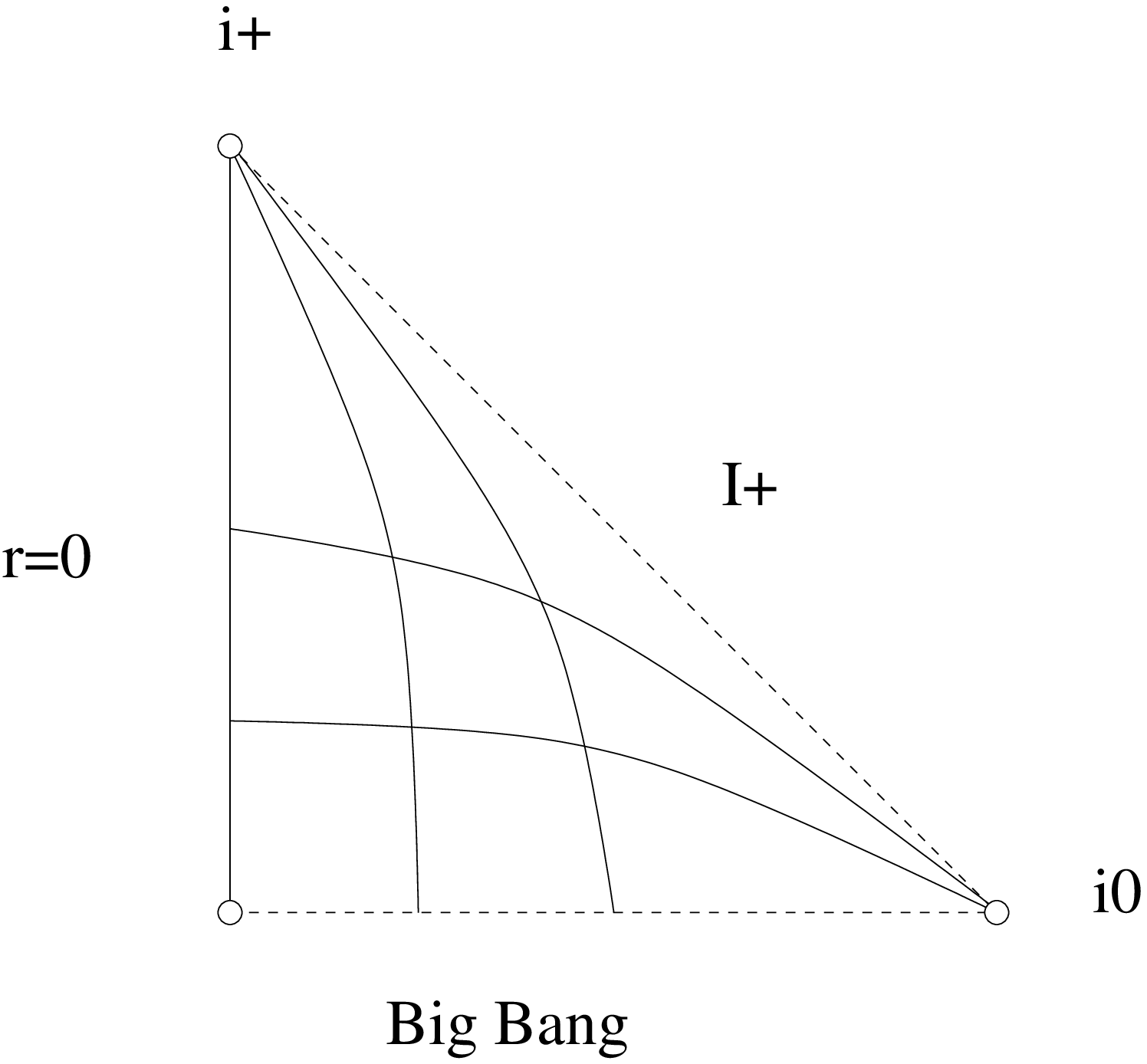}
\end{center}
\caption{Penrose diagram for the flat and hyperbolic universes.} \label{Pen_flat}
\end{figure}

The Penrose diagram for the hyperbolic universe turns out to be the same as for the flat universe. To see this we write its metric as
\begin{align*}
ds^2 & = - dt^2 + a^2(t) \left[ d \psi^2 + \sinh^2\psi \left(d\theta^2 + \sin^2\theta d\varphi^2\right) \right] \\
& = a^2(t) \left[ - d\tau^2 + d \psi^2 + \sinh^2\psi \left(d\theta^2 + \sin^2\theta d\varphi^2\right) \right] \\
& = a^2(t) \left[ - d\tau^2 + d \psi^2 \right] + r^2 \left(d\theta^2 + \sin^2\theta d\varphi^2\right),
\end{align*}
where $\psi \in [0,+\infty)$ and $r = a(t) \sinh \psi$, and note that
\[
\int_{0}^{+\infty} \frac{dt}{a(t)} = \int_0^{+\infty} \frac{da}{a\dot{a}} = \int_0^{+ \infty} \frac{da}{a\sqrt{\frac{2 \alpha}{a} + 1}} = + \infty.
\]

\subsection{Universes with matter and $\Lambda>0$}

If $\alpha>0$ and $\Lambda>0$ we can choose units such that $\Lambda=3$. The first Friedmann equation then becomes
\[
\dot{a}^2 - \frac{2\alpha}a - a^2 = - k.
\]
In this case all three values $k=1$, $k=0$ and $k=-1$ are possible. As before, we analyze the graph of the effective potential $V(a)$ (Figure~\ref{Lpositive_graph}). The hyperbolic and flat universes behave qualitatively like when $\Lambda=0$, although $\dot{a}(t)$ is now unbounded as $t \to + \infty$, instead of approaching some constant. The spherical universe has a richer spectrum of possible behaviors, depending on $\alpha$, represented in Figure~\ref{Lpositive_graph} by drawing the line of constant energy $-k=-1$ at three different heights. The higher line (corresponding to $\alpha > \frac{\sqrt{3}}{9}$) yields a behaviour similar to that of the hyperbolic and flat universes. The intermediate line (corresponding to $\alpha=\frac{\sqrt{3}}{9}$) gives rise to an unstable equilibrium point $a = \frac{\sqrt{3}}{3}$, where the attraction force of the matter is balanced by the repulsion force of the cosmological constant; it corresponds to the so-called {\bf Einstein universe}, the first cosmological model ever proposed. The intermediate line also yields two solutions asymptotic to the Einstein universe, one containing a Big Bang and the other endless expansion. Finally, the lower line (corresponding to $\alpha < \frac{\sqrt{3}}{9}$) yields two different types of behaviour (depending on the initial conditions): either similar to the spherical model with $\Lambda=0$, or to the de Sitter universe. 

\begin{figure}[h!]
\begin{center}
\psfrag{a}{$a$}
\psfrag{V(a)}{$V(a)$}
\psfrag{k=1}{$k=1$}
\psfrag{k=0}{$k=0$}
\psfrag{k=-1}{$k=-1$}
\epsfxsize=.6\textwidth
\leavevmode
\epsfbox{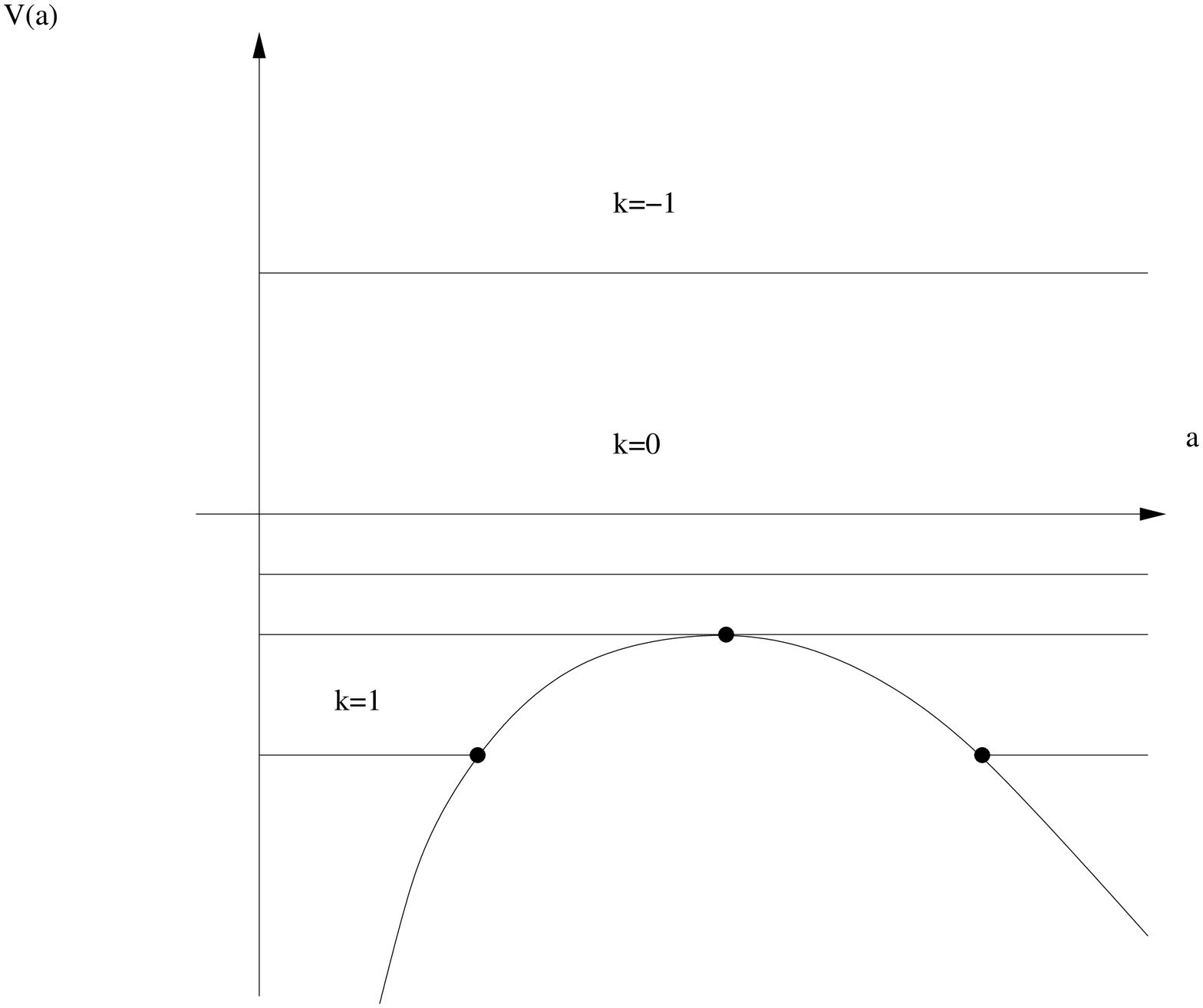}
\end{center}
\caption{Effective potential for FLRW models with $\Lambda>0$.} \label{Lpositive_graph}
\end{figure}

It is currently believed that the best model for our physical Universe is the the flat universe with $\Lambda > 0$. If we write the first Friedmann equation as
\[
H^2 \equiv \frac{\dot{a}^2}{a^2} = \frac{8 \pi}{3} \rho + \frac{\Lambda}{3}
\]
then the terms on the right-hand side are in the proportion $2 : 5$ at the present time.

\subsection{Universes with matter and $\Lambda<0$}

If $\alpha>0$ and $\Lambda<0$ we can choose units such that $\Lambda=-3$. The first Friedmann equation then becomes
\[
\dot{a}^2 - \frac{2\alpha}a + a^2 = - k.
\]
In this case all three values $k=1$, $k=0$ and $k=-1$ are possible. As before, we analyze the graph of the effective potential $V(a)$ (Figure~\ref{Lnegative_graph}). The qualitative behaviour of the hyperbolic, flat and spherical universes is the same as the spherical universe with $\Lambda=0$, namely starting at a Big Bang and ending at a Big Crunch.

\begin{figure}[h!]
\begin{center}
\psfrag{a}{$a$}
\psfrag{V(a)}{$V(a)$}
\psfrag{k=1}{$k=1$}
\psfrag{k=0}{$k=0$}
\psfrag{k=-1}{$k=-1$}
\epsfxsize=.6\textwidth
\leavevmode
\epsfbox{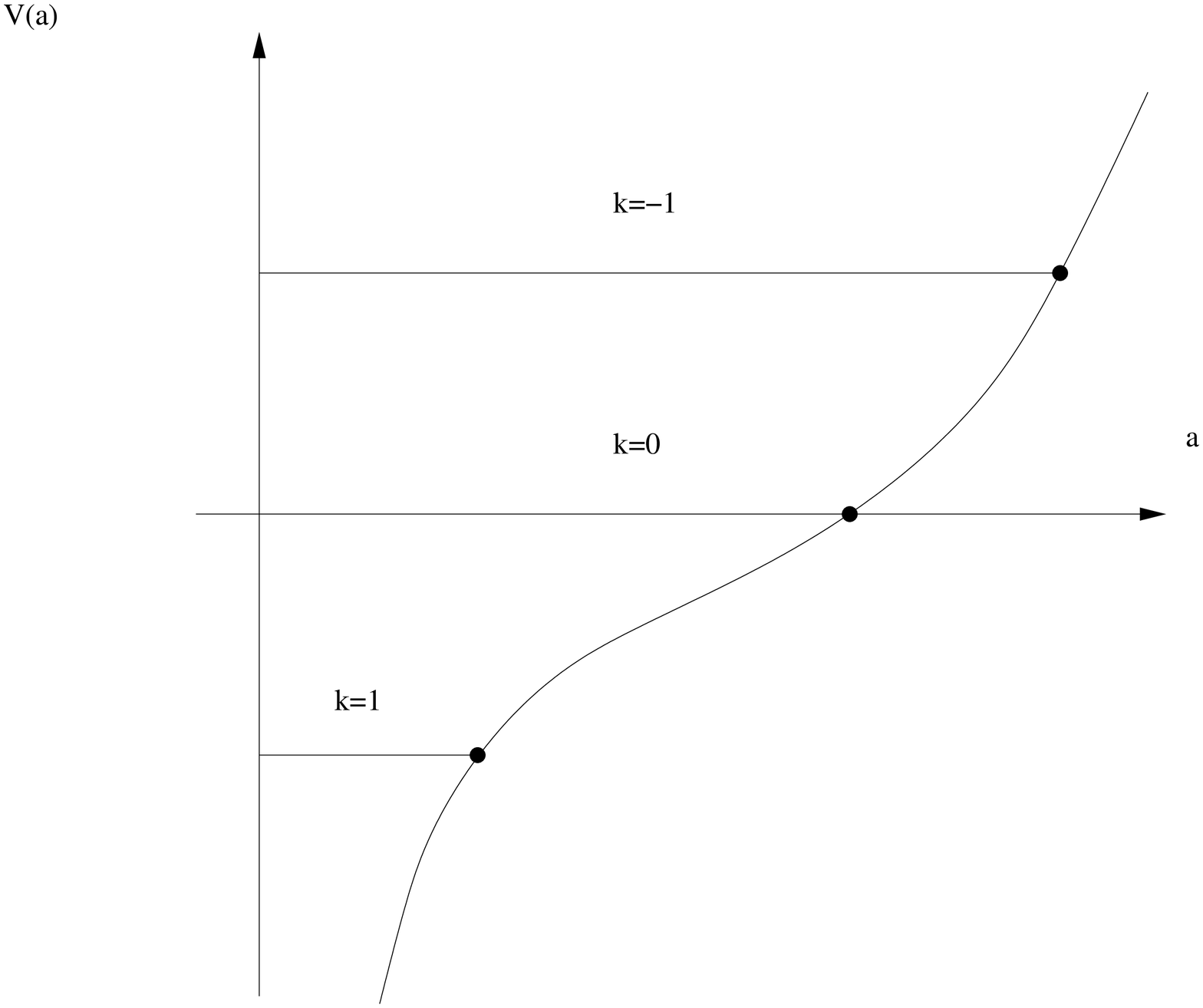}
\end{center}
\caption{Effective potential for FLRW models with $\Lambda>0$.} \label{Lnegative_graph}
\end{figure}

\section{Matching} \label{sec2.5}

Let $(M_1,g_1)$ and $(M_2,g_2)$ be solutions of the Einstein field equations containing open sets $U_1$ and $U_2$ whose boundaries $S_1$ and $S_2$ are {\bf timelike hypersurfaces}, that is, hypersurfaces whose induced metric is Lorentzian (or, equivalently, whose normal vector is spacelike). If $S_1$ is diffeomorphic to $S_2$ then we can identify them to obtain a new manifold $M$ gluing $U_1$ to $U_2$ along $S_1 \cong S_2$ (Figure~\ref{matching}).

\begin{figure}[h!]
\begin{center}
\psfrag{g1}{$g_1$}
\psfrag{g2}{$g_2$}
\psfrag{n}{$n$}
\psfrag{S1=S2}{$S_1 \cong S_2$}
\epsfxsize=.6\textwidth
\leavevmode
\epsfbox{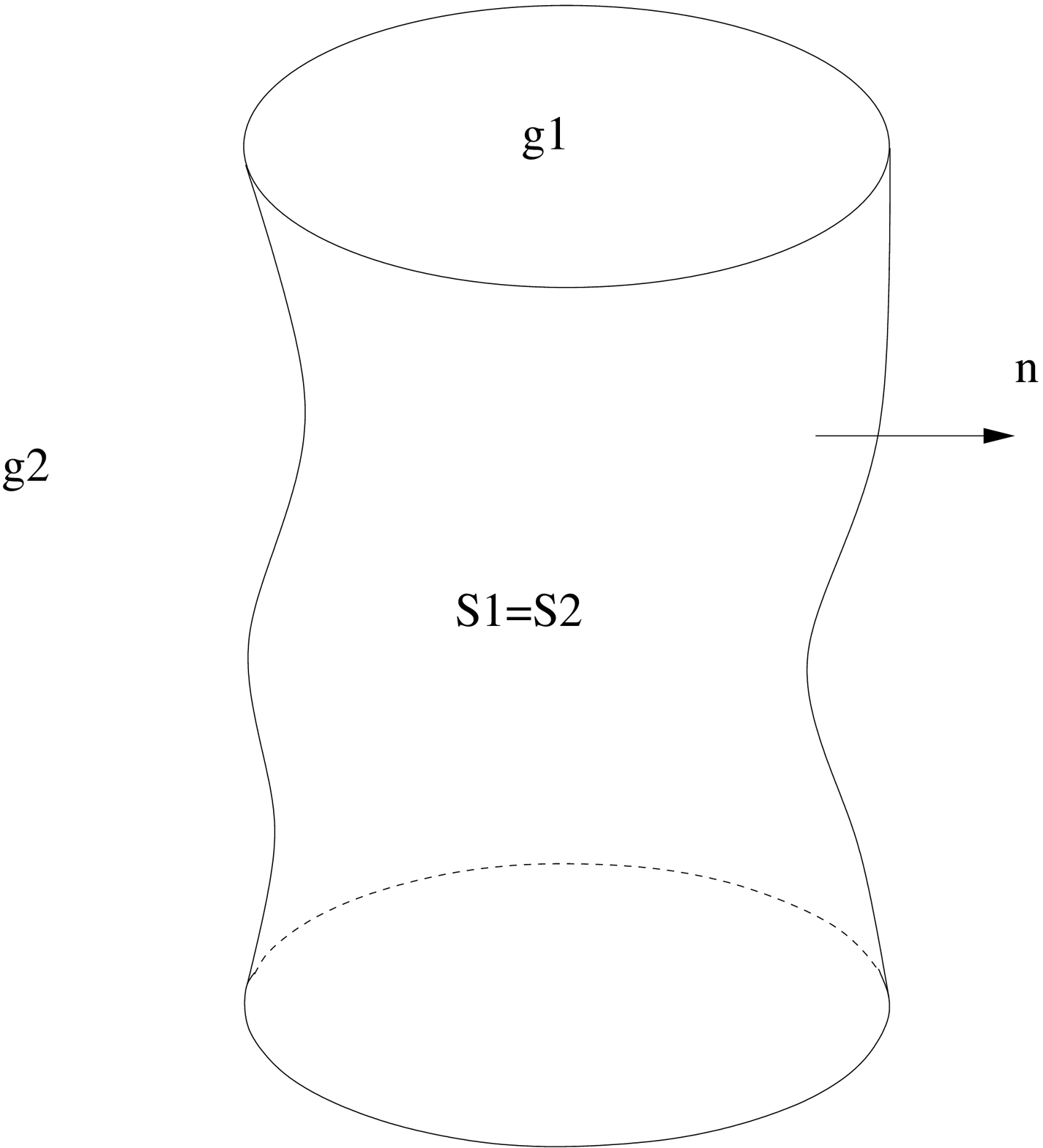}
\end{center}
\caption{Matching two spacetimes.} \label{matching}
\end{figure}

Let $n$ be the unit normal vector to $S_1$ pointing out of $U_1$, which we identify with the unit normal vector to $S_2$ pointing into $U_2$. If $(x^1, x^2, x^3)$ are local coordinates on $S \equiv S_1 \cong S_2$, we can construct a system of local coordinates $(t,x^1, x^2, x^3)$ in a neighbourhood of $S$ by moving a distance $t$ along the geodesics with initial condition $n$. Note that $U_1$, $S$ and $U_2$ correspond to $t<0$, $t=0$ and $t>0$ in these coordinates. Since $\frac{\partial}{\partial t}$ is the unit tangent vector to the geodesics, we have
\begin{align*}
& \frac{\partial}{\partial t} \left\langle \frac{\partial}{\partial t}, \frac{\partial}{\partial x^i} \right\rangle = \left\langle \nabla_{\frac{\partial}{\partial t}} \frac{\partial}{\partial t}, \frac{\partial}{\partial x^i} \right\rangle + \left\langle \frac{\partial}{\partial t}, \nabla_{\frac{\partial}{\partial t}} \frac{\partial}{\partial x^i} \right\rangle \\
& = \left\langle \frac{\partial}{\partial t}, \nabla_{\frac{\partial}{\partial x^i}} \frac{\partial}{\partial t} \right\rangle  = \frac{\partial}{\partial x^i} \left( \frac12 \left\langle \frac{\partial}{\partial t},  \frac{\partial}{\partial t} \right\rangle \right) = 0
\end{align*}
($i=1,2,3$), where we used
\[
\nabla_{\frac{\partial}{\partial t}} \frac{\partial}{\partial x^i} - \nabla_{\frac{\partial}{\partial x^i}} \frac{\partial}{\partial t} = \left[ \frac{\partial}{\partial t}, \frac{\partial}{\partial x^i} \right] = 0.
\]
Since for $t=0$ we have
\[
\left\langle \frac{\partial}{\partial t}, \frac{\partial}{\partial x^i} \right\rangle = \left\langle n, \frac{\partial}{\partial x^i} \right\rangle = 0,
\]
we see that  $\frac{\partial}{\partial t}$ remains orthogonal to the surfaces of constant $t$. This result will be used repeatedly.

\begin{Lemma} ({\bf Gauss Lemma I})
Let $(M,g)$ be a Riemannian or a Lorentzian manifold, and $S \subset M$ a hypersurface whose normal vector field $n$ satisfies $g(n,n) \neq 0$. The hypersurfaces $S_t$ obtained from $S$ by moving a distance $t$ along the geodesics orthogonal to $S$ remain orthogonal to the geodesics.
\end{Lemma}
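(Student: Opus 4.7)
The plan is to formalize exactly the coordinate computation sketched in the paragraph immediately preceding the lemma. First I would introduce \textbf{Gaussian normal coordinates}: choose local coordinates $(x^1,\ldots,x^{n-1})$ on $S$, let $n(x)$ denote a smoothly chosen unit normal field at the point $p(x)\in S$ (which exists locally because $g(n,n)\neq 0$, so we can normalize to $g(n,n)=\pm 1$), and define the map $\Phi(t,x^1,\ldots,x^{n-1}) := \exp_{p(x)}(tn(x))$. Because $n$ is transverse to $S$, the differential of $\Phi$ at $t=0$ has full rank, so the inverse function theorem provides a tubular neighbourhood of $S$ on which $(t,x^1,\ldots,x^{n-1})$ are genuine coordinates. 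In these coordinates, the integral curves of $\partial/\partial t$ are by construction the unit-speed geodesics emanating orthogonally from $S$, so $\nabla_{\partial_t}\partial_t = 0$, and $\langle \partial_t,\partial_t\rangle$ is constant along each such curve, equal to $g(n,n)=\pm 1$.

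The heart of the proof is then the one-line computation already carried out in the excerpt, which I would simply extract and highlight. Using $\nabla_{\partial_t}\partial_t=0$, the torsion-free identity $\nabla_{\partial_t}\partial_{x^i} = \nabla_{\partial_{x^i}}\partial_t$ (which holds because $[\partial_t,\partial_{x^i}]=0$), and the fact that $\langle \partial_t,\partial_t\rangle$ is constant (hence has vanishing derivative in every direction), we get
\[
\frac{\partial}{\partial t}\langle \partial_t,\partial_{x^i}\rangle \;=\; \langle \nabla_{\partial_t}\partial_t,\partial_{x^i}\rangle + \langle \partial_t,\nabla_{\partial_{x^i}}\partial_t\rangle \;=\; \tfrac12\,\partial_{x^i}\langle \partial_t,\partial_t\rangle \;=\; 0.
\]
The initial condition $\langle \partial_t,\partial_{x^i}\rangle\big|_{t=0} = \langle n,\partial_{x^i}\rangle = 0$ is the defining property of $n$ as normal to $S$. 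Integrating the ODE above in $t$ from this initial value yields $\langle \partial_t,\partial_{x^i}\rangle \equiv 0$ throughout the tubular neighbourhood. Since the hypersurfaces $S_t = \{t=\mathrm{const}\}$ are spanned tangentially by $\partial_{x^1},\ldots,\partial_{x^{n-1}}$, and the normal geodesics have tangent $\partial_t$, this is precisely the orthogonality asserted by the lemma.

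The only conceptual point that requires any care is the Riemannian/Lorentzian dichotomy: for spacelike $n$ (or the Riemannian case) one has $\langle \partial_t,\partial_t\rangle = +1$, while for timelike $n$ in a Lorentzian manifold one has $\langle \partial_t,\partial_t\rangle = -1$. Since the computation above uses only the \emph{constancy} of this inner product, not its sign, the argument is insensitive to the causal character of $n$; the hypothesis $g(n,n)\neq 0$ is precisely what is required to normalize $n$ and to guarantee that $S$ is non-degenerate, so that the exponential construction produces a genuine tubular neighbourhood. I do not expect any serious obstacle: the proof is essentially a restatement of the calculation displayed in the excerpt, with the Gaussian normal coordinate construction added as a prelude.
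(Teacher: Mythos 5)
Your proof is correct and follows essentially the same route as the paper: the text constructs the coordinates $(t,x^1,x^2,x^3)$ by flowing along the normal geodesics and then performs exactly the computation $\frac{\partial}{\partial t}\langle \partial_t,\partial_{x^i}\rangle = \langle \partial_t,\nabla_{\partial_{x^i}}\partial_t\rangle = \frac12\partial_{x^i}\langle\partial_t,\partial_t\rangle = 0$ together with the initial condition $\langle n,\partial_{x^i}\rangle=0$. Your additional remarks on the inverse function theorem and the sign of $g(n,n)$ are correct elaborations of points the paper leaves implicit.
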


The same ideas can be used to prove a closely related result.

\begin{Lemma} ({\bf Gauss Lemma II})
Let $(M,g)$ be a Riemannian or a Lorentzian manifold and $p \in M$. The hypersurfaces $S_t$ obtained from $p$ by moving a distance $t$ along the geodesics through $p$ remain orthogonal to the geodesics.
\end{Lemma}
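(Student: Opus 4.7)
The plan is to mimic the argument used just before the statement of Gauss Lemma I, with the exponential map at $p$ playing the role of the normal flow off of $S$. First I would fix a ``unit sphere'' $\Sigma \subset T_pM$, that is, a codimension-one submanifold consisting of vectors $u$ with $g(u,u) = \pm 1$ of the chosen sign, and introduce local coordinates $(x^1,\ldots,x^{n-1})$ on $\Sigma$ via a parametrization $u(x^1,\ldots,x^{n-1})$. Then I would consider the map
\[
F(t, x^1, \ldots, x^{n-1}) = \exp_p\!\left(t \, u(x^1, \ldots, x^{n-1})\right),
\]
defined for $t>0$ small. Because $d(\exp_p)_0 = \id_{T_pM}$, this $F$ is a diffeomorphism from a punctured cylinder onto a punctured neighbourhood of $p$, and the hypersurfaces $S_t$ of the statement are precisely its level sets $\{t = \mathrm{const}\}$.

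In these coordinates the curve $t \mapsto F(t,x)$ is by construction the geodesic issuing from $p$ with initial velocity $u(x)$, so
\[
\nabla_{\partial_t}\partial_t = 0, \qquad \left\langle \partial_t, \partial_t \right\rangle \equiv g(u(x),u(x)) = \pm 1,
\]
and, because $\partial_t$ and $\partial_{x^i}$ are coordinate fields, $[\partial_t, \partial_{x^i}] = 0$. The same computation as in the preamble to Gauss Lemma I then yields
\[
\frac{\partial}{\partial t} \left\langle \partial_t, \partial_{x^i} \right\rangle
= \left\langle \nabla_{\partial_t} \partial_t, \partial_{x^i} \right\rangle
+ \left\langle \partial_t, \nabla_{\partial_{x^i}} \partial_t \right\rangle
= \tfrac12 \frac{\partial}{\partial x^i}\left\langle \partial_t, \partial_t \right\rangle = 0,
\]
so $\langle \partial_t, \partial_{x^i}\rangle$ is constant along each radial geodesic.

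The only essential difference from Gauss Lemma I, and the step I expect to be the main obstacle, is that here the ``initial surface'' collapses to the single point $p$, so the orthogonality condition cannot be read off directly at $t=0$. I would handle this by a limiting argument: differentiating $F$ gives
\[
dF_{(t,x)}\!\left(\tfrac{\partial}{\partial x^i}\right) = d(\exp_p)_{t \, u(x)}\!\left(t\,\tfrac{\partial u}{\partial x^i}\right),
\]
and since $d(\exp_p)_0 = \id_{T_pM}$, the right-hand side tends to $0$ as $t \to 0^+$, while $\partial_t|_{F(t,x)} \to u(x)$. Therefore $\langle \partial_t, \partial_{x^i}\rangle \to 0$ along each radial geodesic, and because this quantity is constant in $t$ it must vanish identically. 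This establishes the orthogonality of the geodesics through $p$ to the hypersurfaces $S_t$.
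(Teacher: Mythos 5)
Your proof is correct and follows exactly the route the paper intends: the text gives no separate argument for Gauss Lemma II, saying only that ``the same ideas can be used,'' and your computation is precisely the adaptation of the calculation preceding Gauss Lemma I. The one genuinely new ingredient needed --- replacing the initial orthogonality condition at $t=0$, which degenerates when the initial surface collapses to the point $p$, by the limit $\langle \partial_t, \partial_{x^i}\rangle \to 0$ as $t \to 0^+$ (since $dF(\partial_{x^i}) = d(\exp_p)_{tu}(t\,\partial u/\partial x^i) \to 0$ while $\partial_t$ stays bounded) --- is supplied correctly.
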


In this coordinate system the metrics $g_1$ and $g_2$ are given on $t \leq 0$ and $t \geq 0$, respectively, by
\[
g_A = dt^2 + h^A_{ij}(t,x) dx^i dx^j
\] 
($A=1,2$). Therefore we can define a continuous metric $g$ on $M$ if
\[
h^1_{ij}(0,x) dx^i dx^j = h^2_{ij}(0,x) dx^i dx^j,
\]
that is, if
\[
{g_1}_{|_{TS}} = {g_2}_{|_{TS}}.
\]
This also guarantees continuity of all tangential derivatives of the metric, but not of the normal derivatives. In order to have a $C^1$ metric we must have
\[
\frac{\partial h^1_{ij}}{\partial t}(0,x) dx^i dx^j = \frac{\partial h^2_{ij}}{\partial t}(0,x) dx^i dx^j,
\]
that is,
\[
{\cL_n g_1}_{|_{TS}} = {\cL_n g_2}_{|_{TS}}.
\]
Note that in this case the curvature tensor (hence the energy-momentum tensor) is at most discontinuous across $S$. More importantly, as shown in Exercise~\ref{continuity}, the components $T_{tt}$ and $T_{ti}$ of the energy-momentum tensor are continuous across $S$ (that is, the flow $T_{\mu\nu} n^\mu$ of energy and momentum across $S$ is equal on both sides), implying that the energy-momentum tensor satisfies the integral version of the conservation equation $\nabla^\mu T_{\mu\nu} = 0$. Therefore we can consider $(M,g)$ a solution of the Einstein equations.

Recall that
\[
K_A = \frac12 {\cL_n g_A}_{|_{TS_A}}
\]
is known as the {\bf extrinsic curvature}, or {\bf second fundamental form}, of $S_A$. We can summarize the discussion above in the following statement. 

\begin{Prop}
Two solutions $(M_1,g_1)$ and $(M_2,g_2)$ of the Einstein field equations can be matched along diffeomorphic timelike boundaries $S_1$ and $S_2$ if and only if the induced metrics and second fundamental forms coincide:
\[
g_1 = g_2 \qquad \text{ and } \qquad K_1 = K_2.
\]
\end{Prop}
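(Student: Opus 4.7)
My plan is to work in the Gaussian normal coordinates adapted to the gluing surface, as constructed in the discussion immediately preceding the statement. In these coordinates, a tubular neighbourhood of $S$ in $M_A$ is parametrized by $(t,x^1,x^2,x^3)$ with $S$ at $t=0$ and $U_A$ on one side, and by Gauss Lemma I the metric takes the block form
\[
g_A = dt^2 + h^A_{ij}(t,x)\,dx^i dx^j, \qquad A=1,2.
\]
Since $n = \partial/\partial t$ on $S$, the induced metric is $h^A_{ij}(0,x)\,dx^i dx^j$ and the second fundamental form reads $K_A = \tfrac12\,\partial_t h^A_{ij}(0,x)\,dx^i dx^j$. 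Thus the whole question reduces to asking when the two families $\{h^A_{ij}\}$ can be pasted across $t=0$ in a way compatible with the Einstein equations.

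For the $(\Leftarrow)$ direction, suppose $g_1|_{TS} = g_2|_{TS}$ and $K_1 = K_2$. The first equation is precisely $h^1_{ij}(0,x) = h^2_{ij}(0,x)$, which makes the pasted metric $g$ on $M$ continuous; the second then gives $\partial_t h^1_{ij}(0,x) = \partial_t h^2_{ij}(0,x)$, so $g$ is $C^1$ across $S$. Classical Einstein solutions on $U_1$ and $U_2$ then combine to a solution on $M$ once the energy-momentum tensor satisfies the integral form of $\nabla^\mu T_{\mu\nu} = 0$ through $S$; this is guaranteed by the preceding remark (established in the exercise on continuity) that $C^1$ regularity of $g$ forces $T_{tt}$ and $T_{ti}$ to be continuous across $S$, so the normal flux $T_{\mu\nu}n^\mu$ agrees on the two sides.

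For the $(\Rightarrow)$ direction, suppose a valid matching exists. Continuity of $g$ immediately yields $h^1_{ij}(0,x) = h^2_{ij}(0,x)$, that is, $g_1|_{TS} = g_2|_{TS}$. For the extrinsic curvatures, the key point is that any jump in $\partial_t h^A_{ij}$ across $t=0$ would produce a Dirac delta contribution to the Riemann (hence Einstein) tensor supported on $S$, amounting to a singular matter layer; a genuine bulk matching forbids such a distributional source, so the jump must vanish and $K_1 = K_2$.

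The main obstacle is the $(\Rightarrow)$ direction: making the distributional-curvature argument precise. The cleanest route is to inspect the component $R_{titj}$ in the Gaussian normal coordinates, which contains the term $-\tfrac12\,\partial_t^2 h_{ij}$, so a jump $[\partial_t h_{ij}]$ at $t=0$ indeed yields a delta function in the curvature; the resulting surface energy-momentum tensor is proportional to $[K_{ij}] - g^{kl}[K_{kl}]h_{ij}$, whose vanishing is equivalent to $K_1 = K_2$. With this identification in hand, the remaining work is just bookkeeping in the adapted chart.
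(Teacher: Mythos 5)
Your proposal is correct, and the $(\Leftarrow)$ direction reproduces the paper's argument essentially verbatim: Gaussian normal coordinates via Gauss Lemma I, continuity of $g$ from equality of the induced metrics, $C^1$ regularity from equality of the $K_A$, and then the continuity of $T_{tt}$ and $T_{ti}$ (from the exercise on the Gauss Lemma form of the curvature) to ensure the integral conservation law across $S$. Where you genuinely depart from the text is in the $(\Rightarrow)$ direction. The paper takes "matching" to mean, by fiat, producing a $C^1$ metric, so necessity of $K_1=K_2$ is immediate from reading off $\partial_t h_{ij}$ at $t=0$; there is nothing more to prove. You instead only assume continuity of $g$ and rule out a jump $[K_{ij}]\neq 0$ by observing that $R_{titj}$ contains $-\tfrac12\,\partial_t^2 h_{ij}$, so a jump in $\partial_t h_{ij}$ produces a delta function in the curvature whose Einstein-tensor coefficient is the trace-reversed jump $[K_{ij}]-h^{kl}[K_{kl}]\,h_{ij}$; since trace reversal is invertible on a $3$-dimensional hypersurface, the absence of a distributional surface source forces $[K_{ij}]=0$. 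This is the Israel thin-shell computation, and it buys you a strictly stronger necessity statement: even among distributional metrics that are merely continuous, a matching without a singular matter layer on $S$ requires $K_1=K_2$. The cost is the extra bookkeeping you acknowledge (verifying that the singular parts of $G_{tt}$ and $G_{ti}$ vanish and that the tangential singular part is exactly the trace-reversed jump), none of which is problematic, and all of which can be read off from the curvature components listed in Exercise~\ref{continuity}.
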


\section{Oppenheimer-Snyder collapse} \label{sec2.6}

We can use the matching technique to construct a solution of the Einstein field equations which describes a spherical cloud of dust collapsing to a black hole. This is a physically plausible model for a black hole, as opposed to the eternal black hole.

Let us take $(M_1,g_1)$ to be a flat collapsing FLRW universe:
\[
g_1 = - d\tau^2 + a^2(\tau) \left[ d \sigma^2 +  \sigma^2 \left(d\theta^2 + \sin^2\theta d\varphi^2\right) \right].
\]
We choose $S_1$ to be the hypersurface $\sigma=\sigma_0$, with normal vector
\[
n = \frac1a \frac{\partial}{\partial \sigma}.
\]
The induced metric then is
\[
{g_1}_{|_{TS_1}} = - d\tau^2 + a^2(\tau) {\sigma_0}^2 \left(d\theta^2 + \sin^2\theta d\varphi^2\right),
\]
and the second fundamental form
\[
K_1 = a(\tau) \sigma_0 \left(d\theta^2 + \sin^2\theta d\varphi^2\right).
\]
Here we used {\bf Cartan's magic formula}: if $\omega$ is a differential form and $X$ is a vector field then
\[
\cL_X \omega = X \contr d\omega + d (X \lrcorner \, \omega).
\]
Thus, for example,
\[
\cL_n d\tau = n \contr d^2 \tau + d (n \lrcorner \, d\tau) = 0.
\]
Note that the function $a(\tau)$ is constrained by the first Friedmann equation:
\begin{equation} \label{Friedmann}
\dot{a}^2 = \frac{2\alpha}{a}
\end{equation}
(we assume $\Lambda=0$).

We now take $(M_2,g_2)$ to be the Schwarzschild solution,
\[
g_2 = - V dt^2 + V^{-1} dr^2 + r^2 \left(d\theta^2 + \sin^2\theta d\varphi^2\right),
\]
where
\[
V = 1 - \frac{2M}r,
\]
and choose $S_2$ to be a spherically symmetric timelike hypersurface given by the parameterization
\[
\begin{cases}
t = t(\tau) \\
r = r(\tau)
\end{cases}.
\]
The exact form of the functions $t(\tau)$ and $r(\tau)$ will be fixed by the matching conditions; for the time being they are constrained only by the condition that $\tau$ is the proper time along $S_2$:
\[
- V \dot{t}^2 + V^{-1} \dot{r}^2 = -1.
\]
The induced metric is then
\[
{g_2}_{|_{TS_2}} = - d\tau^2 + r^2(\tau) \left(d\theta^2 + \sin^2\theta d\varphi^2\right),
\]
and so the two induced metrics coincide if and only if
\begin{equation} \label{matchingcond1}
r(\tau) = a(\tau) \sigma_0.
\end{equation}
To simplify the calculation of the second fundamental form, we note that since $S_1$ is ruled by timelike geodesics, so is $S_2$, because the induced metrics and extrinsic curvatures are the same (therefore so are the Christoffel symbols). Therefore $t(\tau)$ and $r(\tau)$ must be a solution of the radial geodesic equations, which are equivalent to
\begin{equation} \label{matchingcond2}
\begin{cases} 
V \dot{t} = E \\
- V \dot{t}^2 + V^{-1} \dot{r}^2 = -1
\end{cases}
\Leftrightarrow
\begin{cases}
V \dot{t} = E \\
\displaystyle \dot{r}^2 = E^2 - 1 + \frac{2M}{r}
\end{cases}
\end{equation}
(where $E>0$ is a constant). Equations~\eqref{Friedmann}, \eqref{matchingcond1} and \eqref{matchingcond2} are compatible if and only if
\[
\begin{cases}
E = 1 \\
M = \alpha {\sigma_0}^3
\end{cases},
\]
that is, if and only if $S_2$ represents a spherical shell dropped from infinity with zero velocity and the mass parameter of the Schwarzschild spacetime is related to the density $\rho(\tau)$ of the collapsing dust by
\[
M = \frac{4 \pi}{3} r^3(\tau) \rho(\tau).
\]
To compute the second fundamental form of $S_2$ we can then consider a family of free-falling spherical shells which includes $S_2$. If $s$ is the parameter indexing the shells (with $S_2$ corresponding to, say, $s=s_0$) then by the Gauss Lemma we can write the Schwarzschild metric in the form
\[
g_2 = - d \tau^2 + A^2(\tau,s) ds^2 + r^2(\tau,s) \left(d\theta^2 + \sin^2\theta d\varphi^2\right)
\]
(consider, for instance, the change of coordinates determined by the solution $t=t(\tau,s)$, $r=r(\tau,s)$ of the radial geodesic equations with initial conditions determined by $t(0,s)=0$, $r(0,s)=s$, $\dot{r}(0,s)=0$). The unit normal vector field $n$ to the hypersurfaces of constant $s$ is then
\[
n = \frac1A \frac{\partial}{\partial s},
\]
and so we have
\[
K_2 = \frac12 {\cL_n g_2}_{|_{TS_2}} = r (n \cdot r) \left(d\theta^2 + \sin^2\theta d\varphi^2\right).
\]
On the other hand, in Schwarzschild coordinates
\[
n = V^{-1} \dot{r} \frac{\partial}{\partial t} + V \dot{t} \frac{\partial}{\partial r},
\]
since $n$ must be unit and orthogonal to
\[
\dot{t} \frac{\partial}{\partial t} + \dot{r} \frac{\partial}{\partial r}.
\]
Therefore we have
\[
K_2 = r V \dot{t} \left(d\theta^2 + \sin^2\theta d\varphi^2\right) = r E \left(d\theta^2 + \sin^2\theta d\varphi^2\right),
\]
or, using $E=1$,
\[
K_2 = r(\tau) \left(d\theta^2 + \sin^2\theta d\varphi^2\right).
\]
In other words, $K_1 = K_2$ follows from the previous conditions, and so we indeed have a solution of the Einstein equations.

\begin{figure}[h!]
\begin{center}
\psfrag{r=0}{$r=0$}
\psfrag{Big Crunch}{Big Crunch}
\psfrag{i+}{$i^+$}
\psfrag{i0}{$i^0$}
\psfrag{i-}{$i^-$}
\psfrag{S1}{$S_1$}
\psfrag{S2}{$S_2$}
\psfrag{I+}{$\mathscr{I^+}$}
\psfrag{I-}{$\mathscr{I^-}$}
\epsfxsize=.9\textwidth
\leavevmode
\epsfbox{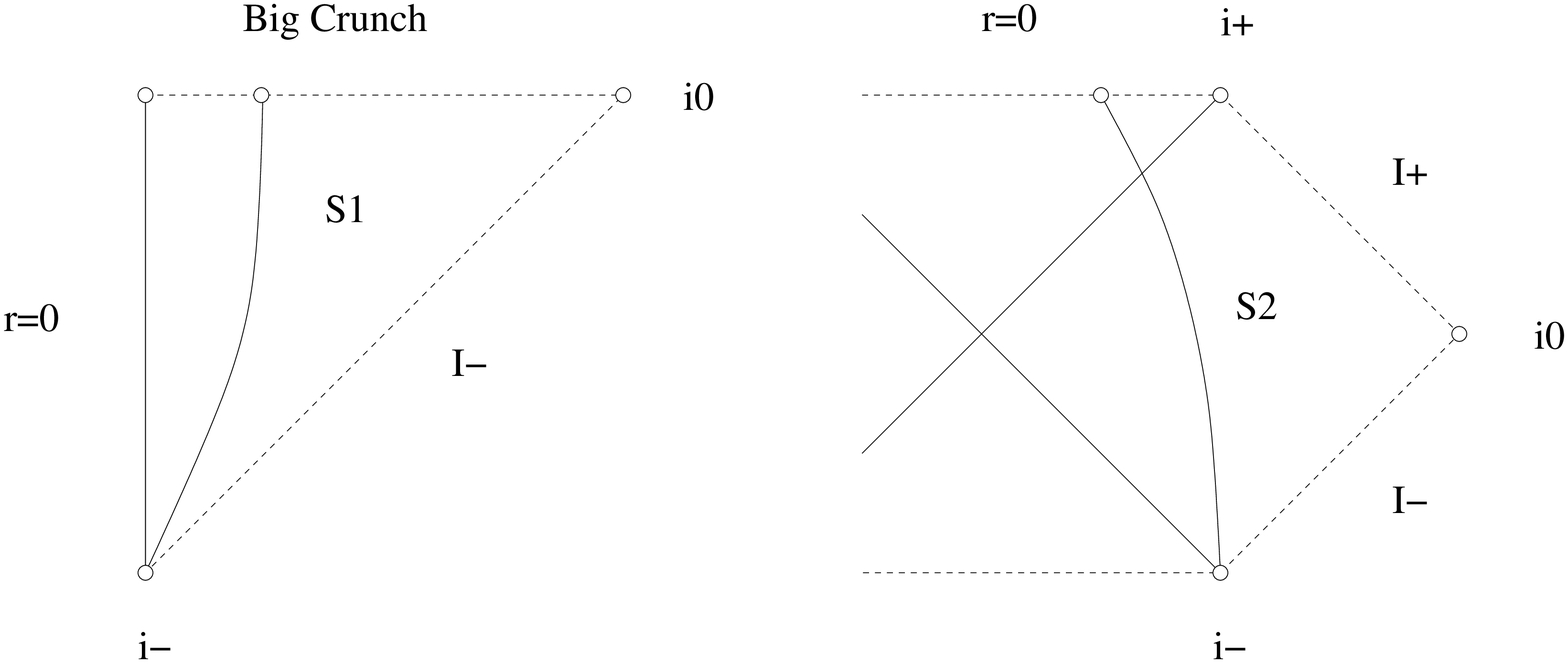}
\end{center}
\caption{Matching hypersurfaces in the collapsing flat FLRW universe and the Schwarzschild solution.} \label{Pen_OS1}
\end{figure}

To construct the Penrose diagram for this solution we represent $S_1$ and $S_2$ in the Penrose diagrams of the collapsing flat FLRW universe (obtained by reversing the time direction in the expanding flat FLRW universe) and the Schwarzschild solution (Figure~\ref{Pen_OS1}). Identifying these hypersurfaces results in the Penrose diagram depicted in Figure~\ref{Pen_OS2}.

\begin{figure}[h!]
\begin{center}
\psfrag{r=0}{$r=0$}
\psfrag{i+}{$i^+$}
\psfrag{i0}{$i^0$}
\psfrag{i-}{$i^-$}
\psfrag{S}{$S$}
\psfrag{I+}{$\mathscr{I^+}$}
\psfrag{I-}{$\mathscr{I^-}$}
\epsfxsize=.6\textwidth
\leavevmode
\epsfbox{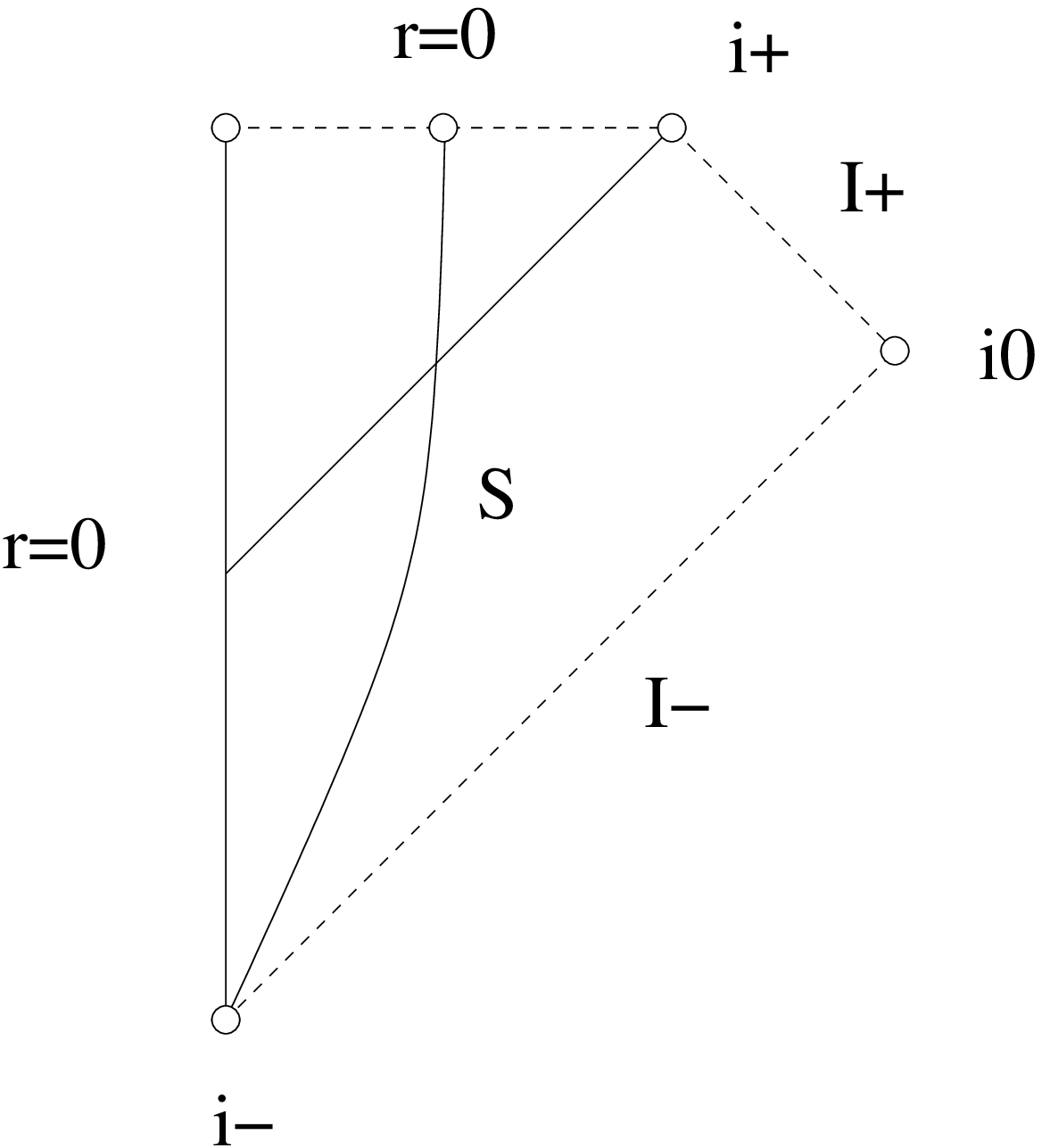}
\end{center}
\caption{Penrose diagram for the Oppenheimer-Snyder collapse.} \label{Pen_OS2}
\end{figure}

\section{Exercises} \label{sec2.7}

\begin{enumerate}

\item
In this exercise we will solve the vacuum Einstein equations (without cosmological constant) for the spherically symmetric Lorentzian metric given by
\[
\hspace{2cm} ds^2 = -(A(t,r))^2 dt^2 + (B(t,r))^2 dr^2 + r^2 \left( d\theta^2 + r^2 \sin^2 \theta d\varphi^2 \right),
\]
where $A$ and $B$ are positive smooth functions.
\begin{enumerate}
\item
Use Cartan's first structure equations,
\[
\begin{cases}
\omega_{\mu\nu}=-\omega_{\nu\mu}\\
d\omega^\mu + \omega^\mu_{\,\,\,\,\nu} \wedge \omega^\nu = 0
\end{cases},
\]
to show that the nonvanishing connection forms for the orthonormal frame dual to
\[
\hspace{2cm} \omega^0 = A dt, \quad \quad \omega^r = B dr, \quad \quad \omega^\theta = r d\theta, \quad \quad \omega^\varphi = r \sin \theta d\varphi
\]
are (using the notation $\dot{}=\frac{\partial}{\partial t}$ and ${}'=\frac{\partial}{\partial r}$)
\begin{align*}
& \omega^0_{\,\,\,\,r} = \omega^r_{\,\,\,\,0} = \frac{A'}{B} dt + \frac{\dot{B}}{A}dr \, ;\\
& \omega^\theta_{\,\,\,\,r} = - \omega^r_{\,\,\,\,\theta} = \frac1B d\theta \, ;\\
& \omega^\varphi_{\,\,\,\,r} = - \omega^r_{\,\,\,\,\varphi} = \frac{\sin\theta}{B} d\varphi \, ;\\
& \omega^\varphi_{\,\,\,\,\theta} = - \omega^\theta_{\,\,\,\,\varphi} = \cos\theta d\varphi \, .
\end{align*}
\item
Use Cartan's second structure equations,
\[
\Omega^\mu_{\,\,\,\,\nu} = d\omega^\mu_{\,\,\,\,\nu} + \omega^\mu_{\,\,\,\,\alpha} \wedge \omega^\alpha_{\,\,\,\,\nu} \, ,
\]
to show that the curvature forms on this frame are
\begin{align*}
\hspace{2cm} & \Omega^0_{\,\,\,\,r} = \Omega^r_{\,\,\,\,0} = \left(\frac{A''B-A'B'}{AB^3}+\frac{\dot{A}\dot{B}-A\ddot{B}}{A^3B} \right)\, \omega^r \wedge \omega^0 \, ; \\
& \Omega^0_{\,\,\,\,\theta} = \Omega^\theta_{\,\,\,\,0} = \frac{A'}{rAB^2} \, \omega^\theta \wedge \omega^0 + \frac{\dot{B}}{rAB^2} \, \omega^\theta \wedge \omega^r \, ; \\
& \Omega^0_{\,\,\,\,\varphi} = \Omega^\varphi_{\,\,\,\,0} = \frac{A'}{rAB^2} \, \omega^\varphi \wedge \omega^0 + \frac{\dot{B}}{rAB^2} \, \omega^\varphi \wedge \omega^r \, ; \\
& \Omega^\theta_{\,\,\,\,r} = -\Omega^r_{\,\,\,\,\theta} = \frac{B'}{rB^3} \, \omega^\theta \wedge \omega^r + \frac{\dot{B}}{rAB^2} \, \omega^\theta \wedge \omega^0 \, ; \\
& \Omega^\varphi_{\,\,\,\,r} = -\Omega^r_{\,\,\,\,\varphi} = \frac{B'}{rB^3} \, \omega^\varphi \wedge \omega^r + \frac{\dot{B}}{rAB^2} \, \omega^\varphi \wedge \omega^0 \, ; \\
& \Omega^\varphi_{\,\,\,\,\theta} = -\Omega^\theta_{\,\,\,\,\varphi} = \frac{B^2-1}{r^2B^2} \, \omega^\varphi \wedge \omega^\theta \, .
\end{align*}
\item
Using
\[
\Omega^\mu_{\,\,\,\,\nu} = \sum_{\alpha<\beta} R_{\alpha\beta\,\,\,\,\nu}^{\,\,\,\,\,\,\,\,\,\mu} \omega^\alpha \wedge \omega^\beta \, ,
\]
determine the components $R_{\alpha\beta\,\,\,\,\nu}^{\,\,\,\,\,\,\,\,\,\mu}$ of the curvature tensor in this orthonormal frame, and show that the nonvanishing components of the Ricci tensor in this frame are
 \begin{align*}
& R_{00} = \frac{A''B-A'B'}{AB^3}+\frac{\dot{A}\dot{B}-A\ddot{B}}{A^3B} + \frac{2A'}{rAB^2} \, ; \\
& R_{0r} = R_{r0}= \frac{2\dot{B}}{rAB^2} \, ; \\
& R_{rr} = \frac{A'B'-A''B}{AB^3}+\frac{A\ddot{B}-\dot{A}\dot{B}}{A^3B} + \frac{2B'}{rB^3} \, ; \\
& R_{\theta\theta} = R_{\varphi\varphi} = - \frac{A'}{rAB^2} + \frac{B'}{rB^3} + \frac{B^2-1}{r^2B^2} \, .
\end{align*}
Conclude that the nonvanishing components of the Einstein tensor in this frame are 
\begin{align*}
\hspace{2cm} & G_{00} = \frac{2B'}{rB^3} + \frac{B^2-1}{r^2B^2} \, ; \\
& G_{0r} = G_{r0}= \frac{2\dot{B}}{rAB^2} \, ; \\
& G_{rr} = \frac{2A'}{rAB^2} - \frac{B^2-1}{r^2B^2} \, ; \\
& G_{\theta\theta} = G_{\varphi\varphi} = \frac{A''B-A'B'}{AB^3}+\frac{\dot{A}\dot{B}-A\ddot{B}}{A^3B} + \frac{A'}{rAB^2} - \frac{B'}{rB^3} \, .
\end{align*}
\item
Show that if we write
\[
B(t,r) = \left(1 - \frac{2m(t,r)}r\right)^{-\frac12}
\]
for some smooth function $m$ then
\[
G_{00} = \frac{2m'}{r^2}.
\]
Conclude that the Einstein equations $G_{00}=G_{0r}=0$  are equivalent to
\[
B = \left(1 - \frac{2M}r\right)^{-\frac12},
\]
where $M \in \bbR$ is an integration constant.
\item
Show that the Einstein equation $G_{00}+G_{rr}=0$ is equivalent to $A = \frac{\alpha(t)}B$ for some positive smooth function $\alpha$. 
\item
Check that if $A$ and $B$ are as above then the remaining Einstein equations $G_{\theta\theta} = G_{\varphi\varphi} = 0$ are automatically satisfied.
\item
Argue that it is always possible to rescale the time coordinate $t$ so that the metric is written
\[
\hspace{2cm} ds^2 = -\left(1 - \frac{2M}r\right) dt^2 + \left(1 - \frac{2M}r\right)^{-1} dr^2 + r^2 \left( d\theta^2 + \sin^2 \theta d\varphi^2 \right)
\]
(the statement that any spherically symmetric solution of the vacuum Einstein equations without cosmological constant is of this form is known as {\bf Birkhoff's theorem}).
\end{enumerate}

\item
Show that the Riemannian manifold obtained by gluing the hypersurfaces $t=0$ of the two exterior regions in the maximally extended Schwarzschild solution along the horizon $r=2M$ is isometric to the {\bf Flamm paraboloid}, that is, the hypersurface in $\bbR^4$ with equation
\[
\sqrt{x^2+y^2+z^2} = 2M + \frac{w^2}{8M}
\]
(Figure~\ref{Flamm}).

\begin{figure}[h!]
\begin{center}
\epsfxsize=.5\textwidth
\leavevmode
\epsfbox{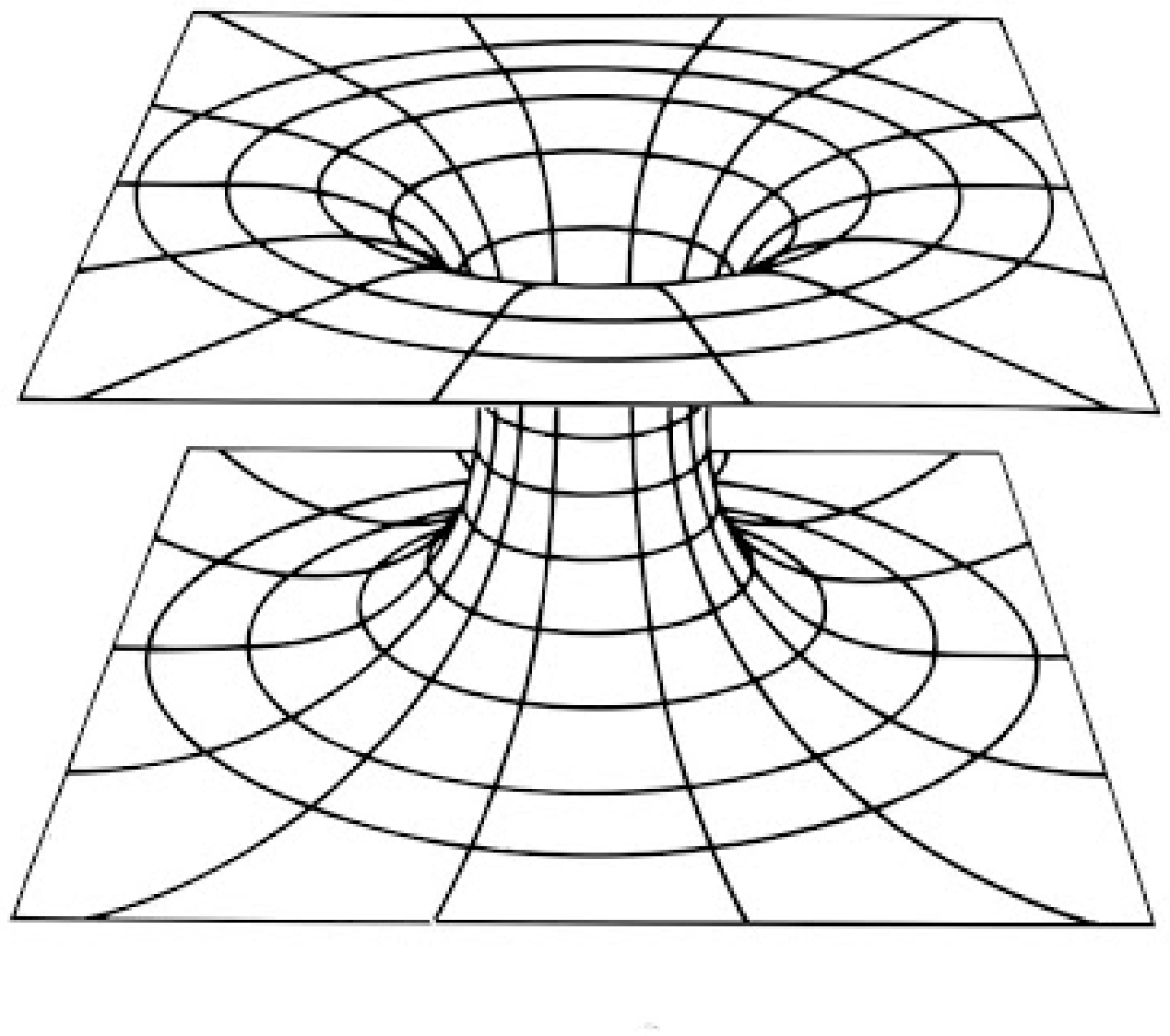}
\end{center}
\caption{Two-dimensional analogue of the Flamm paraboloid.} \label{Flamm}
\end{figure}

\item
Recall that the nonvanishing components of the Einstein tensor of the spherically symmetric Lorentzian metric
\[
\hspace{2cm} ds^2 = -(A(t,r))^2 dt^2 + (B(t,r))^2 dr^2 + r^2 \left( d\theta^2 + \sin^2 \theta d\varphi^2 \right)
\]
in the orthonormal frame dual to
\[
\hspace{2cm} \omega^0 = A dt, \quad \quad \omega^r = B dr, \quad \quad \omega^\theta = r d\theta, \quad \quad \omega^\varphi = r \sin \theta d\varphi,
\]
are given by (using the notation $\dot{}=\frac{\partial}{\partial t}$ and ${}'=\frac{\partial}{\partial r}$)
\begin{align*}
\hspace{2cm} & G_{00} =  \frac{2B'}{rB^3} + \frac{B^2-1}{r^2B^2} = \frac{2m'}{r^2}; \\
& G_{0r} = G_{r0}= \frac{2\dot{B}}{rAB^2}; \\
& G_{rr} = \frac{2A'}{rAB^2} - \frac{B^2-1}{r^2B^2}; \\
& G_{\theta\theta} = G_{\varphi\varphi} = \frac{A''B-A'B'}{AB^3}+\frac{\dot{A}\dot{B}-A\ddot{B}}{A^3B} + \frac{A'}{rAB^2} - \frac{B'}{rB^3},
\end{align*}
where
\[
B(t,r) = \left(1 - \frac{2m(t,r)}r\right)^{-\frac12}.
\]
\begin{enumerate}
\item
Assuming
\begin{itemize}
\item
$G_{0r}=0$ (so that $B$, and hence $m$, do not depend on $t$);
\item
$G_{00}+G_{rr}=0$ (so that $A = \frac{\alpha(t)}B$ for some positive smooth function $\alpha(t)$);
\item
$\alpha(t)=1$ (which can always be achieved by rescaling $t$),
\end{itemize}
show that 
\[
G_{\theta\theta} = G_{\varphi\varphi} = \frac12\left(A^2\right)''+\frac1r\left(A^2\right)'.
\]
\item
Prove that the general spherically symmetric solution of the vacuum Einstein field equations with a cosmological constant $\Lambda$ is the {\bf Kottler metric}
\begin{align*}
\hspace{2cm} g = & -\left(1 - \frac{2M}r - \frac{\Lambda}3 r^2 \right) dt^2 + \left(1 - \frac{2M}r - \frac{\Lambda}3 r^2 \right)^{-1} dr^2 \\
& + r^2 \left( d\theta^2 + \sin^2 \theta d\varphi^2 \right).
\end{align*}
\item
Obtain the Penrose diagram for the maximal extension of the Kottler solution with $\Lambda>0$ and $0<M<\frac{1}{3\sqrt{\Lambda}}$.
\item
Consider now the spherically symmetric electromagnetic field
\[
F = E(t,r) \, \omega^r \wedge \omega^0.
\]
Show that this field satisfies the vacuum Maxwell equations
\[
dF = d\star F=0
\]
(where $\star$ is the Hodge star) if and only if
\[
E(t,r) = \frac{e}{r^2}
\]
for some constant $e \in \bbR$ (the electric charge in units for which $4 \pi \varepsilon_0=1$).
\item
As we shall see in Chapter~\ref{chapter6}, this electromagnetic field corresponds to the energy-momentum tensor
\[
\hspace{2cm} T = \frac{E^2}{8\pi} \left( \omega^0 \otimes \omega^0 - \omega^r \otimes \omega^r + \omega^\theta \otimes \omega^\theta + \omega^\varphi \otimes \omega^\varphi \right).
\]
Prove that the general spherically symmetric solution of the Einstein field equations with an electromagnetic field of this kind is the {\bf Reissner-Nordstr\"om metric}
\begin{align*}
\hspace{2cm} g = & -\left(1 - \frac{2M}r + \frac{e^2}{r^2} \right) dt^2 + \left(1 - \frac{2M}r + \frac{e^2}{r^2} \right)^{-1} dr^2 \\
& + r^2 \left( d\theta^2 + \sin^2 \theta d\varphi^2 \right).
\end{align*}
\item
Obtain the Penrose diagram for the maximal extension of the Reissner-Nordstr\"om solution with $M>0$ and $0 < e^2 < M^2$.
\end{enumerate}

\item
Consider the spherically symmetric Lorentzian metric
\[
\hspace{2cm} ds^2 = - dt^2 + a^2(t) \left(\frac1{1-kr^2}dr^2 + r^2 \left( d\theta^2 + \sin^2 \theta d\varphi^2 \right) \right),
\]
where $a$ is a positive smooth function.
\begin{enumerate}
\item
Use Cartan's first structure equations,
\[
\begin{cases}
\omega_{\mu\nu}=-\omega_{\nu\mu}\\
d\omega^\mu + \omega^\mu_{\,\,\,\,\nu} \wedge \omega^\nu = 0
\end{cases},
\]
to show that the nonvanishing connection forms for the orthonormal frame dual to
\begin{align*}
& \omega^0 = dt, \quad \quad \quad \quad \, \omega^r = a(t)\left(1-kr^2\right)^{-\frac12} dr, \\
& \omega^\theta = a(t) r d\theta, \quad \quad \omega^\varphi = a(t) r \sin \theta d\varphi
\end{align*}
are
\begin{align*}
& \omega^0_{\,\,\,\,r} = \omega^r_{\,\,\,\,0} = \dot{a} \left(1-kr^2\right)^{-\frac12} dr\, ;\\
& \omega^0_{\,\,\,\,\theta} = \omega^\theta_{\,\,\,\,0} = \dot{a} r d\theta\, ;\\
& \omega^0_{\,\,\,\,\varphi} = \omega^\varphi_{\,\,\,\,0} = \dot{a} r \sin \theta d\varphi\, ;\\
& \omega^\theta_{\,\,\,\,r} = - \omega^r_{\,\,\,\,\theta} = \left(1-kr^2\right)^\frac12 d\theta\, ;\\
& \omega^\varphi_{\,\,\,\,r} = - \omega^r_{\,\,\,\,\varphi} = \left(1-kr^2\right)^\frac12 \sin\theta d\varphi\, ;\\
& \omega^\varphi_{\,\,\,\,\theta} = - \omega^\theta_{\,\,\,\,\varphi} = \cos\theta d\varphi\, .
\end{align*}
\item
Use Cartan's second structure equations,
\[
\Omega^\mu_{\,\,\,\,\nu} = d\omega^\mu_{\,\,\,\,\nu} + \omega^\mu_{\,\,\,\,\alpha} \wedge \omega^\alpha_{\,\,\,\,\nu}\, ,
\]
to show that the curvature forms on this frame are
\begin{align*}
& \Omega^0_{\,\,\,\,r} = \Omega^r_{\,\,\,\,0} = \frac{\ddot{a}}{a} \omega^0 \wedge \omega^r\, ; \\
& \Omega^0_{\,\,\,\,\theta} = \Omega^\theta_{\,\,\,\,0} = \frac{\ddot{a}}{a} \omega^0 \wedge \omega^\theta\, ; \\
& \Omega^0_{\,\,\,\,\varphi} = \Omega^\varphi_{\,\,\,\,0} = \frac{\ddot{a}}{a} \omega^0 \wedge \omega^\varphi\, ; \\
& \Omega^\theta_{\,\,\,\,r} = -\Omega^r_{\,\,\,\,\theta} = \left(\frac{k}{a^2}+\frac{\dot{a}^2}{a^2}\right) \omega^\theta \wedge \omega^r\, ; \\
& \Omega^\varphi_{\,\,\,\,r} = -\Omega^r_{\,\,\,\,\varphi} = \left(\frac{k}{a^2}+\frac{\dot{a}^2}{a^2}\right) \omega^\varphi \wedge \omega^r\, ; \\
& \Omega^\varphi_{\,\,\,\,\theta} = -\Omega^\theta_{\,\,\,\,\varphi} =  \left(\frac{k}{a^2}+\frac{\dot{a}^2}{a^2}\right) \omega^\varphi \wedge \omega^\theta\, .
\end{align*}
\item
Using
\[
\Omega^\mu_{\,\,\,\,\nu} = \sum_{\alpha<\beta} R_{\alpha\beta\,\,\,\,\nu}^{\,\,\,\,\,\,\,\,\,\mu} \omega^\alpha \wedge \omega^\beta\, ,
\]
determine the components $R_{\alpha\beta\,\,\,\,\nu}^{\,\,\,\,\,\,\,\,\,\mu}$ of the curvature tensor on this orthonormal frame, and show that the nonvanishing components of the Ricci tensor on this frame are
\begin{align*}
& R_{00} = -\frac{3\ddot{a}}{a}\, ; \\
& R_{rr} = R_{\theta\theta} = R_{\varphi\varphi} = \frac{\ddot{a}}{a} + \frac{2\dot{a}^2}{a^2} + \frac{2k}{a^2}\, .
\end{align*}
Conclude that the nonvanishing components of the Einstein tensor on this frame are
\begin{align*}
& G_{00} = \frac{3\dot{a}^2}{a^2} + \frac{3k}{a^2}\, ; \\
& G_{rr} = G_{\theta\theta} = G_{\varphi\varphi} = -\frac{2\ddot{a}}{a} - \frac{\dot{a}^2}{a^2} - \frac{k}{a^2}\, .
\end{align*}
\item
Show that the Einstein equations with a cosmological constant $\Lambda$ for a comoving pressureless perfect fluid of nonnegative density $\rho$, $G+\Lambda g=8\pi \rho \, dt^2$, are equivalent to the system
\[
\begin{cases}
\displaystyle \frac{\dot{a}^2}{a^2} + \frac{k}{a^2} = \frac{8\pi\rho}{3} + \frac{\Lambda}3 \\
\\
\displaystyle \frac{2\ddot{a}}{a} + \frac{\dot{a}^2}{a^2} + \frac{k}{a^2} = \Lambda
\end{cases}.
\]
Show that this system can be integrated to
\[
\begin{cases}
\displaystyle \frac{4\pi\rho}{3}a^3 = \alpha \\
\\
\displaystyle \frac12\dot{a}^2 - \frac{\alpha}{a} - \frac{\Lambda}6 a^2 = - \frac{k}{2}
\end{cases},
\]
where $\alpha$ is a nonnegative integration constant.
\item
Draw the Penrose diagram of the solutions with $\alpha > 0$, $\Lambda > 0$ and $k=0$ (currently believed to model our physical Universe).
\end{enumerate}

\item
Compute the metrics of the following manifolds in the local coordinates indicated, and sketch the corresponding Penrose diagrams:
\begin{enumerate}
\item
The region $T > \sqrt{X^2+Y^2+Z^2}$ of the $4$-dimensional Minkowski spacetime using the parameterization
\[
\begin{cases}
T = t \cosh \psi \\
X = t \sinh \psi \sin\theta \cos\varphi \\
Y = t \sinh \psi \sin\theta \sin\varphi \\
Z = t \sinh \psi \cos\theta
\end{cases}.
\]
\item
The hyperboloid $X^2+Y^2+Z^2+W^2=1+T^2$ in the $5$-dimensional Minkowski spacetime using the parameterization
\[
\begin{cases}
T = \sinh t \\
X = \cosh t \sin \psi \sin\theta \cos\varphi \\
Y = \cosh t \sin \psi \sin\theta \sin\varphi \\
Z = \cosh t \sin \psi \cos\theta \\
W = \cosh t \cos \psi
\end{cases}.
\]
\item
The region $W>1$ of the same hyperboloid using the parameterization
\[
\begin{cases}
T = \sinh t \cosh \psi \\
X = \sinh t \sinh \psi \sin\theta \cos\varphi \\
Y = \sinh t \sinh \psi \sin\theta \sin\varphi \\
Z = \sinh t \sinh \psi \cos\theta \\
W = \cosh t
\end{cases}.
\]
\item
The region $T>W$ of the same hyperboloid using the parameterization defined implicitly by
\[
\begin{cases}
T = W + e^t \\
X = e^t x \\
Y = e^t y \\
Z = e^t z
\end{cases}.
\]
\item
The hyperboloid $T^2+U^2=1+X^2+Y^2+Z^2$ in $\bbR^5$ with the pseudo-Riemannian metric
\[
ds^2=-dT^2-dU^2+dX^2+dY^2+dZ^2
\]
using the parameterization
\[
\begin{cases}
T = \cosh \psi \cos t \\
U = \cosh \psi \sin t \\
X = \sinh \psi \sin\theta \cos\varphi \\
Y = \sinh \psi \sin\theta \sin\varphi \\
Z = \sinh \psi \cos\theta
\end{cases}.
\]
\end{enumerate}

\item
Show that the anti-de Sitter metric
\[
ds^2 = - \cosh^2\psi dt^2 + d \psi^2 + \sinh^2 \psi  \left(d\theta^2 + \sin^2\theta d\varphi^2\right)
\]
is a solution of the vacuum Einstein field equations with cosmological constant $\Lambda=-3$.

\item
The $3$-dimensional anti-de Sitter space can be obtained by ``unwrapping'' the hyperboloid $T^2+U^2=1+X^2+Y^2$ in $\bbR^4$ with the pseudo-Riemannian metric
\[
ds^2=-dT^2-dU^2+dX^2+dY^2.
\]
In this exercise we identify $\bbR^4$ with the space of $2 \times 2$ matrices by the map
\[
(T,U,X,Y) \mapsto
\left(
\begin{matrix}
T + X & U + Y \\
-U + Y & T - X 
\end{matrix}
\right).
\]
\begin{enumerate}
\item
Show that the hyperboloid corresponds to the Lie group $SL(2,\bbR)$ of $2 \times 2$ matrices with unit determinant.
\item
Check that the squared norm of a vector $v\in\bbR^4$ in the metric above is $\langle v, v \rangle = - \det V$, where $V$ is the $2 \times 2$ matrix associated to $v$. Conclude that the metric induced on the hyperboloid is bi-invariant (that is, invariant under left and right multiplication).
\item
Use the Penrose diagram in Figure~\ref{SL2R} and the fact that the one-parameter subgroups of a Lie group with a bi-invariant metric are geodesics of that metric to conclude that the exponential map $\exp: \mathfrak{sl}(2,\bbR) \to SL(2,\bbR)$ is not surjective.
\item
Write explicitly the matrices in $SL(2,\bbR)$ which are not in the image of $\exp: \mathfrak{sl}(2,\bbR) \to SL(2,\bbR)$ by using the parameterization
\[
\begin{cases}
T = \cosh \psi \cos t \\
U = \cosh \psi \sin t \\
X = \sinh \psi \cos\varphi \\
Y = \sinh \psi \sin\varphi
\end{cases}.
\]
\end{enumerate}

\begin{figure}[h!]
\begin{center}
\psfrag{p=0}{$\psi=0$}
\psfrag{p=8}{$\psi=+\infty$}
\psfrag{t=p}{$t=\pi$}
\psfrag{t=0}{$t=0$}
\psfrag{t=-p}{$t=-\pi$}
\epsfxsize=.4\textwidth
\leavevmode
\epsfbox{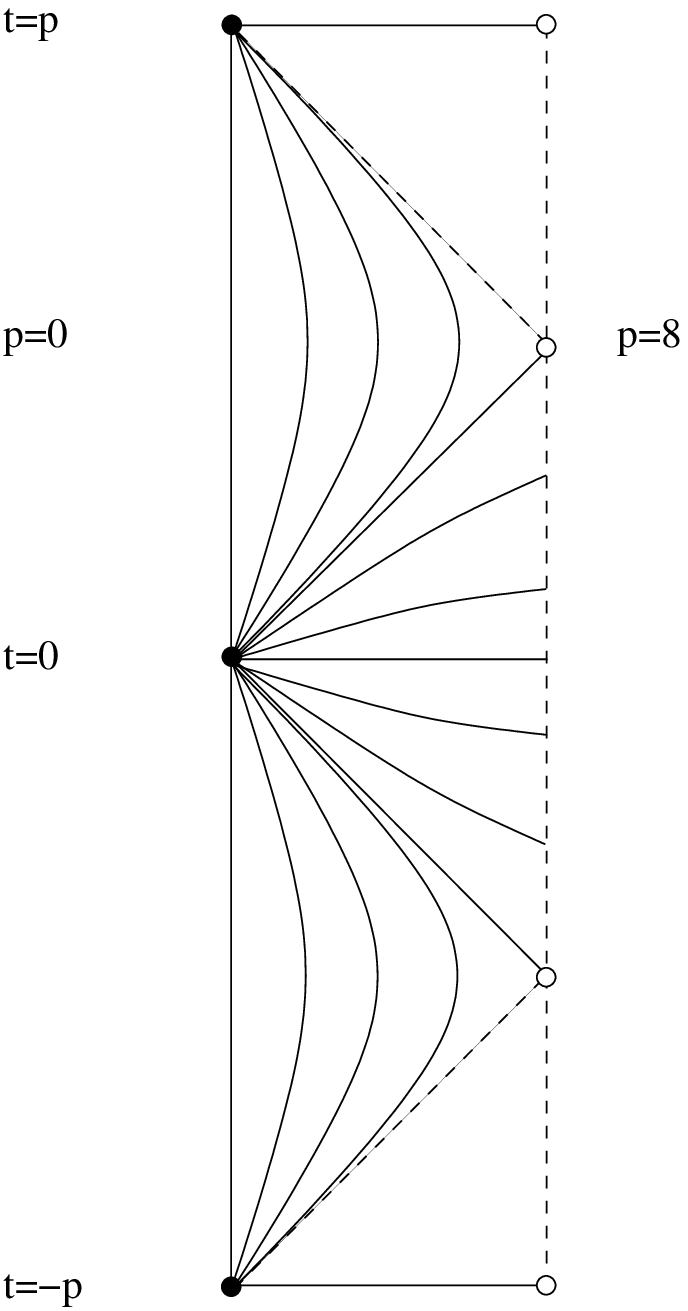}
\end{center}
\caption{Exponential map on $SL(2,\bbR)$.} \label{SL2R}
\end{figure}

\item\label{continuity}
Consider a Riemannian or Lorentzian metric given in the Gauss Lemma form
\[
g = dt^2 + h_{ij}(t,x) dx^i dx^j,
\]
so that the level sets of $t$ are Riemannian or Lorentzian manifolds with induced metric $h(t)=h_{ij}dx^i dx^j$ and second fundamental form 
\[
K(t)=\frac12\frac{\partial h_{ij}}{\partial t}dx^idx^j.
\]
Show that in these coordinates: 
\begin{enumerate}
\item
The Christoffel symbols are
\[
\Gamma^0_{ij} = - K_{ij}; \quad \Gamma^i_{jk} = \bar{\Gamma}^i_{jk}; \quad \Gamma^i_{0j} = K^i_{\,\,j},
\]
where $\bar{\Gamma}^i_{jk}$ are the Christoffel symbols of $h$.
\item
The components of the Riemann tensor are
\begin{align*}
\label{Riemann1} & R_{0i0}^{\,\,\,\,\,\,\,\, j} = - \frac{\partial}{\partial t} K^{j}_{\,\, i} - K_{il} K^{lj}; \\ 
\nonumber & R_{ij0}^{\,\,\,\,\,\,\,\, l} = - \bar{\nabla}_i K^l_{\,\, j} + \bar{\nabla}_j K^{l}_{\,\, i}; \\ 
\nonumber & R_{ijl}^{\,\,\,\,\,\,\,\, m} = \bar{R}_{ijl}^{\,\,\,\,\,\,\,\, m} - K_{il} K^{m}_{\,\,\,\, j} + K_{jl} K^{m}_{\,\,\,\, i},
\end{align*}
where $\bar{\nabla}$ is the Levi-Civita connection of $h$ and $\bar{R}_{ijl}^{\,\,\,\,\,\,\,\, m}$ are the components of the Riemann tensor of $h$.
\item
The time derivative of the inverse metric is given by the formula
\[
\frac{\partial h^{ij}}{\partial t} = -2K^{ij}.
\]
\item
The components of the Ricci tensor are
\begin{align*}
& R_{00} = - \frac{\partial}{\partial t} K^{i}_{\,\, i} - K_{ij} K^{ij}; \\
& R_{0i} = - \bar{\nabla}_i K^j_{\,\, j} + \bar{\nabla}_j K^{j}_{\,\, i}; \\
& R_{ij} = \bar{R}_{ij} - \frac{\partial}{\partial t} K_{ij} + 2 K_{il} K^{l}_{\,\, j} - K^{l}_{\,\, l} K_{ij},
\end{align*}
where $\bar{R}_{ij}$ are the components of the Ricci tensor of $h$.
\item
The scalar curvature is
\[
R = \bar{R} - 2 \frac{\partial}{\partial t} K^{i}_{\,\, i} - \left(K^{i}_{\,\, i}\right)^2 - K_{ij} K^{ij},
\]
where $\bar{R}$ is the scalar curvature of $h$.
\item
The component $G_{00}$ of the Einstein tensor is
\[
G_{00} = \frac12 \left( - \bar{R} + \left(K^{i}_{\,\, i}\right)^2 - K_{ij} K^{ij} \right).
\]
This shows that the matching conditions guarantee the continuity of $G_{00}$ and $G_{0i}=R_{0i}$.
\end{enumerate}

\item
Recall that the nonvanishing components of the Einstein tensor of the static, spherically symmetric Lorentzian metric
\[
\hspace{2cm} ds^2 = -(A(r))^2 dt^2 + (B(r))^2 dr^2 + r^2 \left( d\theta^2 + \sin^2 \theta d\varphi^2 \right)
\]
in the orthonormal frame dual to
\[
\hspace{2cm} \omega^0 = A dt, \quad \quad \omega^r = B dr, \quad \quad \omega^\theta = r d\theta, \quad \quad \omega^\varphi = r \sin \theta d\varphi,
\]
are given by
\begin{align*}
\hspace{2cm} & G_{00} =  \frac{2B'}{rB^3} + \frac{B^2-1}{r^2B^2} = \frac{2m'}{r^2}; \\
& G_{rr} = \frac{2A'}{rAB^2} - \frac{B^2-1}{r^2B^2}; \\
& G_{\theta\theta} = G_{\varphi\varphi} = \frac{A''B-A'B'}{AB^3} + \frac{A'}{rAB^2} - \frac{B'}{rB^3},
\end{align*}
where
\[
B(r) = \left(1 - \frac{2m(r)}r\right)^{-\frac12}.
\]
In this exercise we will solve the Einstein equations (without cosmological constant) for a static perfect fluid of constant rest density $\rho$ and rest pressure $p$, and match it to a Schwarzschild exterior.
\begin{enumerate}
\item
Show that
\[
B(r) = \left(1 - kr^2 \right)^{-\frac12},
\]
where $k=\frac{8 \pi}{3} \rho$. Conclude that the spatial metric is that of a sphere $S^3$ with radius $\frac1{\sqrt{k}}$.
\item
Solve the ordinary differential equation $G_{rr}=G_{\theta\theta}$ to obtain
\[
A(r) = C \left(1 - kr^2 \right)^{\frac12} + D,
\]
where $C,D \in \bbR$ are integration constants.
\item
Show that the matching conditions to a Schwarzschild exterior of mass $M > 0$ across a surface $r=R$ are
\[
\begin{cases}
\displaystyle A(R)=\left(1 - \frac{2M}R \right)^{\frac12} \\
\displaystyle A'(R)=\frac{M}{R^2}\left(1 - \frac{2M}R \right)^{-\frac12} \\
\displaystyle B(R)=\left(1 - \frac{2M}R \right)^{-\frac12}
\end{cases}.
\]
\item
Conclude that
\[
A(r) = \frac32 \left(1 - \frac{2M}R \right)^{\frac12} - \frac12 \left(1 - \frac{2Mr^2}{R^3} \right)^{\frac12}.
\]
\item
Show that
\[
p(r) = \frac{k\left(1 - kr^2 \right)^{\frac12}}{\frac32 \left(1 - kR^2 \right)^{\frac12} - \frac12 \left(1 - kr^2 \right)^{\frac12}} - k.
\]
What is the value of $p(R)$?
\item
Show that $M$ and $R$ must satisfy {\bf Buchdahl's limit}:
\[
\frac{2M}{R} < \frac89.
\]
What happens to $p(0)$ as $\frac{2M}{R} \to \frac89$?
\end{enumerate}

\end{enumerate}


\chapter{Causality} \label{chapter3}

In this chapter we briefly discuss the causality theory of a Lorentzian manifold, following \cite{GN14}. We take a minimal approach; more details can be found in \cite{ONeill83, W84, Penrose87, Naber88, HE95, Ringstrom09}.

\section{Past and future} \label{sec3.1}

A spacetime $(M,g)$ is said to be {\bf time-orientable} if there exists a timelike vector field, that is, a vector field $X$ satisfying $g(X,X)<0$. In this case, we can define a time orientation on each tangent space $T_pM$ by declaring causal vectors $v \in T_pM$ to be {\bf future-pointing} if $g(v, X_p) \leq 0$. It can be shown that any non-time-orientable spacetime admits a {\bf time-orientable double covering} (just like any non-orientable manifold admits an orientable double covering).

Assume that $(M,g)$ is {\bf time-oriented} (i.e.~time-orientable with a definite choice of time orientation). A timelike or causal curve $c:I \subset \bbR \to M$ is said to be {\bf future-directed} if $\dot{c}$ is future-pointing. The {\bf chronological future} of $p\in M$ is the set $I^+(p)$ of all points to which $p$ can be connected by a future-directed timelike curve. The {\bf causal future} of $p\in M$ is the set $J^+(p)$ of all points to which $p$ can be connected by a future-directed causal curve. Notice that $I^+(p)$ is simply the set of all events which are accessible to a particle with nonzero mass at $p$, whereas $J^+(p)$ is the set of events which can be causally influenced by $p$ (as this causal influence cannot propagate faster than the speed of light). Analogously, the {\bf chronological past} of $p\in M$ is the set $I^-(p)$ of all points which can be connected to $p$ by a future-directed timelike curve, and the {\bf causal past} of $p\in M$ is the set $J^-(p)$ of all points which can be connected to $p$ by a future-directed causal curve.

In general, the chronological and causal pasts and futures can be quite complicated sets, because of global features of the spacetime. Locally, however, causal properties are similar to those of Minkowski spacetime. More precisely, we have the following statement:

\begin{Prop} \label{local}
Let $(M,g)$ be a time-oriented spacetime. Then each point $p_0\in M$ has an open neighborhood $V \subset M$ such that the spacetime $(V,g)$ obtained by restricting $g$ to $V$ satisfies:
\begin{enumerate}
\item $V$  is {\bf geodesically convex}, that is, $V$ is a {\bf normal neighborhood} of each of its points such that given $p,q \in V$ there exists a unique geodesic (up to reparameterization) connecting $p$ to $q$;
\item $q\in I^+(p)$ if and only if there exists a future-directed timelike geodesic connecting $p$ to $q$;
\item $J^+(p) = \overline{I^+(p)}$;
\item $q \in J^+(p) \setminus I^+(p)$ if and only if there exists a future-directed null geodesic connecting $p$ to $q$.
\end{enumerate}
\end{Prop}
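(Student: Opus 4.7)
The plan is to obtain $V$ via Whitehead's theorem, which asserts that every point of a pseudo-Riemannian manifold has a neighborhood that is a normal neighborhood of each of its points and in which any two points are joined by a unique geodesic segment (the proof combines the inverse function theorem applied to the map $(p,v) \mapsto (p, \exp_p v)$ on $TM$ with the fact that sufficiently small geodesic balls are convex). Fixing such a $V$ around $p_0$ settles (1). For each $p \in V$ introduce the radial vector field $R_p$ on $V$, defined at $q$ by $R_p(q) = (d\exp_p)_{\exp_p^{-1}(q)}(\exp_p^{-1}(q))$, and the function $f_p : V \to \bbR$ given by $f_p(q) = g_p(\exp_p^{-1}(q), \exp_p^{-1}(q))$. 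The key technical tool is the Gauss Lemma, which (since $f_p \circ \exp_p(v) = g_p(v,v)$ has differential $w \mapsto 2g_p(v,w)$) gives the identity $\grad_g f_p = 2 R_p$ and, in particular, $g(R_p, R_p) = f_p$. Write $C^+_p \subset V$ for the image under $\exp_p$ of the open future-timelike cone in $T_pM$; then $R_p$ is future-directed timelike precisely on $C^+_p$.

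For (2) the ``if'' direction is immediate. For ``only if'', given a future-directed timelike curve $\gamma : [0,1] \to V$ from $p$ to $q$, the expansion $\exp_p^{-1}(\gamma(t)) = t\dot\gamma(0) + O(t^2)$ gives $f_p(\gamma(t)) = t^2 g_p(\dot\gamma(0),\dot\gamma(0)) + O(t^3) < 0$ for small $t > 0$, so $\gamma$ enters $C^+_p$ immediately. Inside $C^+_p$ both $R_p$ and $\dot\gamma$ are future-directed timelike, and the reverse Cauchy--Schwarz inequality gives $\tfrac{d}{dt}f_p(\gamma(t)) = 2g(R_p,\dot\gamma) < 0$; thus $f_p$ is strictly decreasing along $\gamma$ and cannot return to $0$, so $\gamma$ stays in $C^+_p$. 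Hence $q \in C^+_p$ and the radial geodesic $s \mapsto \exp_p(s\exp_p^{-1}(q))$ is the desired future-directed timelike geodesic.

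For (3), the inclusion $\overline{I^+(p)} \subseteq J^+(p)$ follows from (2) by approximating a point of $\overline{I^+(p)}$ by points reached from $p$ by timelike geodesics and passing to the limit. For the reverse, the same computation with weak inequalities applies to any future-directed causal curve $\gamma$ from $p$: on the open causal cone $g(R_p,\dot\gamma) \leq 0$, so $f_p \leq 0$ along $\gamma$, meaning $\gamma$ remains in $\overline{C^+_p}$. Since the future-timelike vectors are dense in the future-causal vectors in $T_pM$, and $\exp_p$ is a homeomorphism onto its image, $\overline{C^+_p} = \overline{I^+(p)}$, yielding $J^+(p) \subseteq \overline{I^+(p)}$. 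Finally, (4) drops out of (1)--(3) and uniqueness of the geodesic joining $p$ to $q$: if $q \in J^+(p) \setminus I^+(p)$ then (3) forces that unique geodesic to be future-directed causal while (2) rules out timelike, leaving null; conversely a future-directed null geodesic from $p$ to $q$ puts $q$ in $J^+(p)$, and $q \in I^+(p)$ would (by (2) and uniqueness) force the same geodesic to be timelike, a contradiction.

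The hard part will be the Gauss-lemma computation $\grad_g f_p = 2R_p$ and the clean handling of the initial time $t \to 0^+$: one must verify that $R_p$ extends smoothly across $p$, that $\gamma$ really enters the \emph{open} cone $C^+_p$ (rather than grazing the null cone indefinitely), and in part (3) that the weak-inequality version of the monotonicity argument is compatible with null curves which keep $f_p$ constantly zero, namely null geodesics emanating from $p$.
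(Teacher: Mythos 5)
Your treatment of (1), (2) and (4) is essentially the proof in the text: your $f_p$ is exactly the function $W_p$ used there (written in normal coordinates as $\sum\eta_{\mu\nu}x^\mu x^\nu$), your identity $\grad f_p = 2R_p$ is the same Gauss--Lemma computation of $(\grad W_p)_{c_v(1)} = 2\dot{c}_v(1)$, and the strict monotonicity of $f_p$ along a timelike curve once it has entered the timelike cone is the same mechanism (the text gets the curve into the cone via $\dot W_p(0)=0$, $\ddot W_p(0)=2\langle\dot c(0),\dot c(0)\rangle<0$, which is your Taylor expansion). The easy half of (3) is also identical.

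The one place you genuinely diverge is the inclusion $J^+(p)\subset\overline{I^+(p)}$, which the text does not prove locally at all but imports from the general Proposition~\ref{global}; you instead attempt a self-contained local proof by ``weakening the inequalities,'' and as written that step has a real gap, which to your credit you partly flag. Two things go wrong. First, for a causal curve $\gamma$ with \emph{null} initial tangent the expansion gives $f_p(\gamma(t)) = O(t^3)$ with uncontrolled sign, so you have not shown that $\gamma$ enters the closed causal cone at all. Second, even granting that, the weak inequality $\tfrac{d}{dt}f_p = 2g(R_p,\dot\gamma)\leq 0$ on the null cone does not prevent escape: at a boundary point where $\dot\gamma$ is proportional to the (null) radial direction $R_p$ the derivative vanishes, and a function that vanishes together with its derivative can still become positive immediately afterwards. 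Closing this requires the sharper standard lemma on causal curves in normal neighborhoods (O'Neill): either $\gamma$ is a reparameterized null geodesic from $p$, in which case $f_p\equiv 0$ along it, or $\gamma$ enters the open timelike cone image, after which your strict inequality takes over. Either supply that argument or do as the text does and quote the general inclusion $J^+(p)\subset\overline{I^+(p)}$ from Proposition~\ref{global}.
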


\begin{proof}
Recall that the {\bf exponential map} $\exp_p: U \subset T_pM \to M$ is the map given by
\[
\exp_p(v)=c_v(1),
\]
where $c_v$ is the geodesic with initial conditions $c_v(0)=p$, $\dot{c}_v(0)=v$; equivalently, $\exp_p(tv)=c_v(t)$ (since $c_{tv}(1)=c_v(t)$). Recall also that $V$ is a {\bf normal neighborhood} of $p$ if $\exp_p: U \to V$ is a diffeomorphism. The existence of geodesically convex neighborhoods is true for any affine connection and is proved for instance in \cite{KN96}.

To prove assertion (2), we start by noticing that if there exists a future-directed timelike geodesic connecting $p$ to $q$ then it is obvious that $q \in I^+(p)$. Suppose now that $q \in I^+(p)$; then there exists a future-directed timelike curve $c:[0,1] \to V$ such that $c(0)=p$ and $c(1)=q$. Choose {\bf normal coordinates} $(x^0,x^1,x^2,x^3)$, given by the parameterization
\[
\varphi(x^0,x^1,x^2,x^3)=\exp_p(x^0 E_0 + x^1 E_1 + x^2 E_2 + x^3 E_3),
\]
where $\{E_0,E_1,E_2,E_3\}$ is an orthonormal basis of $T_pM$ with $E_0$ timelike and future-pointing. These are global coordinates in $V$, since $\exp_p:U \to V$ is a diffeomorphism. Defining
\begin{align*}
W_p(q) & = - \left(x^0(q)\right)^2 + \left(x^1(q)\right)^2 + \left(x^2(q)\right)^2 + \left(x^3(q)\right)^2 \\
& = \sum_{\mu,\nu=0}^3 \eta_{\mu\nu}x^\mu(q)x^\nu(q),
\end{align*}
with $(\eta_{\mu\nu})=\diag(-1,1,1,1)$, we have to show that $W_p(q)<0$. Let $W_p(t)= W_p(c(t))$. Since $x^\mu(p)=0$, we have $W_p(0)=0$. Setting $x^\mu(t)=x^\mu(c(t))$, we obtain
\begin{align*}
& \dot{W}_p(t) = 2 \sum_{\mu,\nu=0}^3 \eta_{\mu\nu}x^\mu(t)\dot{x}^\nu(t);\\
& \ddot{W}_p(t) = 2 \sum_{\mu,\nu=0}^3 \eta_{\mu\nu}x^\mu(t)\ddot{x}^\nu(t) + 2\sum_{\mu,\nu=0}^3 \eta_{\mu\nu}\dot{x}^\mu(t)\dot{x}^\nu(t),
\end{align*}
and consequently
\begin{align*}
& \dot{W}_p(0) = 0;\\
& \ddot{W}_p(0) = 2\langle \dot{c}(0), \dot{c}(0) \rangle < 0.
\end{align*}
Therefore there exists $\varepsilon > 0$ such that $W_p(t) < 0$ for $t \in (0, \varepsilon)$.

By the Gauss Lemma, the level surfaces of $W_p$ are orthogonal to the geodesics through $p$. Therefore, if $c_v(t)=\exp_p(tv)$ is the geodesic with initial condition $v \in T_pM$, we have
\[
(\grad W_p)_{c_v(1)} = a(v) \dot{c}_v(1).
\]
Now
\begin{align*}
\left\langle (\grad W_p)_{c_v(t)}, \dot{c}_v(t) \right\rangle & = \frac{d}{dt} W_p(c_v(t)) = \frac{d}{dt} \langle tv, tv \rangle \\
& = \frac{d}{dt} \left( t^2 \langle v, v \rangle \right) = 2t \langle v, v \rangle,
\end{align*}
and hence
\[
\left\langle (\grad W_p)_{c_v(1)}, \dot{c}_v(1) \right\rangle = 2 \langle v, v \rangle.
\]
On the other hand,
\[
\left\langle (\grad W_p)_{c_v(1)}, \dot{c}_v(1) \right\rangle = \langle a(v) \dot{c}_v(1) , \dot{c}_v(1) \rangle = a(v) \langle v, v \rangle.
\]
We conclude that $a(v)=2$, and therefore
\[
(\grad W_p)_{c_v(1)} = 2 \dot{c}_v(1).
\]
Consequently, $\grad W_p$ is tangent to geodesics through $p$, being future-pointing on future-directed geodesics.

Suppose that $W_p(t) < 0$. Then $\left(\grad W_p\right)_{c(t)}$ is timelike future-pointing, and so
\[
\dot{W}(t) = \left\langle \left(\grad W_p\right)_{c(t)}, \dot{c}(t) \right\rangle < 0,
\]
as $\dot{c}(t)$ is also timelike future-pointing. We conclude that we must have $W_p(t)<0$ for all $t\in[0,1]$. In particular, $W_p(q)=W_p(1)<0$, and hence there exists a future-directed timelike geodesic connecting $p$ to $q$.

To prove assertion (3), let us see first that $\overline{I^+(p)}\subset J^+(p)$. If $q \in \overline{I^+(p)}$, then $q$ is the limit of a sequence of points $q_n\in I^+(p)$. By (2), $q_n = \exp_p(v_n)$ with $v_n \in T_pM$ timelike future-pointing. Since $\exp_p$ is a diffeomorphism, $v_n$ converges to a causal future-pointing vector $v \in T_pM$, and so $q=\exp_p(v)$ can be reached from $p$ by a future-directed causal geodesic. The converse inclusion $J^+(p) \subset \overline{I^+(p)}$ holds in general (cf.~Proposition~\ref{global}).

Finally, (4) is obvious from (3) and the fact that $\exp_p$ is a diffeomorphism onto $V$.
\end{proof}

This local behavior can be used to prove the following global result.

\begin{Prop} \label{global}
Let $(M,g)$ be a time oriented spacetime and $p \in M$. Then:
\begin{enumerate}
\item
$I^+(p)$ is open;
\item
$J^+(p) \subset \overline{I^+(p)}$;
\item
$I^+(p)=\inte J^+(p)$
\item
if $r \in J^+(p)$ and $q \in I^+(r)$ then $q \in I^+(p)$;
\item
if $r \in I^+(p)$ and $q \in J^+(r)$ then $q \in I^+(p)$.
\end{enumerate}
\end{Prop}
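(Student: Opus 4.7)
The plan is to prove the five parts in the order (1), (4), (5), (2), (3), since (2) and (3) follow quickly from the ``push'' lemmas (4) and (5), whose proofs are the heart of the matter.

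\emph{Openness (1).} Given $q\in I^+(p)$ via a future-directed timelike curve $c\colon[0,1]\to M$ from $p$ to $q$, I choose $r = c(1-\eta)$ for $\eta>0$ small and a geodesically convex neighborhood $V$ of $r$ containing $q$ (Proposition~\ref{local}(1)). Inside $V$, Proposition~\ref{local}(2) identifies the chronological future $I^+_V(r)$ of $r$ (inside $V$) with the image under $\exp_r$ of the open future timelike cone in $T_rM$, so it is open in $V$ and contains an open neighborhood $W\ni q$. Any $q'\in W$ is reached from $p$ by concatenating $c|_{[0,1-\eta]}$ with the timelike geodesic from $r$ to $q'$ in $V$ and smoothing the corner between two timelike segments, giving $W\subset I^+(p)$.

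\emph{Push lemmas (4) and (5).} For (4), let $\alpha$ be future-directed causal from $p$ to $r$ and $\beta$ future-directed timelike from $r$ to $q$. In a convex neighborhood $V$ of $r$, set $s = \alpha(1-\delta_1)\in V$ and $s' = \beta(\delta_2)\in V$. In normal coordinates at $r$, the vector $\exp_s^{-1}(s')$ equals $\delta_1 \dot\alpha(1)+\delta_2\dot\beta(0)$ up to higher order terms in $\delta_1,\delta_2$. Since $\dot\alpha(1)$ is future-pointing causal and $\dot\beta(0)$ is future-pointing timelike, one has
\[
\langle \delta_1\dot\alpha(1)+\delta_2\dot\beta(0),\, \delta_1\dot\alpha(1)+\delta_2\dot\beta(0)\rangle < 0,
\]
using the standard Lorentzian fact that the inner product of two future-pointing causal vectors, at least one of which is timelike, is strictly negative. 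Hence for $\delta_1,\delta_2$ small, $\exp_s^{-1}(s')$ is timelike future-pointing, so $s'\in I^+_V(s)\subset I^+(s)$ by Proposition~\ref{local}(2). Concatenating the timelike geodesic from $s$ to $s'$ with $\beta|_{[\delta_2,1]}$ yields $q\in I^+(s)$, strictly pushing the base of the timelike curve earlier along $\alpha$. Covering the compact image $\alpha([0,1])$ by finitely many convex neighborhoods gives a uniform lower bound for each push, so finitely many iterations reach $q\in I^+(p)$. Statement (5) follows by the symmetric argument, where now the initial segment of the concatenation is the timelike one.

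\emph{Closure (2) and interior description (3).} For (2), let $q\in J^+(p)$ and pick a future-pointing timelike $w\in T_qM$; the points $q_n := \exp_q(w/n)$ lie in $I^+(q)$ by Proposition~\ref{local}(2) for $n$ large, and $q_n\to q$. Applying (4) with $r=q$, each $q_n\in I^+(p)$, so $q\in\overline{I^+(p)}$. For (3), the inclusion $I^+(p)\subset\inte J^+(p)$ follows from openness (Step~1) together with $I^+\subset J^+$. Conversely, if $q\in\inte J^+(p)$ and $w$ is future-pointing timelike at $q$, then $q' := \exp_q(-\epsilon w)\in\inte J^+(p)\subset J^+(p)$ for small $\epsilon>0$ while $q\in I^+(q')$; by (4), $q\in I^+(p)$.

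\emph{Main obstacle.} The hardest step is the local push in the proof of (4): rigorously justifying the normal-coordinate expansion of $\exp_s^{-1}(s')$ as a positive combination of a future-pointing causal and a future-pointing timelike vector, and then globalizing by a finite compactness iteration along $\alpha$. The underlying Lorentzian fact is elementary, but the technical bookkeeping (uniform lower bound on each push, smoothing of the piecewise-smooth curves) is what makes the argument work.
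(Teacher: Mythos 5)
The paper offers no proof to compare against (it leaves this as an exercise), so I will assess your argument on its own terms. Your reductions of (1), (2) and (3) to the push lemmas are the standard ones and are essentially correct; in (1) the quantifiers should be reversed (choose a geodesically convex $V\ni q$ first, then $\eta$ small enough that $c([1-\eta,1])\subset V$, since a convex neighborhood of $r=c(1-\eta)$ need not contain $q$). The genuine gap is in the globalization of (4) and (5). Your local step produces a push of size $\delta_1$ that is admissible only when the quadratic term $\langle \delta_1\dot\alpha+\delta_2\dot\beta,\,\delta_1\dot\alpha+\delta_2\dot\beta\rangle$ dominates the Taylor remainder; the remainder depends on the curve $\beta$ (its acceleration near $r$), and the size of the negative quadratic term depends on how timelike $\dot\beta(0)$ is. Neither quantity is controlled along the iteration: after one push the new ``timelike curve to $q$'' begins with the geodesic from $s$ to $s'$, whose initial velocity $\approx\delta_1\dot\alpha+\delta_2\dot\beta$ has squared norm $O(\delta^2)$ and, when $\dot\alpha$ is null, points nearly along the null cone; so the admissible push at the next step can shrink. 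A finite cover of $\alpha([0,1])$ by convex neighborhoods bounds the sizes of the neighborhoods but not these constants, so the asserted ``uniform lower bound for each push'' does not follow from what you proved, and without it the iteration may stall before reaching $p$.

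The standard repair is to upgrade the infinitesimal push to the full local lemma: if $V$ is geodesically convex, $a,b,c\in V$, $b$ is reached from $a$ by a causal curve in $V$ and $c$ by a timelike curve from $b$ in $V$, then $c\in I^+(a)$ within $V$ --- with no smallness assumption. This follows from the function $W_a$ in the proof of Proposition~\ref{local}: on $\{W_a\le 0\}$ the gradient $\grad W_a$ is causal future-pointing (timelike where $W_a<0$), so along any future-directed causal curve from $a$ one has $\dot W_a=\langle\grad W_a,\dot c\rangle\le 0$, with strict inequality wherever $\dot c$ is timelike or $W_a<0$; hence $W_a<0$ at the endpoint of the concatenated curve, which by Proposition~\ref{local}(2) gives $c\in I^+(a)$. (This is in effect the content of the proposition preceding Corollary~\ref{null_geodesic}.) With this version the finite cover argument needs no estimates: choose $0=t_0<\dots<t_k=1$ with each $\alpha([t_{i-1},t_i])$ inside one convex $V_i$, and at each stage apply the local lemma with $a=\alpha(t_{i-1})$, $b=\alpha(t_i)$ and $c$ a point slightly along the current timelike curve to $q$, consuming one whole neighborhood per step. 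Alternatively, your Taylor-expansion computation can be salvaged by first replacing the initial arc of $\beta$ by the unit-speed timelike geodesic from $r$ to $\beta(\delta_2)$ supplied by Proposition~\ref{local}(2), which makes the expansion constants depend only on the metric on a compact neighborhood of $\alpha$; but as written the uniformity claim is unsupported, and it is precisely the hard point you flag as the main obstacle.
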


\begin{proof}
Exercise.
\end{proof}

The twin paradox also holds locally for general spacetimes. More precisely, we have the following statement:

\begin{Prop}
Let $(M,g)$ be a time-oriented spacetime, $p_0 \in M$ and $V \subset M$ a geodesically convex open neighborhood of $p_0$. The spacetime $(V,g)$ obtained by restricting $g$ to $V$ satisfies the following property: if $p,q \in V$ with $q \in I^+(p)$, $c$ is the timelike geodesic connecting $p$ to $q$ and $\gamma$ is any timelike curve connecting $p$ to $q$, then $\tau(\gamma) \leq \tau(c)$, with equality if and only if $\gamma$ is a reparameterization of $c$.
\end{Prop}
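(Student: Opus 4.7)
I would model the argument on the classical proof in the Riemannian setting, using the function $W_p$ and the radial unit vector field that appeared already in the proof of Proposition~\ref{local}. The main observation is that inside $I^+(p)\cap V$ the function $r:=\sqrt{-W_p}$ is smooth and positive, and equals the proper time along the radial timelike geodesic from $p$. Moreover, from $(\grad W_p)_{c_v(1)}=2\dot c_v(1)$ one reads off that $n:=-\frac{1}{2r}\,\grad W_p$ is a smooth, future-pointing, unit timelike vector field on $I^+(p)\cap V$, tangent to the radial geodesics issuing from $p$. The upshot is that $\dot r(q')=-\langle n_{q'},\,v\rangle$ for any vector $v\in T_{q'}M$, an identity I would derive directly from the definition of $W_p$.

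The main step is then the following. Let $\gamma:[a,b]\to V$ be a future-directed timelike curve with $\gamma(a)=p$, $\gamma(b)=q$. By the proof of Proposition~\ref{local}, $W_p\circ\gamma<0$ on $(a,b]$, so $\gamma((a,b])\subset I^+(p)\cap V$ and $r\circ\gamma$ is smooth there. On this set I would decompose
\[
\dot\gamma=\alpha\,n+X,\qquad \alpha=-\langle\dot\gamma,n\rangle,\qquad \langle X,n\rangle=0.
\]
Since $n$ is timelike and $n\perp X$, the vector $X$ is spacelike (or zero). The reverse Cauchy--Schwarz (equivalently $\langle\dot\gamma,\dot\gamma\rangle=-\alpha^2+\langle X,X\rangle$) gives
\[
|\langle\dot\gamma,\dot\gamma\rangle|^{1/2}=\sqrt{\alpha^2-\langle X,X\rangle}\le\alpha,
\]
with equality if and only if $X=0$. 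Moreover, because $\dot\gamma$ and $n$ are both future-pointing timelike, $\alpha>0$; and by the identity above, $\frac{d}{dt}(r\circ\gamma)=-\langle n,\dot\gamma\rangle=\alpha$.

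Combining these, for any $\varepsilon>0$,
\[
\tau(\gamma\big|_{[a+\varepsilon,b]})=\int_{a+\varepsilon}^b\sqrt{\alpha^2-\langle X,X\rangle}\,dt\le\int_{a+\varepsilon}^b\alpha\,dt=r(\gamma(b))-r(\gamma(a+\varepsilon)).
\]
Letting $\varepsilon\to 0^+$, and using that $r(\gamma(t))\to 0$ as $t\to a^+$ together with dominated convergence for the left-hand side, one obtains $\tau(\gamma)\le r(q)=\tau(c)$. Equality throughout forces $X\equiv 0$ on $(a,b]$, i.e.\ $\dot\gamma$ is everywhere radial, so $\gamma$ is a reparameterization of $c$.

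The main obstacle I anticipate is the behavior at the initial point $p$, where the frame $(n,\cdot)$ and the function $r$ are not smooth; this is why I would first establish the inequality on $[a+\varepsilon,b]$ and then pass to the limit, checking that $r(\gamma(t))\to 0$ and that $\tau(\gamma|_{[a,a+\varepsilon]})\to 0$ (which follows from continuity of $\dot\gamma$ and compactness of $[a,b]$). The other minor point to verify cleanly is the identity $\dot r=-\langle n,\dot\gamma\rangle$, which is where the Gauss Lemma content of $\grad W_p = 2\dot c_v(1)$ enters decisively.
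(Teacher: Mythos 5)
Your argument is essentially the paper's own proof: both decompose $\dot\gamma$ into a radial component with coefficient $\dot r$ plus a spacelike component orthogonal to the unit radial field, apply $\sqrt{\dot r^2-\langle X,X\rangle}\le\dot r$ with equality iff $X=0$, and integrate; the paper merely does the bookkeeping upstairs in $T_pM$ by writing $\gamma(t)=\exp_p(r(t)n(t))$ and invoking the Gauss Lemma, whereas you work downstairs with $W_p$ and its gradient, and your $\varepsilon$-truncation at the vertex $p$ is a welcome extra precaution that the paper glosses over. One sign slip to fix: since $\grad W_p=2\dot c_v(1)$ is future-pointing on $I^+(p)\cap V$, the future-pointing unit radial field is $n=+\frac{1}{2r}\grad W_p$ (so that $\grad r=-n$), not $-\frac{1}{2r}\grad W_p$; with this correction every identity you actually use, in particular $\frac{d}{dt}(r\circ\gamma)=-\langle n,\dot\gamma\rangle=\alpha>0$, holds exactly as you state it.
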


\begin{proof}
Any timelike curve $\gamma:[0,1] \to V$ satisfying $\gamma(0)=p$, $\gamma(1)=q$ can be written as
\[
\gamma(t)=\exp_p(r(t)n(t)),
\]
for $t \in [0,1]$, where $r(t) \geq 0$ and $\langle n(t), n(t) \rangle = -1$. We have
\[
\dot{\gamma}(t)=(\exp_p)_*\left(\dot{r}(t)n(t)+r(t)\dot{n}(t)\right).
\]
Since $\langle n(t), n(t) \rangle = -1$, we have $\langle \dot{n}(t), n(t) \rangle = 0$, and consequently $\dot{n}(t)$ is tangent to the level surfaces of the function $v \mapsto \langle v, v \rangle$. We conclude that
\[
\dot{\gamma}(t) = \dot{r}(t) X_{\gamma(t)} + Y(t),
\]
where $X$ is the unit tangent vector field to timelike geodesics through $p$ and $Y(t)=r(t)(\exp_p)_*\dot{n}(t)$ is tangent to the level surfaces of $W_p$ (hence orthogonal to $X_{\gamma(t)}$). Consequently,
\begin{align*}
\tau(\gamma) & = \int_0^1 \left|\left\langle \dot{r}(t) X_{\gamma(t)} + Y(t),\dot{r}(t) X_{\gamma(t)} + Y(t) \right\rangle\right|^\frac12 dt \\
& = \int_0^1 \left( \dot{r}(t)^2 - |Y(t)|^2 \right)^\frac12 dt \\
& \leq \int_0^1 \dot{r}(t) dt = r(1) = \tau(c),
\end{align*}
where we have used the facts that $\gamma$ is timelike, $\dot{r}(t)> 0$ for all $t \in [0,1]$ (as $\dot{\gamma}$ is future-pointing) and $\tau(c)=r(1)$ (as $q=\exp_p(r(1)n(1))$). It should be clear that $\tau(\gamma)=\tau(c)$ if and only if $|Y(t)|\equiv 0 \Leftrightarrow Y(t)\equiv 0$ ($Y(t)$ is spacelike or zero) for all $t \in [0,1]$, implying that $n$ is constant. In this case, $\gamma(t)=\exp_p({r(t)n})$ is, up to reparameterization, the geodesic through $p$ with initial condition $n \in T_p M$.
\end{proof}

There is also a local property characterizing null geodesics.

\begin{Prop}
Let $(M,g)$ be a time-oriented spacetime, $p_0 \in M$ and $V \subset M$ a geodesically convex open neighborhood of $p_0$. The spacetime $(V,g)$ obtained by restricting $g$ to $V$ satisfies the following property: if for $p,q \in V$ there exists a future-directed null geodesic $c$ connecting $p$ to $q$ and $\gamma$ is a causal curve connecting $p$ to $q$ then $\gamma$ is a reparameterization of $c$.
\end{Prop}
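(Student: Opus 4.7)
The plan is to extend the $W_p$-based analysis used in the proof of Proposition~\ref{local}. Fix normal coordinates $(x^0,x^1,x^2,x^3)$ at $p$ and the associated function
\[
W_p(q)=\sum_{\mu,\nu=0}^3 \eta_{\mu\nu}x^\mu(q)x^\nu(q),
\]
so that, as in that proof, $I^+(p)\cap V=\{W_p<0\}\cap\{x^0>0\}$ and hence $J^+(p)\setminus I^+(p)$ is the future null cone $\{W_p=0\}\cap\{x^0\ge 0\}$.

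First I would observe that $q\in J^+(p)\setminus I^+(p)$: otherwise Proposition~\ref{local}(2) would yield a future-directed timelike geodesic from $p$ to $q$, which by the uniqueness clause of geodesic convexity would have to reparameterize the null geodesic $c$, an absurdity. The same dichotomy propagates along $\gamma$: clearly $\gamma(t)\in J^+(p)$ for all $t$, and if some $\gamma(t_0)$ were in $I^+(p)$ then Proposition~\ref{global}(5) applied to the causal arc $\gamma|_{[t_0,1]}$ would force $q\in I^+(p)$. Hence $W_p\circ\gamma\equiv 0$, and differentiating gives $\langle(\grad W_p)_{\gamma(t)},\dot\gamma(t)\rangle=0$.

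The main step is a pointwise linear-algebra fact: if $N$ is a nonzero future-pointing null vector and $X$ is causal with $\langle X,N\rangle=0$, then $X=aN$ for some $a\ge 0$. Completing $N$ to a null frame $\{N,L,e_1,e_2\}$ with $\langle N,L\rangle=-1$, $\langle e_i,e_j\rangle=\delta_{ij}$ and all other pairings zero, the orthogonality constraint forces $X=aN+b_1e_1+b_2e_2$; then $\langle X,X\rangle=b_1^2+b_2^2\ge 0$, so causality gives $b_1=b_2=0$, and the future-pointing condition gives $a\ge 0$. Applied to $N=(\grad W_p)_{\gamma(t)}$, which by the identity $(\grad W_p)_{c_v(1)}=2\dot c_v(1)$ established in the proof of Proposition~\ref{local} is nonzero and null future-pointing whenever $\gamma(t)\neq p$, this yields $\dot\gamma(t)=\lambda(t)(\grad W_p)_{\gamma(t)}$ with $\lambda(t)\ge 0$.

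To conclude, a direct computation in normal coordinates gives $(\grad W_p)_q=2\sum_\mu x^\mu(q)\partial_\mu$, so the relation $\dot\gamma=\lambda\,\grad W_p$ reads $\dot x^\mu(t)=2\lambda(t)x^\mu(t)$ for each $\mu$. The ratios $x^\mu(t)/x^\nu(t)$ are therefore constant wherever defined, meaning that $\gamma$ sweeps out a single radial null ray from $p$ in normal coordinates, that is, the image of a single null geodesic through $p$; since this geodesic reaches $q$, it must coincide with $c$ by the uniqueness of geodesics in $V$, and $\gamma$ is a reparameterization of $c$. The delicate point, and the one I expect to be the main obstacle, is the behaviour at $t=0$: $\gamma(0)=p$ is a critical point of $W_p$, so the radial decomposition degenerates there. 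This is handled by running the argument on $(0,1]$ and invoking continuity at $t=0$, together with the fact that a causal curve cannot linger at $p$.
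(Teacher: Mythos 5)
Your proposal is correct and follows essentially the same route as the paper's proof: establish $q\in J^+(p)\setminus I^+(p)$, propagate this along $\gamma$ via Proposition~\ref{global}, deduce $W_p\circ\gamma\equiv 0$ and hence $\langle\grad W_p,\dot\gamma\rangle=0$, and conclude that $\dot\gamma$ is proportional to the null, future-pointing $\grad W_p$ ruling the cone. You merely make explicit two points the paper leaves implicit — the null-frame argument that a causal vector orthogonal to a null vector is proportional to it, and the integration of $\dot x^\mu=2\lambda x^\mu$ showing $\gamma$ stays on a single radial ray — both of which are sound.
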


\begin{proof}
Since $p$ and $q$ are connected by a null geodesic, we conclude from Proposition~\ref{local} that $q \in J^+(p) \setminus I^+(p)$. Let $\gamma:[0,1] \to V$ be a causal curve connecting $p$ to $q$. Then we must have $\gamma(t) \in J^+(p) \setminus I^+(p)$ for all $t \in [0,1]$, since $\gamma(t_0) \in I^+(p)$ implies $\gamma(t) \in I^+(p)$ for all $t>t_0$ (see Proposition~\ref{global}). Consequently, we have
\[
W_p(\gamma(t))=0 \Rightarrow \left\langle \left(\grad W_p\right)_{\gamma(t)}, \dot{\gamma}(t) \right\rangle = 0,
\]
where $W_p$ was defined in the proof of Proposition~\ref{local}. The formula 
\[
(\grad W_p)_{c_v(1)} = 2 \dot{c}_v(1), 
\]
which was proved for timelike geodesics $c_v$ with initial condition $v \in T_pM$, must also hold for null geodesics (by continuity). Hence $\grad W_p$ is tangent to the null geodesics ruling $J^+(p) \setminus I^+(p)$ and future-pointing. Since  $\dot{\gamma}(t)$ is also future-pointing, we conclude that $\dot{\gamma}$ is proportional to $\grad W_p$, and therefore $\gamma$ is a reparameterization of a null geodesic, which must be $c$.
\end{proof}

\begin{Cor} \label{null_geodesic}
Let $(M,g)$ be a time-oriented spacetime and $p \in M$. If $q \in J^+(p)\setminus I^+(p)$ then any future-directed causal curve connecting $p$ to $q$ must be a reparameterized null geodesic.
\end{Cor}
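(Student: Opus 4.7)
My plan is to reduce the global statement to the local one just proved, using the fact that a causal curve can be covered by finitely many geodesically convex neighborhoods in which the preceding proposition applies. Let $\gamma:[0,1] \to M$ be any future-directed causal curve from $p$ to $q$, where $q \in J^+(p) \setminus I^+(p)$.

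The first step is to check that $\gamma(t) \notin I^+(p)$ for every $t \in [0,1]$. If some $\gamma(t_0)$ were in $I^+(p)$ with $t_0 < 1$, then $q \in J^+(\gamma(t_0))$ via $\gamma|_{[t_0,1]}$, and Proposition~\ref{global}(5) would give $q \in I^+(p)$, contradicting the hypothesis. Hence the image of $\gamma$ lies entirely in $J^+(p) \setminus I^+(p)$ (except possibly at the starting point $p$ itself).

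The second step is the localization. For each $t_0 \in [0,1]$ choose a geodesically convex neighborhood $V$ of $\gamma(t_0)$; by continuity there is a subinterval $I \subset [0,1]$ containing $t_0$ with $\gamma(I) \subset V$, and set $p' = \gamma(\min I)$, $q' = \gamma(\max I)$. Clearly $q' \in J^+(p')$. If $q'$ were in $I^+(p')$, then combining with $p' \in J^+(p)$ via Proposition~\ref{global}(4) would force $q' \in I^+(p)$, and then $q \in J^+(q')$ together with Proposition~\ref{global}(5) would give $q \in I^+(p)$, a contradiction. So the hypotheses of the previous proposition are satisfied inside $V$, and $\gamma|_I$ is a reparameterization of a future-directed null geodesic segment.

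The final step is patching. Cover $[0,1]$ by finitely many such closed subintervals $I_1, \dots, I_N$ with $\gamma(I_k) \subset V_k$ and $\gamma|_{I_k}$ a reparameterized null geodesic $c_k$ of $V_k$. On each nonempty overlap $I_k \cap I_{k+1}$ the tangent direction of $\gamma$ coincides (up to positive rescaling) with those of both $c_k$ and $c_{k+1}$, so by uniqueness of geodesics from an initial point and tangent direction the traces of $c_k$ and $c_{k+1}$ agree there and extend each other. Reparameterizing $\gamma$ by the common affine parameter of this resulting global null geodesic shows that $\gamma$ is a reparameterized null geodesic, as claimed. The only delicate point is this final patching, but it is routine because being a geodesic is a local ODE condition and the direction is pinned down by $\dot\gamma$.
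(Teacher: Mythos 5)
Your argument is correct and is exactly the localization-and-patching argument that the corollary (stated without proof in the text) is meant to summarize: reduce to the preceding proposition on geodesically convex neighborhoods via Proposition~\ref{global}(4)--(5), then glue the local null geodesic segments. The only point worth making explicit is that consecutive covering intervals should be chosen to overlap in a nondegenerate subinterval (not just a point), so that $c_k$ and $c_{k+1}$ share a whole arc and uniqueness of geodesics applies; since $\dot{\gamma}$ is causal, hence nowhere zero, this is indeed routine.
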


\section{Causality conditions} \label{sec3.2}

For physical applications, it is important to require that the spacetime satisfies reasonable causality conditions. The simplest of these conditions excludes time travel, i.e.~the possibility of a particle returning to an event in its past history.

\begin{Def}
A spacetime $(M,g)$ is said to satisfy the {\bf chronology condition} if it does not contain closed timelike curves.
\end{Def}

This condition is violated by compact spacetimes:

\begin{Prop}
Any compact spacetime $(M,g)$ contains closed timelike curves.
\end{Prop}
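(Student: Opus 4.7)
The plan is to use an open cover argument based on chronological futures. By Proposition~\ref{global}(1), each $I^+(p)$ is open, and I claim the family $\{I^+(p) : p \in M\}$ covers $M$: given $q \in M$, choose a geodesically convex neighborhood $V$ of $q$ and any point $p \in V$ on a past-directed timelike geodesic through $q$; then $q \in I^+(p)$ by Proposition~\ref{local}(2).

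By compactness, extract a finite subcover $I^+(p_1), \ldots, I^+(p_N)$, and then discard sets until the subcover is minimal (no proper subfamily still covers $M$). The strategy is now to show that minimality forces $p_i \in I^+(p_i)$ for every index $i$, which is exactly the statement that there is a closed timelike curve through $p_i$.

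Fix $i$. Since the $I^+(p_j)$ still cover $M$, there exists $j$ with $p_i \in I^+(p_j)$. Suppose for contradiction that $j \neq i$. For any $q \in I^+(p_i)$, the transitivity property (Proposition~\ref{global}(5), applied with $p_j$ in the role of $p$, $p_i$ in the role of $r$, and $q$ in the role of $q$) yields $q \in I^+(p_j)$. Hence $I^+(p_i) \subset I^+(p_j)$, so removing $I^+(p_i)$ from the subcover leaves a strictly smaller cover of $M$, contradicting minimality. Therefore $j = i$, so $p_i \in I^+(p_i)$, and by definition of $I^+$ there is a future-directed timelike curve from $p_i$ to itself.

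I do not expect a serious obstacle. The only subtle points are (a) confirming that $\{I^+(p)\}_{p \in M}$ really covers $M$ (handled by the local structure in Proposition~\ref{local}), and (b) applying the right transitivity clause from Proposition~\ref{global} in the minimality step; both are essentially bookkeeping. The argument is purely topological in spirit once the open-cover and transitivity tools are in hand, and it is precisely this tension between a compact spacetime and the irreversible ordering induced by time-orientation that produces the closed timelike curve.
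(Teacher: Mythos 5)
Your argument is essentially identical to the paper's: cover $M$ by the open sets $I^+(p)$, extract a finite subcover, and use the transitivity inclusion $p_i \in I^+(p_j) \Rightarrow I^+(p_i) \subset I^+(p_j)$ to discard sets until some $p_i$ lies in its own chronological future. The only point you omit is that a compact spacetime need not be time-orientable, so $I^+(p)$ is not defined a priori; the paper first passes to the time-orientable double covering (a closed timelike curve there projects to one in $M$), and you should add that one-line reduction at the start.
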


\begin{proof}
Taking if necessary the time-orientable double covering, we can assume that $(M,g)$ is time-oriented. Since $I^+(p)$ is an open set for any $p \in M$, it is clear that $\{ I^+(p) \}_{p \in M}$ is an open cover of $M$. If $M$ is compact, we can obtain a finite subcover $\{ I^+(p_1), \ldots, I^+(p_N) \}$. Now if $p_1 \in I^+(p_i)$ for $i \neq 1$ then $I^+(p_1) \subset I^+(p_i)$, and we can exclude $I^+(p_1)$ from the subcover. Therefore, we can assume without loss of generality that $p_1 \in I^+(p_1)$, and hence there exists a closed timelike curve starting and ending at $p_1$.
\end{proof}

A stronger restriction on the causal behavior of the spacetime is the following:

\begin{Def}
A spacetime $(M,g)$ is said to be {\bf stably causal} if there exists a {\bf global time function}, i.e.~a smooth function $t:M \to \bbR$ such that $\grad(t)$ is timelike.
\end{Def}

In particular, a stably causal spacetime is time-orientable. We choose the time orientation defined by $-\grad(t)$, so that $t$ increases along future-directed timelike curves. Notice that this implies that no closed timelike curves can exist, i.e.~any stably causal spacetime satisfies the chronology condition. In fact, any small perturbation of a stably causal spacetime still satisfies the chronology condition (Exercise~\ref{stable}).

Let $(M,g)$ be a time-oriented spacetime. A smooth future-directed causal curve $c:(a,b) \to M$ (with possibly $a=-\infty$ or $b=+\infty$) is said to be {\bf future-inextendible} if $\lim_{t \to b} c(t)$ does not exist. The definition of a {\bf past-inextendible} causal curve is analogous. The {\bf future domain of dependence} of $S\subset M$ is the set $D^+(S)$ of all events $p \in M$ such that any past-inextendible causal curve starting at $p$ intersects $S$. Therefore any causal influence on an event $p \in D^+(S)$ had to register somewhere in $S$, and one can expect that what happens at $p$ can be predicted from data on $S$. Similarly, the {\bf past domain of dependence} of $S$ is the set $D^-(S)$ of all events $p \in M$ such that any future-inextendible causal curve starting at $p$ intersects $S$. Therefore any causal influence of an event $p \in D^-(S)$ will register somewhere in $S$, and one can expect that what happened at $p$ can be retrodicted from data on $S$. The {\bf domain of dependence} of $S$ is simply the set $D(S)=D^+(S)\cup D^-(S)$.

Let $(M,g)$ be a stably causal spacetime with time function $t:M \to \bbR$. The level sets $S_a = t^{-1}(a)$ are said to be {\bf Cauchy hypersurfaces} if $D(S_a)=M$. Spacetimes for which this happens have particularly good causal properties.

\begin{Def}
A stably causal spacetime possessing a time function whose level sets are Cauchy hypersurfaces is said to be {\bf globally hyperbolic}.
\end{Def}

Notice that the future and past domains of dependence of the Cauchy hypersurfaces $S_a$ are $D^+(S_a) = t^{-1}([a, +\infty))$ and $D^-(S_a) = t^{-1}((-\infty,a])$.

\section{Exercises} \label{sec3.?}

\begin{enumerate}
\item
Let $(M,g)$ be the quotient of the $2$-dimensional Minkowski spacetime by the discrete group of isometries generated by the map $f(t,x)=(-t,x+1)$. Show that $(M,g)$ is not time orientable.  
\item 
Let $(M,g)$ be a time oriented spacetime and $p \in M$. Show that:
\begin{enumerate}
\item
$I^+(p)$ is open;
\item
$J^+(p)$ is not necessarily closed;
\item
$J^+(p) \subset \overline{I^+(p)}$;
\item
$I^+(p)=\inte J^+(p)$
\item
if $r \in J^+(p)$ and $q \in I^+(r)$ then $q \in I^+(p)$;
\item
if $r \in I^+(p)$ and $q \in J^+(r)$ then $q \in I^+(p)$.
\end{enumerate}
\item 
Consider the $3$-dimensional Minkowski spacetime $(\bbR^3, g)$, where
\[
g = - dt^2 + dx^2 + dy^2.
\]
Let $c:\bbR \to \bbR^3$ be the curve $c(t)=(t,\cos t, \sin t)$. Show that although $\dot{c}(t)$ is null for all $t \in \bbR$ we have $c(t)\in I^+(c(0))$ for all $t>0$. What kind of motion does this curve represent?
\item \label{stable}
Let $(M,g)$ be a stably causal spacetime and $h$ an arbitrary symmetric $(2,0)$-tensor field with compact support. Show that for sufficiently small $|\varepsilon|$ the tensor field $g_\varepsilon = g + \varepsilon h$ is still a Lorentzian metric on $M$, and $(M,g_\varepsilon)$ satisfies the chronology condition.
\item
Let $(M,g)$ be the quotient of the $2$-dimensional Minkowski spacetime by the discrete group of isometries generated by the map $f(t,x)=(t+1,x+1)$. Show that $(M,g)$ satisfies the chronology condition, but there exist arbitrarily small perturbations of $(M,g)$ (in the sense of Exercise~\ref{stable}) which do not.
\item 
Let $(M,g)$ be a time oriented spacetime and $S \subset M$. Show that:
\begin{enumerate}
\item
$S \subset D^+(S)$;
\item
$D^+(S)$ is not necessarily open;
\item
$D^+(S)$ is not necessarily closed.
\end{enumerate}
\item
Let $(M,g)$ be the $2$-dimensional spacetime obtained by removing the positive $x$-semi-axis of Minkowski $2$-dimensional spacetime (cf.~Figure~\ref{stably_causal}). Show that:
\begin{enumerate}
\item
$(M,g)$ is stably causal but not globally hyperbolic;
\item
there exist points $p,q \in M$ such that $J^+(p) \cap J^-(q)$ is not compact;
\item
there exist points $p,q \in M$ with $q \in I^+(p)$ such that the supremum of the lengths of timelike curves connecting $p$ to $q$ is not attained by any timelike curve.
\end{enumerate}

\begin{figure}[h!]
\begin{center}
\psfrag{t}{$t$}
\psfrag{x}{$x$}
\psfrag{S}{$S$}
\psfrag{D}{$D(S)$}
\psfrag{p}{$p$}
\psfrag{J+}{$J^+(p)$}
\epsfxsize=1.0\textwidth
\leavevmode
\epsfbox{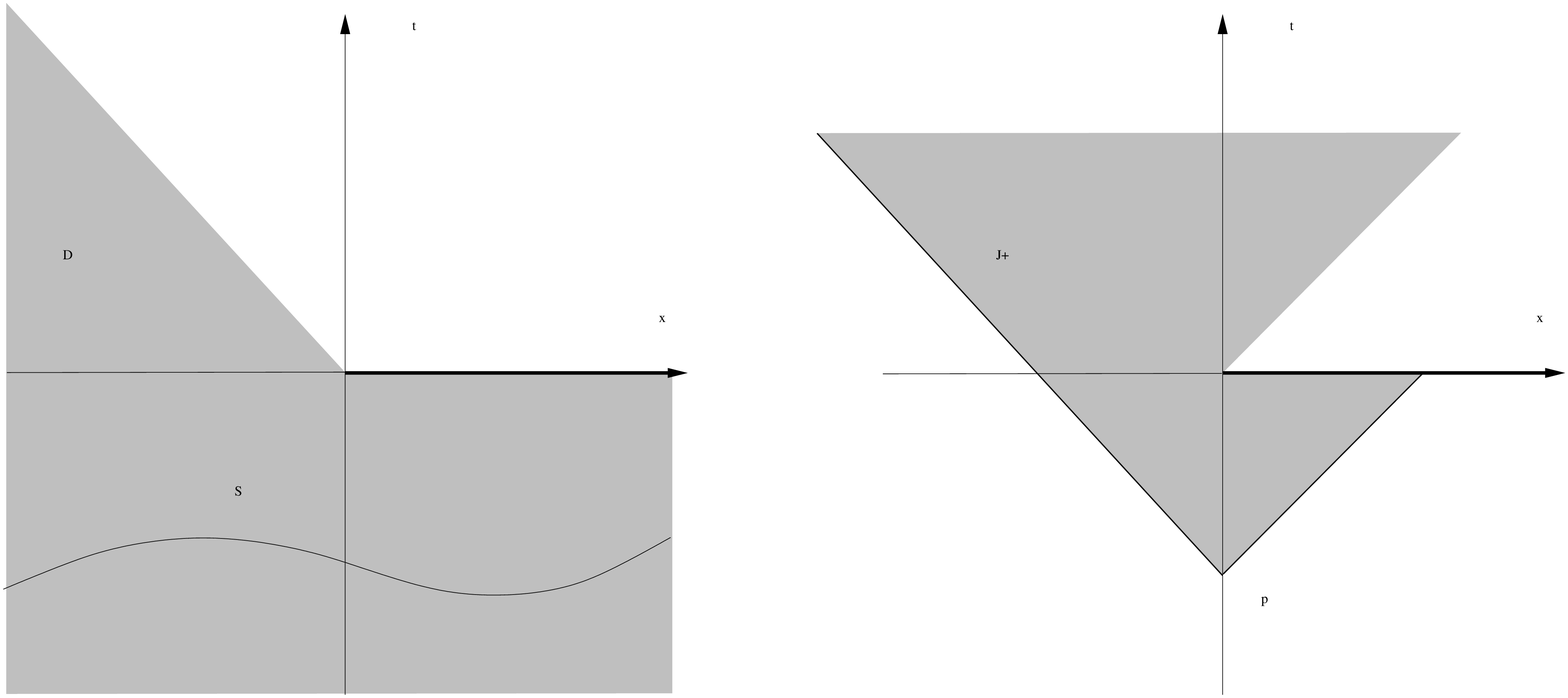}
\end{center}
\caption{Stably causal but not globally hyperbolic spacetime.} \label{stably_causal}
\end{figure}

\item
Let $(\Sigma, h)$ be a $3$-dimensional Riemannian manifold. Show that the spacetime $(M,g)=(\bbR \times \Sigma, -dt \otimes dt + h)$ is globally hyperbolic if and only if $(\Sigma,h)$ is complete.

\item
Show that the following spacetimes are globally hyperbolic:
\begin{enumerate}
\item
the Minkowski spacetime;
\item
the FLRW spacetimes;
\item
the region $\{r>2m\}$ of the Schwarzschild spacetime;
\item
the region $\{r<2m\}$ of the Schwarzschild spacetime;
\item
the maximal analytical extension of the Schwarzschild spacetime.
\end{enumerate}

\item
Let $(M,g)$ be a global hyperbolic spacetime with Cauchy hypersurface $S$. Show that $M$ is diffeomorphic to $\bbR \times S$.
\end{enumerate}


\chapter{Singularity theorems} \label{chapter4}

As we have seen in Chapter~\ref{chapter2}, both the Schwarzschild solution and the FLRW cosmological models display singularities, beyond which timelike and null geodesics cannot be continued. It was once thought that these solutions were singular due to their high degree of symmetry, and that more realistic spacetimes would be non-singular. In this chapter we show that this is not the case: any sufficiently small perturbation of these solutions will still be singular. We follow \cite{W84} when discussing conjugate points and \cite{GN14} for the details of the proofs. See also \cite{ONeill83, Penrose87, Naber88, HE95}.

\section{Geodesic congruences} \label{sec4.1}

Let $(M,g)$ be a Lorentzian manifold. A {\bf congruence} of curves in an open set $U \subset M$ is the family of integral curves of a nonvanishing vector field $X$ in $U$. We will assume that $X$ is unit timelike and geodesic, that is,
\[
\left\langle X, X \right\rangle = -1 \qquad \text{ and } \qquad \nabla_XX = 0.
\]
The properties of the congruence determined by $X$ are best analyzed by considering its {\bf second fundamental form}
\[
B_{\mu\nu} = \nabla_\nu X_\mu. 
\]
This tensor is {\bf purely spatial}, that is,
\[
B_{\mu\nu} X^\mu = B_{\mu\nu} X^\nu = 0.
\]
Indeed, since $X$ is unit,
\[
B_{\mu\nu} X^\mu = X^\mu \nabla_\nu X_\mu = \frac12 \nabla_\nu (X_\mu X^\mu) = 0.
\]
On the other hand, because $X$ is geodesic,
\[
B_{\mu\nu} X^\nu = X^\nu \nabla_\nu X_\mu = \nabla_X X_\mu = 0.
\]

\begin{Prop}
The second fundamental form $B$ satisfies
\[
\nabla_X B_{\mu\nu} = - B_{\mu\alpha} B^\alpha_{\,\,\,\,\nu} + R_{\alpha \nu \mu \beta} X^\alpha X^\beta.
\]
\end{Prop}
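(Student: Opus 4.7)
The identity is a direct index-gymnastics computation, so the plan is simply to organize it cleanly. I would begin by unpacking the left-hand side using the definition of $B$: $\nabla_X B_{\mu\nu} = X^\alpha \nabla_\alpha \nabla_\nu X_\mu$. The idea is then to commute the two covariant derivatives so that the geodesic property $\nabla_X X = 0$ can be used to kill one of the resulting terms.

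To carry out the commutation, I would invoke the already-established identity \eqref{commutator}, which for the covector $X_\mu$ reads $\nabla_\alpha \nabla_\nu X_\mu - \nabla_\nu \nabla_\alpha X_\mu = R_{\alpha\nu\mu\beta} X^\beta$. Contracting with $X^\alpha$ immediately gives the Riemann term $R_{\alpha\nu\mu\beta} X^\alpha X^\beta$ on the right-hand side, and leaves me to analyze $X^\alpha \nabla_\nu \nabla_\alpha X_\mu$.

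For this remaining piece I would use the Leibniz rule in the form
\[
X^\alpha \nabla_\nu \nabla_\alpha X_\mu = \nabla_\nu\!\left( X^\alpha \nabla_\alpha X_\mu \right) - (\nabla_\nu X^\alpha)(\nabla_\alpha X_\mu).
\]
The first term on the right is $\nabla_\nu (\nabla_X X)_\mu = 0$ because $X$ is geodesic, and the second term is, by the definition of $B$ together with the compatibility $\nabla g = 0$ (so that index raising commutes with $\nabla$), equal to $B^\alpha_{\,\,\,\,\nu} B_{\mu\alpha} = B_{\mu\alpha} B^\alpha_{\,\,\,\,\nu}$. Assembling the pieces yields exactly the claimed formula.

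The computation involves no serious obstacle; the only point requiring care is bookkeeping of index positions and the order of the arguments of the Riemann tensor when applying \eqref{commutator}, so that the sign and the slot structure $R_{\alpha\nu\mu\beta} X^\alpha X^\beta$ come out correctly. Note that neither the unit-norm condition $\langle X,X\rangle = -1$ nor the purely-spatial property of $B$ is actually needed for this particular identity; they will of course matter when the formula is later traced to produce the Raychaudhuri equation.
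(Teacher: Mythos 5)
Your proof is correct and follows essentially the same route as the paper: rewrite $\nabla_X B_{\mu\nu}$ as $X^\alpha\nabla_\alpha\nabla_\nu X_\mu$, commute derivatives via the Ricci identity~\eqref{commutator} to produce $R_{\alpha\nu\mu\beta}X^\alpha X^\beta$, and then apply the Leibniz rule together with $\nabla_X X=0$ to turn the remaining term into $-B_{\mu\alpha}B^\alpha_{\,\,\,\,\nu}$. Your closing observation that neither $\langle X,X\rangle=-1$ nor the spatiality of $B$ is needed here is also accurate.
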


\begin{proof}
We have
\begin{align*}
X^\alpha \nabla_\alpha B_{\mu\nu} & = X^\alpha \nabla_\alpha \nabla_\nu X_{\mu} = X^\alpha \nabla_\nu \nabla_\alpha X_{\mu} + X^\alpha R_{\alpha\nu\mu\beta} X^\beta \\
& = \nabla_\nu (X^\alpha \nabla_\alpha X_{\mu}) - (\nabla_\nu X^\alpha) (\nabla_\alpha X_{\mu}) + R_{\alpha\nu\mu\beta}  X^\alpha X^\beta \\
& = - B_{\mu\alpha} B^\alpha_{\,\,\,\,\nu} + R_{\alpha\nu\mu\beta}  X^\alpha X^\beta.
\end{align*}
\end{proof}

Let $c(t,s)$ be a one-parameter family of geodesics of the congruence, parameterized such that
\[
\frac{\partial c}{\partial t} = X
\]
(Figure~\ref{deviation}). The {\bf geodesic deviation vector} associated to $c$ is
\[
Y = \frac{\partial c}{\partial s}.
\]

\begin{figure}[h!]
\begin{center}
\psfrag{X}{$X$}
\psfrag{Y}{$Y$}
\epsfxsize=.3\textwidth
\leavevmode
\epsfbox{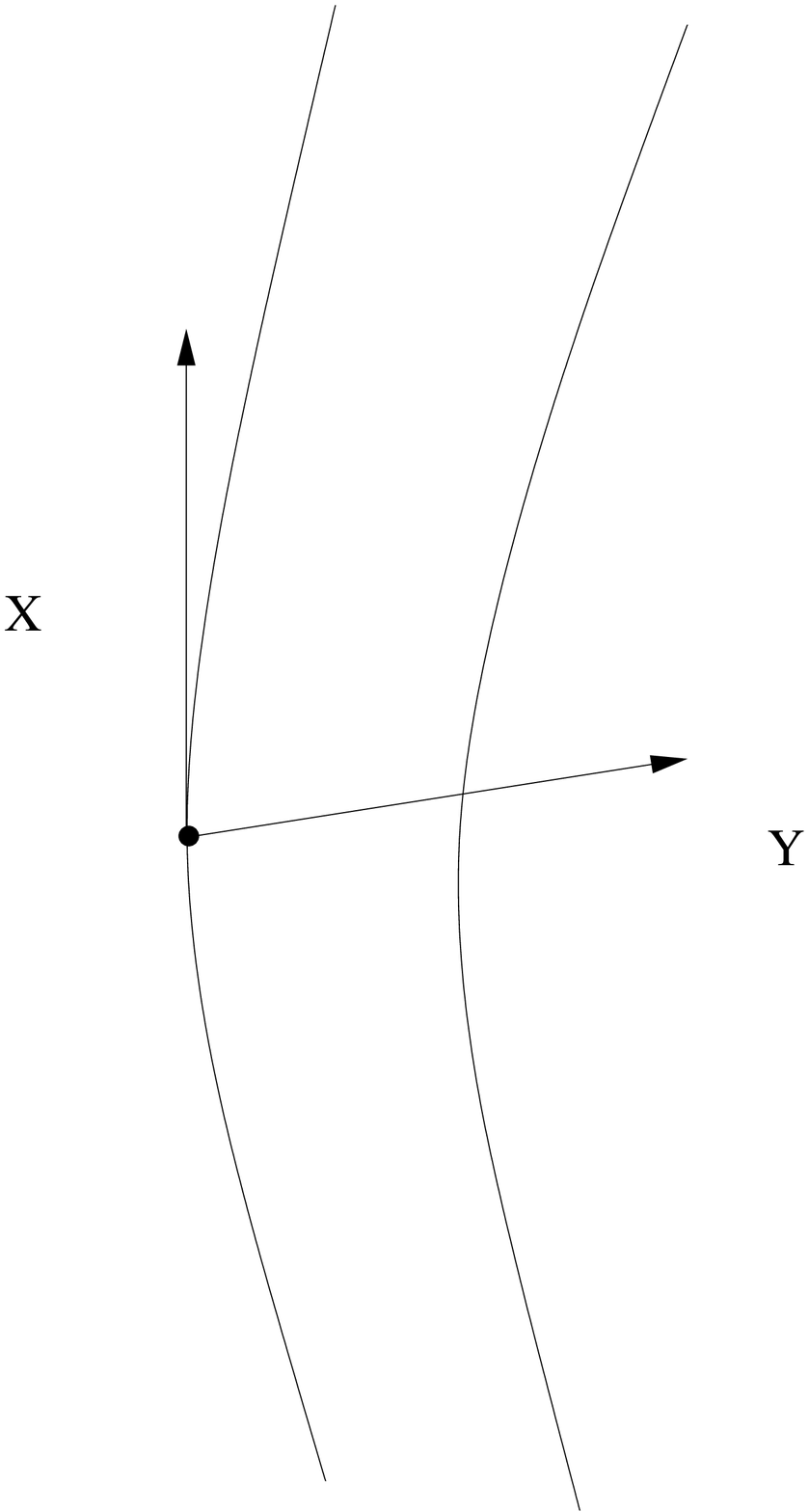}
\end{center}
\caption{Geodesic deviation.} \label{deviation}
\end{figure}

\begin{Prop}
The geodesic deviation vector satisfies
\[
\nabla_X Y^\mu = B^\mu_{\,\,\,\,\nu} Y^\nu.
\]
\end{Prop}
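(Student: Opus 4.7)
The plan is to exploit the fact that, in the parameterization $c(t,s)$, the vector fields $X = \partial c/\partial t$ and $Y = \partial c/\partial s$ are coordinate vector fields in the two-dimensional domain on which $c$ is defined, so their Lie bracket vanishes: $[X,Y] = 0$ (along the surface swept out by $c$, which is where both fields are defined).

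First I would invoke the symmetry of the Levi-Civita connection, namely $\nabla_X Y - \nabla_Y X = [X,Y]$, which together with $[X,Y]=0$ yields
\[
\nabla_X Y = \nabla_Y X.
\]
Translating into the physicists' index notation of Section~\ref{sec1.2}, this reads $X^\alpha \nabla_\alpha Y^\mu = Y^\alpha \nabla_\alpha X^\mu$, i.e. $\nabla_X Y^\mu = Y^\nu \nabla_\nu X^\mu$.

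Finally I would match this up with the definition $B_{\mu\nu} = \nabla_\nu X_\mu$. Since $\nabla g = 0$ the operations of raising indices and covariant differentiation commute, so $B^\mu_{\,\,\,\,\nu} = g^{\mu\alpha} \nabla_\nu X_\alpha = \nabla_\nu X^\mu$, and therefore
\[
\nabla_X Y^\mu = Y^\nu \nabla_\nu X^\mu = B^\mu_{\,\,\,\,\nu} Y^\nu,
\]
as claimed.

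There is really no serious obstacle here: the proof is essentially a one-line consequence of torsion-freeness plus the definition of $B$. The only point that deserves a moment of care is the vanishing of $[X,Y]$, which relies on $X$ and $Y$ being the coordinate vector fields of the same parameterization $c(t,s)$ rather than arbitrary extensions.
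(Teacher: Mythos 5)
Your proof is correct and follows exactly the paper's own argument: $[X,Y]=0$ because $X$ and $Y$ are coordinate vector fields of the parameterization, torsion-freeness then gives $\nabla_X Y = \nabla_Y X = Y^\nu \nabla_\nu X^\mu$, and this is $B^\mu_{\,\,\,\,\nu} Y^\nu$ by the definition of $B$. Nothing to add.
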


\begin{proof}
The definition of $Y$ implies that
\[
[X,Y]=0 \Leftrightarrow \nabla_XY - \nabla_Y X = 0.
\]
Consequently, we have
\[
\nabla_X Y^\mu = \nabla_Y X^\mu = Y^\nu \nabla_\nu X^\mu = B^\mu_{\,\,\,\,\nu} Y^\nu.
\]
\end{proof}

The equation for $\nabla_X B_{\mu\nu}$ then yields the following famous result.

\begin{Prop}
The geodesic deviation vector satisfies the {\bf Jacobi equation}
\[
\nabla_X \nabla_X Y = R(X,Y)X.
\]
\end{Prop}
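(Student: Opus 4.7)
The plan is to exploit the two identities just established, or, more slickly, to bypass the second fundamental form entirely and argue directly from the torsion-free and curvature-defining properties of $\nabla$. I will sketch both routes.

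The short route: by construction $X=\partial c/\partial t$ and $Y=\partial c/\partial s$ are coordinate vector fields of a two-parameter map, so $[X,Y]=0$. The Levi-Civita connection is torsion-free, which gives $\nabla_X Y - \nabla_Y X = [X,Y] = 0$, hence $\nabla_X Y = \nabla_Y X$. Applying $\nabla_X$ again,
\[
\nabla_X \nabla_X Y = \nabla_X \nabla_Y X.
\]
From the definition of the Riemann tensor, $R(X,Y)X = \nabla_X \nabla_Y X - \nabla_Y \nabla_X X - \nabla_{[X,Y]} X$. The last term vanishes because $[X,Y]=0$, and the middle term vanishes because $X$ is geodesic, i.e.~$\nabla_X X = 0$. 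So $\nabla_X \nabla_Y X = R(X,Y)X$, and combining gives $\nabla_X \nabla_X Y = R(X,Y) X$.

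The route using the machinery developed just above: differentiate $\nabla_X Y^\mu = B^\mu_{\,\,\,\,\nu} Y^\nu$ along $X$ and substitute the expression for $\nabla_X B$ from the first proposition,
\[
\nabla_X \nabla_X Y^\mu = (\nabla_X B^\mu_{\,\,\,\,\nu}) Y^\nu + B^\mu_{\,\,\,\,\nu} \nabla_X Y^\nu = \bigl(-B^\mu_{\,\,\,\,\alpha} B^\alpha_{\,\,\,\,\nu} + R_{\alpha\nu\,\,\,\,\beta}^{\,\,\,\,\,\,\,\,\mu} X^\alpha X^\beta\bigr) Y^\nu + B^\mu_{\,\,\,\,\nu} B^\nu_{\,\,\,\,\alpha} Y^\alpha.
\]
After relabeling the dummy index in the last term, the two quadratic-in-$B$ terms cancel, leaving only the curvature term; the symmetries of Riemann ($R_{\alpha\nu\mu\beta} = R_{\mu\beta\alpha\nu}$) convert it into $R(X,Y)X$ in the notation of mathematicians.

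The only subtle point is justifying $[X,Y]=0$: strictly speaking $X$ is defined on an open set as the generator of the congruence, while $Y$ is defined only along the one-parameter family $c(t,s)$, so one should interpret the bracket via the parameterization, where $[\partial/\partial t, \partial/\partial s]=0$ is automatic. Everything else is an immediate substitution, so I expect no real obstacle — the result is essentially a two-line consequence of what has already been proved.
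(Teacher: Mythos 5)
Your proposal is correct and matches the paper's proof, which in fact presents both of your routes: the index computation via $\nabla_X B$ as the main argument, and the direct manipulation $\nabla_X\nabla_X Y=\nabla_X\nabla_Y X=\nabla_Y\nabla_X X+R(X,Y)X$ as an "alternatively" remark. Nothing to add.
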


\begin{proof}
We have
\begin{align*}
\nabla_X \nabla_X Y^\alpha & = \nabla_X (B^\alpha_{\,\,\,\,\beta} Y^\beta) = (\nabla_X B^\alpha_{\,\,\,\,\beta}) Y^\beta + B^\alpha_{\,\,\,\,\beta} \nabla_X Y^\beta \\
& = - B^{\alpha\mu} B_{\mu\beta} Y^\beta + R_{\mu\beta\,\,\,\,\nu}^{\,\,\,\,\,\,\,\,\alpha}  X^\mu X^\nu Y^\beta + B^\alpha_{\,\,\,\,\beta} B^\beta_{\,\,\,\,\mu} Y^\mu \\
& = R^\alpha_{\,\,\,\,\beta\mu\nu} X^\beta X^\mu Y^\nu.
\end{align*}
Alternatively, we can simply notice that
\[
\nabla_X \nabla_X Y = \nabla_X \nabla_Y X = \nabla_Y \nabla_X X + R(X,Y) X = R(X,Y) X.
\]
\end{proof}

We now define the kinematic quantities associated to the congruence.

\begin{Def}
The {\bf spatial metric} associated to the congruence is
\[
h_{\mu\nu} = g_{\mu\nu} + X_\mu X_\nu.
\]
The {\bf expansion}, {\bf shear} and {\bf vorticity} are defined as\footnote{Curved brackets indicate symmetrization: $B_{(\mu\nu)}=\frac12\left(B_{\mu\nu}+B_{\nu\mu}\right)$.}
\begin{align*}
& \theta = h^{\mu\nu} B_{\mu\nu} = g^{\mu\nu} B_{\mu\nu}, \\
& \sigma_{\mu\nu} = B_{(\mu\nu)} - \frac13 \theta h_{\mu\nu}, \\
& \omega_{\mu\nu} = B_{[\mu\nu]},
\end{align*}
so that we have the decomposition
\[
B_{\mu\nu} = \frac13 \theta h_{\mu\nu} + \sigma_{\mu\nu} + \omega_{\mu\nu}.
\]
\end{Def}

Note that all the tensors above are purely spatial:
\[
h_{\mu\nu} X^\nu = \sigma_{\mu\nu} X^\nu = \omega_{\mu\nu} X^\nu = 0,
\]
Moreover, the trace of $h$ is
\[
h^{\mu\nu} h_{\mu\nu} = g^{\mu\nu} h_{\mu\nu} = g^{\mu\nu} (g_{\mu\nu} + X_\mu X_\nu) = 4 - 1 = 3,
\]
and so the shear is traceless:
\[
h^{\mu\nu} \sigma_{\mu\nu} = g^{\mu\nu} \sigma_{\mu\nu} = 0.
\]

Fix a geodesic $c$, and let $Y$ be a geodesic deviation vector along $c$. If $Y$ is initially orthogonal to $c$ then it will remain orthogonal:
\[
X \cdot (X_\mu Y^\mu) = (\nabla_X X_\mu) Y^\mu + X_\mu \nabla_X Y^\mu = X_\mu B^{\mu\nu} Y_\nu = 0.
\]
In an orthonormal frame $\{X,E_1,E_2,E_3\}$ parallel along $c$ we then have
\[
\dot{Y}^i = B_{ij} Y^j = \left(\frac13 \theta \delta_{ij} + \sigma_{ij} + \omega_{ij}\right) Y^j =  \frac13 \theta Y^i + \sigma_{ij} Y^j + \omega_{ij}Y^j
\]
($i=1,2,3$). If we consider a small spacelike sphere in the hypersurface orthogonal to $c$ and let it be carried by the geodesics of the congruence, we see that $\theta$ measures the rate at which the sphere's volume grows, $\sigma$ describes the sphere's volume-preserving shape deformations, and $\omega$ gives the sphere's angular velocity.

\begin{Prop}
The expansion of the congruence satisfies the {\bf Raychaudhuri equation}
\[
X \cdot \theta = - \frac13 \theta^2 - \sigma_{\mu\nu} \sigma^{\mu\nu} + \omega_{\mu\nu} \omega^{\mu\nu} - R_{\mu\nu} X^\mu X^\nu.
\]
\end{Prop}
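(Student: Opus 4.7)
The plan is to trace the identity $\nabla_X B_{\mu\nu} = -B_{\mu\alpha} B^\alpha_{\,\,\,\,\nu} + R_{\alpha\nu\mu\beta} X^\alpha X^\beta$ from the previous proposition with the inverse metric $g^{\mu\nu}$. Since $\nabla g = 0$, the left-hand side becomes $g^{\mu\nu}\nabla_X B_{\mu\nu} = \nabla_X(g^{\mu\nu}B_{\mu\nu}) = X\cdot\theta$, which is exactly what we want to evaluate.

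Next, I would simplify the two terms on the right. For the curvature term, I would use the pair-exchange symmetry $R_{\alpha\nu\mu\beta} = R_{\mu\beta\alpha\nu}$ and the antisymmetry in the last two indices, together with the definition $R_{\mu\nu} = R_{\alpha\mu\,\,\,\,\nu}^{\,\,\,\,\,\,\,\,\alpha}$ from Section~1.2, to obtain $g^{\mu\nu} R_{\alpha\nu\mu\beta} X^\alpha X^\beta = -R_{\alpha\beta} X^\alpha X^\beta$. For the quadratic term, $-g^{\mu\nu} B_{\mu\alpha} B^\alpha_{\,\,\,\,\nu} = -B_{\mu\nu} B^{\nu\mu}$.

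The algebraic heart of the proof is the decomposition
\[
B_{\mu\nu} = \tfrac13 \theta\, h_{\mu\nu} + \sigma_{\mu\nu} + \omega_{\mu\nu},
\]
which I would substitute into $B_{\mu\nu} B^{\nu\mu}$. Three cross-terms vanish: $h_{\mu\nu}\sigma^{\mu\nu} = 0$ (shear is traceless and $h$ differs from $g$ only in the $X$-direction, on which $\sigma$ vanishes), $h_{\mu\nu}\omega^{\mu\nu} = 0$ and $\sigma_{\mu\nu}\omega^{\mu\nu} = 0$ (symmetric contracted with antisymmetric), while the sign on the $\omega$-square flips because $\omega^{\nu\mu} = -\omega^{\mu\nu}$. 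Using $h^{\mu}_{\,\,\mu} = 3$, this yields
\[
B_{\mu\nu} B^{\nu\mu} = \tfrac13 \theta^2 + \sigma_{\mu\nu}\sigma^{\mu\nu} - \omega_{\mu\nu}\omega^{\mu\nu}.
\]

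Assembling the three pieces gives the Raychaudhuri equation. The only step requiring genuine care is the index gymnastics that converts $g^{\mu\nu}R_{\alpha\nu\mu\beta}$ into $-R_{\alpha\beta}$ with the correct sign (and thus produces $-R_{\mu\nu}X^\mu X^\nu$ in the final identity rather than $+R_{\mu\nu}X^\mu X^\nu$); everything else is bookkeeping around the orthogonal decomposition of $B_{\mu\nu}$ into its trace, traceless symmetric, and antisymmetric parts.
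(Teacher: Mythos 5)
Your proof is correct and follows essentially the same route as the paper: contract the evolution equation for $B_{\mu\nu}$ with $g^{\mu\nu}$, identify the curvature term as $-R_{\alpha\beta}X^\alpha X^\beta$, and expand $B_{\mu\nu}B^{\nu\mu}$ using the trace/shear/vorticity decomposition, with the cross-terms vanishing exactly as you say. The sign bookkeeping in $g^{\mu\nu}R_{\alpha\nu\mu\beta}X^\alpha X^\beta = -R_{\alpha\beta}X^\alpha X^\beta$ and in $\omega_{\mu\nu}\omega^{\nu\mu} = -\omega_{\mu\nu}\omega^{\mu\nu}$ is handled correctly.
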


\begin{proof}
Taking the trace of the equation for $\nabla_X B_{\mu\nu}$ (that is, contracting with $g^{\mu\nu}$) yields
\begin{align*}
X \cdot \theta & = - B_{\mu\nu} B^{\nu\mu} - R_{\alpha \beta} X^\alpha X^\beta \\
& = - \left(\frac13 \theta h_{\mu\nu} + \sigma_{\mu\nu} + \omega_{\mu\nu} \right) \left(\frac13 \theta h^{\mu\nu} + \sigma^{\mu\nu} - \omega^{\mu\nu} \right) - R_{\mu \nu} X^\mu X^\nu \\
& = - \frac13 \theta^2 - \sigma_{\mu\nu} \sigma^{\mu\nu} + \omega_{\mu\nu} \omega^{\mu\nu} - R_{\mu\nu} X^\mu X^\nu.
\end{align*}
\end{proof}

\section{Energy conditions} \label{sec4.2}

\begin{Def}
A given energy-momentum tensor $T_{\mu\nu}$, with trace $T = g^{\mu\nu}T_{\mu\nu}$, is said to satisfy:
\begin{enumerate}
\item
the {\bf strong energy condition (SEC)} if $T_{\mu\nu}X^\mu X^\nu + \frac12 T \geq 0$ for all unit timelike vectors $X$;
\item
the {\bf weak energy condition (WEC)} if $T_{\mu\nu}X^\mu X^\nu \geq 0$ for all timelike vectors $X$;
\item
the {\bf null energy condition (NEC)} if $T_{\mu\nu}X^\mu X^\nu \geq 0$ for all null vectors $X$;
\item
the {\bf dominant energy condition (DEC)} if $-T^{\mu\nu} X_\nu$ is causal and future-pointing for all causal future-pointing vectors $X$.
\end{enumerate}
\end{Def}

The weak energy condition is the reasonable requirement that any observer should measure a non-negative energy density, and the null energy condition can be thought of as the same requirement for observers moving at the speed of light. The dominant energy condition, on the other hand, demands that any observer should measure the flow of energy and momentum to be causal. To understand the strong energy condition, we write the Einstein equations as
\[
R_{\mu\nu} - \frac12 R g_{\mu\nu} = 8 \pi T_{\mu\nu}
\] 
(possibly including the cosmological constant in the energy-momentum tensor). Note that the trace of this equation yields
\[
- R = 8 \pi T,
\]
and so the Einstein equations can also be written as
\[
R_{\mu\nu} = 8 \pi \left( T_{\mu\nu} - \frac12 T g_{\mu\nu} \right).
\]
Therefore the strong energy condition simply requires that the Ricci tensor satisfies $R_{\mu\nu}X^\mu X^\nu \geq 0$ for all timelike vectors $X$ (given that the Einstein equations are written as above).

Generically, the energy-momentum tensor is diagonalizable, that is, there exists an orthonormal frame $\{E_0,E_1,E_2,E_3\}$ in which the energy-momentum tensor is diagonal,
\[
(T_{\mu\nu}) = \diag(\rho, p_1, p_2, p_3).
\]
The timelike eigenvalue $\rho$ and is called the {\bf rest energy density}, and the spacelike eigenvalues $p_1, p_2,p_3$ are known as the {\bf principal pressures}. In terms of these eigenvalues, we have:
\begin{enumerate}
\item
SEC $\Leftrightarrow \rho+\sum_{i=1}^3p_i\geq 0$ and $\rho+p_i\geq 0$ ($i=1,2,3$).
\item
WEC $\Leftrightarrow \rho\geq 0$ and $\rho+p_i\geq 0$ ($i=1,2,3$).
\item
NEC $\Leftrightarrow \rho+p_i\geq 0$ ($i=1,2,3$).
\item
DEC $\Leftrightarrow \rho\geq |p_i|$ ($i=1,2,3$).
\end{enumerate}
Using this characterization, it is easy to see that the NEC is the weakest energy condition, that is, it is implied by any of the other conditions. The remaining three energy conditions are largely independent, except that the DEC implies the WEC. Notice that in particular the SEC does not imply the WEC.

\section{Conjugate points} \label{sec4.3}

\begin{Def}
Let $(M,g)$ be a Lorentzian manifold. A point $q \in M$ is said to be {\bf conjugate} to $p \in M$ along a timelike geodesic $c$ if there exists a nonvanishing solution $Y$ of the Jacobi equation $\nabla_X \nabla_X Y = R(X,Y)X$ such that $Y_p=Y_q=0$.
\end{Def}

Informally, two points $p$ and $q$ are conjugate along $c$ if there exists a nearby timelike geodesic intersecting $c$ at both $p$ and $q$ (Figure~\ref{conjugate}).

\begin{figure}[h!]
\begin{center}
\psfrag{p}{$p$}
\psfrag{q}{$q$}
\psfrag{g}{$c$}
\psfrag{X}{$X$}
\psfrag{Y}{$Y$}
\epsfxsize=.3\textwidth
\leavevmode
\epsfbox{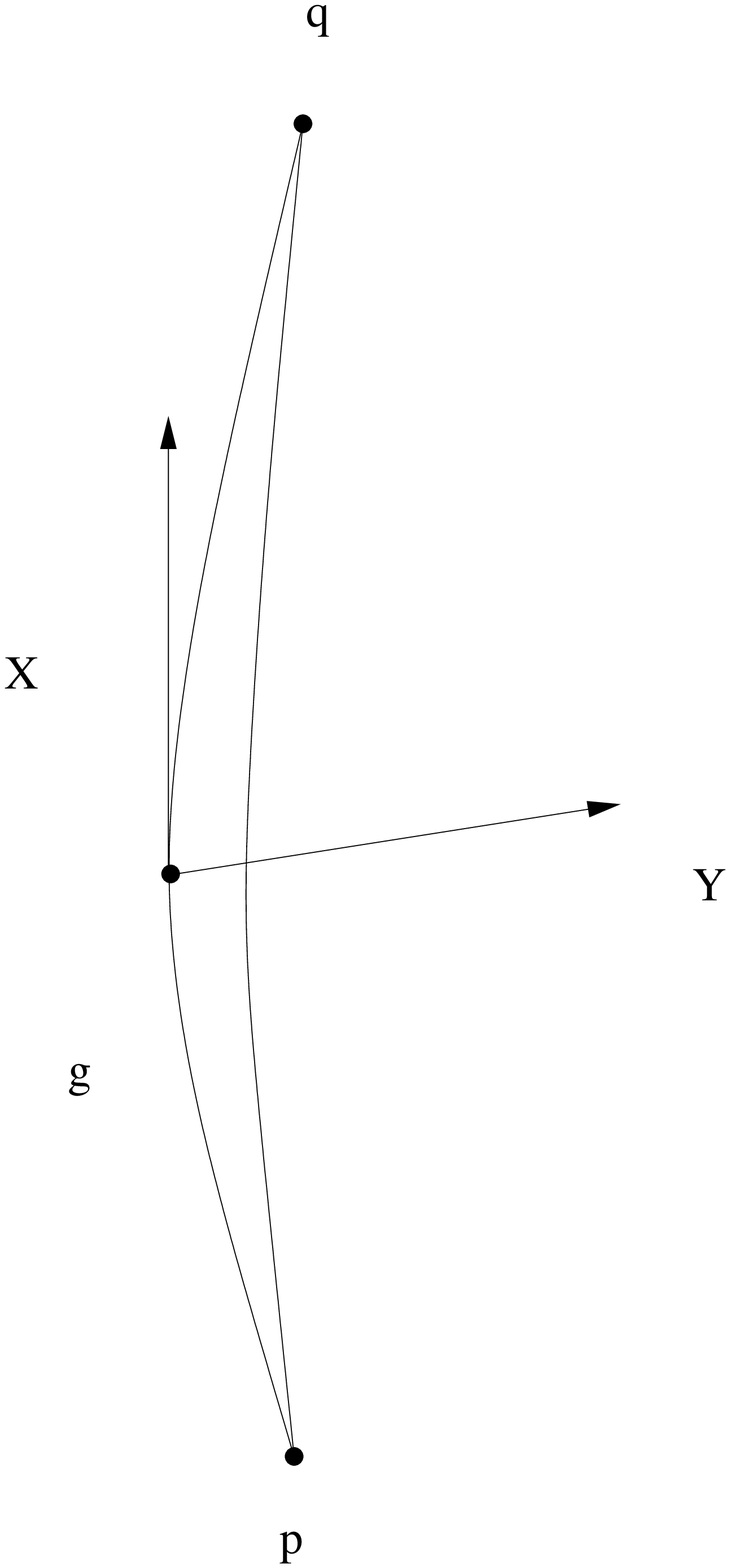}
\end{center}
\caption{Geodesic deviation.} \label{conjugate}
\end{figure}

Chose an orthonormal frame $\{X,E_1, E_2, E_3\}$ parallel along $c$, where $X$ is the unit tangent vector. Let $Y$ be a geodesic deviation vector $Y$ which vanishes at $p = c(0)$. If we write
\[
Y = Y^0 X + Y^i E_i
\]
then the Jacobi equation becomes
\[
\begin{cases}
\ddot{Y}^0 = 0 \\
\ddot{Y}^i = R^i_{\,\,\,\,00j} Y^j
\end{cases}.
\]
Since $Y^0$ is an affine function of the proper time $\tau$, it must vanish identically if $Y$ vanishes again along $c$. We will therefore assume that $Y^0=0$, that is, that $Y$ is orthogonal to $c$. Since the remaining components of $Y$ satisfy a linear ODE, we know that
\[
Y^i(\tau) = A_{ij}(\tau) \dot{Y}^j(0),
\]
where $A(\tau)$ is the fundamental matrix solution vanishing at $\tau=0$:
\[
\begin{cases}
A(0)=0 \\
\dot{A}(0)=I \\
\ddot{A}_{ij} = R^i_{\,\,\,\,00k} A_{kj}
\end{cases}.
\]
Although $A(0)=0$ is singular, $A(\tau)$ is not singular for $\tau > 0$ sufficiently small, because $\dot{A}(0)=I$. If $A(\tau)$ becomes singular for some $\tau_*>0$ then $q = c(\tau_*)$ is conjugate to $p$ (we just have to choose $\dot{Y}(0)$ to be a nonvanishing column vector in the kernel of $A(\tau_*)$).

Consider the congruence of timelike geodesics through $p$. Since on the one hand
\[
\dot{Y}^i = B_{ij} Y^j
\]
and on the other
\[
\dot{Y} = \dot{A} \dot{Y}(0) =  \dot{A} A^{-1} A \dot{Y}(0) =  \dot{A} A^{-1} Y,
\]
we conclude that
\[
B = \dot{A} A^{-1},
\]
and so
\[
\theta = \tr B = \tr (\dot{A} A^{-1}) = \frac{d}{d\tau} \log(\det A).
\]
Therefore the expansion of the congruence blows up if and only if the geodesic approaches the first conjugate point $q$.

\begin{Thm}
Let $(M,g)$ be a $4$-dimensional Lorentzian manifold satisfying the SEC, $c$ a timelike geodesic and $p = c(0)$. Suppose that the expansion $\theta$ of the congruence of timelike geodesics through $p$ takes a negative value $\theta_0 < 0$ at some point $r = c(\tau_0)$, with $\tau_0>0$. Then there exists a point $q$ conjugate to $p$ along $c$ at a distance at most $\frac{3}{|\theta_0|}$ from $r$.
\end{Thm}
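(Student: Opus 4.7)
The plan is to apply the Raychaudhuri equation to derive a differential inequality for $\theta$ along $c$, and then show that this inequality forces $\theta \to -\infty$ within the claimed affine distance, which by the earlier discussion corresponds to a conjugate point.

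First I would observe that the congruence consists of the timelike geodesics through $p$, so by (the Lorentzian version of) the Gauss Lemma it is hypersurface-orthogonal to the level sets of the function $W_p$ introduced in Chapter~\ref{chapter3}. A hypersurface-orthogonal geodesic congruence has $B_{\mu\nu}$ symmetric, hence $\omega_{\mu\nu} = 0$. Next, since the strong energy condition is equivalent (via the Einstein equations) to $R_{\mu\nu} X^\mu X^\nu \geq 0$ for every unit timelike $X$, and since $\sigma_{\mu\nu}$ is purely spatial (so $\sigma_{\mu\nu}\sigma^{\mu\nu} \geq 0$, as the induced metric $h$ is positive definite on the orthogonal complement of $X$), the Raychaudhuri equation reduces to the inequality
\[
\frac{d\theta}{d\tau} \leq - \frac13 \theta^2
\]
along $c$, where $\tau$ denotes proper time measured from $r$.

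Then I would integrate this Riccati-type inequality by setting $u = 1/\theta$. Since $\theta(r) = \theta_0 < 0$, $u$ is well defined and negative at $\tau = 0$, with $u(0) = 1/\theta_0$. A direct computation gives
\[
\frac{du}{d\tau} = - \frac{1}{\theta^2}\frac{d\theta}{d\tau} \geq \frac13,
\]
so $u(\tau) \geq 1/\theta_0 + \tau/3$ as long as $u$ remains negative. Therefore $u$ must vanish at some $\tau_* \leq -3/\theta_0 = 3/|\theta_0|$, which means $\theta(\tau) \to -\infty$ as $\tau \to \tau_*^{-}$.

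Finally, I would invoke the identification $\theta = \frac{d}{d\tau}\log(\det A)$, where $A(\tau)$ is the fundamental matrix solution of the Jacobi equation with $A(0) = 0$, $\dot{A}(0) = I$ (measured from $p$). The divergence $\theta \to -\infty$ is precisely the blow-up of $(\log \det A)'$, which signals that $\det A \to 0$, i.e.\ $A$ becomes singular. As already noted in the text, this is exactly the condition for a conjugate point along $c$: one can then pick $\dot Y(0)$ in $\ker A(\tau_*)$ to obtain a nontrivial Jacobi field vanishing at both $p$ and $q = c(\tau_*)$. The main subtlety is ensuring that the blow-up of $\theta$ along $c$ occurs before $c$ leaves the domain where the congruence is smooth; but this is the standard point, since failure of smoothness of the congruence between $p$ and the relevant $\tau_*$ would itself produce a conjugate point earlier, so one gets a conjugate point at distance at most $3/|\theta_0|$ from $r$ either way.
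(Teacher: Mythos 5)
Your proof is correct and follows essentially the same route as the paper: vanishing vorticity via the Gauss Lemma (hypersurface orthogonality), the SEC and spatiality of $\sigma$ reducing Raychaudhuri to $\frac{d\theta}{d\tau} \leq -\frac13\theta^2$, integration of the resulting Riccati inequality to force $\theta \to -\infty$ within proper time $3/|\theta_0|$ of $r$, and the identification $\theta = \frac{d}{d\tau}\log(\det A)$ to convert the blow-up into a conjugate point. The closing remark about the congruence remaining smooth is a reasonable extra precaution that the paper itself does not belabour.
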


\begin{proof}
By the Gauss Lemma, we can find local coordinates $(t,x^1,x^2,x^3)$ such that $X^\sharp = dt$, and so 
\[
dX^\sharp = 0 \Leftrightarrow \nabla_{[\mu}X_{\nu]} = 0.
\]
In other words, the congruence has no vorticity, and the Raychaudhuri equation becomes
\[
\frac{d\theta}{d\tau} = - \frac13 \theta^2 - \sigma_{\mu\nu} \sigma^{\mu\nu} - R_{\mu\nu} X^\mu X^\nu.
\]
Because $\sigma$ is purely spatial and $(M,g)$ satisfies the SEC, we have
\[
\frac{d\theta}{d\tau} \leq - \frac13 \theta^2 \Leftrightarrow - \frac1{\theta^2}\frac{d\theta}{d\tau} \geq \frac13 \Rightarrow \frac1{\theta} \geq \frac1{\theta_0} + \frac13(\tau-\tau_0).
\]
We conclude that $\frac1{\theta}$ vanishes, and so $\theta$ blows up, at proper time at most $\tau_0 + \frac{3}{|\theta_0|}$.
\end{proof}

Let $c(t,s)$ be a one-parameter family of timelike curves connecting two points $p$ and $q$:
\[
c(t_0,s) = p \qquad \text{ and } \qquad c(t_1,s) = q
\]
for all $s$. Then the connecting vector
\[
Y = \frac{\partial c}{\partial s},
\]
which in general is not a Jacobi field, satisfies
\[
Y_p = Y_q = 0.
\]
We assume that $c(t,s)$ has been parameterized in such a way that $Y$ does not vanish identically. 

The tangent vector
\[
X = \frac{\partial c}{\partial t}
\]
is timelike, and if we define
\[
f(t,s) = \left( - \left\langle X, X \right\rangle \right)^\frac12
\]
then the length of each curve is
\[
\tau(s) = \int_{t_0}^{t_1} f(t,s) dt.
\]
We have
\begin{align*}
\frac{d\tau}{ds} & =  \int_{t_0}^{t_1} \frac{\partial f}{\partial s} dt = - \int_{t_0}^{t_1} \frac1{f} \left\langle X, \nabla_Y X \right\rangle dt = - \int_{t_0}^{t_1} \frac1{f} \left\langle X, \nabla_X Y \right\rangle dt \\
& = - \int_{t_0}^{t_1} X \cdot \left( \frac1{f} \left\langle X, Y \right\rangle \right) dt + \int_{t_0}^{t_1} \left\langle \nabla_X \left(\frac{X}{f}\right), Y \right\rangle dt \\
& = \int_{t_0}^{t_1} \left\langle \nabla_X \left(\frac{X}{f}\right), Y \right\rangle dt,
\end{align*}
where we used the Fundamental Theorem of Calculus and the fact that $Y_p=Y_q=0$. This shows that the timelike curve $c$ defined as $c(t)=c(t,0)$ has extremal length among all timelike curves in such one-parameter families if and only if
\[
\nabla_X \left(\frac{X}{f}\right) = 0,
\]
that is, if and only if it is a timelike geodesic. Assume this to be the case. Then
\[
\frac{d^2\tau}{ds^2}(0) =  \int_{t_0}^{t_1} Y \cdot \left\langle \nabla_X \left(\frac{X}{f}\right), Y \right\rangle dt = \int_{t_0}^{t_1} \left\langle \nabla_Y\nabla_X \left(\frac{X}{f}\right), Y \right\rangle dt.
\]
Assuming that $f(t,0)=1$ (that is, $c$ is parameterized by its proper time) and $\langle X, Y \rangle = 0$ for $s=0$ (which is always possible by reparameterizing $c(t,s)$) leads to
\[
\frac{d^2\tau}{ds^2}(0) =  \int_{t_0}^{t_1} \left\langle \nabla_Y\nabla_X X, Y \right\rangle dt.
\]
Finally, using
\[
R(X,Y)Z = \nabla_X\nabla_Y Z - \nabla_Y\nabla_X Z - \nabla_{[X,Y]} Z = \nabla_X\nabla_Y Z - \nabla_Y\nabla_X Z
\]
we obtain
\begin{align*}
\frac{d^2\tau}{ds^2}(0) & =  \int_{t_0}^{t_1} \left\langle \nabla_X\nabla_Y X - R(X,Y)X, Y \right\rangle dt \\
& = \int_{t_0}^{t_1} \left\langle \nabla_X\nabla_X Y - R(X,Y)X, Y \right\rangle dt.
\end{align*}

\begin{Thm} \label{Thmconjugate}
A timelike curve $c$ connecting the points $p, q \in M$ locally maximizes the proper time (along any one-parameter family of timelike curves connecting the same points) if and only if it is a timelike geodesic without conjugate points to $p$ between $p$ and $q$. 
\end{Thm}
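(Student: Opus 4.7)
Both directions rest on the second-variation formula already derived in the excerpt. Restricting attention to variation fields $Y$ orthogonal to the unit tangent $X$ (tangential components only reparameterize $c$ and contribute nothing to second order), and integrating by parts once using $Y(t_0)=Y(t_1)=0$, it takes the form
\[
I(Y,Y) \;=\; -\int_{t_0}^{t_1} \bigl(|\nabla_X Y|^2 + \langle R(X,Y)X,\,Y\rangle\bigr)\, dt.
\]

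For the sufficient direction, assume $c$ is a geodesic with no conjugate point to $p$ in $(t_0,t_1]$. The first-variation formula already forces $c$ to be critical, so the goal is to show $I(Y,Y)<0$ for every nonzero orthogonal $Y$ vanishing at the endpoints. The key device is the fundamental Jacobi matrix $A(t)$ of Section~\ref{sec4.1} (with $A(t_0)=0$, $\dot A(t_0)=I$), which by hypothesis is invertible on $(t_0,t_1]$, together with the second fundamental form $B=\dot A A^{-1}$ of the congruence of future-directed timelike geodesics through $p$. This congruence is vorticity-free (by the Gauss lemma, exactly as in the proof of the conjugate-point theorem), so $B$ is symmetric. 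Setting $Y=AZ$, where $Z$ extends smoothly to $t_0$ because both $Y$ and $A$ vanish linearly there, one obtains $\nabla_X Y = BY + A\dot Z$; combining this with the matrix Riccati equation $\dot B = -B^2 + \mathcal R$ (where $\mathcal R_{ij}=R_{i00j}$) and the symmetry of $B$ yields the pointwise identity
\[
|\nabla_X Y|^2 + \langle R(X,Y)X,\,Y\rangle \;=\; |A\dot Z|^2 + \frac{d}{dt}\langle BY,\,Y\rangle.
\]
Integrating, the total-derivative term vanishes (the mild singularity of $B$ at $t_0$ is absorbed by the linear vanishing of $Y$), so $I(Y,Y)=-\int|A\dot Z|^2\,dt \le 0$, with equality only when $\dot Z\equiv 0$, which together with $Z(t_0)=0$ forces $Y\equiv 0$.

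For the necessary direction, the first-variation formula immediately forces $c$ to be a geodesic. Suppose, toward a contradiction, there were a conjugate point $c(t^*)$ with $t_0<t^*<t_1$, and pick a nonzero Jacobi field $J$ orthogonal to $X$ with $J(t_0)=J(t^*)=0$. The piecewise-smooth field $\tilde J$ equal to $J$ on $[t_0,t^*]$ and to $0$ on $[t^*,t_1]$ vanishes at both endpoints, and the Jacobi equation together with integration by parts on each piece gives $I(\tilde J,\tilde J)=0$. However $\dot J(t^*)\ne 0$ by uniqueness for the Jacobi equation, so $\tilde J$ has a genuine corner at $t^*$. Choosing any smooth $W$ along $c$ vanishing at the endpoints with $W(t^*)=-\dot J(t^*)$, a short integration-by-parts computation on $[t_0,t^*]$ gives $I(\tilde J,W)=|\dot J(t^*)|^2>0$, whence $I(\tilde J+\varepsilon W,\,\tilde J+\varepsilon W) = 2\varepsilon|\dot J(t^*)|^2 + O(\varepsilon^2) > 0$ for small $\varepsilon>0$. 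Mollifying the corner into a genuinely smooth variation field (with negligibly small change in $I$) produces a one-parameter family of timelike curves along which $\tau$ strictly increases at $s=0$, contradicting the local-maximum hypothesis.

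The main obstacle is the pointwise identity in the sufficient direction, which requires the symmetry of $B$ and careful handling of the singularity of $A^{-1}$ at $t_0$; both are standard but demand the vorticity-free structure of the geodesic congruence from $p$ and a short Taylor expansion of $Y$ and $A$ near $t_0$.
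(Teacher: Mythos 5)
Your proof is correct and follows essentially the same route as the paper's: the substitution $Y=AZ$ through the fundamental Jacobi matrix, the vanishing of the vorticity (symmetry of $B=\dot A A^{-1}$) to reduce the index form to $-\int_{t_0}^{t_1}|A\dot Z|^2\,dt$, and, for the converse, the broken Jacobi field perturbed by $\varepsilon W$ with $W(t^*)=-\nabla_X J(t^*)$ so that the cross term $2\varepsilon\,|\nabla_X J(t^*)|^2$ dominates. One small correction in the equality case: $Z(t_0)$ need \emph{not} vanish (since $A(t_0)=0$, the condition $Y(t_0)=0$ places no constraint on $Z$ there; in fact $Z(t_0)=\dot Y(t_0)$); the rigidity conclusion should instead come from the other endpoint, where $\dot Z\equiv 0$ makes $Z$ constant and $0=Y(t_1)=A(t_1)Z$ with $A(t_1)$ invertible forces $Z\equiv 0$, hence $Y\equiv 0$ — which is exactly how the paper argues.
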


\begin{proof}
It is clear from what was done above that $c$ being a geodesic is a necessary condition.

Let us assume that $c$ has no conjugate points (to $p$, say) between $p$ and $q$. In an orthonormal frame parallel along $c$ we have
\[
\frac{d^2\tau}{ds^2}(0) =  \int_{t_0}^{t_1} Y^i \left(\ddot{Y}^i - R^i_{\,\,\,\,00j} Y^j \right) dt.
\]
Because there are no conjugate points, the fundamental matrix solution $A(t)$ is nondegenerate for $t \in (t_0,t_1)$, and we can set
\[
Y^i = A_{ij} Z^j.
\]
We have
\[
\ddot{Y}^i = A_{ij} \ddot{Z}^j + 2\dot{A}_{ij} \dot{Z}^j + \ddot{A}_{ij} Z^j = A_{ij} \ddot{Z}^j + 2\dot{A}_{ij} \dot{Z}^j + R^i_{\,\,\,\,00k} {A}_{kj} Z^j,
\]
and so
\begin{align*}
\frac{d^2\tau}{ds^2}(0) & =  \int_{t_0}^{t_1} A_{ij} Z^j \left( A_{ik} \ddot{Z}^k + 2\dot{A}_{ik} \dot{Z}^k \right) dt \\
& = \int_{t_0}^{t_1} Z^t A^t \left( A \ddot{Z} + 2\dot{A} \dot{Z} \right) dt \\
& = \int_{t_0}^{t_1} \left[ \frac{d}{dt} \left(Z^tA^tA \dot{Z}\right) - \dot{Z}^tA^tA \dot{Z} - Z^t\dot{A}^tA \dot{Z} + Z^tA^t\dot{A} \dot{Z} \right] dt \\
& = - \int_{t_0}^{t_1} (A\dot{Z})^t A \dot{Z} dt + \int_{t_0}^{t_1} Z^t \left(A^t\dot{A} - \dot{A}^tA\right) \dot{Z} dt.
\end{align*}
Above we used the Fundamental Theorem of Calculus and the fact that $(AZ)^t A\dot{Z} = Y^t (\dot{Y} - \dot{A}Z) = Y^t (\dot{Y} - BY)$ vanishes at $t_0$ and $t_1$ (although $B$ blows up as $(t-t_0)^{-1}$, $Y$ vanishes as $t-t_0$ or faster by Taylor's formula). From $\dot{A}=BA$ we have
\[
A^t\dot{A} - \dot{A}^tA = A^tBA - A^tB^tA = A^t \left( B - B^t \right) A = 2 A^t \omega A = 0,
\]
because the vorticity matrix $\omega$ vanishes for the congruence of timelike geodesics through $p$. Therefore
\[
\frac{d^2\tau}{ds^2}(0) = - \int_{t_0}^{t_1} (A\dot{Z})^t A \dot{Z} dt \leq 0,
\]
with equality if and only if
\[
\dot{Z} \equiv 0 \Rightarrow Z \equiv 0 \Rightarrow Y \equiv 0
\]
(note that if $Z$ is constant then it must be zero because $0 = Y_q = A(t_1) Z$ and $A(t_1)$ is nonsingular). We conclude that $c$ is indeed a maximum of the proper time along any one-parameter family of timelike curves connecting $p$ and $q$.

On the other hand, if there exists a conjugate point along $c$ between $p$ and $q$, say $r=c(t^*)$, then let $\hat{Y}$ be a nonvanishing Jacobi field such that $\hat{Y}(t_0) = \hat{Y}(t^*) = 0$ (in particular $\hat{Y}$ is orthogonal to $c$), and let $Y$ be the vector field along $c$ that coincides with $\hat{Y}$ between $p$ and $r$ and is zero between $r$ and $q$. Similarly, let $\hat{Z}$ be the (necessarily spacelike) vector field parallel along $c$ such that $\hat{Z}(t^*) = -\nabla_X\hat{Y}(t^*)$, and let $Z(t)=\theta(t)\hat{Z}(t)$, where $\theta$ is a smooth function satisfying $\theta(t_0)=\theta(t_1)=0$ and $\theta(t^*)=1$. Finally, let $Y_\varepsilon$ be the vector field along $c$ defined by $Y_\varepsilon = Y + \varepsilon Z$, and consider a one-parameter family of curves $c_\varepsilon(t,s)$ such that $c_\varepsilon(t,0)=c(t)$ and $Y_\varepsilon = \frac{\partial c_\varepsilon}{\partial s}$. Since $Y_\varepsilon$ is not $C^1$, we must write the formula for the second derivative of the length as
\[
\frac{d^2\tau}{ds^2}(0) = - \int_{t_0}^{t_1} \biggl( \left\langle \nabla_X Y_\varepsilon, \nabla_X Y_\varepsilon\right\rangle + \left\langle R(X,Y_\varepsilon)X, Y_\varepsilon \right\rangle \biggr) dt = I(Y_\varepsilon, Y_\varepsilon),
\]
where the bilinear form $I$ is clearly symmetric. Therefore
\[
\frac{d^2\tau}{ds^2}(0) = I(Y,Y) + 2\varepsilon I(Y,Z) + \varepsilon^2 I(Z,Z).
\]
Since $Y$ is a Jacobi field between $p$ and $r$, and zero between $r$ and $q$, we have $I(Y,Y)=0$. On the other hand,
\begin{align*}
I(Y,Z) & = - \int_{t_0}^{t^*} \biggl( \left\langle \nabla_X Y, \nabla_X Z \right\rangle + \left\langle R(X,Y)X, Z \right\rangle \biggr) dt \\
& = - \biggl[ \left\langle \nabla_X Y, Z\right\rangle\biggr]_{t_0}^{t^*} + \int_{t_0}^{t^*} \biggl( \left\langle \nabla_X\nabla_X Y, Z \right\rangle - \left\langle R(X,Y)X, Z \right\rangle \biggr) dt \\
& = \left\langle \nabla_X \hat{Y}(t^*), \nabla_X \hat{Y}(t^*)\right\rangle > 0.
\end{align*}
Therefore for $\varepsilon > 0$ sufficiently small the one-parameter family $c_\varepsilon(t,s)$ contains curves whose length is greater than the length of $c$.

Figure~\ref{conjugate2} illustrates the geometric idea behind the proof above: $c_s$ represents a generic curve of a one-parameter family corresponding to $Y$, and has the same length as $c$; adding $\varepsilon Z$ changes $c_s$ between points $u$ and $v$, say, making it longer by the twin paradox.


\end{proof}

\begin{figure}[h!]
\begin{center}
\psfrag{p}{$p$}
\psfrag{q}{$q$}
\psfrag{r}{$r$}
\psfrag{s}{$u$}
\psfrag{t}{$v$}
\psfrag{c}{$c$}
\psfrag{ct}{$c_s$}
\psfrag{Y}{$Y$}
\psfrag{DY}{$\nabla_X\hat{Y}(t^*)$}
\epsfxsize=.5\textwidth
\leavevmode
\epsfbox{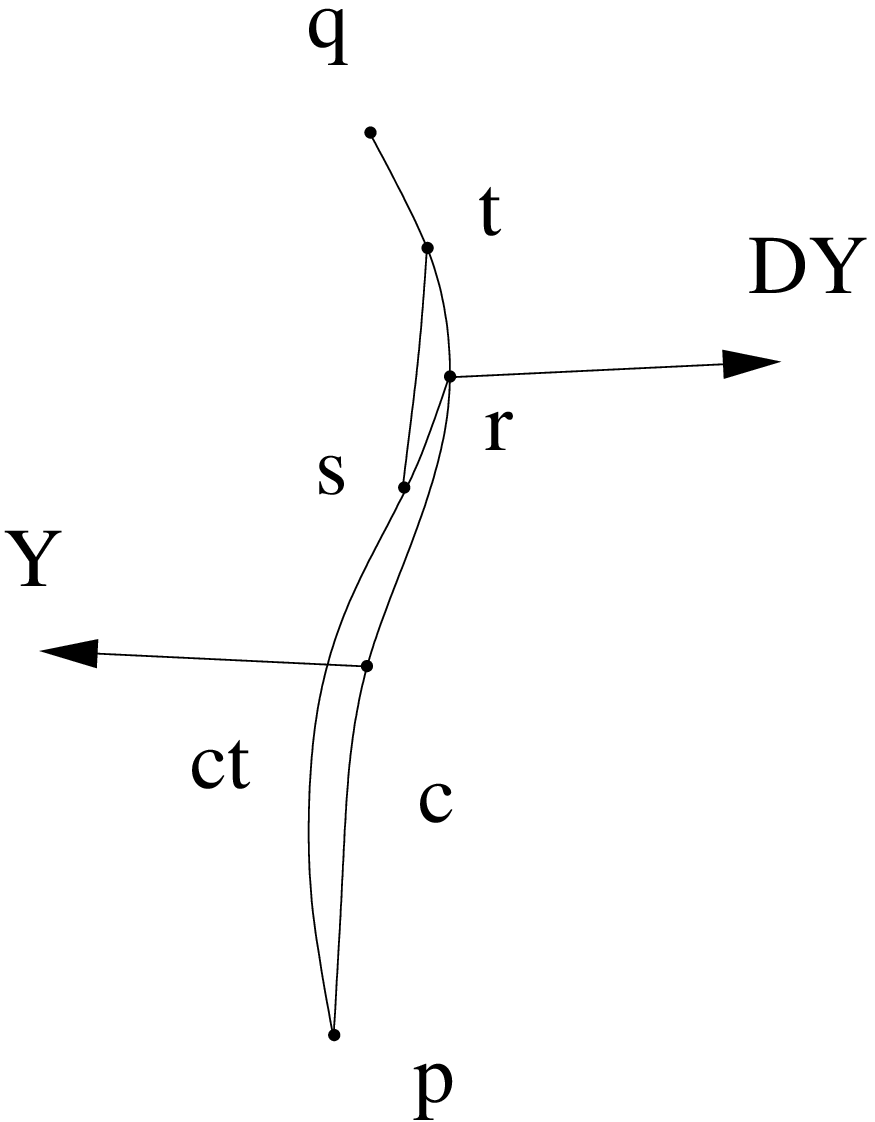}
\end{center}
\caption{Proof of Theorem~\ref{Thmconjugate}.} \label{conjugate2}
\end{figure}

The results above can be generalized for timelike geodesics orthogonal to a spacelike hypersurface $S$. If one considers the congruence of such geodesics then at $S$
\begin{align*}
\nabla_X Y_\mu & = \nabla_Y X_\mu = Y^\nu (\nabla_\nu X_\mu) = Y^\nu (\nabla_{(\nu} X_{\mu)} + \nabla_{[\nu} X_{\mu]}) \\
& = \nabla_{(\mu} X_{\nu)}  Y^\nu = \frac12 \cL_X g_{\mu\nu} Y^\nu = K_{\mu\nu} Y^\nu.
\end{align*}

\begin{Def}
Let $(M,g)$ be a Lorentzian manifold and let $S \subset M$ be a spacelike hypersurface with second fundamental form $K$. A point $q \in M$ is said to be {\bf conjugate} to $S$ along a timelike geodesic $c$ orthogonal to $S$ at some point $p \in S$ if there exists a nonvanishing solution $Y$ of the Jacobi equation $\nabla_X \nabla_X Y = R(X,Y)X$ such that $Y_p \in T_pS$, $(\nabla_X Y_\mu )_p = (K_{\mu\nu} Y^\nu)_p$ and $Y_q = 0$.
\end{Def}

In an orthonormal frame $\{X,E_1, E_2, E_3\}$ parallel along $c$, again we can assume that $Y^0=0$, and have for the remaining components
\[
Y^i(t) = A_{ij}(t) Y^j(0),
\]
where $A(t)$ is the fundamental matrix solution:
\[
\begin{cases}
A(0)=I \\
\dot{A}(0) = K \\
\ddot{A}_{ij} = R^i_{\,\,\,\,00k} A_{kj}
\end{cases}.
\]
Arguing as above, we have the following result.

\begin{Thm} \label{theta_0}
Let $(M,g)$ be a $4$-dimensional Lorentzian manifold satisfying the SEC, $S \subset M$ a spacelike hypersurface and $c$ a timelike geodesic orthogonal to $S$ at some point $p \in S$. Suppose that the expansion $\theta$ of the congruence of timelike geodesics orthogonal to $S$ takes a negative value $\theta_0 < 0$ at $p$. Then there exists a point $q$ conjugate to $S$ along $c$ at a distance at most $\frac{3}{|\theta_0|}$ from $S$.
\end{Thm}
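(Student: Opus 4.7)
The plan is to mimic the proof of the earlier conjugate-point theorem (for geodesics through a point $p$), with the modification that the congruence is the one orthogonal to $S$ and the initial data at $S$ is now $A(0)=I$, $\dot{A}(0)=K$ rather than $A(0)=0$, $\dot{A}(0)=I$. Thus I only need to control the expansion $\theta(\tau)$ along $c$ via Raychaudhuri's equation and then translate the blow-up of $\theta$ into the singularity of the Jacobi matrix $A$.

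First I would set up the congruence of timelike geodesics orthogonal to $S$ in a neighborhood of $S$. By the Gauss Lemma I (applied to the hypersurface $S$ with its spacelike unit normal), the geodesics of the congruence remain orthogonal to the equidistant hypersurfaces $S_t$, so we can choose a local time function $t$ (the signed $S$-distance) with $X^\sharp=-dt$ along the congruence. Hence $dX^\sharp=0$, which gives $\nabla_{[\mu}X_{\nu]}=0$ and therefore $\omega_{\mu\nu}\equiv 0$ along every geodesic of the congruence. At $S$, the expansion of the congruence agrees with $\tr K$ and in particular with the prescribed value $\theta_0<0$ at $p$.

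Next I would feed this into the Raychaudhuri equation along $c$,
\[
\frac{d\theta}{d\tau} = -\frac13 \theta^2 - \sigma_{\mu\nu}\sigma^{\mu\nu} + \omega_{\mu\nu}\omega^{\mu\nu} - R_{\mu\nu}X^\mu X^\nu.
\]
Since $\omega\equiv 0$, $\sigma$ is purely spatial (hence $\sigma_{\mu\nu}\sigma^{\mu\nu}\geq 0$), and the SEC combined with the Einstein equations gives $R_{\mu\nu}X^\mu X^\nu\geq 0$, I obtain the differential inequality $\dot\theta\leq -\tfrac13\theta^2$. Dividing by $\theta^2$ (permitted while $\theta$ stays negative) and integrating from $0$ to $\tau$ yields
\[
\frac1{\theta(\tau)}\geq \frac1{\theta_0}+\frac{\tau}{3}.
\]
Since $\theta_0<0$, the right-hand side reaches $0$ at some $\tau^*\leq 3/|\theta_0|$, forcing $\theta(\tau)\to -\infty$ at or before $\tau=\tau^*$.

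Finally I would convert the blow-up of $\theta$ into a conjugate point. Just as in the earlier argument, for geodesic deviation vectors $Y$ with $Y(0)\in T_pS$ and $\nabla_XY(0)=K\cdot Y(0)$ one has $Y^i(\tau)=A_{ij}(\tau)Y^j(0)$ with $A$ satisfying the stated initial conditions; then $B=\dot A A^{-1}$ as long as $A$ is invertible, and
\[
\theta=\tr B=\frac{d}{d\tau}\log\det A.
\]
Therefore $\theta\to-\infty$ forces $\det A\to 0$ at some $\tau_*\leq 3/|\theta_0|$, and any nonzero vector in $\ker A(\tau_*)$ produces a nontrivial Jacobi field $Y$ along $c$ with $Y(0)\in T_pS$, $\nabla_XY(0)=K\cdot Y(0)$, and $Y(\tau_*)=0$; that is, $q=c(\tau_*)$ is conjugate to $S$ along $c$ within distance $3/|\theta_0|$ from $S$. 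The only genuinely new ingredient compared to the previous theorem is the identification of $\theta$ at $S$ with the initial data coming from the second fundamental form and the verification that the Raychaudhuri argument is insensitive to this change; the potential subtlety, which I would address by the Gauss-Lemma coordinates above, is to ensure that the vorticity really vanishes on the entire congruence and not just at $S$.
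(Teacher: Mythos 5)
Your proposal is correct and follows exactly the route the paper intends: the text proves Theorem~\ref{theta_0} only by the remark ``Arguing as above,'' i.e.\ by repeating the Raychaudhuri/vorticity-free argument of the point-based theorem with the modified initial data $A(0)=I$, $\dot{A}(0)=K$, which is precisely what you spell out. Your explicit attention to the vanishing of the vorticity along the whole congruence (via the Gauss Lemma) and to the identification of $\theta$ at $S$ with $\tr K$ fills in the details the paper leaves implicit, and nothing further is needed.
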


\begin{Thm} \label{maximizing}
A timelike curve $c$ connecting the spacelike hypersurface $S \subset M$ to the point $q \in M$ locally maximizes the proper time (along any one-parameter family of timelike curves connecting $S$ to $q$) if and only if it is a timelike geodesic orthogonal to $S$ without conjugate points to $S$ between $S$ and $q$. 
\end{Thm}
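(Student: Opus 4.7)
The plan is to run the same variational arguments as in the proof of Theorem~\ref{Thmconjugate}, with $Y(t_0)=0$ replaced by the free-boundary condition $Y(t_0) \in T_pS$, and with boundary contributions involving the second fundamental form $K$ of $S$ carefully tracked. Consider a one-parameter family of timelike curves $c(t,s)$ with $c(t_0,s) \in S$ and $c(t_1,s)=q$, set $p = c(t_0,0)$, and let $Y = \partial c/\partial s$, so that $Y(t_0) \in T_pS$ and $Y(t_1)=0$. The integration by parts that produced the first variation in the proof of Theorem~\ref{Thmconjugate} now retains the boundary term at $t_0$ and gives
\begin{equation*}
\frac{d\tau}{ds}(0) = \int_{t_0}^{t_1} \left\langle \nabla_X\!\left(\frac{X}{f}\right), Y \right\rangle dt \,+\, \frac{1}{f(t_0,0)}\langle X(t_0), Y(t_0) \rangle.
\end{equation*}
For this to vanish for every admissible variation, the integrand forces $c$ to be a geodesic, while the boundary term, combined with the freedom to take $Y(t_0)$ to be any vector in $T_pS$, forces $X(t_0) \perp T_pS$. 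Thus any local maximum must be a geodesic orthogonal to $S$.

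For the sufficiency direction assume that $c$ is such a geodesic with no conjugate point to $S$ strictly between $S$ and $q$. Choose an orthonormal parallel frame $\{X,E_1,E_2,E_3\}$ along $c$; as in the proof of Theorem~\ref{Thmconjugate} it suffices to consider $Y$ with $Y^0\equiv 0$. Differentiating the boundary term of the first variation at $s=0$ and using $K(V,W)=\langle \nabla_V X, W\rangle$, the second variation takes the form
\begin{equation*}
\frac{d^2\tau}{ds^2}(0) \,=\, \int_{t_0}^{t_1} Y^i\bigl(\ddot Y^i - R^{\,\,\,\,\,\,\,\,i}_{0\,0\,j} Y^j\bigr)\, dt \,+\, \bigl[\text{term in }K(Y(t_0),Y(t_0))\bigr].
\end{equation*}
Introduce now the fundamental matrix $A(t)$ with the hypersurface-conjugacy initial data $A(t_0)=I$, $\dot A(t_0)=K$, which by hypothesis is nonsingular on $[t_0,t_1]$, and set $Y^i = A_{ij}Z^j$, so that $Z(t_0)=Y(t_0)$ and $Z(t_1)=0$. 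Exactly the integration by parts of Theorem~\ref{Thmconjugate} transforms the integral into
\begin{equation*}
- \int_{t_0}^{t_1} (A\dot Z)^t(A\dot Z)\, dt \,+\, \Bigl[Z^tA^tA\dot Z\Bigr]_{t_0}^{t_1} \,+\, \int_{t_0}^{t_1} Z^t(A^t\dot A - \dot A^tA)\dot Z\, dt.
\end{equation*}
The congruence of timelike geodesics orthogonal to $S$ is irrotational (Gauss Lemma~I gives $X^\sharp$ proportional to $dt$ in a tubular neighborhood of $S$), so $A^t\dot A - \dot A^tA \equiv 0$; the boundary bracket vanishes at $t_1$ (where $Z=0$), and at $t_0$, using $A(t_0)=I$ and $\dot Z(t_0)=\dot Y(t_0)-KY(t_0)$, it cancels precisely the $K$-contribution inherited from the first step. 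What remains is $\frac{d^2\tau}{ds^2}(0) = -\int_{t_0}^{t_1} (A\dot Z)^t(A\dot Z)\, dt \leq 0$, with equality forcing $\dot Z \equiv 0$, hence $Z\equiv 0$ (since $Z(t_1)=0$), hence $Y \equiv 0$, so that $c$ is a strict local maximum.

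For the converse, suppose a conjugate point $r=c(t^*)$ to $S$ lies strictly before $q$ along $c$. Then I build the broken-field variation used at the end of the proof of Theorem~\ref{Thmconjugate}: take $\hat Y$ to be a Jacobi field witnessing the hypersurface-conjugacy on $[t_0,t^*]$ and zero on $[t^*,t_1]$, and perturb by $\varepsilon Z$ with $Z$ a suitable spacelike parallel field vanishing at both endpoints and nonzero at $r$. The symmetric bilinear index form $I$ (with the $K$-boundary term included) satisfies $I(\hat Y,\hat Y)=0$ and $I(\hat Y, Z)>0$, producing, for small $\varepsilon>0$, a one-parameter family of timelike curves from $S$ to $q$ strictly longer than $c$. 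The main obstacle, and the step I expect to need the most care, is verifying that the boundary $K$-term produced by the second-variation formula cancels exactly against the residual boundary contribution $Y(t_0)^t \dot Z(t_0)$ left by the integration by parts after the substitution $Y=AZ$; this cancellation is precisely where the hypersurface-conjugacy initial condition $\dot A(t_0)=K$ plays its decisive role, and it is also what makes the $K$-boundary term appear in $I$ with the correct sign for the broken-field argument to yield a \emph{longer} curve, not a shorter one.
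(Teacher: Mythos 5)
Your proposal follows the same route as the paper's own proof: adapt the two-point argument of Theorem~\ref{Thmconjugate} by retaining the boundary term $\frac{1}{f(t_0,s)}\langle X_p,Y_p\rangle$ in the first variation (which forces orthogonality to $S$), using the focal fundamental matrix with $A(t_0)=I$, $\dot A(t_0)=K$, and letting the $t_0$-boundary bracket $-Y^t(\dot Y-KY)(t_0)$ produced by the substitution $Y=AZ$ cancel the extra boundary contribution in the second variation, with the same broken Jacobi field construction for the converse. The cancellation you single out as the delicate step does go through exactly as you describe, so the argument is correct and essentially identical to the paper's.
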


\begin{proof}
The proof is basically the same as for curves connecting two points. The main differences are the in formula
\begin{align*}
\frac{d\tau}{ds} & = - \int_{t_0}^{t_1} X \cdot \left( \frac1{f} \left\langle X, Y \right\rangle \right) dt + \int_{t_0}^{t_1} \left\langle \nabla_X \left(\frac{X}{f}\right), Y \right\rangle dt \\
& = \frac1{f(t_0,s)} \left\langle X_p, Y_p \right\rangle + \int_{t_0}^{t_1} \left\langle \nabla_X \left(\frac{X}{f}\right), Y \right\rangle dt,
\end{align*}
which requires $c$ to be orthogonal to $S$ at $p = c(t_0)$; the fact that $A(t_0)$ does not vanish, but instead
\begin{align*}
(AZ)^t A\dot{Z} & = Y^t (\dot{Y} - \dot{A}Z) = Y^t (\dot{Y} - BAZ) \\
& = Y^t (\dot{Y} - B Y) = Y^t (\dot{Y} - K Y) = 0
\end{align*}
at $t_0$; and the integrated formula
\begin{align*}
\frac{d^2\tau}{ds^2}(0) & = (K_{\mu\nu} Y^\mu Y^\nu)(p) - \int_{t_0}^{t_1} \biggl( \left\langle \nabla_X Y , \nabla_X Y \right\rangle + \left\langle R(X,Y)X, Y \right\rangle \biggr) dt \\
& = (K_{\mu\nu} Y^\mu Y^\nu)(p) + I(Y,Y),
\end{align*}
which vanishes when $Y$ is a Jacobi field.
\end{proof}

\section{Existence of maximizing geodesics} \label{sec4.4}

\begin{Prop} \label{compact}
Let $(M,g)$ be a globally hyperbolic spacetime, $S$ a Cauchy hypersurface and $p \in D^+(S)$. Then $D^+(S)\cap J^-(p)$ is compact.
\end{Prop}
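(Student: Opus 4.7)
The plan is to establish compactness by sequential compactness, combining closedness of $D^+(S)\cap J^-(p)$ with a limit-curve argument that uses the Cauchy property to keep the relevant causal curves from escaping to infinity.

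I would first verify closedness. In a globally hyperbolic spacetime $J^-(p)$ is closed (a standard consequence of strong causality), and since $S$ is a Cauchy hypersurface one has $D^+(S)=J^+(S)$: indeed, if $q\in J^+(S)$ then every past-inextendible causal curve from $q$, viewed as an inextendible causal curve after extending forward, must meet $S$ exactly once, and since the $\tau$ coordinate along any time function strictly decreases in the past direction, this intersection occurs on the past side of $q$; the reverse inclusion is immediate. Since $J^+(S)$ is likewise closed, so is $D^+(S)\cap J^-(p)$. It therefore suffices to establish precompactness.

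For precompactness, take $\{q_n\}\subset D^+(S)\cap J^-(p)$. For each $n$, I would reverse a future-directed causal curve from $q_n$ to $p$ and extend past $q_n$ to a past-inextendible causal curve $\mu_n$ starting at $p$; because $p\in D^+(S)$, the curve $\mu_n$ meets $S$ at some $r_n$, and truncating there gives a past-directed causal curve $\mu_n:[0,L_n]\to M$ with $\mu_n(0)=p$, $\mu_n(L_n)=r_n\in S$, and $\mu_n(s_n)=q_n$ for some $s_n\in(0,L_n)$. I would parameterize by arc length with respect to an auxiliary complete Riemannian metric $h$ on $M$ (available by paracompactness), so that the family $\{\mu_n\}$ is uniformly $1$-Lipschitz with respect to $h$.

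The heart of the argument, and its main obstacle, is a limit-curve lemma. After extending each $\mu_n$ continuously to $[0,+\infty)$ (e.g.\ by freezing past $L_n$), an iterative Arzel\`a--Ascoli argument applied within a locally finite cover of $M$ by relatively compact, geodesically convex neighborhoods (Proposition~\ref{local}) produces a subsequence $\mu_{n_k}$ converging uniformly on compact subsets of $[0,+\infty)$ to an $h$-Lipschitz curve $\mu$ with $\mu(0)=p$; the limit is causal because causality is preserved under uniform limits in convex charts. The delicate step is showing $L_{n_k}$ stays bounded: if $L_{n_k}\to+\infty$, then $\mu|_{[0,+\infty)}$ is a past-inextendible causal curve from $p$, which by $p\in D^+(S)$ must meet $S$ at some parameter $L^\ast$; then $\mu_{n_k}(L^\ast)\to\mu(L^\ast)\in S$ by uniform convergence, and since each $\mu_{n_k}$ is a causal curve meeting the achronal Cauchy hypersurface $S$ only at $s=L_{n_k}$, this forces $L_{n_k}\to L^\ast<+\infty$, a contradiction. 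Once $\sup_n L_n<+\infty$, I extract a further subsequence with $s_{n_k}\to s^\ast\in[0,\sup L_n]$ and conclude $q_{n_k}=\mu_{n_k}(s_{n_k})\to\mu(s^\ast)$, which lies in $D^+(S)\cap J^-(p)$ by closedness, completing the proof.
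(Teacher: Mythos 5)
Your route---sequential compactness via an auxiliary complete Riemannian metric, Arzel\`a--Ascoli, and a limit-curve lemma---is genuinely different from the paper's, which argues by contradiction: assuming $A=D^+(S)\cap J^-(p)$ is not compact, it covers $A$ by a countable locally finite family of simple neighborhoods admitting no finite subcover and uses the compactness of their boundaries to extract accumulation points $p_1,p_2,\ldots$ joined by causal geodesics, yielding a past-inextendible causal curve from $p$ that never meets $S$. Both proofs ultimately contradict $p\in D^+(S)$, but the paper's construction is self-contained, whereas yours concentrates all the difficulty in the limit-curve lemma. Before addressing that, note one circularity: your appeal to the closedness of $J^-(p)$ is, in this text, Corollary~\ref{closed_compact}, which is \emph{derived from} Proposition~\ref{compact}; and it is not a consequence of strong causality alone (Minkowski spacetime with a point deleted is strongly causal yet has non-closed causal futures). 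Fortunately the appeal is dispensable: your subsequential limit $\mu(s^\ast)$ sits on a past-directed causal curve running from $p$ through $\mu(s^\ast)$ on to $S$, so it lies in $J^-(p)\cap J^+(S)\subset D^+(S)\cap J^-(p)$ by construction, and since $M$ is metrizable, sequential compactness with limits in the set already gives compactness without any prior closedness claim.

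The genuine gap is in the step you yourself flag as delicate. For the contradiction when $L_{n_k}\to+\infty$ you need the limit $\mu$ to be a \emph{non-degenerate, past-inextendible} causal curve on $[0,+\infty)$. A uniform-on-compacts limit of unit-$h$-speed causal curves is $1$-Lipschitz but need not have unit speed: a priori it can stall on an interval or have finite total $h$-length and converge to a point as $s\to+\infty$, in which case it is extendible, never forced to cross $S$, and your contradiction evaporates. Excluding this requires a non-imprisonment argument---no causal curve of unbounded $h$-length can remain trapped in a compact set of a strongly causal spacetime---which is exactly the content the paper supplies by hand when it shows the sequence $\{p_i\}$ cannot accumulate. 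Relatedly, your final inference that ``$\mu_{n_k}(L^\ast)\to\mu(L^\ast)\in S$ forces $L_{n_k}\to L^\ast$'' does not follow as stated (proximity to $S$ does not control the crossing parameter); it should be run through the time function: once non-degeneracy of $\mu$ is known, pick $s>L^\ast$ with $t(\mu(s))$ strictly below the value of $t$ on $S$, so that $t(\mu_{n_k}(s))$ is eventually below that value as well, forcing $L_{n_k}<s$ and contradicting $L_{n_k}\to+\infty$. With the limit-curve lemma actually proved (or imported with a correct citation), the rest of your argument closes.
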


\begin{proof}
Let us define a {\bf simple neighborhood} $U \subset M$ to be a geodesically convex open set diffeomorphic to an open ball whose boundary is a compact submanifold of a larger geodesically convex open set (therefore $\partial U$ is diffeomorphic to $S^3$ and $\overline{U}$ is compact). It is clear that simple neighborhoods form a basis for the topology of $M$. Also, it is easy to show that any open cover $\{ V_\alpha \}_{\alpha \in A}$ has a countable, locally finite refinement $\{ U_n \}_{n \in \bbN}$ by simple neighborhoods.

If $A=D^+(S)\cap J^-(p)$ were not compact, there would exist a countable, locally finite open cover $\{ U_n \}_{n \in \bbN}$ of $A$ by simple neighborhoods not admitting any finite subcover. Take $q_n \in A \cap U_n$ such that $q_m \neq q_n$ for $m \neq n$. The sequence $\{ q_n \}_{n \in \bbN}$ cannot have accumulation points, since any point in $M$ has a neighborhood intersecting only finite simple neighborhoods $U_n$. In particular, each simple neighborhood $U_n$ contains only a finite number of points in the sequence (as $\overline{U}_n$ is compact).

Set $p_1=p$. Since $p_1 \in A$, we have $p_1 \in U_{n_1}$ for some $n_1 \in \bbN$. Let $q_n \not\in U_{n_1}$. Since $q_n \in J^-(p_1)$, there exists a future-directed causal curve $c_n$ connecting $q_n$ to $p_1$. This curve will necessarily intersect $\partial U_{n_1}$. Let $r_{1,n}$ be an intersection point. Since $U_{n_1}$ contains only a finite number of points in the sequence $\{ q_n \}_{n \in \bbN}$, there will exist infinite intersection points $r_{1,n}$. As $\partial U_{n_1}$ is compact, these will accumulate to some point $p_2 \in \partial U_{n_1}$ (cf.~Figure~\ref{proof}).

Because $\overline{U}_{n_1}$ is contained in a geodesically convex open set $V$, which can be chosen so that $v \mapsto (\pi(v),\exp(v))$ is a diffeomorphism onto $V \times V$, we have $p_2 \in J^-(p_1)$: if $\gamma_{1,n}$ is the unique causal geodesic connecting $p_1$ to $r_{1,n}$, parameterized by the global time function $t:M \to \bbR$, then the subsequence of $\{\gamma_{1,n}\}$ corresponding to a convergent subsequence of $\{r_{1,n}\}$ will converge to a causal geodesic $\gamma_1$ connecting $p_1$ to $p_2$. If $S=t^{-1}(0)$ then we have $t(r_{1,n}) \geq 0$, implying that $t(p_2) \geq 0$ and hence $p_2 \in A$. Since $p_2 \not\in U_{n_1}$, there must exist $n_2 \in \bbN$ such that $p_2 \in U_{n_2}$. 

Since $U_{n_2}$ contains only a finite number of points in the sequence $\{ q_n \}_{n \in \bbN}$, an infinite number of curves $c_n$ must intersect $\partial U_{n_2}$ to the past of $r_{1,n}$. Let $r_{2,n}$ be the intersection points. As $\partial U_{n_2}$ is compact, $\{r_{2,n}\}$ must accumulate to some point $p_3 \in \partial U_{n_2}$. Because $\overline{U}_{n_2}$ is contained in a geodesically convex open set, $p_3 \in J^-(p_2)$: if $\gamma_{2,n}$ is the unique causal geodesic connecting $r_{1,n}$ to $r_{2,n}$, parameterized by the global time function, then the subsequence of $\{\gamma_{2,n}\}$ corresponding to convergent subsequences of both $\{r_{1,n}\}$ and $\{r_{2,n}\}$ will converge to a causal geodesic connecting $p_2$ to $p_3$. Since $J^-(p_2) \subset J^-(p_1)$ and $t(r_{2,n}) \geq 0 \Rightarrow t(p_3) \geq 0$, we have $p_3 \in A$.

Iterating the procedure above, we can construct a sequence $\{p_i\}_{i\in\bbN}$ of points in $A$ satisfying $p_i \in U_{n_i}$ with $n_i \neq n_j$ if $i \neq j$, such that $p_{i}$ is connected to $p_{i+1}$ by a causal geodesic $\gamma_i$. It is clear that $\gamma_i$ cannot intersect $S$, for $t(p_{i+1}) > t(p_{i+2}) \geq 0$. On the other hand, the piecewise smooth causal curve obtained by joining the curves $\gamma_i$ can easily be smoothed into a past-directed causal curve starting at $p_1$ which does not intersect $S$. Finally, such curve is inextendible: it cannot converge to any point, as $\{p_i\}_{i\in\bbN}$ cannot accumulate. But since $p_1 \in D^+(S)$, this curve would have to intersect $S$. Therefore $A$ must be compact.
\end{proof}

\begin{figure}[h!]
\begin{center}
\psfrag{p=p1}{$p=p_1$}
\psfrag{p2}{$p_2$}
\psfrag{p3}{$p_3$}
\psfrag{Un1}{$U_{n_1}$}
\psfrag{Un2}{$U_{n_2}$}
\epsfxsize=.7\textwidth
\leavevmode
\epsfbox{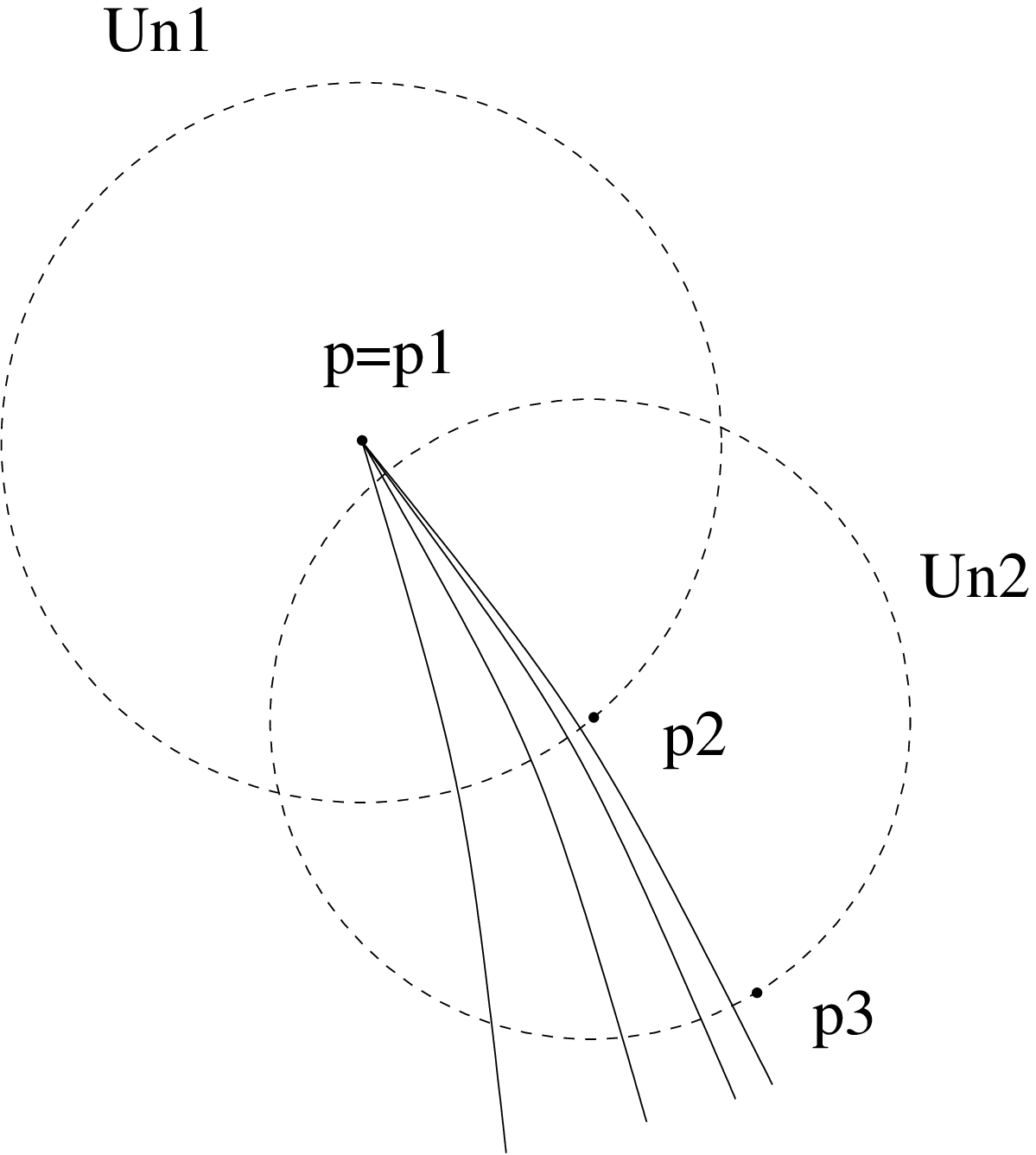}
\end{center}
\caption{Proof of Proposition~\ref{compact}.} \label{proof}
\end{figure}

\begin{Cor} \label{closed_compact}
Let $(M,g)$ be a globally hyperbolic spacetime and $p,q \in M$. Then
\begin{enumerate}[(i)]
\item
$J^+(p)$ is closed;
\item
$J^+(p) \cap J^-(q)$ is compact.
\end{enumerate}
\end{Cor}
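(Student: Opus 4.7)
My plan is to prove (i) first and then deduce (ii) essentially for free, using Proposition~\ref{compact} as the main tool in both cases.

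For (i), I would take a sequence $q_n \in J^+(p)$ with $q_n \to q$ and aim to construct a future-directed causal curve from $p$ to $q$. Pick any $r \in I^+(q)$; since $I^-(r)$ is open and contains $q$, we have $q_n \in I^-(r)$ for all sufficiently large $n$, so $q_n \in J^+(p) \cap J^-(r)$. Next choose a Cauchy hypersurface $S = t^{-1}(a)$ with $a < t(p)$, where $t$ is the global time function. Since $t$ strictly increases along future-directed causal curves, $J^+(p) \subset t^{-1}([t(p), +\infty)) \subset D^+(S)$, and also $r \in D^+(S)$. By Proposition~\ref{compact}, the set $K = D^+(S) \cap J^-(r)$ is compact, and it contains every $q_n$ (for $n$ large) together with the image of every future-directed causal curve joining $p$ to such a $q_n$.

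The key step is a limit-curve argument essentially identical to the one carried out inside the proof of Proposition~\ref{compact}. Let $c_n$ be a future-directed causal curve from $p$ to $q_n$ lying in $K$. I would cover $K$ by a finite collection of simple neighborhoods and construct, inductively, points $p = p_0, p_1, p_2, \ldots$ on the successive boundaries by taking accumulation points of the exit points of $c_n$ from each simple neighborhood; consecutive points $p_i, p_{i+1}$ lie in a common geodesically convex set and are joined by a causal geodesic obtained as the limit of the corresponding geodesic segments of $c_n$ (using that in a geodesically convex neighborhood the map $v \mapsto (\pi(v),\exp(v))$ is a diffeomorphism, so causal segments depend continuously on their endpoints). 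Because only finitely many simple neighborhoods are required to cover the compact set $K$, the construction must terminate at $q$, yielding a piecewise causal geodesic from $p$ to $q$ which, after smoothing, establishes $q \in J^+(p)$.

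For (ii), once (i) is in hand, its time-dual gives that $J^-(q)$ is also closed, so $J^+(p) \cap J^-(q)$ is closed in $M$. If this intersection is empty there is nothing to prove; otherwise, pick $x \in J^+(p) \cap J^-(q)$, so $a < t(p) \leq t(x) \leq t(q)$ and hence $q \in D^+(S)$ for the same $S$ as above. Then $J^+(p) \cap J^-(q) \subset D^+(S) \cap J^-(q)$, and the right-hand side is compact by Proposition~\ref{compact}. A closed subset of a compact set is compact, finishing the argument.

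The main obstacle is making the limit-curve construction in (i) precise: one must control how each $c_n$ enters and leaves each simple neighborhood, verify that the accumulation points remain in $K$ (which is where compactness of $K$ is repeatedly used), and check that the induction terminates exactly at $q$ rather than drifting off or stopping short. The proof of Proposition~\ref{compact} already contains all the essential geometric ingredients, so this is a bookkeeping exercise rather than a genuinely new construction, but it is the only nontrivial part of the corollary.
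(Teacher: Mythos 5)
The paper leaves this corollary as an exercise, so there is no official proof to compare against; your strategy is clearly the intended one, namely to reduce everything to Proposition~\ref{compact} by choosing a Cauchy hypersurface $S=t^{-1}(a)$ with $a<t(p)$ and an auxiliary point $r\in I^+(q)$. Your treatment of (ii) is complete and correct: closedness from (i) and its time-dual, compactness of the ambient set $D^+(S)\cap J^-(q)$ from Proposition~\ref{compact}, and the observation that $J^+(p)\cap J^-(q)\neq\varnothing$ forces $q\in D^+(S)$.

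The one genuine soft spot is the termination of the limit-curve construction in (i). The justification you give --- that finitely many simple neighborhoods suffice to cover $K$, hence the induction terminates at $q$ --- does not work as stated: a finite cover does not by itself bound the number of steps (the exit point from $U_{n_i}$ can land in a previously visited neighborhood), and even a finite induction could a priori stop short of $q$. The correct mechanism is the one used at the end of the proof of Proposition~\ref{compact}. First, as long as $q\notin\overline{U_{n_i}}$, infinitely many of the curves $c_n$ must exit $\partial U_{n_i}$ (their endpoints $q_n$ converge to $q$, which lies outside $\overline{U}_{n_i}$), so the construction can always be continued; it can only halt by reaching a convex neighborhood containing $q$, at which point the geodesic segments from the exit points to $q_n$ converge to a causal geodesic ending at $q$. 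Second, the construction cannot continue forever: otherwise the concatenated causal geodesics would form a future-inextendible causal curve whose points, being limits of points on the $c_n$, all lie in the compact set $K\subset J^-(r)\subset D^-\bigl(t^{-1}(b)\bigr)$ for any $b>t(r)$; such a curve would have to cross $t^{-1}(b)$, contradicting $t\leq t(r)$ on $J^-(r)$. (Equivalently, one runs the construction from $q$ toward the past and contradicts $q\in D^+(S)$.) With that replacement the argument is sound; everything else in your sketch, including the use of the diffeomorphism $v\mapsto(\pi(v),\exp(v))$ to pass to limits of causal geodesic segments, mirrors the proof of Proposition~\ref{compact} correctly.
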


\begin{proof}
Exercise.
\end{proof}

Proposition~\ref{compact} is a key ingredient in establishing the following fundamental result.

\begin{Thm} \label{maximal}
Let $(M,g)$ be a globally hyperbolic spacetime with Cauchy hypersurface $S$, and $p \in D^+(S)$. Then, among all timelike curves connecting $p$ to $S$, there exists a timelike curve with maximal length. This curve is a timelike geodesic, orthogonal to $S$.
\end{Thm}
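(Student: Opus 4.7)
The plan is to obtain the maximizer as a uniform limit of a maximizing sequence lying inside the compact set $K = D^+(S) \cap J^-(p)$ supplied by Proposition~\ref{compact}, and then to use Theorem~\ref{maximizing} to identify its geometric character. The case $p \in S$ is trivial (the constant curve achieves the supremum $0$), so I may assume $p \in D^+(S) \setminus S$ and set
\[
T = \sup\{\tau(\gamma) : \gamma \text{ is a future-directed timelike curve from } S \text{ to } p\}.
\]
Extending any short past-directed timelike geodesic segment from $p$ past-inextendibly, by definition of $D^+(S)$ it must meet $S$, producing a timelike curve from $S$ to $p$ of positive length, so $T > 0$.

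First I would check that $T < \infty$. On the compact set $K$ the time function $t$ satisfies $|\nabla t|_g \geq c$ for some $c > 0$, and parameterizing any future-directed timelike $\gamma$ by $t \in [0, t(p)]$ the reverse Cauchy--Schwarz inequality for timelike vectors in the same cone gives $|\dot\gamma|_g \leq 1/|\nabla t|_g$, whence $\tau(\gamma) \leq t(p)/c$. Hence $T$ is finite.

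Next, pick a maximizing sequence of future-directed timelike curves $\gamma_n : [0, t(p)] \to K$, parameterized by $t$, with $\tau(\gamma_n) \to T$. The same bound makes the family uniformly Lipschitz with respect to any auxiliary complete Riemannian background metric on $M$, so Arzel\`a--Ascoli together with compactness of $K$ extracts a subsequence converging uniformly to a Lipschitz curve $\gamma : [0, t(p)] \to K$ from some $q \in S$ to $p$. A gluing argument based on Proposition~\ref{local}, applied inside a finite cover of $K$ by geodesically convex neighborhoods, shows that uniform limits of future-directed causal curves are themselves future-directed causal, so $\gamma$ is a future-directed causal curve from $q \in S$ to $p$.

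Finally, I would invoke the upper semicontinuity of the Lorentzian length for such uniform limits to conclude $\tau(\gamma) \geq \limsup_n \tau(\gamma_n) = T$. Since $T > 0$, the limit $\gamma$ cannot be a reparameterized null geodesic, for Corollary~\ref{null_geodesic} would then force $\tau(\gamma) = 0$; hence $\gamma$ is a genuine timelike curve with $\tau(\gamma) = T$, and Theorem~\ref{maximizing} identifies it as a timelike geodesic orthogonal to $S$ with no points conjugate to $S$ strictly before $p$. The main obstacle is the limit-curve step: the topology on the space of causal curves must be chosen so that (a) maximizing sequences have convergent subsequences, (b) the limit remains causal, and (c) the Lorentzian length is upper semicontinuous. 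Parameterization by the global time function together with the compactness of $K$ from Proposition~\ref{compact} are what make all three properties fall into place, while positivity of $T$ combined with Corollary~\ref{null_geodesic} precludes the degenerate scenario in which a timelike maximizing sequence collapses to a null limit and the length drops.
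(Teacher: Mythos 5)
Your overall strategy coincides with the paper's: both are direct-method arguments that combine compactness of $A=D^+(S)\cap J^-(p)$ from Proposition~\ref{compact} with upper semicontinuity of the Lorentzian length to produce a maximizer. The only real difference up to that point is the compactness device --- you parameterize curves by the global time function and apply Arzel\`a--Ascoli, whereas the paper identifies curves with their images and uses compactness of the space of compact subsets of $A$ in the Hausdorff metric; both work. Two caveats on your version of this step. First, the uniform Lipschitz bound needed for Arzel\`a--Ascoli does not follow from the bound on the Lorentzian speed $|\dot\gamma|_g$ (which controls proper time, not coordinate displacement); it follows from compactness of the set of future-directed causal vectors $v$ over $K$ normalized by $dt(v)=1$. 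Second, the upper semicontinuity of $\tau$ under this convergence, which you correctly flag as the crux, is precisely the part the paper proves in detail (the argument with the extended arclength function $u$ and the decomposition of $\dot\gamma$ into a $\grad u$ component plus a spacelike remainder); it cannot simply be invoked.

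The genuine gap is in the final step. The limit curve $\gamma$ produced by Arzel\`a--Ascoli is only a Lipschitz continuous causal curve, and Theorem~\ref{maximizing} is a statement about smooth timelike curves and their smooth one-parameter variations: its proof rests on the first and second variation formulas, which presuppose that the candidate curve is already a differentiable timelike curve. You cannot apply it to $\gamma$ as it stands. Moreover, the inference ``$\tau(\gamma)>0$, hence $\gamma$ is not a reparameterized null geodesic, hence $\gamma$ is a genuine timelike curve'' is not valid: Corollary~\ref{null_geodesic} only excludes the null-geodesic alternative, and a causal curve that is not a null geodesic can still contain null segments, corners, or worse, and need not be timelike or smooth anywhere. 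The paper closes exactly this gap with a regularization step: it covers $\gamma$ by finitely many geodesically convex neighborhoods, replaces the approximating timelike curves by sectionally geodesic curves through division points (which by Proposition~\ref{local} can only increase length, so the limiting broken geodesic is also a maximizer), and then shows that a corner, or a non-orthogonal intersection with $S$, would permit a length-increasing deformation by the twin paradox --- whence the maximizer is a smooth timelike geodesic orthogonal to $S$. Your argument needs this step, or an equivalent one, before any variational characterization can be brought to bear.
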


\begin{proof}
Consider the set $T(S,p)$ of all timelike curves connecting $S$ to $p$. Since we can always use the global time function $t:M\to \bbR$ as a parameter, these curves are determined by their images, which are compact subsets of the compact set $A=D^+(S) \cap J^-(p)$. As it is well known (see for instance \cite{Munkres00}), the set $C(A)$ of all compact subsets of $A$ is a compact metric space for the {\bf Hausdorff metric} $d_H$, defined as follows: if $d:M\times M \to \bbR$ is a metric yielding the topology of $M$,
\[
d_H(K,L) = \inf \{ \varepsilon > 0 \mid K \subset U_\varepsilon(L) \text{ and } L \subset U_\varepsilon(K) \},
\]
where $U_{\varepsilon}(K)$ is a $\varepsilon$-neighborhood of $K$ for the metric $d$. Therefore, the closure $C(S,p) = \overline{T(S,p)}$ is a compact subset of $C(A)$. It is not difficult to show that $C(S,p)$ can be identified with the set of {\bf continuous causal curves} connecting $S$ to $p$ (a continuous curve $c:[0,t(p)]\to M$ is said to be {\bf causal} if $c(t_2) \in J^+(c(t_1))$ whenever $t_2 > t_1$).

The length function $\tau:T(S,p) \to \bbR$ is defined by
\[
\tau(c)=\int_0^{t(p)} |\dot{c}(t)| dt.
\]
This function is {\bf upper semicontinuous}, i.e.~continuous for the topology 
\[
\mathcal{O}=\{(-\infty,a) \mid -\infty \leq a \leq +\infty \}
\]
in $\bbR$. Indeed, let $c \in T(S,p)$ be parameterized by its arclength $u$. For a sufficiently small $\varepsilon > 0$, the function $u$ can be extended to the $\varepsilon$-neighborhood $U_{\varepsilon}(c)$ in such a way that its level hypersurfaces are spacelike and orthogonal to $c$, that is, $-\grad u$ is timelike and coincides with $\dot{c}$ on $c$ (cf.~Figure~\ref{proof2}). If $\gamma \in T(S,p)$ is in the open ball $B_\varepsilon(c) \subset C(A)$ for the Hausdorff metric $d_H$ then we can use $u$ as a parameter, thus obtaining
\[
du(\dot{\gamma}) = 1 \Leftrightarrow \langle \dot{\gamma}, \grad u \rangle = 1.
\]
Therefore $\dot{\gamma}$ can be decomposed as
\[
\dot{\gamma} = \frac1{\langle \grad u, \grad u \rangle} \grad u + X,
\]  
where $X$ is spacelike and orthogonal to $\grad u$, and so
\[
|\dot{\gamma}| = \left| \frac1{\langle \grad u, \grad u \rangle} + \langle X,X \rangle \right|^\frac12.
\]
Given $\delta > 0$, we can choose $\varepsilon>0$ sufficiently small so that
\[
-\frac1{\langle \grad u, \grad u \rangle} < \left(1 + \frac{\delta}{2\tau(c)}\right)^2
\]
on the $\varepsilon$-neighborhood $U_{\varepsilon}(c)$ (as $\langle \grad u, \grad u \rangle=-1$ along $c$). We have
\[
\tau(\gamma) =  \int_0^{t(p)} \left|\frac{d \gamma}{dt} \right| \, dt = \int_0^{t(p)} |\dot{\gamma}| \frac{du}{dt} \, dt = \int_{u(\gamma\cap S)}^{\tau(c)} |\dot{\gamma}| \, d u,
\]
where we have to allow for the fact that $c$ is not necessarily orthogonal to $S$, and so the initial point of $\gamma$ is not necessarily at $u=0$ (cf.~Figure~\ref{proof2}). Consequently, 
\begin{align*}
\tau(\gamma) & = \int_{u(\gamma\cap S)}^{\tau(c)} \left| -\frac1{\langle \grad u, \grad u \rangle} - \langle X,X \rangle \right|^\frac12 \, d u \\
& < \int_{u(\gamma\cap S)}^{\tau(c)} \left(1 + \frac{\delta}{2\tau(c)}\right) \, d u  = \left(1 + \frac{\delta}{2\tau(c)}\right) \left(\tau(c) - u(\gamma\cap S)\right).
\end{align*}
Choosing $\varepsilon$ sufficiently small so that 
\[
|u|< \left( \frac1{\tau(c)} + \frac{2}{\delta} \right)^{-1}
\]
on $S \cap U_{\varepsilon}(c)$, we obtain $\tau(\gamma) < \tau(c) + \delta$, proving upper semicontinuity in $T(S,p)$. As a consequence, the length function can be extended to $C(S,p)$ through
\[
\tau(c)=\lim_{\varepsilon \to 0} \sup\{ \tau(\gamma) \mid \gamma \in B_\varepsilon(c) \cap T(S,p) \}
\]
(as for $\varepsilon>0$ sufficiently small the supremum will be finite). Also, it is clear that if $c \in T(S,p)$ then the upper semicontinuity of the length forces the two definitions of $\tau(c)$ to coincide. The extension of the length function to $C(S,p)$ is trivially upper semicontinuous: given $c \in C(S,p)$ and $\delta > 0$, let $\varepsilon>0$ be such that $\tau(\gamma) < \tau(c) + \frac{\delta}2$ for any $\gamma \in B_{2\varepsilon}(c) \cap T(S,p)$. Then it is clear that $\tau(c') \leq \tau(c) + \frac{\delta}2 < \tau(c) + \delta$ for any $c' \in B_{\varepsilon}(c) \cap C(S,p)$.

Finally, we notice that the compact sets of $\bbR$ for the topology $\mathcal{O}$ are the sets with a maximum. Therefore, the length function attains a maximum at some point $c \in C(S,p)$. All that remains to be seen is that the maximum is also attained at a smooth timelike curve $\gamma$. To do so, cover $c$ with finitely many geodesically convex neighborhoods and choose points $p_1, \ldots, p_k$ in $c$ such that $p_1 \in S$, $p_k=p$ and the portion of $c$ between $p_{i-1}$ and $p_{i}$ is contained in a geodesically convex neighborhood for all $i=2, \ldots, k$. It is clear that there exists a sequence $c_n \in T(S,p)$ such that $c_n \to c$ and $\tau(c_n) \to \tau(c)$. Let $t_i=t(p_i)$ and $p_{i,n}$ be the intersection of $c_n$ with $t^{-1}(t_i)$. Replace $c_n$ by the sectionally geodesic curve $\gamma_n$ obtained by joining $p_{i-1,n}$ to $p_{i,n}$ in the corresponding geodesically convex neighborhood. Then $\tau(\gamma_n) \geq \tau(c_n)$, and therefore $\tau(\gamma_n) \to \tau(c)$. Since each sequence $p_{i,n}$ converges to $p_i$, $\gamma_n$ converges to the sectionally geodesic curve $\gamma$ obtained by joining $p_{i-1}$ to $p_{i}$ ($i=2, \ldots, k$), and it is clear that $\tau(\gamma_n) \to \tau(\gamma)=\tau(c)$. Therefore $\gamma$ is a point of maximum for the length. Finally, we notice that $\gamma$ must be smooth at the points $p_i$, for otherwise we could increase its length by using the twin paradox. Therefore $\gamma$ must be a timelike geodesic. It is also clear that $\gamma$ must be orthogonal to $S$, for otherwise it would be possible to increase its length by small deformations.
\end{proof}

\begin{figure}[h!]
\begin{center}
\psfrag{p}{$p$}
\psfrag{c}{$c$}
\psfrag{g}{$\gamma$}
\psfrag{S}{$S$}
\psfrag{u=0}{$u=0$}
\psfrag{u=t(c)}{$u=\tau(c)$}
\psfrag{Ue(c)}{$U_{\varepsilon}(c)$}
\epsfxsize=.6\textwidth
\leavevmode
\epsfbox{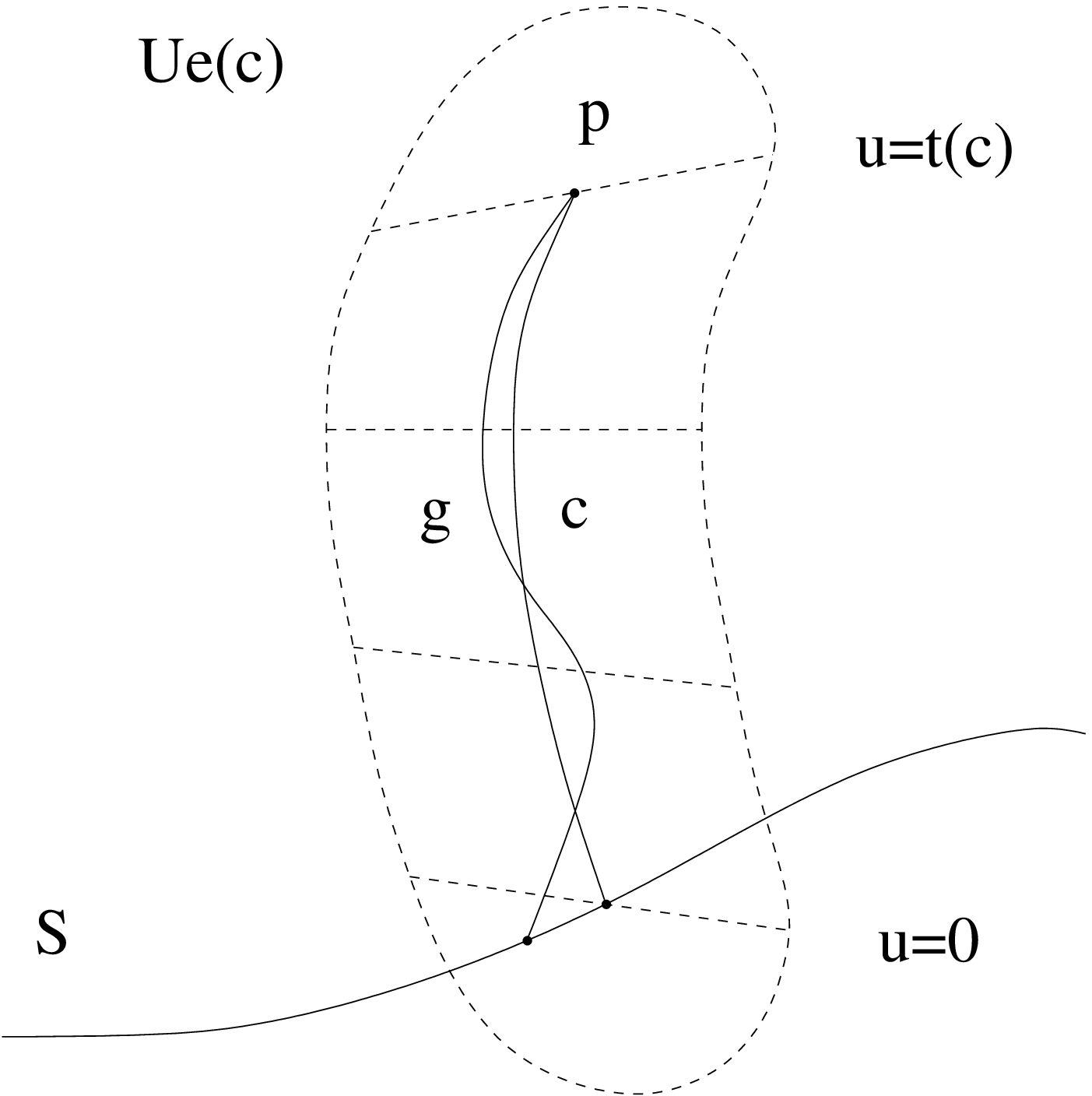}
\end{center}
\caption{Proof of Theorem~\ref{maximal}.} \label{proof2}
\end{figure}

\section{Hawking's singularity theorem} \label{sec4.5}

We have now all the necessary ingredients to prove the Hawking singularity theorem.

\begin{Def}
A spacetime $(M,g)$ is said to be {\bf singular} if it is not geodesically complete.
\end{Def}

\begin{Thm} ({\bf Hawking \cite{H66}}) \label{Hawking}
Let $(M,g)$ be a globally hyperbolic spacetime satisfying the strong energy condition, and suppose that the expansion of the congruence of future-pointing timelike geodesics orthogonal to $S$ satisfies $\theta\leq\theta_0 < 0$ on a Cauchy hypersurface $S$. Then $(M,g)$ is singular.
\end{Thm}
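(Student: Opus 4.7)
The plan is a standard contradiction argument that chains together the three main results established earlier. Suppose, for contradiction, that $(M,g)$ is timelike geodesically complete; in particular, every future-directed timelike geodesic orthogonal to $S$ extends to arbitrary positive proper time. Set $\tau_* = 3/|\theta_0|$.

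First I would apply Theorem~\ref{theta_0} to the congruence of future-directed timelike geodesics orthogonal to $S$: since $(M,g)$ satisfies SEC and $\theta \leq \theta_0 < 0$ everywhere on $S$, every such geodesic contains a point conjugate to $S$ within proper time $\tau_*$ of $S$. Next, pick any $q \in S$, let $c_q$ be the (complete, by the contradiction hypothesis) future-directed timelike geodesic orthogonal to $S$ with $c_q(0)=q$, and set $p = c_q(\tau_* + 1)$. Because $c_q|_{[0,\tau_*+1]}$ is a future-directed timelike curve from $S$ to $p$ and $S$ is a Cauchy hypersurface, we have $p \in D^+(S)$.

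By Theorem~\ref{maximal}, there exists a timelike geodesic $\gamma$ orthogonal to $S$ connecting $S$ to $p$ that maximizes proper time among all timelike curves from $S$ to $p$; in particular,
\[
\tau(\gamma) \;\geq\; \tau\bigl(c_q|_{[0,\tau_*+1]}\bigr) \;=\; \tau_* + 1 \;>\; \tau_*.
\]
But $\gamma$ is itself a future-directed timelike geodesic orthogonal to $S$, so the first step supplies a point $r \in \gamma$ conjugate to $S$ at proper distance at most $\tau_*$ from $S$; since $\tau(\gamma) > \tau_*$, this $r$ lies strictly between the starting point of $\gamma$ on $S$ and the endpoint $p$. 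Theorem~\ref{maximizing} then forbids $\gamma$ from being even locally maximizing, contradicting its defining property. Hence the completeness assumption must fail, and $(M,g)$ is singular.

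The main obstacle at this stage is essentially bookkeeping: the genuine analytic content lives in the three results being quoted (the Raychaudhuri focusing bound, the index-form analysis, and the Hausdorff-compactness / upper-semicontinuity argument for the existence of a proper-time maximizer). The one point where actual care is needed is checking that the chosen $p$ lies in $D^+(S)$, which is where the Cauchy hypersurface hypothesis truly enters; without it Theorem~\ref{maximal} would not be available and the contradiction would collapse.
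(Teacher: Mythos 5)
Your proof is correct and follows essentially the same route as the paper: assume a geodesic orthogonal to $S$ extends beyond $3/|\theta_0|$, invoke Theorem~\ref{maximal} to get a maximizing geodesic orthogonal to $S$ of at least that length, then use Theorem~\ref{theta_0} to force a conjugate point before the endpoint and Theorem~\ref{maximizing} to contradict maximality. The only (harmless) cosmetic differences are phrasing the contradiction hypothesis as full timelike geodesic completeness and using $\tau_*+1$ in place of $\tau_0+\varepsilon$.
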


\begin{proof}
We will show that no future-directed timelike geodesic orthogonal to $S$ can be extended to proper time greater than $\tau_0=-\frac{3}{\theta_0}$ to the future of $S$. Suppose that this was not so. Then there would exist a future-directed timelike geodesic $c$ orthogonal to $S$, parameterized by proper time, defined in an interval $[0,\tau_0+\varepsilon]$ for some $\varepsilon>0$. Let $p=c(\tau_0+\varepsilon)$. According to Theorem~\ref{maximal}, there would exist a timelike geodesic $\gamma$ with maximal length connecting $S$ to $p$, orthogonal to $S$. Because $\tau(c)=\tau_0+\varepsilon$, we would necessarily have $\tau(\gamma)\geq\tau_0+\varepsilon$. Theorem~\ref{theta_0} guarantees that $\gamma$ would develop a conjugate point at a distance of at most $\tau_0$ to the future of $S$, and Theorem~\ref{maximizing} states that $\gamma$ would cease to be maximizing beyond this point. Therefore we arrive at a contradiction.
\end{proof}

\begin{Remark}
It should be clear that $(M,g)$ is singular if the condition $\theta\leq\theta_0 < 0$ on a Cauchy hypersurface $S$ is replaced by the condition $\theta\geq\theta_0 > 0$ on $S$. In this case, no {\bf past-directed} timelike geodesic orthogonal to $S$ can be extended to proper time greater than $\tau_0=\frac{3}{\theta_0}$ to the {\bf past} of $S$.
\end{Remark}

\begin{Example} \hspace{1cm}
\begin{enumerate}
\item
The FLRW models with $\alpha>0$ and $\Lambda = 0$ are globally hyperbolic, and satisfy the strong energy condition (as $\rho > 0$). Moreover, the expansion of the congruence tangent to $\frac{\partial}{\partial t}$ is $\theta = \frac{3\dot{a}}{a}$. Assume that the model is expanding at time $t_0$. Then $\theta = \theta_0 = \frac{3\dot{a}(t_0)}{a(t_0)}>0$ on the Cauchy hypersurface $S = \{ t=t_0 \}$, and hence Theorem~\ref{Hawking} guarantees that this model is singular to the past of $S$ (i.e.~there exists a big bang). Moreover, Theorem~\ref{Hawking} implies that this singularity is generic: any sufficiently small perturbation of an expanding FLRW model satisfying the strong energy condition will also be singular. Loosely speaking, any expanding universe must have begun at a big bang.
\item
The region $\{ r < 2m \}$ of the Schwarzschild solution is globally hyperbolic, and satisfies the strong energy condition (as $R_{\mu\nu}=0$). The metric can be written in this region as
\[
\hspace{2cm} g = - d\tau^2 + \left( \frac{2m}r - 1 \right) dt^2 + r^2 \left(d\theta^2 + \sin^2 \theta d\varphi^2 \right), 
\]
where
\[
\tau = \int_r^{2m} \left( \frac{2m}u - 1 \right)^{-\frac12} du.
\]
Therefore the inside of the black hole can be pictured as a cylinder $\bbR \times S^2$ whose shape is evolving in time. As $r \to 0$, the $S^2$ contracts to a singularity, with the $t$-direction expanding. Since
\[
K = \frac{dr}{d\tau} \left( -\frac{m}{r^2} dt^2 + r d\theta^2 + r \sin^2 \theta d\varphi^2\right),
\]
we have
\[
\theta = \left( \frac{2m}r - 1 \right)^{-\frac12}\left( \frac{2}r - \frac{3m}{r^2} \right).
\]
Therefore we have $\theta = \theta_0 < 0$ on any Cauchy hypersurface $S = \{ r=r_0 \}$ with $r_0 < \frac{3m}2$, and hence Theorem~\ref{Hawking} guarantees that the Schwarzschild solution is singular to the future of $S$. Moreover, Theorem~\ref{Hawking} implies that this singularity is generic: any sufficiently small perturbation of the Schwarzschild solution satisfying the strong energy condition will also be singular. Loosely speaking, once the collapse has advanced long enough, nothing can prevent the formation of a singularity.
\item
It should be noted that Theorem~\ref{Hawking} proves geodesic incompleteness, not the existence of curvature singularities. For instance, it applies to the Milne universe, or a globally hyperbolic region of the anti-de Sitter universe, whose curvature is bounded (they are simply globally hyperbolic regions in larger, inextendible Lorentzian manifolds).
\end{enumerate}
\end{Example}

\section{Penrose's singularity theorem} \label{sec4.6}

Let $(M,g)$ be a globally hyperbolic spacetime, $S$ a Cauchy hypersurface with future-pointing unit normal vector field $N$, and $\Sigma \subset S$ a compact $2$-dimensional submanifold with unit normal vector field $n$ in $S$. Let $c_p$ be the null geodesic with initial condition $N_p+n_p$ for each point $p \in \Sigma$. We define a smooth map $\exp:(-\varepsilon,\varepsilon) \times \Sigma \to M$ for some $\varepsilon > 0$ as $\exp(r,p)=c_p(r)$. 

\begin{Def}
The critical values of $\exp$ are said to be {\bf conjugate points} to $\Sigma$.
\end{Def}

Loosely speaking, conjugate points are points where geodesics starting orthogonally at nearby points of $\Sigma$ intersect.

Let $q=\exp(r_0,p)$ be a point not conjugate to $\Sigma$. If $\varphi$ is a local parameterization of $\Sigma$ around $p$, then we can construct a system of local coordinates $(u,r,x^2,x^3)$ on some open set $V \ni q$ by using the map
\[
(u,r,x^2,x^3) \mapsto \exp(r, \psi_u(\varphi(x^2,x^3))),
\]
where $\psi_u$ is the flow along the timelike geodesics orthogonal to $S$ and the map $\exp:(-\varepsilon,\varepsilon) \times \psi_u(\Sigma) \to M $ is defined as above.

Since $\frac{\partial}{\partial r}$ is tangent to null geodesics, we have $g_{rr}=\left\langle\frac{\partial}{\partial r},\frac{\partial}{\partial r}\right\rangle = 0$. On the other hand, we have
\begin{align*}
\frac{\partial g_{r\mu}}{\partial r} & = \frac{\partial}{\partial r} \left\langle\frac{\partial}{\partial r},\frac{\partial}{\partial x^\mu}\right\rangle
= \left\langle\frac{\partial}{\partial r},\nabla_{\frac{\partial}{\partial r}}\frac{\partial}{\partial x^\mu}\right\rangle \\
& = \left\langle\frac{\partial}{\partial r},\nabla_{\frac{\partial}{\partial x^\mu}}\frac{\partial}{\partial r}\right\rangle
= \frac12 \frac{\partial}{\partial x^\mu} \left\langle\frac{\partial}{\partial r},\frac{\partial}{\partial r}\right\rangle = 0,
\end{align*}
for $\mu = 0,1,2,3$. Since $g_{ru}=-1$ and $g_{r2}=g_{r3}=0$ on $\psi_u(\Sigma)$, we have $g_{ru}=-1$ and $g_{r2}=g_{r3}=0$ on $V$. Therefore the metric is written in this coordinate system as
\[
g = \alpha du^2 - 2 du dr + 2 \beta_A du dx^A + \gamma_{AB} dx^A dx^B.
\]
Since
\[
\det \left(
\begin{matrix}
\alpha & -1 & \beta_2 & \beta_3 \\
-1 & 0 & 0 & 0 \\
\beta_2 & 0 & \gamma_{22} & \gamma_{23} \\
\beta_3 & 0 & \gamma_{32} & \gamma_{33}
\end{matrix}
\right)
=
-\det \left(
\begin{matrix}
\gamma_{22} & \gamma_{23} \\
\gamma_{32} & \gamma_{33}
\end{matrix}
\right),
\]
we see that the functions 
\[
\gamma_{AB} = \left\langle \frac{\partial}{\partial x^A}, \frac{\partial}{\partial x^B} \right\rangle
\]
form a positive definite matrix, and so $g$ induces a Riemannian metric on the $2$-dimensional surfaces $\exp(r,\psi_u(\Sigma))$, which are then spacelike. Since the vector fields $\frac{\partial}{\partial x^A}$ can always be defined along $c_{p}$, the matrix $(\gamma_{AB})$ is also well defined along $c_p$, even at points where the coordinate system breaks down, i.e.~at points which are conjugate to $\Sigma$. These are the points for which $\gamma=\det\left(\gamma_{AB}\right)$ vanishes, since only then will $\left\{\frac{\partial}{\partial u},\frac{\partial}{\partial r},\frac{\partial}{\partial x^2},\frac{\partial}{\partial x^3}\right\}$ fail to be linearly independent. In fact the vector fields $\frac{\partial}{\partial x^A}$ are Jacobi fields along $c_p$.

It is easy to see that
\[
\Gamma^u_{ur} = \Gamma^u_{rr} = \Gamma^u_{rA} = \Gamma^r_{rr} = \Gamma^A_{rr} = 0 \quad \text{ and } \quad \Gamma^A_{rB} = \gamma^{AC} \beta_{CB},
\]
where $(\gamma^{AB})=(\gamma_{AB})^{-1}$ and $\beta_{AB} = \frac12 \frac{\partial \gamma_{AB}}{\partial r}$. Consequently,
\begin{align*}
R_{rr} & = R_{urr}^{\;\;\;\;\;u} + R_{Arr}^{\;\;\;\;\;A} = \left( - \frac{\partial \Gamma^A_{Ar}}{\partial r} - \Gamma^B_{Ar} \Gamma^A_{rB}\right)\\
& = - \frac{\partial}{\partial r} \left(\gamma^{AB} \beta_{AB}\right) - \gamma^{BC} \gamma^{AD} \beta_{CA} \beta_{DB}.
\end{align*}
The quantity
\[
\theta = \gamma^{AB} \beta_{AB}
\]
appearing in this expression is called the {\bf expansion} of the null geodesics, and has an important geometric meaning: 
\[
\theta = \frac12 \tr \left((\gamma_{AB})^{-1} \frac{\partial}{\partial r} (\gamma_{AB})\right) = \frac12 \frac{\partial}{\partial r} \log \left(\det\left(\gamma_{AB}\right)\right) = \frac{\partial}{\partial r} \log \left(\det\left(\gamma_{AB}\right)\right)^\frac12.
\]
Therefore the expansion yields the variation of the area element of the spacelike  $2$-dimensional surfaces $\exp(r,\psi_u(\Sigma))$. More importantly for our purposes, we see that a singularity of the expansion indicates a zero of $\det\left(\gamma_{AB}\right)$, i.e.~a conjugate point to $\psi_u(\Sigma)$.

\begin{Prop}\label{conjugate_null}
Let $(M,g)$ be a globally hyperbolic spacetime satisfying the null energy condition, $S \subset M$ a Cauchy hypersurface with future-pointing unit normal vector field $N$, $\Sigma \subset S$ a compact $2$-dimensional submanifold with unit normal vector field $n$ in $S$ and $p \in \Sigma$ a point where $\theta=\theta_0 < 0$. Then the null geodesic $c_p$ with initial condition $N_p+n_p$ contains at least a point conjugate to $\Sigma$, at an affine parameter distance of at most $-\frac{2}{\theta_0}$ to the future of $\Sigma$ (assuming that it can be extended that far).
\end{Prop}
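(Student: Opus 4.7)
The plan is to derive a null-geodesic Raychaudhuri inequality for the expansion $\theta$ along $c_p$, integrate it to force $\theta \to -\infty$ within affine parameter at most $-\tfrac{2}{\theta_0}$, and then invoke the interpretation (given in the paragraph preceding the statement) of a blow-up of $\theta$ as the vanishing of $\det(\gamma_{AB})$, which by definition marks a conjugate point to $\Sigma$.

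I would start from the identity
\[
R_{rr} = -\frac{\partial\theta}{\partial r} - \gamma^{BC}\gamma^{AD}\beta_{CA}\beta_{DB}
\]
already derived in the excerpt. Writing $\beta_{AB} = \tfrac{1}{2}\theta\,\gamma_{AB} + \hat\sigma_{AB}$ with $\hat\sigma_{AB}$ symmetric and traceless ($\gamma^{AB}\hat\sigma_{AB}=0$), the quadratic term simplifies thanks to the two-dimensional identity $\gamma^{AB}\gamma_{AB}=2$: the cross terms vanish because of the trace-freeness of $\hat\sigma$, leaving
\[
\gamma^{BC}\gamma^{AD}\beta_{CA}\beta_{DB} = \tfrac{1}{2}\theta^2 + \hat\sigma^{AB}\hat\sigma_{AB},
\]
with $\hat\sigma^{AB}\hat\sigma_{AB}\geq 0$ since $\hat\sigma$ is a symmetric tensor on the Riemannian transverse $2$-surface $\exp(r,\psi_u(\Sigma))$.

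Next, contracting the Einstein equations $R_{\mu\nu}-\tfrac{1}{2}Rg_{\mu\nu}=8\pi T_{\mu\nu}$ with the null vector $\tfrac{\partial}{\partial r}$, the trace term drops (as $g_{rr}=0$) and the NEC gives $R_{rr}=8\pi T_{rr}\geq 0$. Substituting yields the null Raychaudhuri inequality
\[
\frac{\partial\theta}{\partial r} \leq -\frac{1}{2}\theta^2.
\]
The right-hand side is nonpositive, so $\theta$ can only decrease; since $\theta(0)=\theta_0<0$, it remains negative wherever it is defined. Dividing by $-\theta^2$ and integrating on $[0,r]$ produces
\[
\frac{1}{\theta(r)} \geq \frac{1}{\theta_0} + \frac{r}{2}.
\]
The right-hand side reaches $0$ at $r=-\tfrac{2}{\theta_0}$, while the left-hand side is strictly negative wherever $\theta$ is finite; hence $\theta$ must blow up to $-\infty$ at some $r_*\in(0,-\tfrac{2}{\theta_0}]$. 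The identity $\theta=\partial_r\log\sqrt{\det(\gamma_{AB})}$ noted in the excerpt then forces $\det(\gamma_{AB})\to 0$ at $r_*$, so $\exp$ fails to be a local diffeomorphism at $(r_*,p)$ and $c_p(r_*)$ is a conjugate point to $\Sigma$, as claimed.

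The main technical point is the two-dimensional algebraic identity that produces the coefficient $\tfrac{1}{2}$ in place of the $\tfrac{1}{3}$ of the timelike Raychaudhuri equation — a reflection of the fact that null geodesics have only two independent transverse directions along which neighbouring geodesics can focus — but once the dimension count is done correctly and the NEC is applied through the Einstein equations, the rest of the argument is a direct analogue of the timelike focusing argument carried out earlier in the chapter.
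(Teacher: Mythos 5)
Your proof is correct and follows essentially the same route as the paper: the null Raychaudhuri inequality $\partial_r\theta + \tfrac12\theta^2 \leq 0$ obtained from $R_{rr}\geq 0$, integrated to force a blow-up of $\theta$ by affine parameter $-\tfrac{2}{\theta_0}$, which signals the vanishing of $\det(\gamma_{AB})$ and hence a conjugate point. The only cosmetic difference is that you establish $\gamma^{BC}\gamma^{AD}\beta_{CA}\beta_{DB}\geq\tfrac12\theta^2$ via the trace/traceless decomposition of $\beta_{AB}$, whereas the paper cites the equivalent matrix inequality $(\operatorname{tr}A)^2\leq n\operatorname{tr}(A^tA)$ with $n=2$.
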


\begin{proof}
Since $(M,g)$ satisfies the null energy condition, we have $R_{rr}=R_{\mu\nu}\left(\frac{\partial}{\partial r}\right)^\mu\left(\frac{\partial}{\partial r}\right)^\nu \geq 0$. Consequently,
\[
\frac{\partial \theta}{\partial r} + \gamma^{BC} \gamma^{AD} \beta_{CA} \beta_{DB} \leq 0.
\]
Choosing an orthonormal basis (where $\gamma^{AB}=\delta_{AB}$), and using the inequality
\[
(\tr A)^2 \leq n \tr(A^tA)
\]
for square $n \times n$ matrices, it is easy to show that
\[
\gamma^{BC} \gamma^{AD} \beta_{CA} \beta_{DB} = \beta_{BA} \beta_{AB} = \tr\left((\beta_{AB}) (\beta_{AB})^t\right) \geq \frac12 \theta^2.
\]
Consequently $\theta$ must satisfy
\[
\frac{\partial \theta}{\partial r} + \frac12 \theta^2 \leq 0.
\] 
Integrating this inequality yields
\[
\frac1{\theta} \geq \frac1{\theta_0} + \frac{r}2,
\]
and hence $\theta$ must blow up at a value of $r$ no greater than $-\frac{2}{\theta_0}$.
\end{proof}

We define the {\bf chronological future} and the {\bf causal future} of the compact surface $\Sigma$ as
\[
I^+(\Sigma)=\bigcup_{p \in \Sigma} I^+(p) \quad \text{ and } \quad J^+(\Sigma)=\bigcup_{p \in \Sigma} J^+(p)
\]
(with similar definitions for the {\bf chronological past} and the {\bf causal past} of $\Sigma$). It is clear that $I^+(\Sigma)$, being the union of open sets, is itself open, and also that $J^+(\Sigma) \subset \overline{I^+(\Sigma)}$ and $I^+(\Sigma) = \inte J^+(\Sigma)$. On the other hand, it is easy to generalize Proposition~\ref{compact} (and consequently Corollary~\ref{closed_compact}) to the corresponding statements with compact surfaces replacing points. In particular, $J^+(\Sigma)$ is closed. Therefore
\[
\partial J^+(\Sigma) = \partial I^+(\Sigma) = J^+(\Sigma) \setminus I^+(\Sigma),
\]
and so, by a straightforward generalization of Corollary~\ref{null_geodesic} in Chapter~\ref{chapter3}, every point in this boundary can be reached from a point in $\Sigma$ by a future-directed null geodesic. Moreover, this geodesic must be orthogonal to $\Sigma$. Indeed, at $\Sigma$ we have
\[
\frac{\partial}{\partial u} = N \quad \text{ and } \quad \frac{\partial}{\partial r} = N + n,
\]
and so the metric takes the form
\[
g = - du^2 - 2 du dr + \gamma_{AB} dx^A dx^B.
\]
If $c:I \subset \bbR \to M$ is a future-directed null geodesic with $c(0)\in\Sigma$, its initial tangent vector
\[
\dot{c}(0) = \dot{u} \frac{\partial}{\partial u} + \dot{r} \frac{\partial}{\partial r} + \dot{x}^A \frac{\partial}{\partial x^A} = (\dot{u} + \dot{r}) N + \dot{r} n + \dot{x}^A \frac{\partial}{\partial x^A}
\]
must satisfy
\[
\dot{u} (\dot{u} + 2 \dot{r}) = \gamma_{AB} \dot{x}^A \dot{x}^B.
\]
Since $c$ is future-directed we must have $\dot{u} + \dot{r}>0$. On the other hand, by choosing the unit normal to $\Sigma$ on $S$ to be either $n$ or $-n$, we can assume $\dot{r}\geq 0$. If $c$ is not orthogonal to $\Sigma$ we then have
\[
\gamma_{AB} \dot{x}^A \dot{x}^B > 0 \Rightarrow \dot{u} (\dot{u} + 2 \dot{r}) > 0 \Rightarrow \dot{u} > 0.
\]
Now the region where $u>0$ and $r \geq 0$ is clearly a subset of $I^+(\Sigma)$, since its points can be reached from $\Sigma$ by a sectionally smooth curve composed of an arc of timelike geodesic and an arc of null geodesic. Therefore, we see that if $c$ is not orthogonal to $\Sigma$ then $c(t)\in I^+(\Sigma)$ for all $t>0$.

Even future-directed null geodesics orthogonal to $\Sigma$ may eventually enter $I^+(\Sigma)$. A sufficient condition for this to happen is given in the following result.

\begin{Prop} \label{conjugate_null_I+}
Let $(M,g)$ be a globally hyperbolic spacetime, $S$ a Cauchy hypersurface with future-pointing unit normal vector field $N$, $\Sigma \subset S$ a compact $2$-dimensional submanifold with unit normal vector field $n$ in $S$, $p \in \Sigma$, $c_p$ the null geodesic through $p$ with initial condition $N_p+n_p$ and $q = c_p(r)$ for some $r>0$. If $c_p$ has a conjugate point between $p$ and $q$ then $q \in I^+(\Sigma)$.
\end{Prop}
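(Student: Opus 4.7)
The plan is to produce, from the conjugate point, a future-directed causal curve from $\Sigma$ to $q$ that is \emph{not} a smooth null geodesic, and then invoke the natural generalization to $\Sigma$ of Corollary~\ref{null_geodesic}: in a globally hyperbolic spacetime, any point of $J^+(\Sigma)\setminus I^+(\Sigma)$ can only be reached from $\Sigma$ by a (reparameterized) smooth null geodesic orthogonal to $\Sigma$. Since $q\in J^+(\Sigma)$ is clear (it lies on $c_p$), exhibiting any causal curve from $\Sigma$ to $q$ which is not such a geodesic will force $q\in I^+(\Sigma)$.

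To set up the variation, let $r_0\in(0,r)$ be the parameter with $q_0=c_p(r_0)$ conjugate to $\Sigma$. By definition of conjugate point, the map $\exp$ of the proposition is critical at $(r_0,p)$, so there is a nonzero Jacobi field $J$ along $c_p|_{[0,r_0]}$ arising as the variation field of a smooth family $c_s$ of null geodesics orthogonal to $\Sigma$, with $c_s(0)=\sigma(s)\in\Sigma$, $\dot\sigma(0)=J(0)$, and $J(r_0)=0$. A short calculation shows that $\langle J,\dot c_p\rangle$ is linear in the affine parameter: the second derivative equals $\langle R(\dot c_p,J)\dot c_p,\dot c_p\rangle$, which vanishes by antisymmetry. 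Since $J(0)\in T_p\Sigma$ and $\dot c_p(0)=N_p+n_p$ is orthogonal to $T_p\Sigma$, this linear function vanishes at $0$ and at $r_0$, so $J$ is everywhere orthogonal to $\dot c_p$. Crucially, because $J$ is a \emph{nontrivial} Jacobi field vanishing at $r_0$, its covariant derivative $\nabla_{\dot c_p}J(r_0^-)$ is nonzero.

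Now extend $J$ to a continuous, piecewise smooth vector field $\tilde J$ along $c_p|_{[0,r]}$ by declaring $\tilde J\equiv 0$ on $[r_0,r]$; the derivative $\nabla_{\dot c_p}\tilde J$ jumps at $r_0$. Using $\tilde J$ as the variation field of a piecewise smooth family $\gamma_s:[0,r]\to M$ with $\gamma_s(0)=\sigma(s)\in\Sigma$ and $\gamma_s(r)=q$, each $\gamma_s$ is the concatenation of a slightly perturbed null geodesic (matching $c_s$ on $[0,r_0]$) with an undeformed piece of $c_p$ on $[r_0,r]$, with a genuine corner at the joining point. The corner between the two distinct null tangent directions at this point is what breaks the null-geodesic character: arguing exactly as in the proof that a broken causal curve can be deformed into a timelike one (the generalization of the corollary invoked in Paragraph~1, derived by rounding off the kink using a small timelike short-cut in a geodesically convex neighborhood of the corner), we get a future-directed \emph{timelike} curve from a point of $\Sigma$ to $q$. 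Hence $q\in I^+(\Sigma)$, as claimed.

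The main obstacle is the corner-smoothing step in Paragraph~3, where one must show that the jump $\nabla_{\dot c_p}J(r_0^-)-0$ produced by the Jacobi field extension genuinely yields a timelike short-cut rather than preserving nullity; the essential ingredient is that two distinct future-pointing null directions at the same point always span a $2$-plane containing a timelike direction, so a $C^0$ concatenation of two such null geodesic arcs always admits a nearby $C^1$ timelike deformation with the same endpoints in a geodesically convex neighborhood. Once this local corner-smoothing lemma is granted, the argument above assembles cleanly, and the only additional verification needed is the extension of Corollary~\ref{null_geodesic} from a point $p$ to the compact spacelike surface $\Sigma$, which follows by the same proof together with the observation already used in the paper that future-directed causal curves reaching $\partial J^+(\Sigma)$ must be null geodesics orthogonal to $\Sigma$.
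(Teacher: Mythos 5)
Your proposal is correct and follows essentially the same route as the paper's own (sketched) proof: use the conjugate point to produce a broken causal curve from $\Sigma$ to $q$ that fails to be a smooth null geodesic, then conclude $q \in I^+(\Sigma)$ from the generalization of Corollary~\ref{null_geodesic}. You merely make explicit the Jacobi-field origin of the ``other null geodesic'' and the non-degeneracy of the corner, details the paper leaves implicit.
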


\begin{proof}
We will offer only a sketch of the proof. Let $s$ be the first conjugate point along $c_p$ between $p$ and $q$. Since $q$ is conjugate to $p$, there exists another null geodesic $\gamma$ starting at $\Sigma$ which (approximately) intersects $c_p$ at $s$. The piecewise smooth null curve obtained by following $\gamma$ between $\Sigma$ and $s$, and $c_p$ between $s$ and $q$ is a causal curve but not a null geodesic. This curve can be easily smoothed while remaining causal and nongeodesic, and so by the generalization of Corollary~\ref{null_geodesic} in Chapter~\ref{chapter3} we have $q \in I^+(\Sigma)$.
\end{proof}

\begin{figure}[h!]
\begin{center}
\psfrag{p}{$p$}
\psfrag{q}{$q$}
\psfrag{s}{$s$}
\psfrag{S}{$S$}
\psfrag{Sg}{$\Sigma$}
\psfrag{g}{$\gamma$}
\psfrag{cp}{$c_p$}
\epsfxsize=.8\textwidth
\leavevmode
\epsfbox{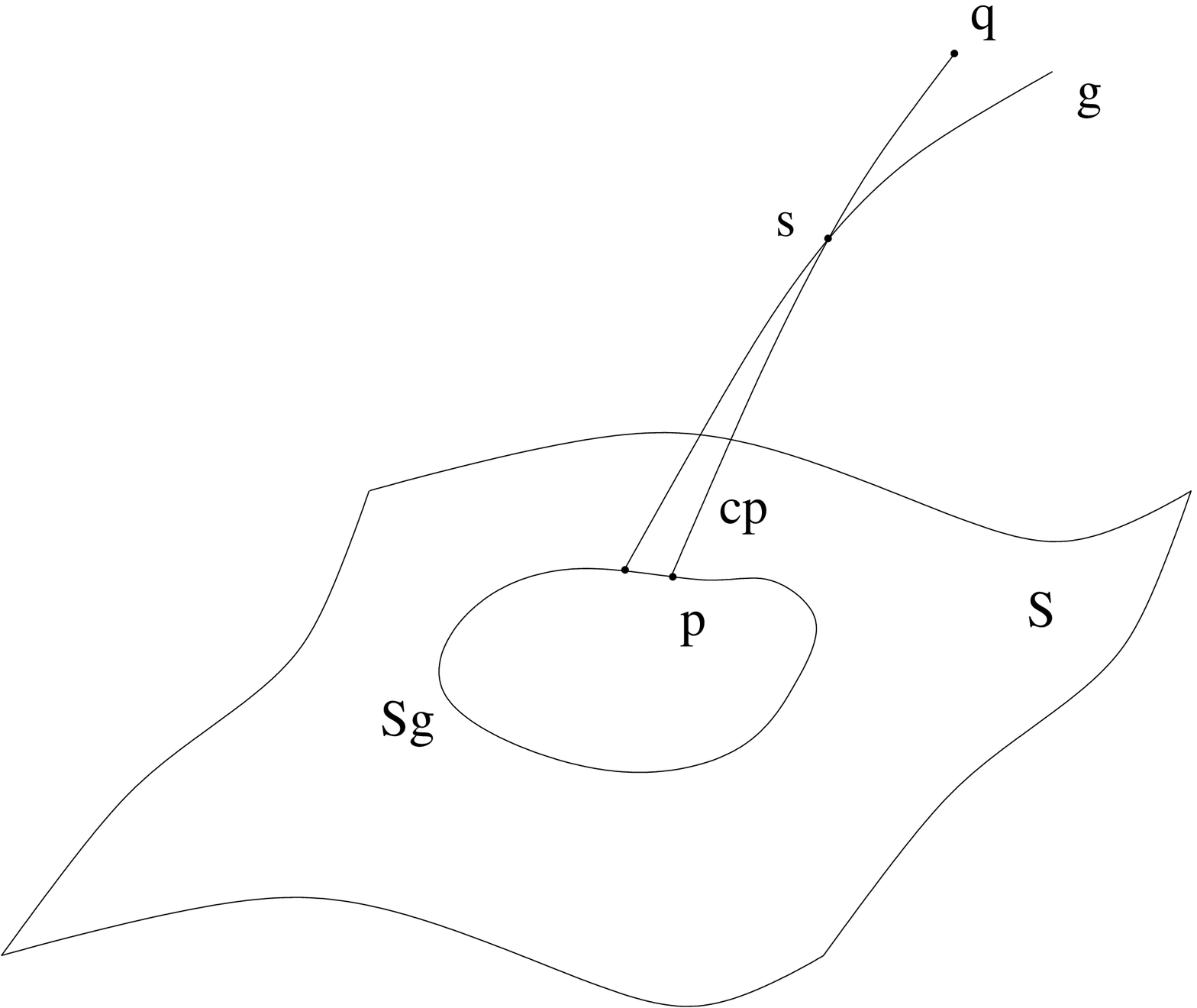}
\end{center}
\caption{Proof of Proposition~\ref{conjugate_null_I+}.} \label{proof3}
\end{figure}

\begin{Def}
Let $(M,g)$ be a globally hyperbolic spacetime and $S$ a Cauchy hypersurface with future-pointing unit normal vector field $N$. A compact $2$-dimensional submanifold $\Sigma \subset S$ with unit normal vector field $n$ in $S$ is said to be {\bf trapped} if the expansions $\theta^+$ and $\theta^-$ of the null geodesics with initial conditions $N + n$ and $N - n$ are both negative everywhere on $\Sigma$.
\end{Def}

We have now all the necessary ingredients to prove the Penrose singularity theorem.

\begin{Thm} ({\bf Penrose \cite{P65}}) \label{Penrose}
Let $(M,g)$ be a connected globally hyperbolic spacetime with a noncompact Cauchy hypersurface $S$, satisfying the null energy condition. If $S$ contains a trapped surface $\Sigma$ then $(M,g)$ is singular.
\end{Thm}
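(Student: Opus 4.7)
The plan is to argue by contradiction: assume that every future-directed null geodesic in $(M,g)$ is complete, and deduce that the Cauchy hypersurface $S$ must be compact. The strategy is to show, in three steps, that $\partial J^+(\Sigma)$ is a nonempty compact topological $3$-manifold without boundary that embeds as an open subset of $S$.

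First I would establish the compactness of $\partial J^+(\Sigma)$. Since $\Sigma$ is compact and the two expansions $\theta^\pm$ of the families of future-directed null geodesics orthogonal to $\Sigma$ (corresponding to the initial conditions $N+n$ and $N-n$) are continuous and strictly negative on $\Sigma$, there is a constant $\theta_0>0$ such that $\theta^\pm \le -\theta_0$ on $\Sigma$. Proposition~\ref{conjugate_null} then produces, along each such null geodesic, a point conjugate to $\Sigma$ within affine parameter at most $r_0 = 2/\theta_0$; completeness is what allows us to reach this conjugate point. By Proposition~\ref{conjugate_null_I+}, every point beyond the first conjugate point lies in $I^+(\Sigma)$. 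The generalization of Corollary~\ref{null_geodesic} to compact $2$-surfaces shows that every point of $\partial J^+(\Sigma) = J^+(\Sigma)\setminus I^+(\Sigma)$ is reached from $\Sigma$ along a future-directed null geodesic orthogonal to $\Sigma$, necessarily before its first conjugate point. Hence $\partial J^+(\Sigma)$ is contained in the image of the compact set $[0,r_0]\times\Sigma\times\{+1,-1\}$ under $(r,p,\varepsilon)\mapsto \exp_p(r(N_p+\varepsilon\, n_p))$; since $\partial J^+(\Sigma)$ is closed, it is compact.

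Next I would project $\partial J^+(\Sigma)$ onto $S$. Global hyperbolicity guarantees time orientability, so we may choose a smooth timelike vector field $X$, and define $\pi:M\to S$ by flowing along $X$ until hitting $S$; this map is continuous and well defined because every inextendible integral curve of $X$ is a timelike curve and therefore meets the Cauchy hypersurface $S$ exactly once. The restriction $\pi|_{\partial J^+(\Sigma)}$ is injective: if $p\neq q$ lie on the same integral curve of $X$, say with $q$ to the future of $p$, then $q\in I^+(p)\subset I^+(\Sigma)$, so $q\notin\partial J^+(\Sigma)$. This injectivity reflects the fact that $\partial J^+(\Sigma)$ is achronal.

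Finally, $\partial J^+(\Sigma)$ is nonempty because $\Sigma\subset\partial J^+(\Sigma)$: the Cauchy hypersurface $S$ is achronal, so $\Sigma\cap I^+(\Sigma)=\emptyset$. As a closed achronal set it is a topological $3$-manifold without boundary (this is the standard achronal-boundary fact, and is the point where one must be careful). By invariance of domain, $\pi(\partial J^+(\Sigma))$ is then a nonempty open subset of $S$; being the continuous image of a compact set, it is also closed in $S$. Connectedness of $S$ forces $\pi(\partial J^+(\Sigma))=S$, so $S$ is compact, contradicting the hypothesis. The main obstacle is the soft-topological input that an achronal boundary is a $C^0$ hypersurface, so that invariance of domain can be applied; once this is granted, the rest is a straightforward combination of the Raychaudhuri-type estimate for null congruences (Proposition~\ref{conjugate_null}) with the global structure of Cauchy hypersurfaces.
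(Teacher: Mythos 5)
Your proposal is correct and follows essentially the same route as the paper: bound the expansions by compactness of $\Sigma$, use Proposition~\ref{conjugate_null} and Proposition~\ref{conjugate_null_I+} to trap $\partial I^+(\Sigma)$ inside the compact set $\exp^+([0,r_0]\times\Sigma)\cup\exp^-([0,r_0]\times\Sigma)$, project injectively onto $S$ along a timelike flow, and conclude by the open-closed-connected argument that $S$ would be compact. The only difference is that you obtain openness of the image in $S$ by citing the achronal-boundary/invariance-of-domain fact, whereas the paper proves openness directly with a short sequence argument using that $I^+(\Sigma)$ is open; both are standard and equally valid.
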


\begin{proof}
Let $t:M \to \bbR$ be a global time function such that $S=t^{-1}(0)$. The integral curves of $\grad t$, being timelike, intersect $S$ exactly once, and $\partial I^+(\Sigma)$ at most once. This defines a continuous injective map $\pi: \partial I^+(\Sigma) \to S$, whose image is open. Indeed, if $q = \pi(p)$, then all points is some neighborhood of $q$ are images of points in $\partial I^+(\Sigma)$, as otherwise there would be a sequence $q_k \in S$ with $q_k \to q$ such that the integral curves of $\grad t$ through $q_k$ would not intersect $\partial I^+(\Sigma)$. Letting $r_k$ be the intersections of these curves with the Cauchy hypersurface $t^{-1}(t(r))$, for some point $r$ to the future of $p$ along the integral line of $\grad t$, we would have $r_k \to r$, and so $r_k \in I^+(\Sigma)$ for sufficiently large $k$ (as $I^+(\Sigma)$ is open), leading to a contradiction.

Since $\Sigma$ is trapped (and compact), there exists $\theta_0 < 0$ such that the expansions $\theta^+$ and $\theta^-$ of the null geodesics orthogonal to $\Sigma$ both satisfy $\theta^+, \theta^- \leq \theta_0$. We will show that there exists a future-directed null geodesic orthogonal to $\Sigma$ which cannot be extended to an affine parameter greater than $r_0=-\frac{2}{\theta_0}$ to the future of $\Sigma$. Suppose that this was not so. Then, according to Proposition~\ref{conjugate_null}, any null geodesic orthogonal to $\Sigma$ would have a conjugate point at an affine parameter distance of at most $r_0$ to the future of $\Sigma$, after which it would be in $I^+(\Sigma)$, by Proposition~\ref{conjugate_null_I+}. Consequently, $\partial I^+(\Sigma)$ would be a (closed) subset of the compact set 
\[
\exp^+([0,r_0] \times \Sigma) \cup \exp^-([0,r_0] \times \Sigma)
\]
(where $\exp^+$ and $\exp^-$ refer to the exponential map constructed using the unit normals $n$ and $-n$), hence compact. Therefore the image of $\pi$ would also be compact, hence closed as well as open. Since $M$, and therefore $S$, are connected, the image of $\pi$ would be $S$, which would then be homeomorphic to $\partial I^+(\Sigma)$. But $S$ is noncompact by hypothesis, and we reach a contradiction.
\end{proof}

\begin{Remark}
It should be clear that $(M,g)$ is singular if the condition of existence of a trapped surface is replaced by the condition of existence of an {\bf anti-trapped surface}, that is, a compact surface $\Sigma \subset S$ such that the expansions of null geodesics orthogonal to $\Sigma$ are both positive. In this case, there exists a {\bf past-directed} null geodesic orthogonal to $\Sigma$ which cannot be extended to an affine parameter time greater than $r_0=\frac{2}{\theta_0}$ to the {\bf past} of $\Sigma$.
\end{Remark}

\begin{Example} \hspace{1cm}
\begin{enumerate}
\item
The region $\{ r < 2m \}$ of the Schwarzschild solution is globally hyperbolic, and satisfies the null energy condition (as $R_{\mu\nu}=0$). Since $r$ (or $-r$) is clearly a time function (depending on the choice of time orientation), it must increase (or decrease) along any future-pointing null geodesic, and therefore any sphere $\Sigma$ of constant $(t,r)$ is anti-trapped (or trapped). Since any Cauchy hypersurface is diffeomorphic to $\bbR \times S^2$, hence noncompact, we conclude from Theorem~\ref{Penrose} that the Schwarzschild solution is singular to past (or future) of $\Sigma$. Moreover, Theorem~\ref{Penrose} implies that this singularity is generic: any sufficiently small perturbation of the Schwarzschild solution satisfying the null energy condition will also be singular. Loosely speaking, once the collapse has advanced long enough, nothing can prevent the formation of a singularity.
\item
The FLRW models with $\alpha>0$ and $\Lambda = 0$ are globally hyperbolic, and satisfy the null energy condition. Moreover, radial null geodesics satisfy
\[
\frac{dr}{dt} = \pm \frac1a \sqrt{1-kr^2}.
\]
Therefore, if we start with a sphere $\Sigma$ of constant $(t,r)$ and follow the orthogonal null geodesics along the direction of increasing or decreasing $r$, we obtain spheres whose radii $ar$ satisfy
\[
\frac{d}{dt} (ar) = \dot{a}r+a\dot{r} = \dot{a}r \pm \sqrt{1-kr^2}.
\]
Assume that the model is expanding, with the big bang at $t=0$, and spatially noncompact (in particular $k \neq 1$). Then, for sufficiently small $t>0$, the sphere $\Sigma$ is anti-trapped, and hence Theorem~\ref{Penrose} guarantees that this model is singular to the past of $\Sigma$ (i.e.~there exists a big bang). Moreover, Theorem~\ref{Penrose} implies that this singularity is generic: any sufficiently small perturbation of an expanding, spatially noncompact FLRW model satisfying the null energy condition will also be singular. Loosely speaking, any expanding universe must have begun at a big bang.
\end{enumerate}
\end{Example}

\section{Exercises} \label{sec4.7}

\begin{enumerate}

\item
Let $g$ be a Lorentzian metric given in the Gauss Lemma form,
\[
g = - dt^2 + h_{ij} dx^i dx^j,
\]
and consider the geodesic congruence tangent to $\frac{\partial}{\partial t}$.
\begin{enumerate}
\item
Show that
\[
B_{ij} = \Gamma^0_{ij} = \frac12 \frac{\partial h_{ij}}{\partial t},
\]
that is,
\[
B = \frac12 \cL_{\frac{\partial}{\partial t}} g = K,
\]
where $K$ is the second fundamental form of the hypersurfaces of constant $t$.
\item
Conclude that the expansion of the congruence of the galaxies in a FLRW model is
\[
\theta = \frac{3\dot{a}}{a} = 3H.
\]
\end{enumerate}

\item
Let $(M,g)$ be a Lorentzian manifold.
\begin{enumerate}
\item
Use the formula for the Lie derivative of a tensor,
\[
\hspace{2cm} (\cL_X g) (Y,Z) = X \cdot g(Y,Z) - g(\cL_X Y, Z) - g(Y, \cL_X Z)
\]
to show that
\[
(\cL_X g) (Y,Z) = g(\nabla_Y X, Z) + g(Y, \nabla_Z X).
\]
\item
Show that this formula can be written as
\[
(\cL_X g)_{\mu\nu} = \nabla_\mu X_\nu + \nabla_\nu X_\mu.
\]
\item
Suppose that $X$ is a Killing vector field, i.e.~$\cL_X g=0$. Use the Killing equation
\[
\nabla_\mu X_\nu + \nabla_\nu X_\mu = 0
\]
to show that $X$ is a solution of the Jacobi equation. Give a geometric interpretation of this fact.
\end{enumerate}

\item
Let $T$ be a diagonalizable energy-momentum tensor, that is, $\left(T_{\mu\nu}\right) = \diag(\rho, p_1,p_2, p_3)$ on some orthonormal frame $\{E_0,E_1,E_2,E_3\}$. Show that:
\begin{enumerate}
\item
$T$ satisfies the SEC if and only if $\rho+\sum_{i=1}^3p_i\geq 0$ and $\rho+p_i\geq 0$ ($i=1,2,3$).
\item
$T$ satisfies the WEC if and only if $\rho\geq 0$ and $\rho+p_i\geq 0$ ($i=1,2,3$).
\item
$T$ satisfies the DEC if and only if $\rho\geq |p_i|$ ($i=1,2,3$).
\item
$T$ satisfies the NEC if and only if $\rho+p_i\geq 0$ ($i=1,2,3$).
\item
The first three conditions are independent except that the DEC implies the WEC.
\item
The first three conditions imply the NEC.
\end{enumerate}

\item
Let $(M,g)$ be the globally hyperbolic Lorentzian manifold corresponding to the exterior region of the Schwarzschild solution, that is, $M=\bbR \times \left(\bbR^3 \setminus \overline{B_{2m}(0)}\right)$ and
\[
\hspace{2cm} g = - \left( 1 - \frac{2m}r \right) dt^2 + \left( 1 - \frac{2m}r \right)^{-1} dr^2 + r^2 \left(d\theta^2 + \sin^2 \theta d \varphi^2 \right)
\]
(with $m>0$).
\begin{enumerate}
\item
Show that for any $r_0>2m$ the curve
\[
c(t)=\left(t,r_0,\frac\pi2,\sqrt{\frac{m}{{r_0}^3}} t\right)
\]
is a timelike, null or spacelike geodesic, according to whether $r_0>3m$, $r_0=3m$ or $r_0<3m$.
\item
Argue that the point $q=\left(\pi\sqrt{\frac{{r_0}^3}{m}},r_0,\frac\pi2,\pi\right)$ is conjugate to the point $p=\left(0,r_0,\frac\pi2,0\right)$ along $c$ (note that this can be done without solving the Jacobi equation).
\item
Show explicitly that if $r_0>3m$ then $c$ stops being maximizing for $t>\pi\sqrt{\frac{{r_0}^3}{m}}$.
\end{enumerate}

\item
Let $(M,g)$ be a globally hyperbolic spacetime and $p,q \in M$ with $q \in I^+(p)$. Show that among all timelike curves connecting $p$ to $q$ there exists a timelike curve with maximal length, which is a timelike geodesic.

\item
Use ideas similar to those leading to the proof Hawking's singularity theorem to prove {\bf Myers's theorem}: if $(M,\langle\cdot,\cdot\rangle)$ is a complete Riemannian manifold whose Ricci curvature satisfies $R_{\mu\nu} X^\mu X^\nu \geq \varepsilon g_{\mu\nu} X^\mu X^\nu$ for some $\varepsilon > 0$ then $M$ is compact. Can these ideas be used to prove a singularity theorem in Riemannian geometry?

\item
Explain why Hawking's singularity theorem does not apply to each of the following geodesically complete Lorentzian manifolds:
\begin{enumerate}
\item
Minkowski's spacetime;
\item
Einstein's universe;
\item
de Sitter's universe;
\item
Anti-de Sitter spacetime.
\end{enumerate}

\item
Consider the metric
\[
\hspace{2cm} ds^2 = \alpha \, du^2 - 2 \, du \, dr + 2 \beta_A \, du \, dx^A + \gamma_{AB} \, dx^A dx^B.
\]
\begin{enumerate}
\item
Show that the Christoffel symbols satisfy
\[
\hspace{2cm} \Gamma^u_{ur} = \Gamma^u_{rr} = \Gamma^u_{rA} = \Gamma^r_{rr} = \Gamma^A_{rr} = 0 \quad \text{ and } \quad \Gamma^A_{rB} = \gamma^{AC} \beta_{CB},
\]
where $(\gamma^{AB})=(\gamma_{AB})^{-1}$ and $\beta_{AB} = \frac12 \frac{\partial \gamma_{AB}}{\partial r}$. 
\item
Conclude that
\[
\hspace{2cm} R_{rr} = - \frac{\partial}{\partial r} \left( \gamma^{AB} \beta_{AB}\right) - \gamma^{AB} \gamma^{CD} \beta_{AC} \beta_{BD}.
\]
\end{enumerate}

\item
Let $(M,g)$ be a globally hyperbolic spacetime with Cauchy hypersurfaces $S_0$ and $S_1$ satisfying $S_1 \subset D^+(S_0)$, and $\Sigma \subset S_1$ a compact surface. Show that:
\begin{enumerate}
\item
$D^+(S_0)\cap J^-(\Sigma)$ is compact;
\item
$J^-(\Sigma)$ is closed.
\end{enumerate}

\item
Explain why Penrose's singularity theorem does not apply to each of the following geodesically complete Lorentzian manifolds:
\begin{enumerate}
\item
Minkowski's spacetime;
\item
Einstein's universe;
\item
de Sitter's universe;
\item
Anti-de Sitter spacetime.
\end{enumerate}

\end{enumerate}


\chapter{Cauchy problem} \label{chapter5}

In this chapter we discuss the Cauchy problem for the Einstein field equations, following \cite{W84}. We start by studying the Klein-Gordon equation, as a prototypical wave equation, and the the Maxwell equations, where the issues of constraints on the inital data and gauge freedom also arise. We then sketch the proof of the Choquet-Bruhat theorem and discuss the Lichnerowicz method for solving the constraint equations. See \cite{Ringstrom09} for a more complete discussion.

\section{Divergence theorem} \label{sec5.1}

It is possible to define the divergence of a vector field on any orientable manifold where a volume form has been chosen.

\begin{Def}
Let $M$ be an orientable $n$-dimensional manifold with volume form $\epsilon$, and let $X$ be a vector field on $M$. The {\bf divergence} of $X$ is the function $\dive X$ such that
\[
d(X \contr \epsilon) = (\dive X) \epsilon.
\]
\end{Def}

The following result in a straightforward application of the Stokes theorem.

\begin{Thm} ({\bf Divergence theorem})
If $M$ is a compact orientable $n$-dimensional manifold with boundary $\partial M$ then
\[
\int_M (\dive X) \epsilon = \int_{\partial M} X \contr \epsilon.
\]
\end{Thm}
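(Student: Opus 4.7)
The proof is essentially a one-line application of Stokes' theorem combined with the definition of the divergence. The plan is to simply unpack the definition of $\dive X$, integrate the resulting exact $n$-form over $M$, and apply Stokes' theorem to push the integration to the boundary.

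More concretely, I would first recall that, by the definition of divergence introduced just above, the $n$-form $(\dive X)\epsilon$ is exact, being equal to $d(X\contr\epsilon)$, where $X\contr\epsilon$ is an $(n-1)$-form on $M$. Integrating this identity over the compact oriented manifold $M$ gives
\[
\int_M (\dive X)\,\epsilon = \int_M d(X\contr\epsilon).
\]
Now I would invoke Stokes' theorem for manifolds with boundary, which asserts that $\int_M d\omega = \int_{\partial M}\omega$ for any smooth $(n-1)$-form $\omega$ on $M$ (with $\partial M$ given the induced orientation). Applying this to $\omega = X\contr\epsilon$ yields
\[
\int_M d(X\contr\epsilon) = \int_{\partial M} X\contr\epsilon,
\]
and chaining the two equalities produces the desired identity.

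There is no real obstacle here, since the paper takes Stokes' theorem as background knowledge and the definition of divergence has been chosen precisely to make the argument tautological; the only minor point worth flagging would be the orientation conventions on $\partial M$, but those are standard and consistent with the Stokes' theorem being cited. Thus the proof is essentially complete in two lines and requires no additional machinery beyond what is stated in the excerpt.
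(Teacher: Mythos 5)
Your proof is correct and is exactly the argument the paper has in mind: the text introduces the theorem as ``a straightforward application of the Stokes theorem,'' relying on the definition $d(X \contr \epsilon) = (\dive X)\,\epsilon$ precisely as you do. Nothing further is needed.
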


\begin{Prop}
If $(M,g)$ is a pseudo-Riemannian manifold with Levi-Civita connection $\nabla$ then
\[
\dive X = \nabla_\mu X^\mu.
\]
\end{Prop}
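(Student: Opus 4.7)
The plan is to work in local coordinates and compute both sides of the claimed equality, showing they match. Throughout, let $g = |\det(g_{\mu\nu})|$, so that the metric volume form is $\epsilon = \sqrt{g}\, dx^1 \wedge \cdots \wedge dx^n$ (on an oriented chart).

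First I would unpack the left-hand side. The contraction gives
\[
X \contr \epsilon = \sqrt{g}\, \sum_\mu (-1)^{\mu-1} X^\mu\, dx^1 \wedge \cdots \wedge \widehat{dx^\mu} \wedge \cdots \wedge dx^n,
\]
so that
\[
d(X \contr \epsilon) = \partial_\mu \bigl(\sqrt{g}\, X^\mu\bigr)\, dx^1 \wedge \cdots \wedge dx^n = \frac{1}{\sqrt{g}} \partial_\mu \bigl(\sqrt{g}\, X^\mu\bigr)\, \epsilon.
\]
Hence by the defining relation $d(X \contr \epsilon) = (\dive X)\, \epsilon$, we read off
\[
\dive X = \frac{1}{\sqrt{g}} \partial_\mu \bigl(\sqrt{g}\, X^\mu\bigr) = \partial_\mu X^\mu + \frac{1}{\sqrt{g}} (\partial_\mu \sqrt{g})\, X^\mu.
\]

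Next I would expand the right-hand side using the formula for covariant differentiation of a vector field introduced earlier in the text:
\[
\nabla_\mu X^\mu = \partial_\mu X^\mu + \Gamma^\mu_{\mu \alpha} X^\alpha.
\]
So the identity reduces to showing
\[
\Gamma^\mu_{\mu \alpha} = \frac{1}{\sqrt{g}} \partial_\alpha \sqrt{g} = \frac{1}{2g} \partial_\alpha g.
\]
The main (and essentially only) technical step is this trace-of-Christoffel identity. I would prove it by plugging the explicit formula
\[
\Gamma^\mu_{\mu\alpha} = \tfrac{1}{2} g^{\mu\beta}\bigl(\partial_\mu g_{\alpha\beta} + \partial_\alpha g_{\mu\beta} - \partial_\beta g_{\mu\alpha}\bigr)
\]
and noting that the first and third terms cancel after relabeling (since $g^{\mu\beta}$ is symmetric), leaving
\[
\Gamma^\mu_{\mu\alpha} = \tfrac{1}{2} g^{\mu\beta} \partial_\alpha g_{\mu\beta}.
\]
To finish I would invoke Jacobi's formula for the derivative of a determinant, $\partial_\alpha \det(g_{\mu\nu}) = \det(g_{\mu\nu})\, g^{\mu\beta} \partial_\alpha g_{\mu\beta}$, which gives $g^{\mu\beta} \partial_\alpha g_{\mu\beta} = \partial_\alpha \log|\det(g_{\mu\nu})| = \frac{1}{g}\partial_\alpha g$, exactly the required formula.

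Combining these two computations yields $\dive X = \nabla_\mu X^\mu$ in the chosen chart, and since both sides are defined coordinate-independently, the identity holds globally. The main (minor) obstacle is justifying Jacobi's formula; if one prefers to avoid it, one can instead diagonalize $g_{\mu\nu}$ pointwise by a change of basis and differentiate $\log\det$ there, but the Jacobi identity route is cleanest.
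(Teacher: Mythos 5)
Your proof is correct, but it takes a different route from the text. The text chooses coordinates adapted to the vector field, $X=\partial_1$ (valid only near points where $X\neq 0$, which is why it must append a continuity argument to cover the zero set of $X$), applies Cartan's magic formula to get $d(X\contr\epsilon)=\cL_X\epsilon$, and then identifies $\frac12 g^{\mu\nu}(\cL_X g)_{\mu\nu}$ with $\nabla_\mu X^\mu$ via $(\cL_Xg)_{\mu\nu}=\nabla_\mu X_\nu+\nabla_\nu X_\mu$. You instead work in an arbitrary chart, compute $d(X\contr\epsilon)$ directly to obtain $\dive X=\frac{1}{\sqrt{g}}\partial_\mu(\sqrt{g}\,X^\mu)$, and match this against $\nabla_\mu X^\mu=\partial_\mu X^\mu+\Gamma^\mu_{\mu\alpha}X^\alpha$ by proving the trace identity $\Gamma^\mu_{\mu\alpha}=\partial_\alpha\log\sqrt{g}$. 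Both arguments ultimately rest on the same determinant identity $\partial\log|\det A|=\tr(A^{-1}\partial A)$, so neither is more elementary; but your version has the advantage of requiring no special coordinates and hence no separate treatment of the zeros of $X$, and it produces the useful explicit formula $\dive X=\frac{1}{\sqrt{g}}\partial_\mu(\sqrt{g}\,X^\mu)$ as a byproduct (a formula the text later derives anyway, in the discussion of the Komar mass). The text's version, in exchange, exhibits the link between the divergence and the Lie derivative of the metric, which is what makes the vanishing of $\nabla_\mu K^\mu$ for Killing fields transparent. One cosmetic remark: you use $g$ both for the metric and for $|\det(g_{\mu\nu})|$; this is standard physics shorthand but worth flagging in a text that otherwise reserves $g$ for the metric tensor.
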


\begin{proof}
Take local coordinates such that $X=\partial_1$ around each point where $X$ does not vanish. Since
\[
\epsilon = \sqrt{|\det(g_{\mu\nu})|} \, dx^1 \wedge \ldots \wedge dx^n,
\]
we have
\[
d(X \contr \epsilon) = \cL_X \epsilon - X \contr d\epsilon = \partial_1 \log\left|\det(g_{\mu\nu})\right|^\frac12 \, \epsilon,
\]
where we used $d \epsilon = 0$. Since
\[
\partial_1 \log \left|\det A\right| = \tr (A^{-1} \partial_1 A)
\]
for any matrix-valued function $A$ in $M$, we have
\begin{align*}
d(X \contr \epsilon) & = \frac12 \left(g^{\mu\nu} \partial_1 g_{\mu\nu}\right) \epsilon = \frac12 \left(g^{\mu\nu} (\cL_X g)_{\mu\nu}\right) \epsilon \\
& = \frac12 g^{\mu\nu} (\nabla_\mu X_\nu + \nabla_\nu X_\mu ) \epsilon = (\nabla_\mu X^\mu) \epsilon.
\end{align*}
This formula can be easily extended to the set of zeros of $X$: by continuity on the boundary, and trivially in the interior.
\end{proof}

Assume now that $(M,g)$ is a compact orientable $n$-dimensional Lorentzian manifold with boundary $\partial M$, and let $\{E_1, \ldots, E_n \}$ be a positive orthonormal frame with $E_1 = N$ the outward unit normal vector on $\partial M$. The volume element is
\[
\epsilon = - E_1^\sharp \wedge \ldots \wedge E_n^\sharp,
\]
and the volume element of $\partial M$ with the induced orientation is
\[
\sigma = \pm E_2^\sharp \wedge \ldots \wedge E_n^\sharp
\]
(according to whether $N$ is timelike or spacelike). Therefore we have on $\partial M$
\[
X \contr \epsilon = \pm \langle X, N \rangle \sigma,
\]
implying that the divergence theorem can be written as
\[
\int_M (\dive X) \epsilon = \int_{\partial M} \langle X, n \rangle \sigma,
\]
where $n$ is the outward unit normal vector in the points where it is spacelike, and is the inward unit normal vector in the points where it is timelike (Figure~\ref{normal}).

\begin{figure}[h!]
\begin{center}
\psfrag{ntimelike}{$n$ timelike}
\psfrag{nnull}{$n$ null}
\psfrag{nspacelike}{$n$ spacelike}
\epsfxsize=.6\textwidth
\leavevmode
\epsfbox{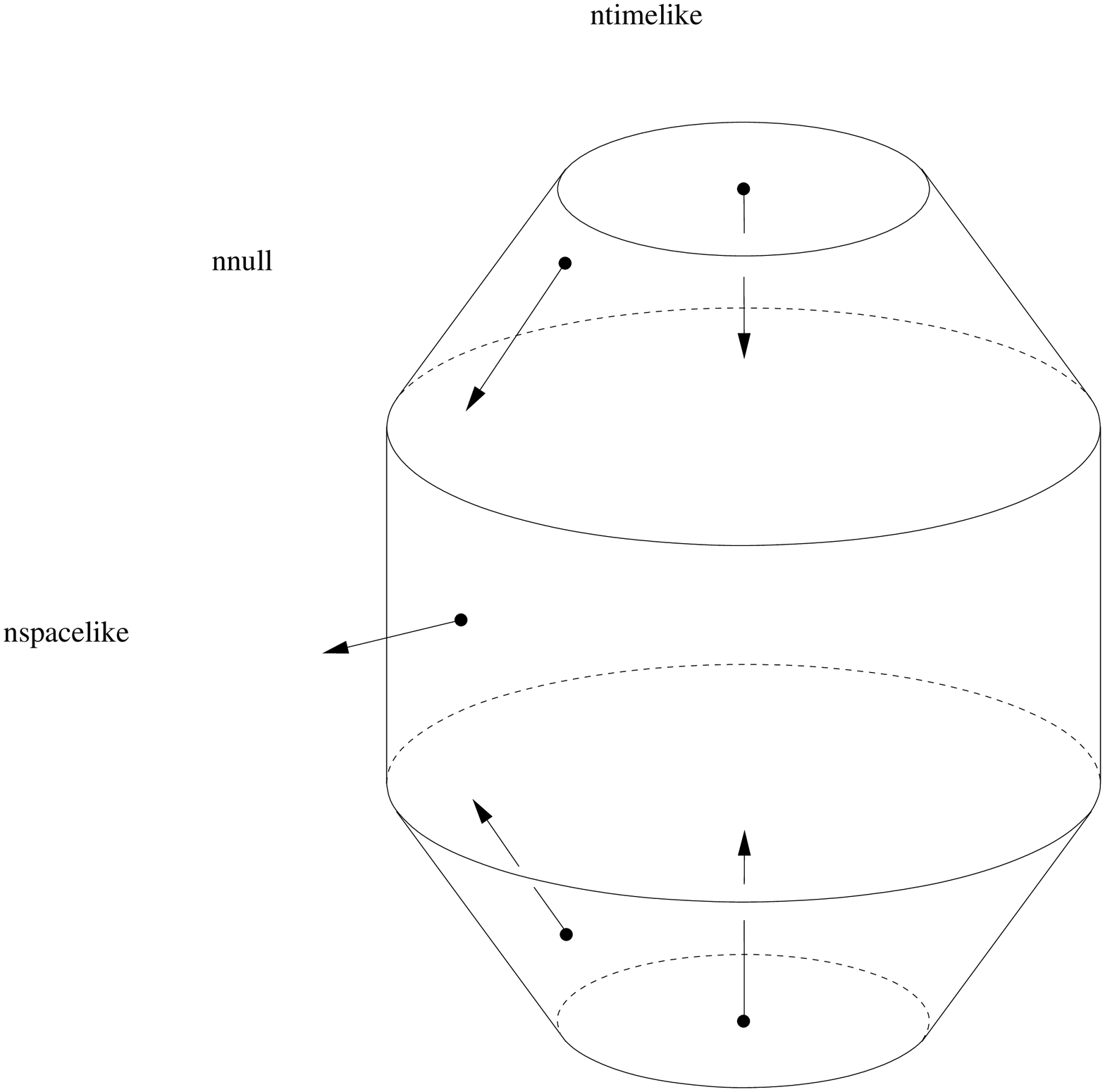}
\end{center}
\caption{Normal vector for the divergence theorem on a Lorentzian manifold.} \label{normal}
\end{figure}

It may happen that $\partial M$ has points where the normal is null, hence tangent to $\partial M$. In that case we choose the orthonormal frame such that
\[
N = E_1 + E_2
\]
is a null normal, with $E_1$ timelike and pointing outwards, and $E_2$ spacelike and (necessarily) pointing inwards. Then
\[
\epsilon = - N^\sharp \wedge E_2^\sharp \wedge \ldots \wedge E_n^\sharp
\]
and so
\[
X \contr \epsilon = - \langle X, N \rangle \sigma,
\]
where
\[
\sigma = E_2^\sharp \wedge \ldots \wedge E_n^\sharp
\]
is a volume element on $\partial M$ (compatible with the induced orientation, as $E_1$ points outwards). So we must choose in this case $n = -N$, that is, we must use the normal whose timelike component points inwards and whose spacelike component points outwards (Figure~\ref{normal}). Note that the magnitude (but not the sign) of the volume element $\sigma$ will depend on the choice of $n$.

\section{Klein-Gordon equation} \label{sec5.2}

Let $(M,g)$ be a Lorentzian manifold. A smooth function $\phi:M\to\bbR$ is a solution of the {\bf Klein-Gordon equation} if it satisfies
\[
\Box \phi - m^2 \phi = 0 \Leftrightarrow \nabla^\mu \partial_\mu \phi - m^2 \phi = 0.
\]
For reasons that will become clear in Chapter~\ref{chapter6}, we define the {\bf energy-momentum tensor} associated to this equation as
\[
T_{\mu\nu} = \partial_\mu \phi \, \partial_\nu \phi - \frac12 g_{\mu\nu} \left( \partial_\alpha \phi \, \partial^\alpha \phi + m^2 \phi^2 \right).
\]
If $\phi$ is a solution of the Klein-Gordon equation then
\begin{align*}
\nabla^\mu T_{\mu\nu} & = \Box \phi \, \partial_\nu \phi +  \partial_\mu \phi \, \nabla^\mu \partial_\nu \phi - \partial_\alpha \phi \, \nabla_\nu \partial^\alpha \phi - m^2 \phi \, \partial_\nu \phi \\
& = (\Box \phi - m^2 \phi) \partial_\nu \phi = 0
\end{align*}
Moreover, if $(M,g)$ is time-oriented then it is possible to prove that $T$ satisfies the dominant energy condition, that is, $T_{\mu\nu}X^\nu$ corresponds to a past-pointing causal vector whenever $X$ is a future-pointing causal vector. Assume that $X$ is a future-pointing timelike Killing vector field, and define $Y^\mu = T^{\mu\nu}X_\nu$. Then $Y$ is a past-pointing causal vector field satisfying
\[
\nabla_\mu Y^\mu = T^{\mu\nu}\nabla_\mu X_\nu = 0
\]
(where we used the Killing equation $\nabla_{(\mu}X_{\nu)}=0$ and the symmetry of $T$).

Let us now focus on the case when $(M,g)$ is flat Minkowski spacetime and $X = \frac{\partial}{\partial t}$. Consider the Cauchy hypersurface $S_0 = \{ t=t_0 \}$, and let $B_0$ be the ball of radius $R$ in that hypersurface:
\[
B_0 = \{ t=t_0, x^2 + y^2 + z^2 \leq R^2 \}.
\]
Let $S_1 = \{ t = t_1 \}$ be another Cauchy hypersurface, and consider the ball $B_1 = D(B_0) \cap S_1$ in that hypersurface (Figure~\ref{domain}). By the divergence theorem we have
\[
\int_{B_0} \langle Y, X \rangle + \int_C \langle Y, n \rangle + \int_{B_1} \langle Y, -X \rangle = 0,
\]
where $C$ is the null portion of $\partial D(B_0)$ between $S_0$ and $S_1$ and $n$ is a past-pointing normal. Since $Y$ is a past-pointing causal vector we have $\langle Y, n \rangle \leq 0$, and so
\begin{equation} \label{ineq}
\int_{B_1} \langle Y, X \rangle \leq \int_{B_0} \langle Y, X \rangle.
\end{equation}

\begin{figure}[h!]
\begin{center}
\psfrag{n}{$n$}
\psfrag{S0}{$S_0$}
\psfrag{B0}{$B_0$}
\psfrag{B1}{$B_1$}
\psfrag{C}{$C$}
\psfrag{nspacelike}{$n$ spacelike}
\epsfxsize=1.0\textwidth
\leavevmode
\epsfbox{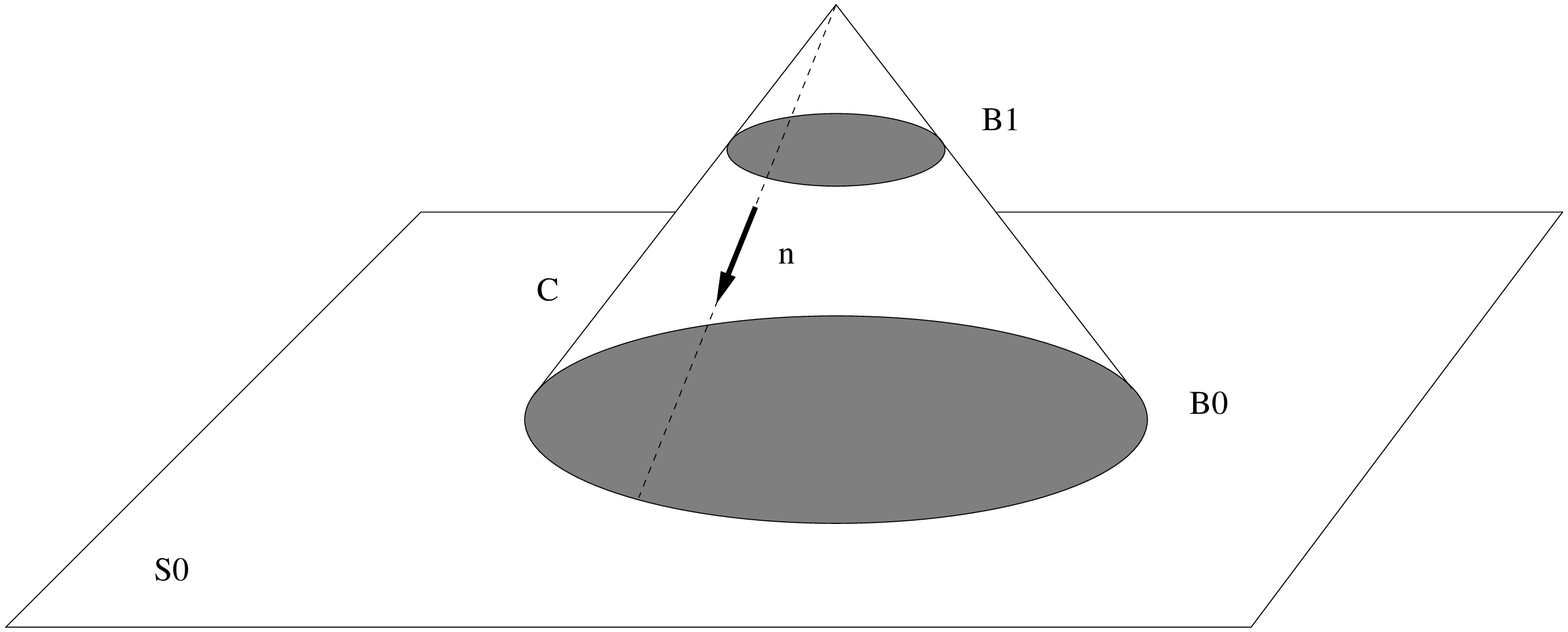}
\end{center}
\caption{Proof of Proposition~\ref{unicity}.} \label{domain}
\end{figure}

Note that
\begin{align*}
\langle Y, X \rangle & = T_{\mu\nu}X^\mu X^\nu = (X \cdot \phi)^2 + \frac12 \left( \partial_\alpha \phi \, \partial^\alpha \phi + m^2 \phi^2 \right) \\
& = \frac12 \left[ (\partial_0 \phi)^2 + (\partial_x \phi)^2 + (\partial_y \phi)^2 + (\partial_z \phi)^2 + m^2 \phi^2 \right].
\end{align*}
We conclude immediately that if $\phi$ is a solution of the Klein-Gordon equation and $\phi=\partial_0\phi=0$ in $B_0$ then $\phi=0$ in $B_1$, and indeed in $D(S_0)$ (since $t_1$ is arbitrary). Because the Klein-Gordon equation is linear we can then deduce the following result.

\begin{Prop}\label{unicity}
Given two smooth functions $\phi_0,\psi_0:S_0 \to \bbR$, there exists at most a solution $\phi$ of the Klein-Gordon equation satisfying $\phi=\phi_0$ and $\partial_0\phi = \psi_0$ on $S_0$.
\end{Prop}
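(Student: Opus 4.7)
The plan is to reduce the statement to an energy estimate that is essentially already assembled in the preceding discussion.

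First I would use linearity: if $\phi_1, \phi_2$ are two solutions with the same initial data $(\phi_0, \psi_0)$, then $\phi = \phi_1 - \phi_2$ solves the Klein-Gordon equation with $\phi|_{S_0} = 0$ and $\partial_0\phi|_{S_0} = 0$. It therefore suffices to show that any such $\phi$ vanishes identically. Since $\phi \equiv 0$ on $S_0$, all tangential derivatives $\partial_x\phi, \partial_y\phi, \partial_z\phi$ also vanish on $S_0$, and combined with $\partial_0\phi = 0$ we see from the explicit formula
\[
\langle Y, X \rangle = \tfrac12 \left[ (\partial_0\phi)^2 + (\partial_x\phi)^2 + (\partial_y\phi)^2 + (\partial_z\phi)^2 + m^2 \phi^2 \right]
\]
that $\langle Y, X \rangle \equiv 0$ on $S_0$.

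Next I would apply the inequality \eqref{ineq} derived above. For any ball $B_0 \subset S_0$ of radius $R$ centered at an arbitrary point, and any $t_1 > t_0$, we obtain
\[
0 \leq \int_{B_1} \langle Y, X \rangle \leq \int_{B_0} \langle Y, X \rangle = 0,
\]
using that the integrand is pointwise nonnegative. Hence $\langle Y, X \rangle$ vanishes identically on $B_1 = D(B_0) \cap S_1$, which forces $\partial_\mu \phi = 0$ (and, if $m \neq 0$, also $\phi = 0$) on $B_1$. Since $R$ and $t_1$ are arbitrary, every point of the future $\{t > t_0\}$ lies in some such $B_1$, so $\partial_\mu \phi \equiv 0$ throughout $\{t \geq t_0\}$. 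The same argument applied with the time-orientation reversed (using the past domain of dependence) gives $\partial_\mu \phi \equiv 0$ on $\{t \leq t_0\}$.

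Therefore $\phi$ is constant on $\bbR^4$, and since $\phi$ vanishes on $S_0$ we conclude $\phi \equiv 0$. The only step requiring any care is verifying that the integrand $\langle Y, X\rangle$ is indeed a sum of squares (which is immediate once $X = \partial/\partial t$ is timelike Killing and $T$ satisfies the dominant energy condition, as noted above), and checking that the application of the divergence theorem on the truncated domain of dependence is legitimate, i.e.\ that the null contribution from the lateral boundary $C$ has the correct sign; this is precisely the statement $\langle Y, n \rangle \leq 0$ for a past-pointing null normal $n$, which follows from $Y$ being past-pointing causal.
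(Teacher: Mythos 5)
Your proof is correct and follows essentially the same energy-estimate route as the paper: reduce by linearity to vanishing initial data, observe that $\langle Y, X\rangle$ is a sum of squares vanishing on $B_0$, and apply the divergence-theorem inequality on the domain of dependence, with arbitrary $R$ and $t_1$ and a time-reversed version for the past. Your explicit handling of the $m=0$ case (deducing $\partial_\mu\phi\equiv 0$ and then using $\phi|_{S_0}=0$ to conclude $\phi\equiv 0$) is a small refinement the paper leaves implicit, but the argument is the same.
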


Given a smooth function $\phi:M \to \bbR$ and some set $A \subset \bbR^4$ we define the {\bf Sobolev norm}
\[
\| \phi \|^2_{H^1(A)} = \int_A \left[ \phi^2 + (\partial_0 \phi)^2 +  (\partial_x \phi)^2 + (\partial_y \phi)^2 + (\partial_z \phi)^2 \right].
\]
In this definition $A$ can either be an open set or a submanifold (in which case we use the induced volume form in the integral).

More generally, for each $k \in \bbN_0$ we define the Sobolev norms
\[
\| \phi \|^2_{H^k(A)} = \int_A \sum_{|\alpha| \leq k} (\partial^\alpha \phi)^2,
\]
where $\alpha = (\alpha_0, \alpha_1, \alpha_2, \alpha_3) \in {\bbN_0}^4$, $|\alpha| = \alpha_0 + \alpha_1 + \alpha_2 + \alpha_3$, $\partial = (\partial_0,\partial_x,\partial_y,\partial_z)$ and $\partial^\alpha = \partial_0^{\alpha_0} \partial_x^{\alpha_1} \partial_y^{\alpha_2} \partial_z^{\alpha_3}$. Similarly, given a smooth function $\phi_0:S_0 \to \bbR$ we define
\[
\| \phi_0 \|^2_{H^k(B_0)} = \int_{B_0} \sum_{|\alpha| \leq k} (D^\alpha \phi_0)^2,
\]
where now $D = (\partial_x,\partial_y,\partial_z)$. Inequality \eqref{ineq} can then be written as
\[
\| \phi \|^2_{H^1(B_1)} \leq C \| \phi \|^2_{H^1(B_0)} \leq C \| \phi_0 \|^2_{H^1(B_0)} + C \| \psi_0 \|^2_{H^0(B_0)},
\]
where $C>0$ is a generic positive constant (not always the same). Integrating this inequality in $t$ from $t_0 - R$ to $t_0 + R$ we obtain
\[
\| \phi \|^2_{H^1(D(B_0))} \leq C \| \phi_0 \|^2_{H^1(B_0)} + C \| \psi_0 \|^2_{H^0(B_0)}.
\]
On $S_0$, all spatial partial derivatives of $\phi$ and $\partial_0\phi$ are given by partial derivatives of $\phi_0$ and $\psi_0$. On the other hand, from the Klein-Gordon equation we have
\[
\partial_0^2 \phi_{|_{S_0}} = \partial_x^2 \phi_0 + \partial_y^2 \phi_0 + \partial_z^2 \phi_0 - m^2 \phi_0,
\]
and so, by partial differentiation, we can obtain all spatial partial derivatives of $\partial^2_0\phi$ on $S_0$ from partial derivatives of $\phi_0$. Differentiating the Klein-Gordon equation with respect to $t$ yields
\[
\partial_0^3 \phi_{|_{S_0}} = \partial_x^2 \psi_0 + \partial_y^2 \psi_0 + \partial_z^2 \psi_0 - m^2 \psi_0,
\]
and it should be clear that all partial derivatives of $\phi$ on $S_0$ are given by partial derivatives of $\phi_0$ and $\psi_0$. Note from the general partial derivative of the Klein-Gordon equation,
\[
\partial_0^2 \partial^\alpha\phi = \partial_x^2 \partial^\alpha\phi + \partial_y^2 \partial^\alpha\phi + \partial_z^2\partial^\alpha \phi - m^2 \partial^\alpha\phi,
\]
that $\partial^\alpha\phi$ also satisfies the Klein-Gordon equation, and so
\begin{align*}
& \| \partial^\alpha\phi \|^2_{H^1(B_1)} \leq C \| \partial^\alpha\phi \|^2_{H^1(B_0)} \Rightarrow \\
& \| \phi \|^2_{H^k(B_1)} \leq C \| \phi_0 \|^2_{H^k(B_0)} + C \| \psi_0 \|^2_{H^{k-1}(B_0)} \Rightarrow \\
& \| \phi \|^2_{H^k(D(B_0))} \leq C \| \phi_0 \|^2_{H^k(B_0)} + C \| \psi_0 \|^2_{H^{k-1}(B_0)} ,
\end{align*}
whence
\[
\| \phi \|_{H^k(D(B_0))} \leq C \| \phi_0 \|_{H^k(B_0)} + C \| \psi_0 \|_{H^{k-1}(B_0)} .
\]

Another important norm is the {\bf supremum norm},
\[
\| \phi \|_{C^0(A)} = \sup_{p \in A} |\phi(p)|.
\]
More generally, we define the norms
\[
\| \phi \|_{C^k(A)} = \sum_{|\alpha| \leq k} \sup_{p \in A} |\partial^\alpha\phi(p)|.
\]

\begin{Def}
A set $A \subset \bbR^n$ is said to satisfy the {\bf interior cone condition} if there exists a (closed) cone of height $h>0$ and solid angle $\Omega > 0$ at the vertex such that for each point $p \in A$ it is possible to map the cone isometrically into $A$ in such a way that the vertex is mapped to $p$.
\end{Def}

\begin{Thm}({\bf Sobolev inequality})
If $A \subset \bbR^n$ satisfies the interior cone condition and $k > \frac{n}2$ then for any smooth function $f: A \to \bbR$
\[
\| f \|_{C^0(A)} \leq C \| f \|_{H^k(A)}.
\]
\end{Thm}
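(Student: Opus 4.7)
The plan is to reduce the estimate to a pointwise bound derived from a \emph{radial representation formula}. By the interior cone condition, every $p \in A$ is the vertex of an isometric copy $K_p$ of a fixed reference cone $K_0$ (of height $h$ and solid angle $\Omega$) lying inside $A$; it is therefore enough to prove
$$|f(p)|^2 \leq C\, \|f\|_{H^k(K_p)}^2 \leq C\, \|f\|_{H^k(A)}^2$$
with a constant $C$ depending only on $K_0$, $k$, and $n$, and then take the supremum in $p$.

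To obtain the representation formula I fix a direction $\omega$ in the solid angle and a smooth cutoff $\chi:[0,h]\to[0,1]$ with $\chi(0)=1$ and $\chi^{(j)}(h)=0$ for $j=0,\ldots,k-1$, and set $F(r)=\chi(r)f(p+r\omega)$. Taylor's theorem with integral remainder applied to $F$ at the point $r=h$, with all boundary terms vanishing by the choice of $\chi$, yields
$$f(p) = F(0) = \frac{(-1)^k}{(k-1)!}\int_0^h r^{k-1} F^{(k)}(r)\,dr.$$
Expanding $F^{(k)}$ by the Leibniz and chain rules expresses it as a bounded linear combination of the functions $\partial^\alpha f(p+r\omega)$ with $|\alpha|\leq k$, the coefficients depending only on $\chi$ and on the angular variables. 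Averaging over $\omega\in\Omega$ and converting to Cartesian variables $x=p+r\omega$ (so that $dx=r^{n-1}\,dr\,d\omega$), I then obtain
$$|f(p)| \leq C\int_{K_p} \frac{1}{|x-p|^{n-k}} \sum_{|\alpha|\leq k} |\partial^\alpha f(x)|\,dx.$$

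The final step is Cauchy--Schwarz:
$$|f(p)|^2 \leq C\left(\int_{K_p}\frac{dx}{|x-p|^{2(n-k)}}\right)\|f\|_{H^k(K_p)}^2.$$
The weight integral, in polar coordinates, equals $|\Omega|\int_0^h r^{2k-n-1}\,dr$; this is finite \emph{precisely} when $k>n/2$, and its value depends only on $K_0$, not on $p$, because $K_p$ is an isometric copy of $K_0$. Combining with $\|f\|_{H^k(K_p)}\leq\|f\|_{H^k(A)}$ and taking the supremum over $p\in A$ delivers the Sobolev inequality. The main technical point is verifying the representation formula and tracking the pointwise bound on $F^{(k)}$; the assumption $k>n/2$ enters only at the very end, as the sharp integrability threshold for the radial kernel $r^{2k-n-1}$ at the vertex of the cone.
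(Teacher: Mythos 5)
Your proof is correct and follows essentially the same route as the one outlined in the text (Exercise 2 of Section 5.7): a radial representation formula for $f(p)$ obtained from a cutoff and $k$-fold integration by parts, averaging over the solid angle to produce the kernel $|x-p|^{k-n}$, and Cauchy--Schwarz, with $k>\frac{n}2$ entering exactly as the integrability threshold for $r^{2k-n-1}$ at the vertex. No issues.
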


\begin{proof}
Exercise.
\end{proof}

For a solution of the Klein-Gordon equation we then have
\[
\| \phi \|_{C^0(D(B_0))} \leq \| \phi \|_{H^3(D(B_0))} \leq C \| \phi_0 \|_{H^3(B_0)} + C \| \psi_0 \|_{H^{2}(B_0)}.
\]
More generally,
\[
\| \partial^\alpha \phi \|_{C^0(D(B_0))} \leq \| \partial^\alpha \phi \|_{H^3(D(B_0))} \leq C \| \phi_0 \|_{H^{m+3}(B_0)} + C \| \psi_0 \|_{H^{m+2}(B_0)},
\]
where $m=|\alpha|$. We conclude that
\[
\| \phi \|_{C^m(D(B_0))} \leq C \| \phi_0 \|_{H^{m+3}(B_0)} + C \| \psi_0 \|_{H^{m+2}(B_0)},
\]
whence
\[
\| \phi \|_{C^m(D(B_0))} \leq C \| \phi_0 \|_{C^{m+3}(B_0)} + C \| \psi_0 \|_{C^{m+2}(B_0)}.
\]

\begin{Thm}
Given initial data $\phi_0, \psi_0: S_0 \to \bbR$ for the Klein-Gordon equation, there exists a unique smooth solution $\phi$ satisfying $\phi=\phi_0$ and $\partial_0\phi = \psi_0$ on $S_0$. Moreover, if $B_0 \subset S_0$ is a ball then the solution in $D(B_0)$ depends only on the initial data in $B_0$, and the map
\[
C^{m+3}(B_0) \times C^{m+2}(B_0) \ni (\phi_0, \psi_0) \mapsto \phi \in C^m(D(B_0))
\]
is continuous.
\end{Thm}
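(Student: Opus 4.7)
The plan rests on three ingredients: the energy estimates already derived in the excerpt, an explicit Fourier-analytic construction on compactly supported data, and a localization argument based on finite propagation speed. Uniqueness and the domain-of-dependence statement are immediate from Proposition~\ref{unicity} combined with linearity of the Klein-Gordon equation: if $\phi^{(1)}$ and $\phi^{(2)}$ are two smooth solutions whose data agree on a ball $B_0 \subset S_0$, then $\phi^{(1)} - \phi^{(2)}$ has vanishing data on $B_0$ and therefore vanishes on $D(B_0)$; letting $B_0$ exhaust $S_0$ gives global uniqueness.

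For existence I first handle compactly supported smooth data $(\phi_0,\psi_0) \in C^\infty_c(S_0) \times C^\infty_c(S_0)$. Applying the spatial Fourier transform reduces the Klein-Gordon equation to the family of second-order ODEs
\[
\ddot{\hat\phi}(t,\xi) + \omega_\xi^{\,2}\,\hat\phi(t,\xi) = 0, \qquad \omega_\xi := \sqrt{|\xi|^2 + m^2},
\]
whose unique solution with the prescribed data is
\[
\hat\phi(t,\xi) = \hat\phi_0(\xi)\cos\bigl(\omega_\xi(t-t_0)\bigr) + \hat\psi_0(\xi)\,\frac{\sin\bigl(\omega_\xi(t-t_0)\bigr)}{\omega_\xi}.
\]
Since the data lies in $\mathscr{S}(\bbR^3)$, so does $\hat\phi(t,\cdot)$ for each fixed $t$, and the inverse Fourier transform yields $\phi \in C^\infty(M)$ solving the Klein-Gordon equation with the prescribed initial data.

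For arbitrary smooth data I localize using the finite propagation speed already established. Given $p \in M$, choose a ball $B_0 \subset S_0$ with $p \in \inte D(B_0)$ and a cutoff $\chi \in C^\infty_c(S_0)$ equal to $1$ on $B_0$, and let $\phi_\chi \in C^\infty(M)$ be the Fourier-constructed solution with data $(\chi \phi_0, \chi \psi_0)$. Define $\phi(p) := \phi_\chi(p)$. By the uniqueness/domain-of-dependence statement, two admissible choices of $\chi$ produce solutions agreeing on $D(B_0) \ni p$, so $\phi(p)$ is well-defined; smoothness of $\phi$ on $M$ follows since on any precompact open set one can take a single $\chi$ that works for every point. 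By construction $\phi$ solves the Klein-Gordon equation and realizes the prescribed initial data.

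Finally, the continuity of the solution map is a direct restatement of the estimate
\[
\| \phi \|_{C^m(D(B_0))} \leq C \| \phi_0 \|_{C^{m+3}(B_0)} + C \| \psi_0 \|_{C^{m+2}(B_0)},
\]
already derived in the excerpt: applied by linearity to the difference of two solutions, it gives Lipschitz (hence continuous) dependence. The only mildly delicate point is the verification that the Fourier-analytic construction on compactly supported data is smooth jointly in $(t,x)$; this is the main technical obstacle, and is handled by standard Paley--Wiener decay estimates in $\xi$ combined with differentiation under the integral sign.
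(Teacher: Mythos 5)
Your argument is correct, but the existence step takes a genuinely different route from the paper's. The paper approximates the smooth data by analytic data, invokes the Cauchy--Kowalewski theorem to produce analytic solutions, and then reuses the a priori estimate
\[
\| \phi \|_{C^m(D(B_0))} \leq C \| \phi_0 \|_{C^{m+3}(B_0)} + C \| \psi_0 \|_{C^{m+2}(B_0)}
\]
to show that these approximating solutions form a Cauchy sequence in every $C^m(D(B_0))$, hence converge to a smooth solution; you instead solve explicitly by the spatial Fourier transform for compactly supported data and patch via cutoffs and finite propagation speed. Your construction is more concrete and avoids Cauchy--Kowalewski altogether, but it leans essentially on the constant-coefficient structure of the flat Klein--Gordon operator, whereas the paper's scheme of analytic approximation plus energy estimates is the one that survives the passage to the variable-coefficient linear and quasi-linear hyperbolic systems needed later in the chapter. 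Your treatment of uniqueness, the domain-of-dependence statement, and continuity via the displayed estimate coincides with the paper's. One terminological quibble: the joint smoothness of your Fourier-side formula needs only that $\hat\phi_0,\hat\psi_0$ are Schwartz and that the multipliers $\cos\bigl(\omega_\xi (t-t_0)\bigr)$ and $\sin\bigl(\omega_\xi (t-t_0)\bigr)/\omega_\xi$ are smooth with polynomially bounded $\xi$-derivatives, so that one may differentiate under the integral sign; this is rapid decay rather than the Paley--Wiener theorem proper, which is not actually needed here.
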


\begin{proof}
We just have to prove existence of solution. To do that, we note that if $\phi$ is a solution then we know all its partial derivatives on $S_0$. Therefore, we can construct a power series for $\phi$ around each point of $S_0$. If $\phi_0$ and $\psi_0$ are analytic then the {\bf Cauchy-Kowalewski theorem} guarantees that these series converge, and so there exists an analytic solution $\phi$. If $\phi_0$ and $\psi_0$ are smooth then there exist sequences $\phi_{0,n}$ and $\psi_{0,n}$ of analytic functions which converge to $\phi_0$ and $\psi_0$ in all spaces $C^{m}(B_0)$. The corresponding analytic solutions $\phi_n$ of the Klein-Gordon equation thus form a Cauchy sequence in all spaces $C^m(D(B_0))$, and hence must converge in all these spaces to some function $\phi$. This function is therefore smooth, and, passing to the limit, must satisfy the Klein-Gordon equation in all sets $(D(B_0))$.
\end{proof}

Using similar tecnhiques, it is possible to prove a much stronger result.

\begin{Thm}
Let $(M,g)$ be a globally hyperbolic spacetime with a Cauchy hypersurface $S$ and future unit normal $N$. Then the {\bf linear, diagonal second order hyperbolic system}
\[
g^{\mu\nu} \nabla_\mu \partial_\nu \phi_i + A^\mu_{ij} \partial_\mu \phi_j + B_{ij} \phi_j + C_i = 0 \qquad (i=1, \ldots, n),
\]
where $A^\mu_{ij}$, $B_{ij}$ and $C_i$ are smooth and $\nabla$ is any connection, yields a well-posed Cauchy problem with initial data in $S$. More precisely, given smooth initial data $(\phi_1, \ldots, \phi_n, N \cdot \phi_1, \ldots, N \cdot \phi_n)$ on $S$ there exists a unique smooth solution of the system, defined in $M$. Moreover, the solutions depend continuously on the initial data, and if two initial data sets coincide on some closed subset $B \subset S$ then the corresponding solutions coincide in $D(B)$.
\end{Thm}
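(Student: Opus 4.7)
The plan is to adapt the energy-method proof given for the Klein-Gordon equation, with three principal modifications: (i) replace the divergence-free energy-momentum tensor with a \emph{quasi-conserved} tensor whose divergence is controlled by lower-order terms and thus by the energy itself, (ii) pass from Minkowski backgrounds to a general globally hyperbolic $(M,g)$ by foliating with Cauchy surfaces, and (iii) use Gr\"onwall's inequality to close the resulting differential inequalities.

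First, I would set up the energy. For each component $\phi_i$ define the local energy-momentum tensor
\[
T^{(i)}_{\mu\nu} = \partial_\mu \phi_i \, \partial_\nu \phi_i - \tfrac12 g_{\mu\nu}\left(\partial_\alpha \phi_i \, \partial^\alpha \phi_i\right),
\]
pick a globally defined smooth future-directed timelike vector field $X$ (using time orientability and a time function $t$, e.g.\ $X = -\grad t/|\grad t|$), and form $Y^{(i)\mu} = T^{(i)\mu\nu} X_\nu$, which is past-directed causal by the dominant energy condition verified in Section~\ref{sec5.2}. Using the PDE to substitute for $\Box \phi_i$, a short calculation gives
\[
\nabla_\mu Y^{(i)\mu} = T^{(i)\mu\nu}\nabla_\mu X_\nu - \left(A^\mu_{jk}\partial_\mu \phi_k + B_{jk}\phi_k + C_j\right)\delta_{ij}\, X\!\cdot\!\phi_i,
\]
which is pointwise bounded by $C\bigl(|\partial\phi|^2 + |\phi|^2 + 1\bigr)$ on compact sets. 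Now for a compact set $B \subset S$, apply the divergence theorem on the truncated domain of dependence $D^+(B)\cap t^{-1}([t_0,t_1])$; the null-boundary contribution is non-positive by the DEC exactly as in the proof of inequality~\eqref{ineq}, so summing over $i$ produces
\[
E(t_1) \;\le\; E(t_0) + C\int_{t_0}^{t_1}\left(E(t)+F(t)+1\right)dt,
\]
where $E(t)$ is the $H^1$-type energy on $B_t = D^+(B)\cap t^{-1}(t)$ and $F(t)=\sum_i \int_{B_t}\phi_i^2$. Augmenting by $F(t)$ (which trivially satisfies $\dot F \le 2\int|\phi||\partial_0\phi| \le E + F$), Gr\"onwall yields a bound
\[
E(t)+F(t) \;\le\; C\bigl(\|\phi_0\|_{H^1(B)}^2 + \|\psi_0\|_{H^0(B)}^2 + 1\bigr)
\]
on $D(B)$.

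Next I would bootstrap to higher regularity. Differentiating the system yields, for each multi-index $\alpha$, a system of the same hyperbolic form for $\partial^\alpha\phi_i$ with additional lower-order source terms depending on lower-order derivatives of $\phi_j$. Recursively applying the energy estimate bounds $\|\phi\|_{H^k(D(B))}$ in terms of $\|\phi_0\|_{H^k(B)}+\|\psi_0\|_{H^{k-1}(B)}$, and the Sobolev inequality then gives $C^m$ bounds in $D(B)$, just as in the Klein-Gordon case. Uniqueness, continuous dependence, and the domain-of-dependence property (coincidence of initial data on $B$ implies coincidence of solutions on $D(B)$) all follow from applying these estimates to differences of solutions, for which $C_i\equiv 0$.

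For local-in-time existence, I would invoke Cauchy-Kowalewski: on a neighbourhood of any point of $S$ choose analytic coordinates, approximate the smooth data $(\phi_0,\psi_0)$ by analytic data in the $C^m$ topology, solve analytically, and use the continuity estimates above to show the corresponding solutions form a Cauchy sequence in every $C^m$, whose limit is the desired smooth solution. Finally I would globalize: pick a smooth foliation $\{S_\tau\}_{\tau\in\bbR}$ of $M$ by Cauchy surfaces (which exists for any globally hyperbolic spacetime), solve locally near $S=S_0$ to obtain $\phi$ on a neighbourhood of $S$ containing some $S_{\tau_1}$, take the induced data on $S_{\tau_1}$ and iterate. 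The energy estimates are uniform on compact time intervals, and the domain-of-dependence property guarantees that the patches agree on overlaps, producing a solution on all of $M$. The main technical obstacle is the step that is easy in Minkowski but delicate in general: verifying that for a general globally hyperbolic $(M,g)$ the domain-of-dependence compactness (Proposition~\ref{compact}) and the existence of a smooth foliation by Cauchy surfaces can be combined so that the local existence time does not shrink to zero as one iterates, which one handles by working in finite strips $t^{-1}([\tau, \tau+\delta])$ where all the geometric quantities entering the energy inequality are bounded.
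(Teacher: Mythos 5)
Your proposal is correct and follows essentially the route the paper intends: the text states this theorem without proof, remarking only that it follows ``using similar techniques'' to the Klein-Gordon argument of Section~\ref{sec5.2}, and your elaboration (a quasi-conserved energy tensor whose divergence is controlled by lower-order terms plus Gr\"onwall, higher-order estimates by differentiating the system, the Sobolev inequality for $C^m$ control, Cauchy-Kowalewski with analytic approximation for existence, and globalization along a foliation by Cauchy hypersurfaces) is precisely that adaptation. The one point worth flagging is that for merely smooth coefficients the Cauchy-Kowalewski step also requires approximating $g^{\mu\nu}$, $A^\mu_{ij}$, $B_{ij}$ and $C_i$ by analytic ones, with stability under that approximation again supplied by your energy estimates --- a wrinkle the paper itself glosses over.
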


To solve the Cauchy problem for the Einstein equations, we will need to solve more complicated systems of hyperbolic equations.

\begin{Thm} \label{hyperbolic2}
Consider the {\bf quasi-linear, diagonal second order hyperbolic system}
\[
g^{\mu\nu}(x,\phi,\partial\phi) \nabla_\mu \partial_\nu \phi_i = F_i(x,\phi,\partial\phi) \qquad (i=1, \ldots, n),
\]
where $g^{\mu\nu}$ and $F_i$ are smooth and $\nabla$ is any connection on some manifold $M$. Let $(\phi_0)_1, \ldots, (\phi_0)_n$ be a solution of this system, and define $(g_0)^{\mu\nu}=g^{\mu\nu}(x,\phi_0,\partial\phi_0)$. Assume that $(M,g)$ is globally hyperbolic, and let $S$ be a Cauchy hypersurface. Then the system above yields a well-posed Cauchy problem with initial data in $S$, in the following sense: given initial data in $S$ sufficiently close to the initial data for  $(\phi_0)_1, \ldots, (\phi_0)_n$ there exists an open neighborhood $V$ of $S$ such that the system has a unique solution in $V$, and $(V,g(x,\phi,\partial\phi))$ is globally hyperbolic. Moreover, the solutions depend continuously on the initial data, and if two initial data sets coincide on some closed subset $B \subset S$ then the corresponding solutions coincide in $D(B)$.
\end{Thm}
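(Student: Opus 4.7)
The plan is to construct the solution by Picard iteration, reducing each step to the linear hyperbolic theorem stated just above. Writing $\phi = (\phi_1,\dots,\phi_n)$, set $\phi^{(0)} = \phi_0$ and define $\phi^{(k+1)}$ as the unique smooth solution of the linear diagonal second order system
\[
g^{\mu\nu}\bigl(x,\phi^{(k)},\partial\phi^{(k)}\bigr)\,\nabla_\mu\partial_\nu \phi^{(k+1)}_i = F_i\bigl(x,\phi^{(k)},\partial\phi^{(k)}\bigr),
\]
with the prescribed Cauchy data on $S$. Since the maps $(x,p,q)\mapsto g^{\mu\nu}(x,p,q)$ and $(x,p,q)\mapsto F_i(x,p,q)$ are smooth and $(g_0)^{\mu\nu}$ is Lorentzian with $(M,g_0)$ globally hyperbolic and $S$ Cauchy, for initial data sufficiently close in a high $C^k$-norm to that of $\phi_0$, and for iterates that remain $C^1$-close to $\phi_0$ on some fixed neighborhood $V$ of $S$, the metric $g(x,\phi^{(k)},\partial\phi^{(k)})$ stays Lorentzian and $(V,g(x,\phi^{(k)},\partial\phi^{(k)}))$ stays globally hyperbolic with $S$ Cauchy. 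The previous theorem then yields $\phi^{(k+1)}$.

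The analytic core of the proof is to establish uniform Sobolev estimates on the sequence $\{\phi^{(k)}\}$ over a common neighborhood $V$ of $S$. Fix an integer $s$ with $s > \tfrac{\dim M}{2}+1$, so that the Sobolev embedding controls the $C^1$-norm. Contracting the equation for $\phi^{(k+1)}$ (and its spatial derivatives up to order $s$) with the future timelike vector dual to a local time function and integrating the resulting divergence identity over lens-shaped domains bounded by $S$ and a nearby spacelike hypersurface, just as in the Klein-Gordon argument earlier in the chapter, yields a Gronwall-type inequality
\[
\mathcal{E}_s\bigl(\phi^{(k+1)}\bigr)(t) \le C\,\mathcal{E}_s\bigl(\phi^{(k+1)}\bigr)(0) + C\int_0^t \Bigl(1 + \mathcal{E}_s\bigl(\phi^{(k)}\bigr)(\tau)\Bigr)\, d\tau
\]
on some interval $[0,T]$, where $\mathcal{E}_s$ denotes the spatial $H^s$-energy and where Moser-type product and composition estimates are used to control the nonlinear dependence of the coefficients and source on $\phi^{(k)}$. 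A bootstrap argument then gives a uniform bound $\mathcal{E}_s(\phi^{(k)}) \le M$ on $[0,T]$ with $T>0$ independent of $k$, and in particular a uniform $C^1$-closeness to $\phi_0$ that validates the hypothesis used to define the iteration in the first place.

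Convergence is then straightforward: the difference $\psi^{(k+1)} = \phi^{(k+1)} - \phi^{(k)}$ solves a linear diagonal equation with uniformly bounded coefficients and a source quadratic in $\psi^{(k)}$ and its first derivatives, so the same energy identity gives a contraction in the lower norm $H^{s-1}$. Passing to the limit produces a solution $\phi \in H^{s-1}$ on $V$; smoothness is recovered by differentiating the equation and bootstrapping, using the smoothness of the data, of $g^{\mu\nu}$, and of $F_i$. Uniqueness and the finite-propagation statement follow by applying the domain-of-dependence part of the linear theorem to the equation satisfied by the difference of two solutions on the globally hyperbolic $(V,g(x,\phi,\partial\phi))$; continuous dependence on initial data follows by the same Gronwall estimate applied to differences of two solutions with nearby data.

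The main obstacle is closing the bootstrap, namely showing that the uniform $H^s$ bound persists on a time interval that does not shrink with $k$, and simultaneously that the perturbed metrics $g(x,\phi^{(k)},\partial\phi^{(k)})$ remain uniformly globally hyperbolic on $V$ so that the lens-shaped integration domains have the correct causal structure at every iteration step. Both hinge on the fact that $s > \tfrac{\dim M}{2}+1$ provides $C^1$-control via Sobolev embedding, combined with careful Moser estimates for compositions of smooth nonlinearities with Sobolev functions.
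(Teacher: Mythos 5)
Your proposal follows essentially the same route as the paper's proof, which is itself only a sketch of the very iteration you describe: freeze the coefficients at the previous iterate, solve the resulting linear diagonal hyperbolic system via the preceding theorem, and show the sequence converges to the solution of the quasi-linear system. Your version supplies the standard analytic machinery (uniform $H^s$ energy estimates, Gronwall, contraction in a lower norm, and bootstrapping for smoothness) that the paper leaves implicit, so it is a correct and more detailed rendering of the same argument.
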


\begin{proof}
The idea of the proof is to start with the linear hyperbolic system
\[
g^{\mu\nu}(x,\phi_0,\partial\phi_0) \nabla_\mu \partial_\nu \phi_i = F_i(x,\phi_0,\partial\phi_0) \qquad (i=1, \ldots, n),
\]
which by the previous theorem has a unique solution $\phi_1$ close to $\phi_0$. Because of this, there exists a neighborhood $V_1$ of $S$ such that $(V_1,g(x,\phi_1,\partial\phi_1))$ is globally hyperbolic with Cauchy hypersurface $S$, and so the system
\[
g^{\mu\nu}(x,\phi_1,\partial\phi_1) \nabla_\mu \partial_\nu \phi_i = F_i(x,\phi_1,\partial\phi_1) \qquad (i=1, \ldots, n),
\]
again has a unique solution $\phi_2$ close to $\phi_1$. Iterating this procedure we obtain a sequence $\phi_n$ which can then be shown to converge to the unique solution of the quasi-linear hyperbolic system.
\end{proof}

\section{Maxwell's equations: constraints and gauge} \label{sec5.3}

As a warm-up problem to solving the Einstein field equations we consider the considerably easier problem of solving the Maxwell equations without sources in flat Minkowski spacetime. These equations can be split into what we shall call {\bf constraint equations},
\[
\begin{cases}
\dive {\bf E} = 0 \\
\dive {\bf B} = 0 
\end{cases},
\]
and {\bf evolution equations}:
\[
\begin{cases}
\displaystyle \frac{\partial {\bf E}}{\partial t} = \curl {\bf B} \\ \\
\displaystyle \frac{\partial {\bf B}}{\partial t} = - \curl {\bf E}
\end{cases}.
\]
One expects that the evolution equations completely determine ${\bf E}(t, {\bf x})$ and ${\bf B}(t, {\bf x})$ from initial data
\[
\begin{cases}
{\bf E}(0, {\bf x}) = {\bf E}_0({\bf x}) \\
{\bf B}(0, {\bf x}) = {\bf B}_0({\bf x})
\end{cases}.
\]
This initial data, however, is not completely free: it must satisfy the constraint equations
\[
\dive {\bf E}_0 = \dive {\bf B}_0 = 0.
\]
This suffices to guarantee that the constraint equations are satisfied by ${\bf E}(t, {\bf x})$ and ${\bf B}(t, {\bf x})$, since they are preserved by the evolution: for instance,
\[
\frac{\partial}{\partial t} (\dive {\bf E}) = \dive \left( \frac{\partial {\bf E}}{\partial t} \right) = \dive (\curl {\bf B}) = 0.
\]
As we will see, solving the Einstein field equations will also require splitting them into constraint equations and evolution equations. Another issue that will have to be dealt with, gauge freedom, also occurs when solving the Maxwell equations by using the electromagnetic {\bf gauge potentials}. To do so, we note that two of the Maxwell equations (those which do not admit sources) are equivalent to the existence of a vector potential ${\bf A}$ and a scalar potential $\phi$ in terms of which ${\bf B}$ and ${\bf E}$ can be written:
\[
\begin{cases}
\dive {\bf B} = 0 \\
\displaystyle \curl {\bf E} = - \frac{\partial {\bf B}}{\partial t} 
\end{cases}
\Leftrightarrow
\begin{cases}
{\bf B} = \curl {\bf A} \\
\displaystyle {\bf E} = - \grad \phi - \frac{\partial {\bf A}}{\partial t} 
\end{cases}.
\]
These potentials, however, are nonunique: given any smooth function $\chi$, the potentials
\[
\begin{cases}
{\bf A}' = {\bf A} + \grad \chi \\
\displaystyle \phi' = \phi - \frac{\partial \chi}{\partial t} 
\end{cases}
\]
yield the same fields ${\bf B}$ and ${\bf E}$ (${\bf A}'$ and $\phi'$ are said to be related to ${\bf A}$ and $\phi$ by a {\bf gauge transformation}). The remaining Maxwell equations can now be written as
\[
\begin{cases}
\dive {\bf E} = 0 \\
\displaystyle \curl {\bf B} = \frac{\partial {\bf E}}{\partial t}
\end{cases}
\Leftrightarrow
\begin{cases}
\displaystyle \Delta \phi + \frac{\partial}{\partial t} (\dive {\bf A}) = 0 \\
\displaystyle \grad (\dive {\bf A}) - \Delta {\bf A} = - \grad \frac{\partial \phi}{\partial t} - \frac{\partial^2 {\bf A}}{\partial t^2}
\end{cases}.
\]
Therefore, if there exist gauge potentials satisfying
\[
\dive {\bf A} = - \frac{\partial \phi}{\partial t}
\]
(the so-called {\bf Lorentz gauge}) then these equations reduce to uncoupled wave equations:
\[
\begin{cases}
\Box{\phi} = 0 \\
\Box {\bf A} = 0
\end{cases}
\]
To solve the Maxwell equations using the gauge potentials we then solve these wave equations with initial data $\phi_0$, $\left(\frac{\partial \phi}{\partial t}\right)_0$ and ${\bf A}_0$, $\left(\frac{\partial {\bf A}}{\partial t}\right)_0$ satisfying:
\begin{enumerate}
\item
$\curl {\bf A}_0 = {\bf B}_0$ (possible because $\dive {\bf B}_0 = 0$);
\item
$\phi_0 = 0$ (by choice);
\item
$\left(\frac{\partial {\bf A}}{\partial t}\right)_0 = - {\bf E}_0$ (giving the correct initial electric field);
\item
$\left(\frac{\partial \phi}{\partial t}\right)_0 = - \dive {\bf A}_0$ (so that the Lorentz gauge condition holds).
\end{enumerate}
These potentials will determine a solution of the Maxwell equations with the correct initial data {\bf if the Lorentz gauge condition holds for all time}. Now
\[
\Box \left( \dive {\bf A} + \frac{\partial \phi}{\partial t} \right) = \dive (\Box {\bf A}) + \frac{\partial}{\partial t} (\Box \phi) = 0
\]
and
\[
\left( \dive {\bf A} + \frac{\partial \phi}{\partial t} \right)_0 = 0.
\]
Moreover,
\[
\frac{\partial}{\partial t} \left( \dive {\bf A} + \frac{\partial \phi}{\partial t} \right) = \dive\left(\frac{\partial {\bf A}}{\partial t}\right) + \frac{\partial^2 \phi}{\partial t^2} =  \dive\left(\frac{\partial {\bf A}}{\partial t}\right) + \Delta \phi,
\]
and so
\[
\left(\frac{\partial}{\partial t} \left( \dive {\bf A} + \frac{\partial \phi}{\partial t} \right) \right)_0 = - \dive {\bf E}_0 = 0.
\]
By uniqueness of solution of the wave equation, we conclude that the Lorentz gauge condition does hold for all time, and so the potentials obtained by solving the wave equation with the initial conditions above do determine the solution of the Maxwell equations with initial data ${\bf B}_0$, ${\bf E}_0$.

\begin{Remark}
The electromagnetic potentials can be seen as the components of the {\bf electromagnetic potential one-form}
\[
A = - \phi \, dt + A^1 dx + A^2 dy + A^3 dz.
\]
Note that a gauge transformation can be written as
\[
A' = A + d\chi.
\]
The electric and magnetic fields can in turn be seen as the components of the {\bf Faraday tensor}
\begin{align*}
F = dA & = E^1 dx \wedge dt + E^2 dy \wedge dt + E^3 dz \wedge dt \\
& + B^1 dy \wedge dz + B^2 dz \wedge dx + B^3 dx \wedge dy.
\end{align*}
It should be obvious that $F$ remains invariant under a gauge transformation. The Maxwell equations can be written as
\[
dF = 0 \Leftrightarrow F = dA
\]
and
\[
d \star F = 0,
\]
since
\begin{align*}
\star F & = E^1 dy \wedge dz + E^2 dz \wedge dx + E^3 dx \wedge dy \\
& - B^1 dx \wedge dt - B^2 dy \wedge dt - B^3 dz \wedge dt
\end{align*}
(that is, the Hodge star replaces ${\bf B}$ with ${\bf E}$ and ${\bf E}$ with $-{\bf B}$).
\end{Remark}

\section{Einstein's equations} \label{sec5.4}

Let $(M,g)$ be a globally hyperbolic spacetime and $S \subset M$ a Cauchy hypersurface. Let us write $g$ in the Gauss Lemma form near $S$,
\[
g = -dt^2 + h_{ij}(t,x) dx^i dx^j,
\]
so that the level sets of $t$ are Riemannian manifolds with induced metric $h(t)=h_{ij}dx^i dx^j$ and second fundamental form 
\[
\displaystyle K(t)=\frac12\frac{\partial h_{ij}}{\partial t}dx^idx^j.
\]
For this choice of coordinates (gauge), finding the metric is equivalent to finding a time-dependent Riemannian metric $h(t)$ on $S$. The vacuum Einstein field equations $G_{\mu\nu}=0$ can be split into {\bf constraint equations}
\[
\begin{cases}
G_{00} = 0 \\
G_{0i} = 0
\end{cases}
\Leftrightarrow
\begin{cases}
\bar{R} + \left(K^{i}_{\,\, i}\right)^2 - K_{ij} K^{ij} = 0 \\
\bar{\nabla}_i K^j_{\,\, j} - \bar{\nabla}_j K^{j}_{\,\, i} = 0
\end{cases}
\]
and {\bf evolution equations}
\[
G_{ij} = 0 \Leftrightarrow \frac{\partial}{\partial t} K_{ij} = - \bar{R}_{ij} + 2 K_{il} K^{l}_{\,\, j} - K^{l}_{\,\, l} K_{ij} \; ,
\]
where $\bar{\nabla}$, $\bar{R}$ and $\bar{R}_{ij}$ are the Levi-Civita connection, the scalar curvature and the Ricci tensor of $h$. Note that the evolution equations allow us to evolve $h(t)$ and $K(t)$, whereas the constraint equations restrict their initial values $h(0)$ and $K(0)$. If the initial data satisfy the constraint equations, so does the solution of the evolution equations. Indeed, the contracted Bianchi identities give us for free the equations
\begin{align*}
\nabla^\alpha G_{\alpha \beta} = 0 & \Leftrightarrow \nabla^0 G_{0 \beta} +  \nabla^i G_{i \beta} = 0 \Leftrightarrow - \nabla_0 G_{0 \beta} + h^{ij} \nabla_j G_{i \beta} = 0 \\
& \Leftrightarrow \partial_0 G_{0 \beta} - \Gamma_{00}^\alpha G_{\alpha\beta} - \Gamma_{0\beta}^\alpha G_{0\alpha} = h^{ij} (\partial_j G_{i \beta} - \Gamma_{ji}^\alpha G_{\alpha\beta} - \Gamma_{j\beta}^\alpha G_{i \alpha}).
\end{align*}
If the evolution equations hold, we have $G_{ij} = \partial_\alpha G_{ij} = 0$, and so the contracted Bianchi identities become
\begin{align*}
&\begin{cases}
\partial_0 G_{00} = \Gamma_{00}^\alpha G_{\alpha 0} + \Gamma_{00}^\alpha G_{0\alpha} + h^{ij} (\partial_j G_{i 0} - \Gamma_{ji}^\alpha G_{\alpha 0} - \Gamma_{j 0}^0 G_{i 0}) \\
\partial_0 G_{0k} = \Gamma_{00}^0 G_{0k} + \Gamma_{0k}^\alpha G_{0\alpha} - h^{ij} (\Gamma_{ji}^0 G_{0k} + \Gamma_{jk}^0 G_{i 0})
\end{cases} \\
&\Leftrightarrow\begin{cases}
\partial_0 G_{00} = h^{ij} (\partial_j G_{i 0} - K_{ji} G_{00} - \bar\Gamma_{ji}^k G_{k0} ) \\
\partial_0 G_{0k} = K^{i}_{\,\,k} G_{0i} - h^{ij} (K_{ji} G_{0k} + K_{jk} G_{i0})
\end{cases}
.
\end{align*}
This is a system of linear first order partial differential equations on $G_{0 \beta}$; integrating the last three equations, and then the first, it is easy to see that, since the initial data vanishes at $t=0$, the solution vanishes for all $t$.

\begin{Remark}
In general, any time function $t:M \to \bbR$ whose level sets $S_t$ are Cauchy hypersurfaces can be completed into a system of local coordinates $(t,x^1,x^2,x^3)$. If
\[
N    = - \frac{\grad t}{|\grad t|}
\]
is the future-pointing unit normal to $S_t$, we have the orthogonal decomposition
\[
\frac{\partial}{\partial t} = \alpha N + \beta,
\]
where the positive function $\alpha$ is known as the {\bf lapse function} and the vector field $\beta$, tangent to $S_t$, is known as the {\bf shift vector} (Figure~\ref{lapse}). In these coordinates, the metric is written
\begin{align*}
g & = (-\alpha^2 + \beta_i \beta^i) dt^2 + 2 \beta_i dt dx^i + h_{ij} dx^i dx^j \\
& = -\alpha^2 dt^2 + h_{ij} (dx^i - \beta^i dt)(dx^j - \beta^j dt).
\end{align*}

\begin{figure}[h!]
\begin{center}
\psfrag{S}{$S_t$}
\psfrag{n}{$N$}
\psfrag{an}{$\alpha N$}
\psfrag{b}{$\beta$}
\psfrag{t}{$\frac{\partial}{\partial t}$}
\epsfxsize=.8\textwidth
\leavevmode
\epsfbox{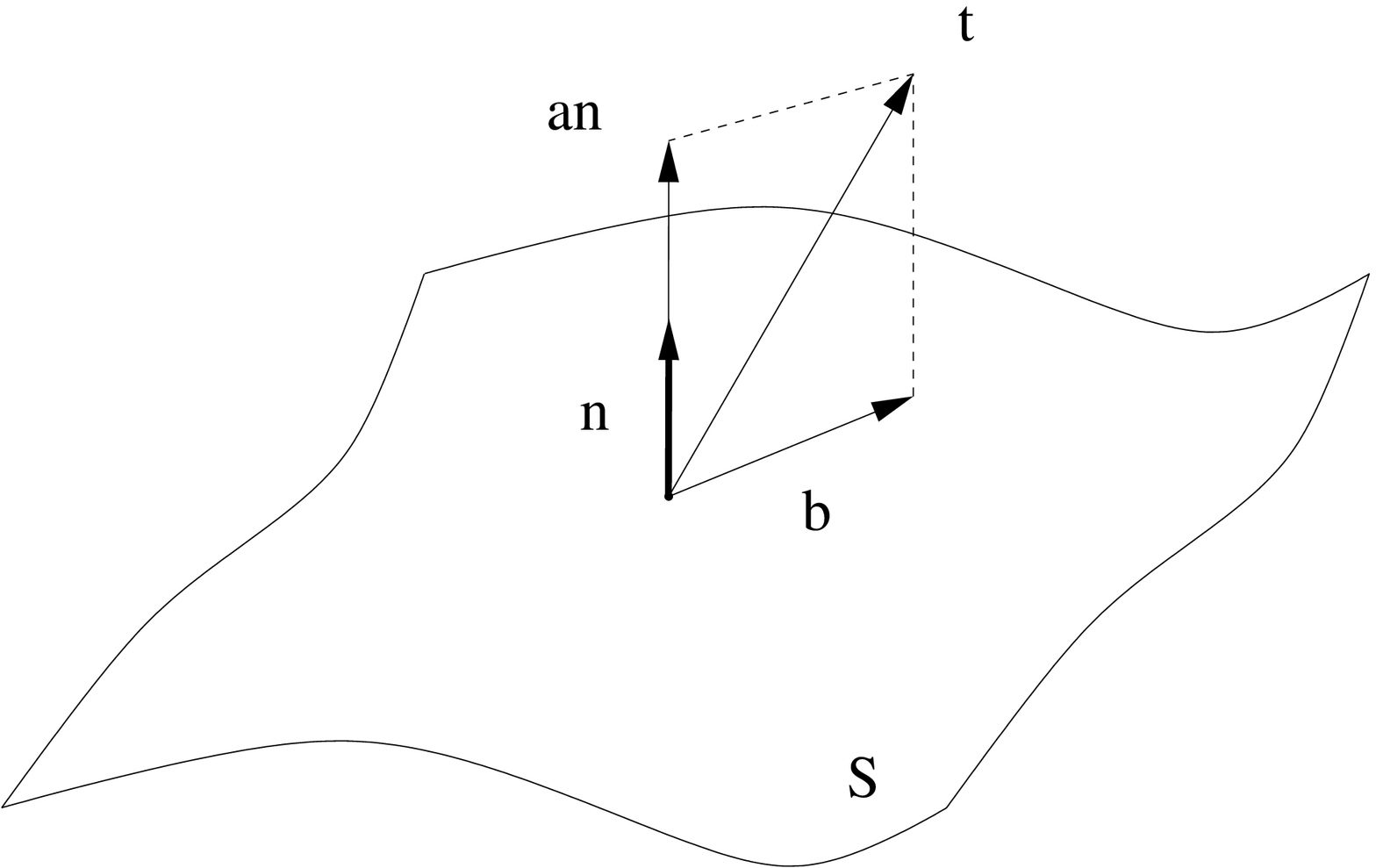}
\end{center}
\caption{Lapse function and shift vector.} \label{lapse}
\end{figure}

Note that the Riemannian metric of the Cauchy hypersurfaces $S_t$ is still $h(t)=h_{ij}dx^i dx^j$; the lapse function and the shift vector merely specify how points with the same coodinates $x^i$ in different Cauchy hypersurfaces $S_t$ are related, which is a matter of choice, that is, a gauge freedom. The Gauss Lemma form of the metric, for instance, corresponds $\alpha=1$ and $\beta=0$, but this is not necessarily the best choice. 
\end{Remark}

To prove the existence and uniqueness result for the vacuum Einstein field equations it is best to choose so-called {\bf harmonic coordinates}, that is, coordinates $x^\mu$ satisfying the wave equation, $\Box x^\mu=0$. This equation can be written as
\[
H^\mu \equiv \partial_\alpha g^{\alpha\mu} + \frac12 g^{\alpha\mu} g^{\rho\sigma}\partial_{\alpha}g_{\rho\sigma} = 0
\]
We define the {\bf reduced Ricci tensor} to be
\[
R^H_{\mu\nu} \equiv R_{\mu\nu} + g_{\alpha(\mu}\partial_{\nu)}H^\alpha = - \frac12 g^{\alpha\beta} \partial_\alpha \partial_\beta g_{\mu\nu} + F_{\mu\nu}(g,\partial g),
\]
and the {\bf reduced Einstein equations} to be
\[
R^H_{\mu\nu} = 0.
\]
Note that $R^H_{\mu\nu}=R_{\mu\nu}$ if the coordinates $x^\mu$ are harmonic; in this case, the reduced Einstein equations coincide with the Einstein equations. Moreover, the reduced Einstein equations are a quasi-linear, diagonal second order hyperbolic system for the components of the metric. Given initial data $(h_{ij},K_{ij})$ satisfying the constraint equations, consider the following initial data for the reduced Einstein equations:
\begin{enumerate}
\item
$g_{ij} = h_{ij}$ (forced);
\item
$g_{i0} = 0$ (by choice);
\item
$g_{00} = -1$ (by choice);
\item
$\frac{\partial g_{ij}}{\partial t} = 2K_{ij}$ (forced);
\item
$\frac{\partial g_{0\mu}}{\partial t}$ such that $H^\mu = 0$ in $S$.
\end{enumerate} 
If $(h_{ij},K_{ij})$ is close to the trivial data $(\delta_{ij},0)$ for the Minkowski spacetime, Theorem~\ref{hyperbolic2} guarantees that we can solve the reduced Einstein equations in some open neighborhood $V$ of $S$. From
\[
G_{\mu\nu} = R^H_{\mu\nu} - \frac12 R^H g_{\mu\nu} - g_{\alpha(\mu}\partial_{\nu)}H^\alpha + \frac12 g_{\mu\nu} \partial_\alpha H^\alpha
\]
it is easily seen that if $g_{\mu\nu}$ satisfies the reduced Einstein equations $R^H_{\mu\nu} = 0$ then the contracted Bianchi identities yield
\[
\nabla^\mu G_{\mu\nu} = 0 \Rightarrow g^{\mu\nu} \partial_\mu \partial_\nu H^\alpha + A^{\alpha\mu}_\beta \partial_\mu H^\beta = 0.
\]
Moreover, we have from the constraint equations
\[
G_{\mu 0} = 0 \Rightarrow \partial_0 H^\mu = 0
\]
on $S$. Therefore $H^\mu$ is the solution of a linear, diagonal second order hyperbolic system with vanishing initial conditions. We conclude that $H^\mu=0$ in $V$, and therefore $g_{\mu\nu}$ solves the Einstein equations in $V$.

We can always assume that our initial data is close to the trivial data by rescaling: if $x^\mu$ are local coordinates such that $g_{\mu\nu}=\eta_{\mu\nu}$ at some point $p \in S$, where $\eta_{\mu\nu}$ is the Minkowski metric, we define new coordinates $\bar{x}^\mu$ by the formula
\[
\bar{x}^\mu = \frac1{\lambda} x^\mu,
\]
where $\lambda>0$ is a constant. In these new coordinates the metric is
\[
\bar{g}_{\mu\nu} = \frac{\partial x^\alpha}{\partial \bar{x}^\mu} \frac{\partial x^\beta}{\partial \bar{x}^\nu} g_{\alpha\beta} = \lambda^2 g_{\mu\nu}.
\]
If this metric is a solution of the vacuum Einstein field equations, then so is
\[
\tilde{g}_{\mu\nu} = \frac1{\lambda^2}\bar{g}_{\mu\nu} = g_{\mu\nu}.
\]
Note that this metric satisfies
\[
\frac{\partial \tilde{g}_{\mu\nu}}{\partial \bar{t}} =  \frac{\partial g_{\mu\nu}}{\partial t} \frac{\partial t}{\partial \bar{t}} = \lambda \frac{\partial g_{\mu\nu}}{\partial t}.
\]
Therefore for $\lambda$ sufficiently small the initial data $(\tilde{g}_{\mu\nu},\frac{\partial \tilde{g}_{\mu\nu}}{\partial \bar{t}})$  will be close to $(\eta_{\mu\nu},0)$.

In this way we can obtain a local solution of the Einstein field equations in a neighborhood of each point $p \in S$. By uniqueness of solution of a quasi-linear, diagonal second order hyperbolic system we can glue these local solutions to obtain a global solution defined on an open neighborhood of $S$. In other words, given initial data $(h,K)$ satisfying the constraint equations, there exists a globally hyperbolic spacetime $(M,g)$ satisfying the Einstein field equations such that $S$ is a Cauchy surface with induced metric $h$ and second fundamental form $K$.

Finally, now that we have proved the existence of such solutions, it is possible to prove the existence of a {\bf maximal} solution. The proof is as follows: if $(M_1,g_1)$ and $(M_2,g_2)$ are two solutions of the Einstein field equations containing $S$ with the same initial data $(h,K)$, we say that $(M_1,g_1) \leq (M_2,g_2)$ if there is an isometric embedding $\psi:M_1 \to M_2$ preserving $(S,h,K)$. Note that it is possible that neither $(M_1,g_1) \leq (M_2,g_2)$ nor $(M_2,g_2) \leq (M_1,g_1)$. A set of solutions with the property that any two are related by $\leq$ is called a {\bf chain}; it is clear that every chain has an upper bound (the union up to isometric embeddings). Under these conditions, {\bf Zorn's Lemma} guarantees that there is a maximal element $(M,g)$ in the set of all solutions, that is, a solution which cannot be isometric embedded into any other solution. It is possible to prove that this element is unique (if there were two such maximal solutions it would be possible to patch them together to construct a larger solution). We then have the following fundamental result.

\begin{Thm} ({\bf Choquet-Bruhat \cite{CB55,CBG69}})
Let $(S,h)$ be a $3$-dimensional Riemannian manifold and $K$ a symmetric tensor field in $S$ satisfying the constraint equations
\[
\begin{cases}
\bar{R} + \left(K^{i}_{\,\, i}\right)^2 - K_{ij} K^{ij} = 0 \\
\bar{\nabla}_i K^j_{\,\, j} - \bar{\nabla}_j K^{j}_{\,\, i} = 0
\end{cases},
\]
where $\bar{\nabla}$ and $\bar{R}$ are the Levi-Civita connection and the scalar curvature of $h$. Then there exists a unique (up to isometry) $4$-dimensional Lorentzian manifold $(M,g)$, called the {\bf maximal Cauchy development} of $(S,h,K)$, satisfying:
\begin{enumerate}[(i)]
\item
$(M,g)$ is a solution of the vacuum Einstein equations;
\item
$(M,g)$ is globally hyperbolic with Cauchy surface $S$;
\item
The induced metric and second fundamental forms of $S$ are $h$ and $K$;
\item
Any $4$-dimensional Lorentzian manifold satisfying $(i)-(iii)$ can be isometrically embedded into $(M,g)$.
\end{enumerate}
Moreover, if $(S,h,K)$ and  $(\bar{S},\bar{h},\bar{K})$ coincide on some closed subset $B \cong \bar{B}$ then $D(B)$ and $D(\bar{B})$ are isometric. Finally, $g$ depends continuously on the initial data $(h,K)$ (for appropriate topologies).
\end{Thm}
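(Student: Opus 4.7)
The plan is to reduce the problem to the quasi-linear diagonal second-order hyperbolic system of Theorem~\ref{hyperbolic2}, via the harmonic gauge construction already outlined in Section~\ref{sec5.4}. First I would formulate the reduced Einstein equations $R^H_{\mu\nu}=0$, whose principal part is $-\tfrac12 g^{\alpha\beta}\partial_\alpha\partial_\beta g_{\mu\nu}$, so that the system fits the hypotheses of Theorem~\ref{hyperbolic2}. Given the initial data $(h,K)$, I would construct full initial data for the components $g_{\mu\nu}$ by setting $g_{ij}=h_{ij}$, $g_{0i}=0$, $g_{00}=-1$, $\partial_0 g_{ij}=2K_{ij}$, and choosing $\partial_0 g_{0\mu}$ so that the harmonic coordinate condition $H^\mu=0$ holds on $S$. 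Applying Theorem~\ref{hyperbolic2} (after a rescaling $x^\mu\mapsto x^\mu/\lambda$ near each point $p\in S$ to bring the data arbitrarily close to Minkowski data, as required by the perturbative statement) yields a local solution of the reduced equations in a neighborhood $V_p$ of $p$.

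Next I would verify that $g$ actually solves the full vacuum Einstein equations. From the identity
\[
G_{\mu\nu} = R^H_{\mu\nu} - \tfrac12 R^H g_{\mu\nu} - g_{\alpha(\mu}\partial_{\nu)}H^\alpha + \tfrac12 g_{\mu\nu}\partial_\alpha H^\alpha,
\]
combined with the contracted Bianchi identity $\nabla^\mu G_{\mu\nu}=0$, one finds that $H^\mu$ satisfies a linear, diagonal second-order hyperbolic system. Its values on $S$ vanish by construction, and the constraint equations $G_{0\mu}|_S=0$ force $\partial_0 H^\mu|_S=0$; uniqueness for this linear problem then gives $H^\mu\equiv 0$ in $V_p$, so $R^H_{\mu\nu}=R_{\mu\nu}=0$. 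Covering $S$ by such neighborhoods and invoking the local uniqueness part of Theorem~\ref{hyperbolic2} (which ensures that the local pieces differ only by a change of harmonic coordinates, hence are geometrically isometric on overlaps) allows these patches to be glued into a globally hyperbolic development $(M_0,g_0)$ with Cauchy surface $S$ and the prescribed initial data.

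For the maximal Cauchy development, I would introduce the collection $\mathcal{D}$ of all such globally hyperbolic developments of $(S,h,K)$, partially ordered by declaring $(M_1,g_1)\leq(M_2,g_2)$ whenever there exists a time-orientation-preserving isometric embedding $\psi:M_1\hookrightarrow M_2$ fixing $S$, $h$ and $K$. Every totally ordered chain admits an upper bound, obtained as the direct limit along the embeddings, so Zorn's Lemma produces a maximal element $(M,g)$. The domain-of-dependence statement (matching data on $B\cong\bar{B}$ yields isometric $D(B)$ and $D(\bar{B})$) is a direct consequence of the finite propagation speed built into Theorem~\ref{hyperbolic2}, and the continuous dependence on $(h,K)$ is inherited from the corresponding property of the reduced system.

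The hardest step is uniqueness of the maximal development. If $(M_1,g_1)$ and $(M_2,g_2)$ were two maximal developments, I would attempt to construct a common extension by gluing them along the largest open set on which they agree; local uniqueness for the reduced equations (in harmonic coordinates around any point joined to $S$ by a causal diamond) guarantees that such a common region exists near $S$ and can be propagated. The subtle obstacle, which is the heart of the classical Choquet-Bruhat--Geroch argument, is to show that the union so obtained is Hausdorff: one must rule out the possibility of two distinct ``copies'' of a boundary point in the glued manifold. This is handled by showing that any would-be non-Hausdorff pair could be enclosed in a small globally hyperbolic neighborhood in which harmonic-gauge uniqueness forces them to coincide, contradicting maximality. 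Once this is done, both $(M_i,g_i)$ embed into a common development, so by maximality they are isometric to each other, completing the proof.
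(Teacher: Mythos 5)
Your proposal is correct and follows essentially the same route as the paper's own argument in Section~\ref{sec5.4}: harmonic gauge, the reduced Einstein equations solved via Theorem~\ref{hyperbolic2} after rescaling, propagation of $H^\mu=0$ through a linear hyperbolic system using the contracted Bianchi identities and the constraints, gluing of local patches, and Zorn's Lemma for the maximal development. Your additional remarks on the Hausdorff subtlety in the uniqueness of the maximal development go slightly beyond the paper's sketch (which only asserts that two maximal solutions could be patched into a larger one) but are consistent with the classical Choquet-Bruhat--Geroch argument.
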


\section{Constraint equations} \label{sec5.5}

To obtain initial data for the Einstein equations it is necessary to solve the nonlinear constraint equations
\[
\begin{cases}
\bar{R} + \left(K^{i}_{\,\, i}\right)^2 - K_{ij} K^{ij} = 0 \\
\bar{\nabla}_i K^j_{\,\, j} - \bar{\nabla}_j K^{j}_{\,\, i} = 0
\end{cases}.
\]
The {\bf Lichnerowicz method} for solving these equations is as follows: one starts by choosic an arbitrary Riemannian metric $h$ and an arbitrary symmetric tensor $K$ satisfying
\[
\begin{cases}
K^{i}_{\,\, i} = 0 \\
\bar{\nabla}_j K^{j}_{\,\, i} = 0
\end{cases}
\]
(that is, $K$ is traceless and divergenceless). These choices satisfy the second, but not the first, constraint equations. On then defines the conformally rescaled metric
\[
\tilde{h}=u^4 h
\]
and the rescaled symmetric tensor
\[
\tilde{K}=u^{-2} K
\]
Clearly $\tilde{K}$ is still traceless, and it is easily seen that it is also divergenceless:
\[
\tilde{\nabla}_j \tilde{K}^{j}_{\,\, i} = 0,
\]
where $\tilde{\nabla}$ is the Levi-Civita connection of $\tilde{h}$. The first constraint equation for the metric $\tilde{h}$ and the symmetric tensor $\tilde{K}$, on the other hand, becomes
\[
\tilde{R} - \tilde{K}_{ij} \tilde{K}^{ij} = 0 \Leftrightarrow \bar{\Delta} u - \frac18 \bar{R} u + \frac18 u^{-7} K_{ij} K^{ij} = 0.
\] 
This is a nonlinear elliptic equation on one variable, much simpler than the original system. If one chooses the so-called time symmetric case $K=0$ (the reason for this designation being that in the Gauss Lemma coordinates $t \to -t$ is clearly an isometry of the solution), this equation becomes linear:
\[
\bar{\Delta} u - \frac18 \bar{R} u = 0.
\]
As an example, choose $h$ to be the Euclidean metric, $h_{ij} = \delta_{ij}$; then $\bar{R}=0$ and the equation above is simply the Laplace equation
\[
\Delta u = 0.
\]
A simple solution, related to the gravitational field of a point mass, is
\[
u = 1 + \frac{M}{2r}.
\]
This solution leads exactly to the Schwarzschild solution, which in isotropic coordinates is written
\[
ds^2 = - \left( \frac{1 - \frac{M}{2r}}{1 + \frac{M}{2r}}\right)^2 dt^2 + \left(1 + \frac{M}{2r}\right)^4 (dr^2 + r^2(d\theta^2 + \sin^2 \theta d\varphi^2)).
\]
Since the Laplace equation is linear, one can superimpose solutions. Thus an initial data set for a set of $N$ black holes initially at rest can be obtained by choosing
\[
u = \sum_{i=1}^N \left(1 + \frac{M_i}{2r_i}\right),
\]
where $M_i > 0$ and $r_i$ is the Euclidean distance to a fixed point $p_i \in \bbR^3$ ($i=1, \ldots, N$).

\section{Einstein equations with matter} \label{sec5.6}

So far we have analyzed only the vacuum Einstein field equations. To include matter (and also a cosmological constant $\Lambda$) we must introduce matter fields, generically represented by $\psi$, and consider a system of the form
\[
\begin{cases}
G_{\mu\nu} + \Lambda g_{\mu\nu} = 8 \pi T_{\mu\nu}(g,\psi) \\
\text{Field equations for } \psi
\end{cases}.
\]
If the equations for $\psi$ form a hyperbolic system then Choquet-Bruhat's theorem still applies, with the constraint equations
\[
\begin{cases}
\bar{R} + \left(K^{i}_{\,\, i}\right)^2 - K_{ij} K^{ij} - 2\Lambda = 16 \pi \rho \\
\bar{\nabla}_i K^j_{\,\, j} - \bar{\nabla}_j K^{j}_{\,\, i} = 8 \pi J_i
\end{cases}.
\]
Here $\rho = T_{00}$ and $J_i = - T_{0i}$ are computed from $h$ and the initial data for $\psi$. As an example, the Einstein-Klein-Gordon system is given by
\[
\begin{cases}
G_{\mu\nu} + \Lambda g_{\mu\nu} = 8 \pi \left( \partial_\mu \phi \, \partial_\nu \phi - \frac12 g_{\mu\nu} \left( \partial_\alpha \phi \, \partial^\alpha \phi + m^2 \phi^2 \right) \right) \\
\nabla^\mu \partial_\mu \phi - m^2 \phi = 0
\end{cases},
\]
and we have
\[
\begin{cases}
\rho = \frac12 \left( {\psi_0}^2 + h^{ij} \partial_i \phi_0 \partial_j \phi_0 + m^2 {\phi_0}^2 \right) \\
J_i = - \psi_0 \partial_i \phi_0
\end{cases},
\]
where $\phi_0$ and $\psi_0$ are the initial data for $\phi$.

\section{Exercises} \label{sec5.7}

\begin{enumerate}

\item
Let $(M,g)$ be a time oriented Lorentzian manifold and $X$ a future-pointing timelike vector field. Given a smooth function $\phi:M \to \bbR$ let
\[
T_{\mu\nu} = \partial_\mu \phi \partial_\nu \phi - \frac12 g_{\mu\nu} \left(\partial_\alpha\phi \partial^\alpha \phi + m^2 \phi^2 \right)
\]
be the energy-momentum tensor associated to the Klein-Gordon equation for $\phi$, and let $Y$ be the vector field defined by
\[
Y_\mu = T_{\mu\nu}X^\nu.
\]
Show that:
\begin{enumerate}
\item
$Y = (X \cdot \phi) \grad \phi - \frac12\left(\left\langle\grad\phi,\grad\phi\right\rangle + m^2\phi^2\right) X$;
\item
$Y$ is causal.
\item
$Y$ is past-pointing.
\end{enumerate}

\item
({\bf Sobolev inequality}) Let $Q$ be a closed solid cone in $\bbR^n$ with height $H$, solid angle $\Omega$ and vertex at the origin. Let $\psi:\bbR \to \bbR$ be a smooth nonincreasing function with $\psi(r)=1$ for $r<\frac{H}3$ and $\psi(r)=0$ for $r>\frac{2H}3$. Show that:
\begin{enumerate}
\item
For any smooth function $f:Q \to \bbR$ and any $k \in \bbN$ we have
\[
f(0)=\frac{(-1)^k}{(k-1)!} \int_0^{R(\theta)} r^{k-1}\frac{\partial^k}{\partial r^k} \left(\psi(r)f(r,\theta)\right) dr,
\]
where $(r,\theta)$ are the usual spherical coordinates in $\bbR^n$ and $r=R(\theta)$ is the equation for the base of the cone (here $f(r,\theta)$ represents the function $f$ written in spherical coordinates).
\item
There exists a constant $C$, depending on $k$ and $\Omega$, such that
\[
f(0)=C\int_Q r^{k-n} \frac{\partial^k}{\partial r^k}\left(\psi f\right).
\]
\item
For $k>\frac{n}2$ we have $|f(0)| \leq C' \|f\|_{H^k(Q)}$, where the constant $C'$ depends on $k$, $H$ and $\Omega$ only (you will need to use the Cauchy-Schwarz inequality for multiple integrals).
\end{enumerate}

\item
Consider a Lorentzian metric given in the Gauss Lemma form
\[
g = -dt^2 + h_{ij}(t,x) dx^i dx^j,
\]
so that the level sets of $t$ are Riemannian manifolds with induced metric $h(t)=h_{ij}dx^i dx^j$ and second fundamental form 
\[
K(t)=\frac12\frac{\partial h_{ij}}{\partial t}dx^idx^j.
\]
Show that in these coordinates: 
\begin{enumerate}
\item
The Christoffel symbols are
\[
\Gamma^0_{ij} = K_{ij}; \quad \Gamma^i_{jk} = \bar{\Gamma}^i_{jk}; \quad \Gamma^i_{0j} = K^i_{\,\,j},
\]
where $\bar{\Gamma}^i_{jk}$ are the Christoffel symbols of $h$.
\item
The components of the Riemann tensor are
\begin{align*}
\label{Riemann1} & R_{0i0}^{\,\,\,\,\,\,\,\, j} = - \frac{\partial}{\partial t} K^{j}_{\,\, i} - K_{il} K^{lj}; \\ 
\nonumber & R_{ij0}^{\,\,\,\,\,\,\,\, l} = - \bar{\nabla}_i K^l_{\,\, j} + \bar{\nabla}_j K^{l}_{\,\, i}; \\ 
\nonumber & R_{ijl}^{\,\,\,\,\,\,\,\, m} = \bar{R}_{ijl}^{\,\,\,\,\,\,\,\, m} + K_{il} K^{m}_{\,\,\,\, j} - K_{jl} K^{m}_{\,\,\,\, i},
\end{align*}
where $\bar{\nabla}$ is the Levi-Civita connection of $h$ and $\bar{R}_{ijl}^{\,\,\,\,\,\,\,\, m}$ are the components of the Riemann tensor of $h$.
\item
The time derivative of the inverse metric is given by the formula
\[
\frac{\partial h^{ij}}{\partial t} = -2K^{ij}.
\]
\item
The components of the Ricci tensor are
\begin{align*}
& R_{00} = - \frac{\partial}{\partial t} K^{i}_{\,\, i} - K_{ij} K^{ij}; \\
& R_{0i} = - \bar{\nabla}_i K^j_{\,\, j} + \bar{\nabla}_j K^{j}_{\,\, i}; \\
& R_{ij} = \bar{R}_{ij} + \frac{\partial}{\partial t} K_{ij} - 2 K_{il} K^{l}_{\,\, j} + K^{l}_{\,\, l} K_{ij},
\end{align*}
where $\bar{R}_{ij}$ are the components of the Ricci tensor of $h$.
\item
The scalar curvature is
\[
R = \bar{R} + 2 \frac{\partial}{\partial t} K^{i}_{\,\, i} + \left(K^{i}_{\,\, i}\right)^2 + K_{ij} K^{ij},
\]
where $\bar{R}$ is the scalar curvature of $h$.
\item
The component $G_{00}$ of the Einstein tensor is
\[
G_{00} = \frac12 \left( \bar{R} + \left(K^{i}_{\,\, i}\right)^2 - K_{ij} K^{ij} \right).
\]
\end{enumerate}

\item
Let $(M,g)$ be an $(n+1)$-dimensional Lorentzian manifold and $(x^0, \ldots, x^n)$ local coordinates on $M$. Show that:
\begin{enumerate}
\item
The condition for these coordinates to be harmonic is written
\[
\hspace{2cm} \nabla_\alpha \nabla^\alpha x^\mu = 0 \Leftrightarrow H^\mu \equiv \partial_\alpha g^{\alpha\mu} + \frac12 g^{\alpha\mu}g^{\rho\sigma}\partial_\alpha g_{\rho\sigma} = 0
\]
($\mu=0, \ldots, n$).
\item
The reduced Ricci tensor is
\[
\hspace{2cm} R^H_{\mu\nu} \equiv R_{\mu\nu} + g_{\alpha(\mu}\partial_{\nu)}H^\alpha = - \frac12 g^{\alpha\beta} \partial_\alpha \partial_\beta g_{\mu\nu} + F_{\mu\nu}(g,\partial g).
\]
\end{enumerate}

\item \label{development}
Denoting by $h_0$ the standard constant curvature metric on $\bbR^3$, $S^3$ or $H^3$, compute the cosmological constant and determine the maximal globally hyperbolic developments of the following sets of initial data for the vacuum Einstein equations (with cosmological constant):
\begin{enumerate}
\item $(\bbR^3,h_0,0)$;
\item $(\bbR^3,h_0,h_0)$;
\item $(S^3,h_0,0)$;
\item $(H^3,h_0,0)$;
\item $(H^3,h_0,h_0)$;
\item $(\bbR^3,h_0,\diag(p_1,p_2,p_3))$ with $p_1+p_2+p_3={p_1}^2+{p_2}^2+{p_3}^2=1$.
\end{enumerate}

\item
Let $(S,h,K)$ be an initial data set for the vacuum Einstein equations, with $K$ traceless and divergenceless. Given a smooth positive function $u:S \to \bbR$, consider the conformally rescaled metric $\tilde{h}=u^4 h$ and the symmetric tensor $\tilde{K}=u^{-2} K$. By using normal coordinates when convenient, show that:
\begin{enumerate}
\item
The Christoffel symbols $\tilde{\Gamma}^i_{jk}$ of $\tilde{h}$ are related to the Christoffel symbols $\bar{\Gamma}^i_{jk}$ of $h$ by
\[
\hspace{2cm} \tilde{\Gamma}^i_{jk} = \bar{\Gamma}^i_{jk} + 2 \partial_j (\log u) h^{i}_{\,\,\,\,k} + 2 \partial_k (\log u) h^{i}_{\,\,\,\,j} - 2 \partial^i (\log u) h_{jk}.
\]
\item
$\tilde{K}$ is divergenceless for the Levi-Civita connection of $\tilde{h}$.
\item
The Ricci tensor $\tilde{R}_{ij}$ of $\tilde{h}$ is related to the Ricci tensor $\bar{R}_{ij}$ of $h$ by
\begin{align*}
\hspace{2cm}  \tilde{R}_{ij} = & \, \bar{R}_{ij} - 2 \bar{\nabla}_i \partial_j (\log u) - 2 \bar{\Delta} (\log u) h_{ij} \\
& + 4 \partial_i (\log u) \partial_j(\log u) - 4 \left|\grad(\log u)\right|^2 h_{ij}.
\end{align*}
\item
The scalar curvature $\tilde{R}$ of $\tilde{h}$ is related to the scalar curvature $\bar{R}$ of $h$ by
\[
\tilde{R} = u^{-4} \bar{R} - 8 u^{-5} \bar{\Delta} u.
\]
\end{enumerate}

\item
Check that the metric
\[
\hspace{2cm} ds^2 = - \left( \frac{1 - \frac{M}{2r}}{1 + \frac{M}{2r}}\right)^2 dt^2 + \left(1 + \frac{M}{2r}\right)^4 (dr^2 + r^2(d\theta^2 + \sin^2 \theta d\varphi^2))
\]
is indeed the Schwarzschild metric by making the coordinate change
\[
R = r \left(1 + \frac{M}{2r}\right)^2.
\]

\end{enumerate}


\chapter{Positive mass theorem} \label{chapter6}

In this chapter we present the positive mass theorem. Following \cite{W84}, we start by defining the Komar mass for stationary spacetimes. We then discuss field theory and introduce the Einstein-Hilbert action as a means of motivating the definition of the ADM mass. Finally, we prove the (Riemannian) positive mass theorem and the (Riemannian) Penrose inequality for graphs, following \cite{L10}. For more details see \cite{Mars09, B11}.

\section{Komar mass} \label{sec6.1}

Recall that the Newtonian gravitational field satisfies
\[
\dive {\bf G} = - 4 \pi \rho,
\]
where $\rho$ is the mass density of the matter generating the field. Therefore the total mass of a given system is given by
\[
M = \int_{\bbR^3} \rho = - \frac1{4\pi} \int_{\bbR^3} \dive {\bf G} = - \frac1{4\pi} \int_{\Sigma} \left\langle {\bf G}, {\bf n} \right\rangle,
\]
where $\Sigma$ is any surface enclosing all the matter and ${\bf n}$ is the outward unit normal. The fact that $M$ does not depend on $\Sigma$ is equivalent to the statement that $\dive {\bf G}=0$ in the region between any two such surfaces.

For a {\bf static} Lorentzian metric, that is, a metric of the form
\[
ds^2 = - e^{2 \phi} dt^2 + h_{ij} dx^i dx^j
\]
with $\phi$ and $h$ not depending on $t$, the analogue of the gravitational field is minus the acceleration of the observers with constant space coordinates $x^i$, that is, $- \grad \phi$ (see Chapter~\ref{chapter2}). On the other hand, since the metric is static, we expect that the energy computed at a given surface should be multiplied by the redshift factor $e^\phi$ to obtain its reference value at infinity. The relativistic analogue of the formula above is then
\[
M = \frac1{4\pi} \int_{\Sigma} e^\phi (\partial_\mu \phi) n^\mu =  \frac1{4\pi} \int_{\Sigma} (\partial_\mu e^\phi) n^\mu.
\]
We have
\[
\partial_\mu e^\phi = \partial_\mu (- K^\nu K_\nu)^\frac12 = - e^{-\phi} K^\nu \nabla_\mu K_\nu = - N^\nu \nabla_\mu K_\nu,
\]
where $K=\frac{\partial}{\partial t}$ is the timelike Killing vector field and $N=e^{-\phi}\frac{\partial}{\partial t}$ is the unit timelike vector with the same direction, that is, the future-pointing unit normal to the hypersurfaces $S_t$ of constant $t$ (see Figure~\ref{Komar}). Therefore, we can write
\[
M = - \frac1{4\pi} \int_{\Sigma} (\nabla_\mu K_\nu) n^\mu N^\nu = \frac1{4\pi} \int_{\Sigma} (\nabla_\mu K_\nu) N^\mu n^\nu.
\]

\begin{figure}[h!]
\begin{center}
\psfrag{n}{$n$}
\psfrag{N}{$N$}
\psfrag{K}{$K$}
\psfrag{T}{$E_1$}
\psfrag{-grad}{$-\grad \phi$}
\psfrag{S}{$\Sigma$}
\psfrag{St}{$S_t$}
\epsfxsize=1.0\textwidth
\leavevmode
\epsfbox{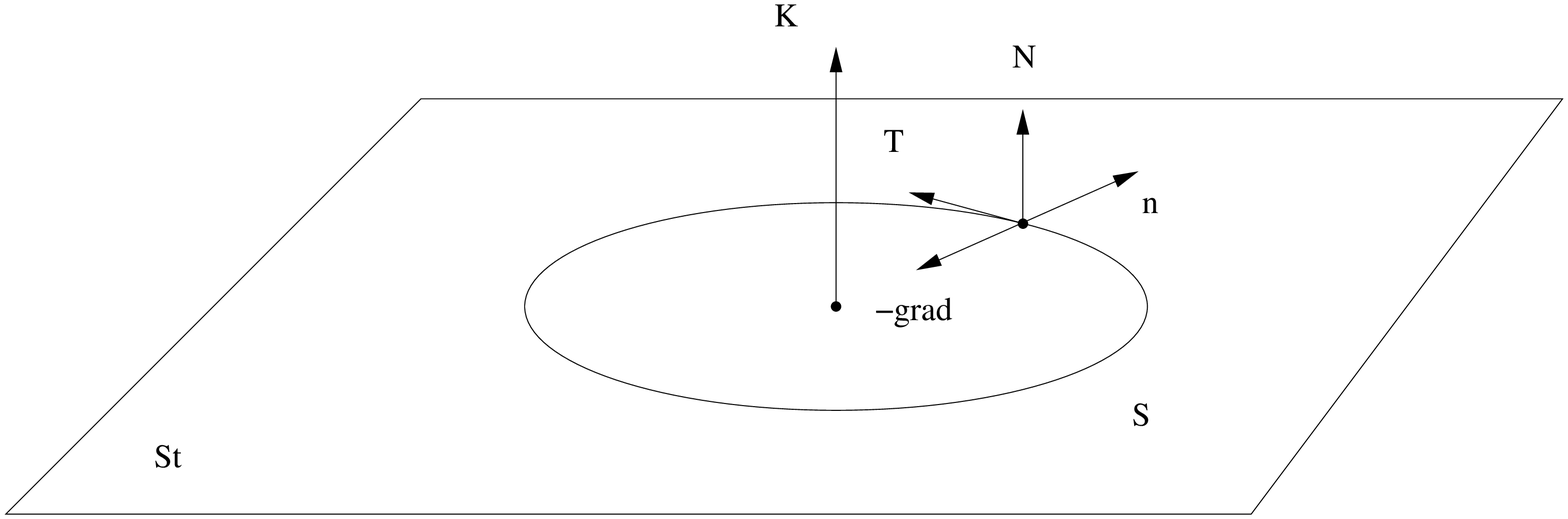}
\end{center}
\caption{Computing the Komar mass on a static spacetime.} \label{Komar}
\end{figure}

Because $K$ is a Killing vector field, $\nabla_\mu K_\nu$ is a $2$-form; more precisely,
\[
(d K^\sharp)_{\mu\nu} = \nabla_\mu K_\nu - \nabla_\nu K_\mu = 2 \nabla_\mu K_\nu.
\]
If $E_1$ and $E_2$ are two unit vector fields tangent to $\Sigma$ such that $\{N,n,E_1,E_2\}$ is a positive orthonormal frame (see Figure~\ref{Komar}), and so $\{-N^\sharp,n^\sharp,E_1^\sharp,E_2^\sharp\}$ is a positive orthonormal coframe, then we can expand
\[
\nabla K^\sharp = - \nabla K^\sharp(N,n) N^\sharp \wedge n^\sharp + \ldots,
\]
and so
\begin{align*}
M & = \frac1{4\pi} \int_{\Sigma} \nabla K^\sharp(N,n) = \frac1{4\pi} \int_{\Sigma} \nabla K^\sharp(N,n) E_1^\sharp \wedge E_2^\sharp \\
&  = - \frac1{4\pi} \int_{\Sigma} - \nabla K^\sharp(N,n) \star(N^\sharp\wedge n^\sharp) = - \frac1{4\pi} \int_{\Sigma} \star \nabla K^\sharp,
\end{align*}
that is
\[
M = - \frac1{8\pi} \int_{\Sigma} \star d K^\sharp.
\]
This expression is the so-called {\bf Komar mass}. Although we arrived at this expression by considering a {\bf static} space, it actually works for any {\bf stationary} spacetime. In other words, the timelike Killing vector field $K$ does not have to be hypersurface-orthogonal.

To show that the Komar mass is well defined, that is, that $\star d K^\sharp$ is a closed $2$-form in vacuum, we start by noticing that if $X$ is any vector field then
\[
(\dive X) \epsilon = d (X \contr \epsilon) = d \star X^\sharp, 
\]
whence
\[
\dive X = - \star d \star X^\sharp.
\]
In particular, for any smooth function $\phi$
\[
\Box \phi = \dive \grad \phi = - \star d \star (\grad \phi)^\sharp = - \star d \star d \phi.
\]
In local coordinates, we have
\begin{align*}
& \star d \phi = \epsilon(\grad \phi, \cdot, \cdot, \cdot) = \sqrt{|\det(g_{\mu\nu})|} \, dx^0 \wedge dx^1 \wedge dx^2 \wedge dx^3(\grad \phi, \cdot, \cdot, \cdot) \\
& = \sqrt{|\det(g_{\mu\nu})|} \, \partial^0 \phi \, dx^1 \wedge dx^2 \wedge dx^3 - \sqrt{|\det(g_{\mu\nu})|} \, \partial^1 \phi \, dx^0 \wedge dx^2 \wedge dx^3 + \ldots
\end{align*}
Therefore
\begin{align*}
d \star d \phi & = \partial_0 \left(\sqrt{|\det(g_{\mu\nu})|} \, \partial^0 \phi \right) dx^0 \wedge dx^1 \wedge dx^2 \wedge dx^3 \\
& + \partial_1 \left(\sqrt{|\det(g_{\mu\nu})|} \, \partial^1 \phi\right) dx^0 \wedge dx^1 \wedge dx^2 \wedge dx^3 + \ldots
\end{align*}
and we obtain the useful formula
\[
\Box \phi = \frac1{\sqrt{|\det(g_{\mu\nu})|}} \partial_\alpha  \left(\sqrt{|\det(g_{\mu\nu})|} \, \partial^\alpha \phi\right).
\]
It is natural to try and generalize this formula for arbitrary $k$-forms.

\begin{Def}
If $\omega$ is a $k$-form then its {\bf Hodge d'Alembertian} is the $k$-form
\[
\Box_H \omega = - \star d \star d \omega - d \star d \star \omega.
\]
\end{Def}

It turns out that in general the Hodge d'Alembertian does not coincide with the usual d'Alembertian
\[
\Box \omega = \nabla_\mu \nabla^\mu \omega
\]
(sometimes called the {\bf rough d'Alembertian}). The relation between these two operators for $1$-forms is given by the following result.

\begin{Thm} ({\bf Weitzenbock formula})
If $\omega$ is a $1$-form then
\[
\Box_H \omega - \Box \omega = - Ric(\omega^\sharp, \cdot).
\]
\end{Thm}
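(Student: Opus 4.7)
The plan is to translate everything into index notation on 1-form components, split $\Box_H\omega$ into its two pieces, and identify the Weitzenbock curvature term as a commutator of covariant derivatives together with equation~\eqref{commutator}.

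First I would dispatch the easier half. The paper's identity $\dive X = -\star d\star X^\sharp$, applied to $X=\omega^\sharp$, gives $\star d\star\omega = -\nabla^\alpha\omega_\alpha$, which is a scalar. Differentiating,
\[
(d\star d\star\omega)_\mu = -\nabla_\mu\nabla^\alpha\omega_\alpha,
\]
so that $-(d\star d\star\omega)_\mu = \nabla_\mu\nabla^\alpha\omega_\alpha$.

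For the other half, I would first establish the 2-form analogue of the divergence identity: for any 2-form $\beta$,
\[
(-\star d\star\beta)_\mu = \nabla^\alpha\beta_{\alpha\mu}.
\]
This can be obtained by a local-coordinate computation entirely parallel to the one the paper uses for the divergence of a vector field (write $d\star\beta$ in coordinates, use $\partial_\alpha\log\sqrt{|\det(g_{\mu\nu})|}=\Gamma^\sigma_{\sigma\alpha}$ to trade ordinary derivatives for covariant ones, and apply $\star$, noting that antisymmetry in $\beta$ kills the symmetric Christoffel contributions). Applied to $\beta=d\omega$, whose components are $(d\omega)_{\alpha\mu}=\nabla_\alpha\omega_\mu-\nabla_\mu\omega_\alpha$, this yields
\[
(-\star d\star d\omega)_\mu = \nabla^\alpha\nabla_\alpha\omega_\mu - \nabla^\alpha\nabla_\mu\omega_\alpha = (\Box\omega)_\mu - \nabla^\alpha\nabla_\mu\omega_\alpha.
\]

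Adding the two pieces,
\[
(\Box_H\omega - \Box\omega)_\mu = \nabla_\mu\nabla^\alpha\omega_\alpha - \nabla^\alpha\nabla_\mu\omega_\alpha = g^{\alpha\beta}\bigl(\nabla_\mu\nabla_\beta - \nabla_\beta\nabla_\mu\bigr)\omega_\alpha.
\]
By equation~\eqref{commutator}, this equals $g^{\alpha\beta}R_{\mu\beta\alpha\sigma}\omega^\sigma$. Using the pair-swap symmetry $R_{\mu\beta\alpha\sigma}=R_{\alpha\sigma\mu\beta}$, the antisymmetry in the last two indices, and the definition $R_{\mu\nu}=g^{\alpha\beta}R_{\alpha\mu\beta\nu}$ of the Ricci tensor, this contraction simplifies to $-R_{\mu\sigma}\omega^\sigma$, which is precisely $-Ric(\omega^\sharp,\cdot)_\mu$. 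The main obstacle is keeping track of signs, particularly in establishing the 2-form codifferential formula consistently with the 1-form version (given the paper's Lorentzian conventions for $\star$), and in performing the final Ricci contraction with the correct overall sign.
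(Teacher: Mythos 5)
Your proposal is correct and has the same overall structure as the paper's proof: split $\Box_H \omega$ into $-\star d \star d \omega$ and $-d \star d \star \omega$, identify each piece as an iterated divergence, and finish by applying the commutator identity \eqref{commutator} contracted with $g^{\alpha\beta}$ to produce $-R_{\mu\sigma}\omega^\sigma$. The only difference is how the Hodge-star combinations are evaluated: the paper contracts Levi-Civita tensors directly, whereas you reduce $\star d \star$ on $1$- and $2$-forms to covariant divergences via the $\sqrt{|\det(g_{\mu\nu})|}$ coordinate computation (exploiting antisymmetry to kill the Christoffel terms); the sign you assert, $(-\star d\star \beta)_\mu = \nabla^\alpha \beta_{\alpha\mu}$, agrees with the paper's direct computation of $(\star d\star d\omega)_\delta$, so the argument goes through as written.
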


\begin{proof}
We have
\begin{align*}
(\star d\star d\omega)_\delta & = \frac{3 \cdot 2}{3!\,2!} \epsilon_{\gamma\alpha\beta\delta} \nabla^\gamma (\epsilon^{\mu\nu\alpha\beta} \nabla_\mu\,\omega_\nu) = \frac1{2!} \epsilon_{\alpha\beta\gamma\delta} \epsilon^{\mu\nu\alpha\beta} \nabla^\gamma \nabla_\mu\,\omega_\nu \\
& = (- g^{\mu}_{\,\,\,\,\gamma} g^{\nu}_{\,\,\,\,\delta} + g^{\mu}_{\,\,\,\,\delta} g^{\nu}_{\,\,\,\,\gamma}) \nabla^\gamma \nabla_\mu\,\omega_\nu = - \nabla^\mu \nabla_\mu\,\omega_\delta + \nabla^\nu \nabla_\delta \,\omega_\nu  
\end{align*}
and
\begin{align*}
(d\star d\star \omega)_\delta & = \frac{4}{4!} \nabla_\delta (\epsilon_{\gamma\nu\alpha\beta} \nabla^\gamma (\epsilon^{\mu\nu\alpha\beta}\omega_\mu)) = \frac1{3!} \epsilon_{\gamma\nu\alpha\beta}\epsilon^{\mu\nu\alpha\beta}\nabla_\delta\nabla^\gamma\omega_\mu \\
& = - g^{\mu}_{\,\,\,\,\gamma} \nabla_\delta\nabla^\gamma\omega_\mu = - \nabla_\delta\nabla^\mu\omega_\mu.
\end{align*}
Therefore
\[
(\star d\star d\omega + d\star d\star \omega)_\delta = - (\Box\omega)_\delta + \nabla^\mu \nabla_\delta \,\omega_\mu  - \nabla_\delta\nabla^\mu\omega_\mu.
\]
The Weitzenbock formula now follows from
\[
\nabla_\mu \nabla_\delta \,\omega^\mu  - \nabla_\delta\nabla_\mu\,\omega^\mu = R_{\mu\delta\,\,\,\,\nu}^{\,\,\,\,\,\,\,\,\mu} \, \omega^\nu = R_{\delta\nu} \, \omega^\nu.
\]
\end{proof}

Now let $K=\frac{\partial}{\partial t}$ be the timelike Killing vector field in a stationary spacetime. Then
\begin{align*}
d \star K^\sharp & = d \left( \frac{\partial}{\partial t} \contr \left( \sqrt{|\det(g_{\mu\nu})|} \, dt \wedge dx^1 \wedge dx^2 \wedge dx^3 \right)\right) \\
& = d \left( \sqrt{|\det(g_{\mu\nu})|} \, dx^1 \wedge dx^2 \wedge dx^3 \right) \\
& = \frac{\partial}{\partial t} \sqrt{|\det(g_{\mu\nu})|} \, dx^1 \wedge dx^2 \wedge dx^3 = 0,
\end{align*}
since the metric coefficients do not depend on $t$. From the Weitzenbock formula we then have
\[
\star d \star d K^\sharp = - \Box K^\sharp + Ric(K, \cdot).
\]
Now, adding cyclic permutations of the fundamental identity for the Riemann tensor,
\begin{align*}
& \nabla_\mu \nabla_\nu K_\alpha - \nabla_\nu \nabla_\mu K_\alpha = R_{\mu\nu\alpha}^{\,\,\,\,\,\,\,\,\,\,\,\,\beta} K_\beta, \\
& \nabla_\nu \nabla_\alpha K_\mu - \nabla_\alpha \nabla_\nu K_\mu = R_{\nu\alpha\mu}^{\,\,\,\,\,\,\,\,\,\,\,\,\beta} K_\beta,\\
& \nabla_\alpha \nabla_\mu K_\nu - \nabla_\mu \nabla_\alpha K_\nu = R_{\alpha\mu\nu}^{\,\,\,\,\,\,\,\,\,\,\,\,\beta} K_\beta,
\end{align*}
and using the Killing equation and the first Bianchi identity, we have
\[
\nabla_\mu \nabla_\nu K_\alpha = - R_{\nu\alpha\mu}^{\,\,\,\,\,\,\,\,\,\,\,\,\beta} K_\beta,
\]
whence
\[
\Box K_\alpha = - R_{\alpha}^{\,\,\,\,\beta} K_\beta.
\]
We conclude that
\[
\star d \star d K^\sharp = 2 Ric(K, \cdot) \Leftrightarrow d \star d K^\sharp = 2 \star Ric(K, \cdot),
\]
implying that $\star d K^\sharp$ is indeed closed in vacuum, that is, the Komar mass is well defined: any two homologous compact orientable surfaces $\Sigma_1$ and $\Sigma_2$ which enclose the matter content of the stationary spacetime can be used to compute it (Figure~\ref{Komar2}).

\begin{figure}[h!]
\begin{center}
\psfrag{S1}{$\Sigma_1$}
\psfrag{S2}{$\Sigma_2$}
\psfrag{matter}{matter}
\psfrag{vacuum}{vacuum}
\epsfxsize=0.5\textwidth
\leavevmode
\epsfbox{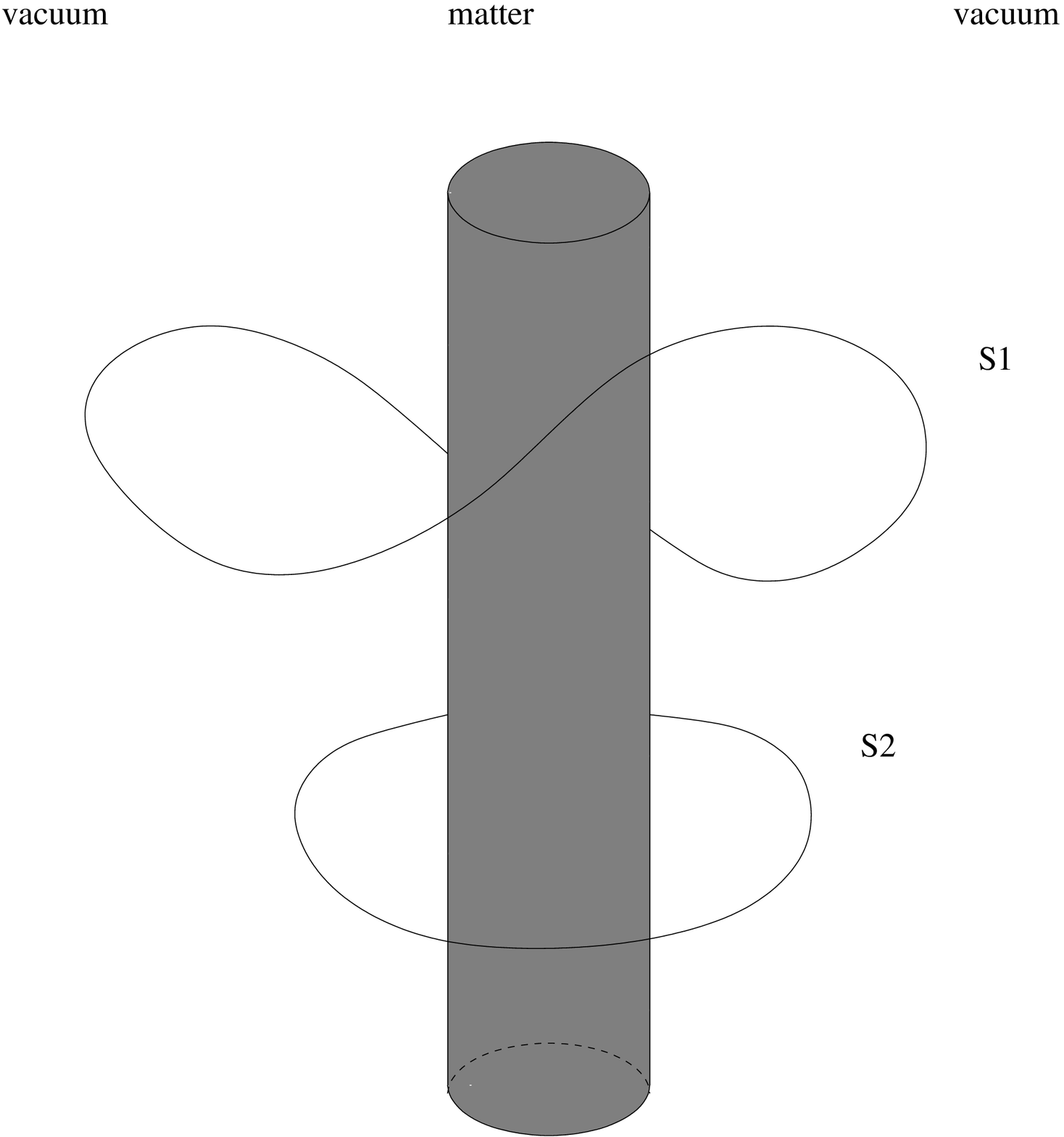}
\end{center}
\caption{Computing the Komar mass with two homologous surfaces.} \label{Komar2}
\end{figure}

If $\Sigma$ is the boundary of a spacelike $3$-dimensional manifold $B$ whose future-pointing unit normal is $N$ then the Komar mass can be written as
\begin{align*}
M & = - \frac1{8\pi} \int_{\Sigma} \star d K^\sharp = - \frac1{8\pi} \int_{B} d \star d K^\sharp = - \frac1{4\pi} \int_{B} \star Ric(K, \cdot) \\
& = - \frac1{4\pi} \int_{B} \epsilon(Ric(K, \cdot)^\sharp, \cdot, \cdot, \cdot) = - \frac1{4\pi} \int_{B} -  \left\langle(Ric(K, \cdot)^\sharp, N \right\rangle \\
& = \frac1{4\pi} \int_{B} R_{\mu\nu} K^\mu N^\nu = 2 \int_{B} \left(T_{\mu\nu} - \frac12 T g_{\mu\nu}\right) K^\mu N^\nu
\end{align*}
(we used $\epsilon = - N^\sharp \wedge \sigma$ in the second line, where $\sigma$ is the volume element for $B$). Note that the Komar mass is not, as one might have guessed, the integral
\[
M' = \int_{B} T_{\mu\nu} K^\mu N^\nu.
\]
This integral is also well defined in a stationary spacetime, since
\[
\nabla^\mu ( T_{\mu\nu} K^\nu) = (\nabla^\mu T_{\mu\nu}) K^\nu + T_{\mu\nu} \nabla^\mu K^\nu = 0,
\]
due to the contracted Bianchi identity and the Killing equation. However, we also have
\[
\nabla^\mu (T K_\mu) = K^\mu \nabla_\mu T + T \nabla^\mu K_\mu = 0,
\]
because $T$ is constant along $K$, which has zero divergence (by contracting the Killing equation). Therefore we also have
\[
\nabla^\mu \left( \left(T_{\mu\nu} - \frac12 T g_{\mu\nu} \right) K^\nu\right)  = 0.
\]
To have an idea of what exactly is measured by the Komar mass, consider a static spacetime, where it is possible to choose $B$ such that $N = e^{-\phi} K$. In this case, for a perfect fluid (whose flow lines are necessarily the integral curves of $K$), we have
\[
M = 2 \int_{B} \left(T(N,N) + \frac12 T\right) e^\phi = 2 \int_{B} \left(\rho + \frac12 (- \rho + 3 p) \right) e^\phi = \int_{B} \left(\rho + 3 p \right) e^\phi.
\]
Thus we see that the Komar mass also includes the pressure; this is reminiscent of the Newtonian formula for the internal energy of a monoatomic gas,
\[
U = \frac32 pV.
\]

\section{Field theory} \label{sec6.2}

Let us consider the problem of how to define the energy of a field $\psi$ in flat Minkowski space. The field equations for the field $\psi$ are usually the equations for the critical points of an {\bf action}
\[
S = \int_{\bbR^4} \mathcal{L}(\psi, \partial\psi) \, dt \, dx^1 dx^2 dx^3,
\]
obtained by integrating a {\bf Lagrangian density} $\mathcal{L}$ (where we assume that the field decays fast enough so that $S$ is well defined). The field equations are then
\[
\partial_\mu \left( \frac{\partial\mathcal{L}}{\partial(\partial_\mu \psi)}\right) - \frac{\partial\mathcal{L}}{\partial\psi} = 0.
\]
Indeed, if $\psi(\lambda)$ is a one-parameter family of fields such that $\psi(0)$ is a critical point of the action, and $\delta \equiv \frac{d}{d \lambda}_{|_{\lambda=0}}$, then
\begin{align*}
\delta S & = \int_{\bbR^4} \left( \frac{\partial\mathcal{L}}{\partial\psi} \delta \psi + \frac{\partial\mathcal{L}}{\partial(\partial_\mu \psi)} \delta (\partial_\mu \psi) \right) = \int_{\bbR^4} \left( \frac{\partial\mathcal{L}}{\partial\psi} \delta \psi + \frac{\partial\mathcal{L}}{\partial(\partial_\mu \psi)} \partial_\mu (\delta\psi) \right) \\
& = \int_{\bbR^4} \partial_\mu \left( \frac{\partial\mathcal{L}}{\partial(\partial_\mu \psi)} \delta\psi \right) + \int_{\bbR^4} \left( \frac{\partial\mathcal{L}}{\partial\psi} - \partial_\mu \left(\frac{\partial\mathcal{L}}{\partial(\partial_\mu \psi)} \right) \right) \delta\psi.
\end{align*}
For field variations $\delta \psi$ with compact support we then have
\[
\delta S = \int_{\bbR^4} \left( \frac{\partial\mathcal{L}}{\partial\psi} - \partial_\mu \left(\frac{\partial\mathcal{L}}{\partial(\partial_\mu \psi)} \right) \right) \delta\psi.
\]
The {\bf canonical energy-momentum tensor} is defined as
\[
T^{\mu}_{\,\,\,\,\nu} = \frac{\partial\mathcal{L}}{\partial(\partial_\mu \psi)} \partial_\nu \psi - \mathcal{L} \delta^{\mu}_{\,\,\,\,\nu}
\]
and satisfies
\begin{align*}
\partial_\mu T^{\mu}_{\,\,\,\,\nu} & = \partial_\mu \left( \frac{\partial\mathcal{L}}{\partial(\partial_\mu \psi)}\right) \partial_\nu \psi + \frac{\partial\mathcal{L}}{\partial(\partial_\mu \psi)} \partial_\mu \partial_\nu \psi \\
& - \frac{\partial\mathcal{L}}{\partial\psi} (\partial_\mu\psi) \delta^{\mu}_{\,\,\,\,\nu} - \frac{\partial\mathcal{L}}{\partial(\partial_\alpha \psi)} (\partial_\mu \partial_\alpha \psi) \delta^{\mu}_{\,\,\,\,\nu} = 0.
\end{align*}
Defining the {\bf Hamiltonian density} to be
\[
\mathcal{H} = -T^0_{\,\,\,\,0} = -\frac{\partial\mathcal{L}}{\partial(\partial_0 \psi)} \partial_0 \psi + \mathcal{L},
\]
it is then clear from the divergence theorem that the {\bf Hamiltonian}
\[
H = \int_{S_t} \mathcal{H} \, dx^1 dx^2 dx^3
\]
is independent of the hypersurface $S_t$ of constant $t$ chosen to compute the integral (assuming that the field decays fast enough so that $H$ is well defined and the boundary integral corresponding to the divergence term vanishes). The Hamiltonian can be identified with the total energy of the field $\psi$, which is therefore constant in time.

If we have $N$ fields $\psi_1, \ldots, \psi_N$ instead of a single field $\psi$, then it is easily seen that the field equations are 
\[
\partial_\mu \left( \frac{\partial\mathcal{L}}{\partial(\partial_\mu \psi_i)}\right) - \frac{\partial\mathcal{L}}{\partial\psi_i} = 0 \qquad (i=1, \ldots, N),
\]
the canonical energy-momentum tensor is
\[
T^{\mu}_{\,\,\,\,\nu} = \frac{\partial\mathcal{L}}{\partial(\partial_\mu \psi_i)} \partial_\nu \psi_i - \mathcal{L} \delta^{\mu}_{\,\,\,\,\nu}
\]
(summed over $i$), and the Hamiltonian density is
\[
\mathcal{H} = - T^0_{\,\,\,\,0} = - \frac{\partial\mathcal{L}}{\partial(\partial_0 \psi_i)} \partial_0 \psi_i + \mathcal{L}.
\]
Note carefully that up until this point the metric was not used to raise or lower indices, or even to apply the divergence theorem. This will be important in Section~\ref{sec6.4}.

If we use Cartesian coordinates then the tensor
\[
T^{\mu\nu} = \frac{\partial\mathcal{L}}{\partial(\partial_\mu \psi_i)} \partial^\nu \psi_i - \mathcal{L} g^{\mu\nu}
\]
satisfies the conservation equation
\[
\nabla_\mu T^{\mu\nu} = 0.
\]
This provides a method for obtaining the energy-momentum tensor that is used in the Einstein field equations for the various matter models. We now list some simple examples.

\subsection{Klein-Gordon field}

The Lagrangian density for the Klein-Gordon field is
\[
\mathcal{L} = \frac12 (g^{\mu\nu}\partial_\mu\phi\,\partial_\nu\phi + m^2\phi^2).
\]
Consequently the canonical energy-momentum tensor for the Klein-Gordon field is
\[
T^{\mu\nu} = \partial^\mu \phi \, \partial^\nu \phi - \frac12 g^{\mu\nu} \left( \partial_\alpha \phi \, \partial^\alpha \phi + m^2 \phi^2 \right),
\]
as claimed in Chapter~\ref{chapter5}.

\subsection{Electromagnetic field}

In this case we can take as our fields the electromagnetic potentials $A_0, A_1, A_2, A_3$. The Lagrangian density (in units where $4\pi\varepsilon_0=1$) is
\[
\mathcal{L} = \frac1{16\pi} g^{\alpha\mu}g^{\beta\nu} F_{\alpha\beta} F_{\mu\nu},
\]
where
\[
F_{\mu\nu} = \partial_\mu A_\nu - \partial_\nu A_\mu.
\]
Using
\[
\frac{\partial F_{\alpha\beta}}{\partial(\partial_\mu A_\nu)} = \delta^{\mu}_{\,\,\,\,\alpha} \delta^{\nu}_{\,\,\,\,\beta} - \delta^{\nu}_{\,\,\,\,\alpha} \delta^{\mu}_{\,\,\,\,\beta},
\]
it is easily seen that
\[
\partial_\mu \left( \frac{\partial\mathcal{L}}{\partial(\partial_\mu A_\nu)}\right) - \frac{\partial\mathcal{L}}{\partial A_\nu} = 0 \Leftrightarrow \partial_\mu F^{\mu\nu} = 0,
\]
which are indeed the Maxwell equations $d\star F = 0$, as
\begin{align*}
& \epsilon_{\mu\gamma\delta\nu} \nabla^\mu \epsilon^{\alpha\beta\gamma\delta} F_{\alpha\beta} =  \epsilon_{\mu\nu\gamma\delta} \epsilon^{\alpha\beta\gamma\delta} \nabla^\mu F_{\alpha\beta} \\
& = -2(\delta^{\mu}_{\,\,\,\,\alpha} \delta^{\nu}_{\,\,\,\,\beta} - \delta^{\nu}_{\,\,\,\,\alpha} \delta^{\mu}_{\,\,\,\,\beta}) \nabla^\mu F_{\alpha\beta} = -2 \nabla^\mu F_{\mu\nu}.
\end{align*}
Note that the equations $dF=0$ follow automatically from the definition $F=dA$. The canonical energy-momentum tensor for the electromagnetic field is
\[
T^{\mu\nu}_{\text{can}} = \frac{\partial\mathcal{L}}{\partial(\partial_\mu A_\alpha)} \partial^\nu A_\alpha - \mathcal{L} g^{\mu\nu} = \frac1{4\pi} \left( F^{\mu\alpha} \partial^\nu A_\alpha - \frac14 F_{\alpha\beta}F^{\alpha\beta} g^{\mu\nu}\right).
\]
This tensor is neither symmetric nor gauge-invariant. However,
\[
F^{\mu\alpha} \partial^\nu A_\alpha = F^{\mu\alpha} F^\nu_{\,\,\,\,\alpha } + F^{\mu\alpha} \partial_\alpha A^\nu,
\]
where the first term is symmetric and gauge-invariant, and the second term is divergenceless:
\[
\partial_\mu \left( F^{\mu\alpha} \partial_\alpha A^\nu \right) = F^{\mu\alpha}\partial_\mu\partial_\alpha A^\nu = 0
\]
(because $F$ is antisymmetric and the partial derivatives commute). Therefore the tensor
\[
T^{\mu\nu} = \frac1{4\pi} \left(F^{\mu\alpha} F^\nu_{\,\,\,\,\alpha } - \frac14 F_{\alpha\beta}F^{\alpha\beta} g^{\mu\nu}\right)
\]
is symmetric, gauge-invariant and divergenceless. This is the true energy-momentum tensor for the electromagnetic field.

\subsection{Relativistic elasticity}

A continuous medium can be described by a Riemannian 3-manifold $(S, k)$ (the {\bf relaxed configuration}) and projection map $\pi: \bbR^4 \to S$ whose level sets are timelike curves (the worldlines of the medium particles), as shown in Figure~\ref{congruence}.

\begin{figure}[h!]
\begin{center}
\psfrag{M}{$\bbR^4$}
\psfrag{p}{$\downarrow\pi$}
\psfrag{S}{$S$}
\psfrag{h}{$h$}
\psfrag{k}{$k$}
\epsfxsize=0.6\textwidth
\leavevmode
\epsfbox{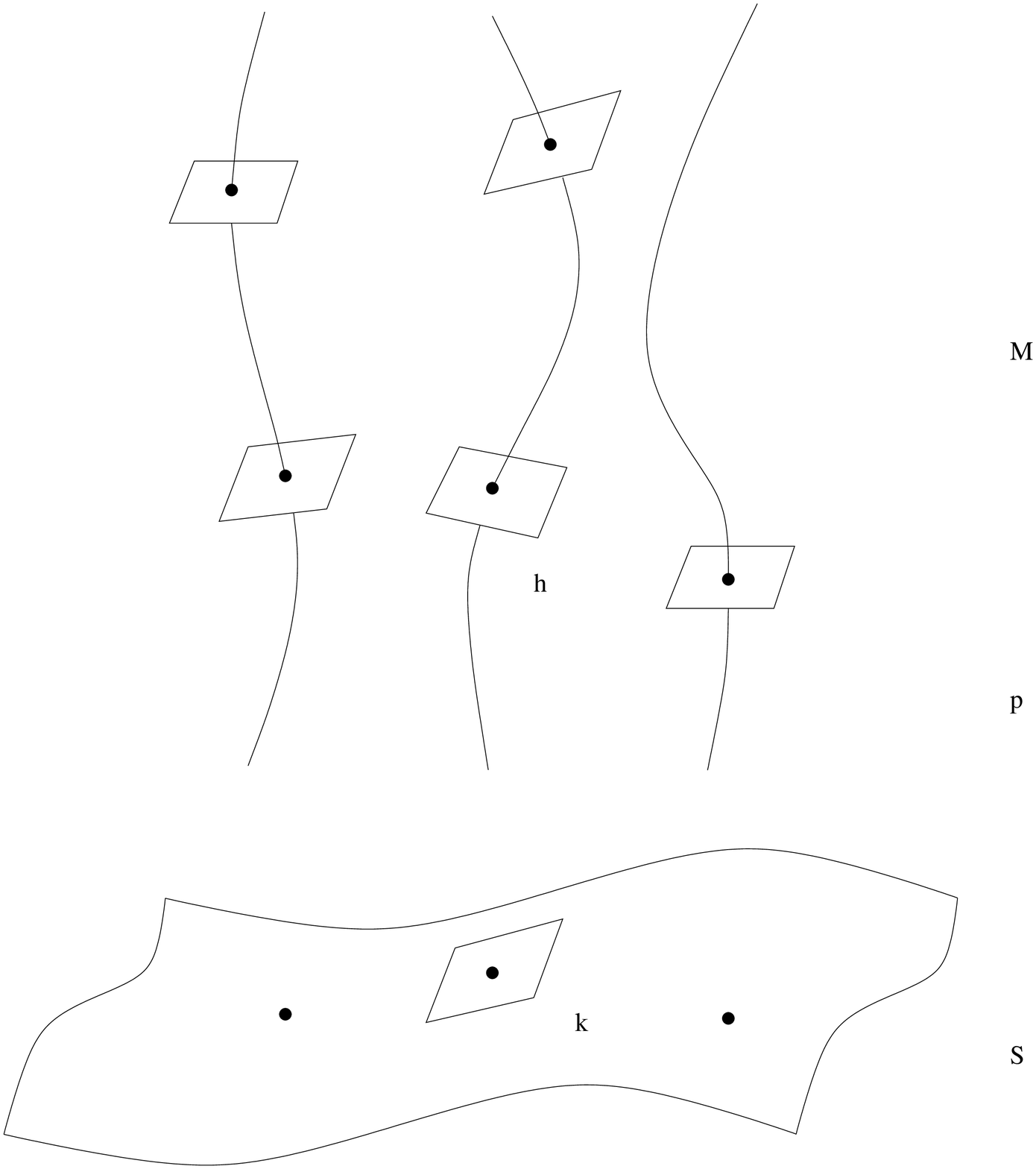}
\end{center}
\caption{A continuous medium in Minkowski's spacetime.} \label{congruence}
\end{figure}

If we choose local coordinates $(\bar{x}^1, \bar{x}^2, \bar{x}^3)$ on $S$ then we can think of $\pi$ as a set of three scalar fields $\bar{x}^1, \bar{x}^2, \bar{x}^3:\bbR^4 \to \bbR$. For a given worldline, we can complete this set of scalar fields into local coordinates $(\bar{t},\bar{x}^1, \bar{x}^2, \bar{x}^3)$ for $\bbR^4$ such that $\bar{t}$ is the proper time along that worldline and its level sets are orthogonal to it:
\[
g = - d\bar{t}^2 + h_{ij} d\bar{x}^i d\bar{x}^j \qquad \text{(on the worldline)}.
\]
Notice that the orthogonal metric
\[
h = h_{ij} d\bar{x}^i d\bar{x}^j
\]
can be thought of as a time-dependent Riemannian metric on $S$, describing the local deformations of the medium along each worldline, that is, the deviations from the natural metric
\[
k = k_{ij} d\bar{x}^i d\bar{x}^j.
\]
We can compute the (inverse) metric $h$ from
\[
h^{ij} = g^{\mu\nu} \frac{\partial \bar{x}^i}{\partial x^\mu} \frac{\partial \bar{x}^j}{\partial x^\nu},
\]
which does not depend on the choice of $\bar{t}$. In other words, the metric $h$ is a quadratic function of the partial derivatives of the fields $\bar{x}^1, \bar{x}^2, \bar{x}^3$. An elastic Lagrangian density $\cL$ for these fields is obtained by assuming that $\cL = \cL(\bar{x}^i, h^{ij})$. The canonical energy-momentum tensor is
\[
T^{\mu\nu} = \frac{\partial\mathcal{L}}{\partial(\partial_\mu \bar{x}^i)} \partial^\nu \bar{x}^i - \mathcal{L} g^{\mu\nu},
\]
and so in the coordinate system $(\bar{t},\bar{x}^1, \bar{x}^2, \bar{x}^3)$ we have
\[
T^{\bar{0}\bar{0}} = \mathcal{L},
\]
that is, the elastic Lagrangian density is just the rest energy density $\rho = T^{\bar{0}\bar{0}}$ measured by each particle in the medium. The choice of $\rho = \rho(\bar{x}^i, h^{ij})$ is called the {\bf elastic law} of the continuous medium.

We define {\bf homogeneous and isotropic materials} to be those for which $\rho$ depends only on the eigenvalues $({s_1}^2,{s_2}^2,{s_3}^2)$ of $h_{ij}$ with respect to $k_{ij}$ (that is, the eigenvalues of the matrix $(h_{ij})$ in a frame where $k_{ij}=\delta_{ij}$). Note that $(s_1,s_2,s_3)$ are the stretch factors along the principal directions given by the eigenvectors of $h_{ij}$, that is, the distance between two nearby points along a principal direction in the current configuration, as measured by the metric $h$, divided by the distance between the same points in the relaxed configuration, as measured by the metric $k$.

Assume that $k_{ij}=\delta_{ij}$, that is, that $(S, k)$ is the Euclidean space. We define the more convenient variables
\begin{align*}
& \lambda_0 = \det (h^{ij}) = \frac1{(s_1 s_2 s_3)^2} ; \\
& \\
& \lambda_1 = \tr (h^{ij}) = \frac1{{s_1}^2} + \frac1{{s_2}^2} + \frac1{{s_3}^2}; \\
& \\
& \lambda_2 = \tr \cof (h^{ij}) = \frac1{(s_1s_2)^2} + \frac1{(s_2 s_3)^2} + \frac1{(s_3 s_1)^2}.
\end{align*}
Note that 
\[
s_1 s_2 s_3 = \left(\frac1{\lambda_0}\right)^\frac12
\]
is the volume occupied in the deformed state by a unit volume of material in the relaxed configuration. Equivalently, 
\[
n = (\lambda_0)^\frac12
\]
is the number density of particles of the medium in the deformed state, if we normalize the number density in the relaxed configuration to be $1$ particle per unit volume. Elastic media whose elastic law depends only on $n$,
\[
\rho=\rho(\lambda_0),
\]
are simply perfect fluids. To check this, we note that from the formula for the inverse of a matrix we have
\[
h_{ij} = \frac1{\lambda_0} A^{ji} = \frac1{\lambda_0} A^{ij},
\]
where $A^{ij}$ is the $(i,j)$-cofactor of $(h^{ij})$. On the other hand, from the Laplace expansion for determinants we have
\[
\lambda_0 = \sum_{j=1}^3 h^{ij} A^{ij}  \qquad (\text{no sum over } i),
\]
and so
\[
\frac{\partial \lambda_0}{\partial h^{ij}} = A^{ij} = \lambda_0 h_{ij}.
\]
Therefore
\begin{align}
T^{\mu\nu} & = \frac{d \rho}{d \lambda_0} \frac{\partial \lambda_0}{\partial h^{ij}} \frac{\partial h^{ij}}{\partial(\partial_\mu \bar{x}^k)} \partial^\nu \bar{x}^k - \rho g^{\mu\nu} \nonumber \\
& = \frac{d \rho}{d \lambda_0} \lambda_0 h_{ij} (g^{\mu\alpha} \delta^{i}_{\,\,\,\,k} \partial_\alpha \bar{x}^j + g^{\mu\alpha} \partial_\alpha \bar{x}^i \delta^{j}_{\,\,\,\,k})\partial^\nu \bar{x}^k - \rho g^{\mu\nu} \nonumber \\
& = 2\lambda_0\frac{d \rho}{d \lambda_0} h_{ij} \partial^\mu \bar{x}^i \partial^\nu \bar{x}^j - \rho g^{\mu\nu}. \nonumber
\end{align}
Since
\[
h_{\mu\nu} = h_{ij} \partial_\mu \bar{x}^i \partial_\nu \bar{x}^j
\]
is simply the metric on the hyperplanes orthogonal to the worldlines, that is,
\[
h_{\mu\nu} = g_{\mu\nu} + U_\mu U_\nu,
\]
where $U$ is the unit tangent vector to the worldlines, we obtain
\begin{align}
T^{\mu\nu} & = 2\lambda_0\frac{d \rho}{d \lambda_0}U^\mu U^\nu + \left(2\lambda_0\frac{d \rho}{d \lambda_0} - \rho\right) g^{\mu\nu} \nonumber \\
& = (\rho + p) U^\mu U^\nu + p g^{\mu\nu} \nonumber
\end{align}
with
\[
p = 2\lambda_0\frac{d \rho}{d \lambda_0} - \rho,
\]
which is indeed the energy-momentum tensor of a perfect fluid. For example, dust corresponds to the elastic law $\rho=\rho_0 \sqrt{\lambda_0}$ (for some positive constant $\rho_0$), yielding $p=0$, and a stiff fluid, with equation of state $p=\rho$, is given by the choice $\rho=\rho_0 \lambda_0$. The ``hard phase" rigid fluid introduced by Christodoulou, with equation of state $p=\rho-\rho_0$, corresponds to $\rho=\frac{\rho_0}2(\lambda_0+1)$. 

To obtain elastic materials that are not fluids we must choose elastic laws that also depend on $\lambda_1$ and $\lambda_2$. For instance, an elastic law is said to be {\bf quasi-Hookean} if it is of the form
\[
\rho = \hat{p}(n) + \hat{\mu}(n) \sigma,
\]
where $\sigma$ is a {\bf shear scalar}, that is, a non-negative function of the stretch factors such that $\sigma=0$ if and only if $s_1 = s_2 = s_3$. The functions $\hat{\rho}$ and $\hat{\mu}$ are called the {\bf unsheared energy density} and the {\bf rigidity modulus} of the elastic material. Examples of these are the {\bf John quasi-Hookean material}, corresponding to the shear scalar
\[
\sigma = \frac{{s_1}^2 + {s_2}^2 + {s_3}^2}{\left({s_1}^2 {s_2}^2 {s_3}^2 \right)^\frac13} - 3,
\]
and the the {\bf Karlovini-Samuelsson quasi-Hookean material}, corresponding to the shear scalar
\[
\sigma = \frac1{12} \left[ \left( \frac{s_1}{s_2} - \frac{s_2}{s_1} \right)^2 + \left( \frac{s_1}{s_3} - \frac{s_3}{s_1} \right)^2 + \left( \frac{s_2}{s_3} - \frac{s_3}{s_2} \right)^2 \right].
\]
It is easily seen that the first elastic law is of the form
\[
\rho=f(\lambda_0)+g(\lambda_0) \lambda_2,
\]
whereas the second is of the form
\[
\rho=f(\lambda_0)+g(\lambda_0) \lambda_1 \lambda_2.
\]
Other examples are the {\bf stiff ultra-rigid material} of Karlovini and Samuelsson, given by
\[
\rho=\frac{\rho_0}4 (\lambda_2 + 1),
\]
and the {\bf Brotas rigid solid}, given by
\[
\rho=\frac{\rho_0}{8}(\lambda_0 + \lambda_1 + \lambda_2 + 1)
\]
(where $\rho_0$ is a positive constant).

\section{Einstein-Hilbert action} \label{sec6.3}

The variational formulation of field theories is coordinate-free, and so it gives a simple method to write the field equations on an arbitrary coordinate system. We must be careful, however, to note that the action written in the new coordinate system must include the Jacobian of the coordinate transformation:
\[
S = \int_{\bbR^4} \mathcal{L}(\psi, \partial\psi)  \sqrt{|\det(g_{\mu\nu})|} \, dx^0 dx^1 dx^2 dx^3.
\]
This suggests that to generalize the field equations to an arbitrary curved spacetime $(M,g)$ one should consider actions of this form, where $\cL$ should be invariant under coordinate changes. In particular, one may wonder if there is a Lagrangian action for the metric itself which yields the Einstein field equations. The answer to this question is affirmative, and the corresponding action is known as the {\bf Einstein-Hilbert action}:
\[
S = \int_{M} R  \sqrt{|\det(g_{\mu\nu})|} \, dx^0 dx^1 dx^2 dx^3 = \int_{M} R \epsilon,
\]
where $R$ and $\epsilon$ are the scalar curvature and the volume element of the metric $g$, and the integral is over an arbitrary (oriented) manifold $M$. Note that $R$ depends on $g$ and its first and second partial derivatives, unlike the Lagrangian densities that we encountered before. Instead of deriving the Euler-Lagrange equations for this case, we will proceed in a more geometric way. We note here, however, that this action is exceptional in that it leads to second-order equations for the metric, as opposed to the fourth order equations that are typical of Lagrangian densities depending on second partial derivatives.

To obtain the Euler-Lagrange equations for the Einstein-Hilbert action we start by considering two affine connections $\nabla$ and $\tilde{\nabla}$ on $M$. Because
\[
(\tilde{\nabla}_X - \nabla_X) (fY) = f (\tilde{\nabla}_X - \nabla_X) Y,
\]
there exists a tensor $C$ such that
\[
(\tilde{\nabla}_X - \nabla_X) Y = C(X,Y) \Leftrightarrow \tilde{\nabla}_\mu Y^\nu = \nabla_\mu Y^\nu + C^\nu_{\mu\alpha} Y^\alpha.
\]
If both $\nabla$ and $\tilde{\nabla}$ are symmetric then
\[
0 = (\tilde{\nabla}_X Y - \tilde{\nabla}_Y X) - (\nabla_X Y - \nabla_Y X) = C(X,Y) - C(Y,X),
\]
that is, $C$ is symmetric:
\[
C^\alpha_{\mu\nu} = C^\alpha_{\nu\mu}. 
\]
Using the Leibnitz rule, it is easy to determine the relation between the covariant derivatives of any tensor using the two connections: for example,
\[
\tilde{\nabla}_\alpha T^\beta_{\mu\nu} = \nabla_\alpha T^\beta_{\mu\nu} + C^\beta_{\alpha\gamma} T^\gamma_{\mu\nu} - C^\gamma_{\alpha\mu} T^\beta_{\gamma\nu} - C^\gamma_{\alpha\nu} T^\beta_{\mu\gamma}.
\]
Assume now that $\tilde{\nabla}$ is the Levi-Civita connection for the metric $g$. Then we have
\[
0 = \tilde{\nabla}_\alpha g_{\mu\nu} = \nabla_\alpha g_{\mu\nu} - C^\beta_{\alpha\mu} g_{\beta\nu} - C^\beta_{\alpha\nu} g_{\mu\beta} = \nabla_\alpha g_{\mu\nu} - C_{\nu\alpha\mu} - C_{\mu\alpha\nu}.
\]
By subtracting this identity from its cyclic permutations,
\begin{align*}
& \nabla_\mu g_{\nu\alpha} = C_{\alpha\mu\nu} + C_{\nu\mu\alpha}, \\
& \nabla_\nu g_{\alpha\mu} = C_{\mu\nu\alpha} + C_{\alpha\nu\mu},
\end{align*}
we readily obtain
\begin{align}
& 2 C_{\alpha\mu\nu} = \nabla_\mu g_{\nu\alpha} + \nabla_\nu g_{\alpha\mu} - \nabla_\alpha g_{\mu\nu} \Leftrightarrow \nonumber \\ 
& C^\alpha_{\mu\nu} = \frac12 g^{\alpha\beta} \left(\nabla_\mu g_{\nu\beta} + \nabla_\nu g_{\mu\beta} - \nabla_\beta g_{\mu\nu}\right). \label{Christoffel}
\end{align} 
Moreover, we have
\begin{align*}
\tilde{\nabla}_\mu  \tilde{\nabla}_\nu X^\alpha = & \, \tilde{\nabla}_\mu (\nabla_\nu X^\alpha + C^\alpha_{\nu\beta} X^\beta ) =  \nabla_\mu \nabla_\nu X^\alpha + C^\alpha_{\mu\beta} \nabla_\nu X^\beta - C^\beta_{\mu\nu} \nabla_\beta X^\alpha \\
& + \nabla_\mu C^\alpha_{\nu\beta} X^\beta + C^\alpha_{\nu\beta} \nabla_\mu X^\beta + C^\alpha_{\mu\gamma} C^\gamma_{\nu\beta} X^\beta - C^\gamma_{\mu\nu} C^\alpha_{\gamma\beta} X^\beta,
\end{align*}
whence
\begin{align*}
\tilde{R}_{\mu\nu\,\,\,\,\beta}^{\,\,\,\,\,\,\,\,\alpha} X^\beta = & \, (\tilde{\nabla}_\mu \tilde{\nabla}_\nu - \tilde{\nabla}_\nu \tilde{\nabla}_\mu) X^\alpha = R_{\mu\nu\,\,\,\,\beta}^{\,\,\,\,\,\,\,\,\alpha} X^\beta \\
& + \left(\nabla_\mu C^\alpha_{\nu\beta} - \nabla_\nu C^\alpha_{\mu\beta} + C^\alpha_{\mu\gamma} C^\gamma_{\nu\beta} - C^\alpha_{\nu\gamma} C^\gamma_{\mu\beta} \right) X^\beta,
\end{align*}
that is,
\begin{equation} \label{curvature}
\tilde{R}_{\mu\nu\,\,\,\,\beta}^{\,\,\,\,\,\,\,\,\alpha} = R_{\mu\nu\,\,\,\,\beta}^{\,\,\,\,\,\,\,\,\alpha} + \nabla_\mu C^\alpha_{\nu\beta} - \nabla_\nu C^\alpha_{\mu\beta} + C^\alpha_{\mu\gamma} C^\gamma_{\nu\beta} - C^\alpha_{\nu\gamma} C^\gamma_{\mu\beta}.
\end{equation}
Note that we retrieve the usual formulae for the Christoffel symbols and the Riemann tensor from equations~\eqref{Christoffel} and \eqref{curvature} in the case when $\nabla_\mu = \partial_\mu$.

Let $g(\lambda)$ be a one-parameter family of Lorentzian metrics on $M$, and choose $\nabla$ and $\tilde{\nabla}$ to be the Levi-Civita connections of $g(0)$ and $g(\lambda)$. The difference between these connections is a tensor $C(\lambda)$ with $C(0)=0$. Again setting $\delta \equiv \frac{d}{d \lambda}_{|_{\lambda=0}}$, we have from~\eqref{Christoffel}
\begin{align*}
\delta C^\alpha_{\mu\nu} & = \frac12 g^{\alpha\beta} \left(\nabla_\mu \delta g_{\nu\beta} + \nabla_\nu \delta g_{\mu\beta} - \nabla_\beta \delta g_{\mu\nu}\right) \\
& = \frac12 \left(\nabla_\mu \delta g_{\nu}^{\,\,\,\,\alpha} + \nabla_\nu \delta g_{\mu}^{\,\,\,\,\alpha} - \nabla^\alpha \delta g_{\mu\nu} \right),
\end{align*}
where $g_{\mu\nu}$ means $g_{\mu\nu}(0)$ and all indices are raised or lowered with this metric. Note carefully that $\delta g$ means the tensor $\delta g_{\mu\nu}$, possible with some indices raised. Thus for instance
\[
\delta(g^{\mu\nu}) = - g^{\mu\alpha}g^{\nu\beta} \delta g_{\alpha\beta} = - \delta g^{\mu\nu}.
\]
From \eqref{curvature} we have
\[
\delta R_{\mu\nu\,\,\,\,\beta}^{\,\,\,\,\,\,\,\,\alpha} =  \nabla_\mu \delta C^\alpha_{\nu\beta} - \nabla_\nu \delta C^\alpha_{\mu\beta},
\]
whence
\begin{align*}
\delta R_{\mu\nu} & =  \nabla_\alpha \delta C^\alpha_{\mu\nu} - \nabla_\mu \delta C^\alpha_{\alpha\nu} \\
& = \frac12 \nabla_\alpha \left( \nabla_\mu \delta g_{\nu}^{\,\,\,\,\alpha} + \nabla_\nu \delta g_{\mu}^{\,\,\,\,\alpha} - \nabla^\alpha \delta g_{\mu\nu} \right) - \frac12 \nabla_\mu \nabla_\nu \delta g_{\alpha}^{\,\,\,\,\alpha}.
\end{align*}
Note that
\[
g^{\mu\nu} \delta R_{\mu\nu} = - \nabla_\mu \nabla^\mu \delta g_{\nu}^{\,\,\,\,\nu} +  \nabla^\mu \nabla^\nu \delta g_{\mu\nu} = \nabla^\mu \left( -\nabla_\mu \delta g_{\nu}^{\,\,\,\,\nu} + \nabla^\nu \delta g_{\mu\nu}\right)
\]
is a divergence with respect to the metric $g(0)$, and will vanish when integrated for variations of the metric with compact support.

The variation of the Einstein-Hilbert action is
\[
\delta S = \delta \int_M R \epsilon = \int_M \delta(R \epsilon) = \int_M( \delta R \epsilon + R \delta \epsilon).
\]
We have
\[
\delta R = \delta(g^{\mu\nu} R_ {\mu\nu}) = - \delta g^{\mu\nu} R_ {\mu\nu} + \nabla^\mu \left( -\nabla_\mu \delta g_{\nu}^{\,\,\,\,\nu} + \nabla^\nu \delta g_{\mu\nu}\right)
\]
and, using the identity
\[
\delta \det A = (\det A) \tr(A^{-1} \delta A),
\]
for any matrix-valued function $A$,
\begin{align*}
\delta \epsilon & = \delta \sqrt{-\det(g_{\mu\nu})} \, dx^0 \wedge dx^1 \wedge dx^2 \wedge dx^3 \\
& = - \frac12 \left(\det(g_{\mu\nu})\right) g^{\mu\nu} \delta g_{\mu\nu} \left(-\det(g_{\mu\nu})\right)^{-\frac12} \, dx^0 \wedge dx^1 \wedge dx^2 \wedge dx^3 \\
& = \frac12 g_{\mu\nu} \delta g^{\mu\nu} \epsilon.
\end{align*}
We conclude that
\begin{align*}
\delta S & = - \int_M \left(R_ {\mu\nu} - \frac12 R g_{\mu\nu} \right) \delta g^{\mu\nu} \epsilon + \int_M \nabla^\mu \left( -\nabla_\mu \delta g_{\nu}^{\,\,\,\,\nu} + \nabla^\nu \delta g_{\mu\nu}\right) \epsilon \\
& = - \int_M G_{\mu\nu} \delta g^{\mu\nu} \epsilon.
\end{align*}
for variations of the metric with compact support. We conclude that the Euler-Lagrange equations for the Einstein-Hilbert action are the Einstein equations $G_{\mu\nu} = 0$.

\begin{Remark}
It is easy to see that the Einstein tensor of a compact surface $(M,g)$ vanishes identically. Therefore we have
\[
\delta \int_M R \epsilon = 0
\]
automatically. If $g_1$ and $g_2$ are any two Riemannian metrics on $M$ then $g(\lambda)=(1-\lambda)g_1 + \lambda g_2$ is a Riemannian metric which interpolates between $g_1$ and $g_2$. We then have
\[
\frac{d}{d\lambda} \int_M R(\lambda) \epsilon(\lambda) = 0  \Rightarrow \int_M R_1 \epsilon_1 = \int_M R_2 \epsilon_2,
\]
that is, the integral of the scalar curvature does not depend on the metric. This statement is known as the {\bf Gauss-Bonnet theorem}.
\end{Remark}

To include matter fields $\psi$ and a cosmological constant $\Lambda$ in the Einstein equations we consider the action
\[
S = \int_M \left(\cL(g,\psi) - \frac1{16 \pi}(R - 2\Lambda)\right) \epsilon.
\]
It is clear that
\[
\delta S = \int_M E(\psi) \delta \psi \, \epsilon + \frac1{16 \pi} \int_M \left(G_{\mu\nu} + \Lambda g_{\mu\nu} - 8 \pi T_{\mu\nu} \right) \delta g^{\mu\nu} \epsilon,
\]
where we have defined
\[
\delta \int_M \cL(g,\psi) \epsilon = \int_M E(\psi) \delta \psi \, \epsilon - \frac12 \int_M T_{\mu\nu} \delta g^{\mu\nu} \epsilon.
\]
The Euler-Lagrange equations are then the Einstein equations with sources plus the field equations for $\psi$:
\[
\begin{cases}
G_{\mu\nu} + \Lambda g_{\mu\nu} = 8 \pi T_{\mu\nu} \\
E(\psi) = 0
\end{cases} .
\]
The energy-momentum tensor is then
\[
T_{\mu\nu} = - 2 \frac{\delta \cL}{\delta g^{\mu\nu}} - \cL g_{\mu\nu},
\]
where the second term comes from the variation of the volume element and we have set
\[
\delta \cL = \frac{\delta \cL}{\delta g^{\mu\nu}} \delta g^{\mu\nu} + E(\psi) \delta \psi.
\]
It is interesting to note that the energy-momentum tensor as defined here often agrees with what one would expect from the canonical energy-momentum tensor associated to the Lagrangian density $\cL$ in Minkowski's spacetime.

\section{Gravitational waves} \label{sec6.35}

Since we have computed the variation of the Ricci tensor, we make a short digression to discuss the linearized Einstein vacuum equations, which describe the propagation of gravitational waves on a fixed solution $(M,g)$ of the full nonlinear vacuum equations $R_{\mu\nu} = \Lambda g_{\mu\nu}$. The linearized equations are simply
\[
\delta R_{\mu\nu} = \Lambda \delta g_{\mu\nu},
\]
that is,
\begin{equation} \label{linearEinstein}
\nabla_\alpha \left( \nabla_\mu \delta g_{\nu}^{\,\,\,\,\alpha} + \nabla_\nu \delta g_{\mu}^{\,\,\,\,\alpha} - \nabla^\alpha \delta g_{\mu\nu} \right) - \nabla_\mu \nabla_\nu \delta g_{\alpha}^{\,\,\,\,\alpha} = 2 \Lambda \delta g_{\mu\nu}.
\end{equation}
Note that some variations of the metric are trivial, in that they arise from the diffeomorphism invariance of the Einstein equations: if $\psi_\lambda$ is a one-parameter family of diffeomorphisms, then the variation
\[
g(\lambda) = {\psi_\lambda}^* g
\]
yields metrics which are isometric to $g=g(0)$, but expressed in different coordinates. In this case we have
\[
\delta g = \mathcal{L}_V g,
\]
that is
\[
\delta g_{\mu\nu} = \nabla_\mu V_\nu + \nabla_\nu V_\mu,
\]
where $V$ is the vector field defined at each point $p \in M$ by
\[
V_p = \frac{d}{d \lambda}_{|_{\lambda=0}} \psi_\lambda(p).
\]
Therefore, the linearized Einstein equations have gauge freedom: we can always add a Lie derivative of $g$ to any given variation of the metric without altering its physical meaning. This corresponds to gauge transformations of the form
\[
\delta g_{\mu\nu} \to \delta g_{\mu\nu} + \nabla_\mu V_\nu + \nabla_\nu V_\mu.
\]
We will now construct a gauge where equations~\eqref{linearEinstein} look particularly simple. To do that, we consider the trace-reversed metric perturbation
\[
\overline{\delta g}_{\mu\nu} = \delta g_{\mu\nu} - \frac12 (\delta g_{\alpha}^{\,\,\,\,\alpha}) g_{\mu\nu},
\]
which transforms under a gauge transformation as
\[
\overline{\delta g}_{\mu\nu} \to \overline{\delta g}_{\mu\nu} + \nabla_\mu V_\nu + \nabla_\nu V_\mu - (\nabla_\alpha V^\alpha) g_{\mu\nu},
\]
so that its divergence transforms as
\begin{align*}
\nabla^\mu \overline{\delta g}_{\mu\nu} \to & \nabla^\mu \overline{\delta g}_{\mu\nu} + \Box V_\nu + \nabla_\mu \nabla_\nu V^\mu - \nabla_\nu \nabla_\alpha V^\alpha \\
& = \nabla^\mu \overline{\delta g}_{\mu\nu} + \Box V_\nu + R_{\mu\nu\,\,\,\,\alpha}^{\,\,\,\,\,\,\,\,\mu} V^\alpha \\
& = \nabla^\mu \overline{\delta g}_{\mu\nu} + \Box V_\nu + R_{\nu\alpha} V^\alpha.
\end{align*}
Assume that $(M,g)$ is globally hyperbolic. By solving the wave equation
\[
\Box V_\nu + R_{\nu\alpha} V^\alpha + \nabla^\mu \overline{\delta g}_{\mu\nu} = 0,
\]
we can then change to a gauge where
\[
\nabla^\mu \overline{\delta g}_{\mu\nu} = 0,
\]
that is,
\[
\nabla^\mu \delta g_{\mu\nu} = \frac12 \nabla_\nu \delta g_{\alpha}^{\,\,\,\,\alpha}.
\]
Taking the trace of equation~\eqref{linearEinstein} we then obtain
\begin{equation} \label{waveperturbation}
\Box \, \delta g_{\alpha}^{\,\,\,\,\alpha} + 2 \Lambda \delta g_{\alpha}^{\,\,\,\,\alpha} = 0.
\end{equation}
Note that we still have a residual gauge freedom, corresponding to vector fields $V$ such that
\[
\Box V_\nu + R_{\nu\alpha} V^\alpha = 0.
\]
Choosing $V$ and its derivatives on some Cauchy hypersurface $S$ such that
\[
\delta g_{\alpha}^{\,\,\,\,\alpha} + 2 \nabla_\alpha V^\alpha = N \cdot \left( \delta g_{\alpha}^{\,\,\,\,\alpha} + 2 \nabla_\alpha V^\alpha \right) = 0,
\]
where $N$ is the future-pointing unit normal to $S$, and solving the wave equation for $V$ above, we guarantee the existence of a gauge where 
\[
\delta g_{\alpha}^{\,\,\,\,\alpha} = N \cdot \left( \delta g_{\alpha}^{\,\,\,\,\alpha}\right) = 0
\]
on $S$. The wave equation~\eqref{waveperturbation} then guarantees that
\[
\delta g_{\alpha}^{\,\,\,\,\alpha} = 0
\]
on $M$, and so
\[
\nabla^\mu \delta g_{\mu\nu} = 0.
\]
This is the so-called {\bf transverse traceless gauge}. In this gauge, the linearized Einstein equation~\eqref{linearEinstein} can be written as
\[
\nabla_\alpha \nabla_\mu \delta g_{\nu}^{\,\,\,\,\alpha} + \nabla_\alpha \nabla_\nu \delta g_{\mu}^{\,\,\,\,\alpha} - \Box \delta g_{\mu\nu} = 2 \Lambda \delta g_{\mu\nu}.
\]
Using
\begin{align*}
\nabla_\alpha \nabla_\mu \delta g_{\nu}^{\,\,\,\,\alpha} & = \nabla_\mu \nabla_\alpha \delta g_{\nu}^{\,\,\,\,\alpha} + R_{\alpha\mu\nu\beta} \delta g^{\beta\alpha} + R_{\alpha\mu\,\,\,\,\beta}^{\,\,\,\,\,\,\,\,\alpha} \delta g_{\nu}^{\,\,\,\,\beta} \\
& = R_{\alpha\mu\nu\beta} \delta g^{\alpha\beta} + R_{\mu\beta} \delta g_{\nu}^{\,\,\,\,\beta} \\
& = R_{\alpha\mu\nu\beta} \delta g^{\alpha\beta} + \Lambda \delta g_{\mu\nu},
\end{align*}
we finally obtain
\[
\Box \delta g_{\mu\nu} - 2R_{\alpha\mu\nu\beta} \delta g^{\alpha\beta} = 0.
\]

\section{ADM mass} \label{sec6.4}

If we write the metric in the Gauss Lemma form,
\[
g = - dt^2 + h_{ij}(t,x) dx^i dx^j 
\]
then from the exercises in Chapter~\ref{chapter5} we have
\[
R = \bar{R} + 2 \frac{\partial}{\partial t} \left(K^i_{\,\,\,\,i}\right) + \left(K^i_{\,\,\,\,i}\right)^2 + K_{ij} K^{ij}.
\]
Setting
\[
X = K^i_{\,\,\,\,i} \frac{\partial}{\partial t}\, ,
\]
we have
\[
\dive X = \nabla_0 X^0 + \nabla_i X^i = \partial_0 X^0 + \Gamma_{i0}^i X^0 = \partial_0 X^0 + K^i_{\,\,\,\,i} X^0,
\]
that is,
\[
\frac{\partial}{\partial t} \left(K^i_{\,\,\,\,i}\right) = \dive X - \left(K^i_{\,\,\,\,i}\right)^2.
\]
We conclude that
\[
R = \bar{R} - \left(K^i_{\,\,\,\,i}\right)^2 + K_{ij} K^{ij} + 2 \dive X,
\]
and so the Einstein-Hilbert action corresponds to the Lagrangian density
\[
\cL = \sqrt{\det(h_{ij})} \left(\bar{R} - \left(K^i_{\,\,\,\,i}\right)^2 + K_{ij} K^{ij}\right).
\]
Therefore the Hamiltonian density is
\begin{align*}
\mathcal{H} & = - \frac{\partial \cL}{\partial(\partial_0 h_{ij})} \partial_0 h_{ij} + \cL = - \frac{\partial \cL}{\partial(K_{ij})} K_{ij} + \cL \\
& = - \sqrt{\det(h_{ij})} \left( 2 K^{ij} - 2 K^l_{\,\,\,\,l} h^{ij} \right) K_{ij} + \cL \\
& =\sqrt{\det(h_{ij})} \left( \bar{R} + \left(K^i_{\,\,\,\,i}\right)^2 - K_{ij} K^{ij}\right) = 2\sqrt{\det(h_{ij})} \, G_{00} = 0
\end{align*}
for any solution of the vacuum Einstein field equations. That is, the total energy associated to the fields $h_{ij}$ is simply zero.

\begin{figure}[h!]
\begin{center}
\psfrag{n}{$n$}
\psfrag{M}{$M$}
\psfrag{dM}{$\partial M$}
\psfrag{d/dt}{$\frac{\partial}{\partial t}$}
\psfrag{S}{$\Sigma$}
\epsfxsize=0.4\textwidth
\leavevmode
\epsfbox{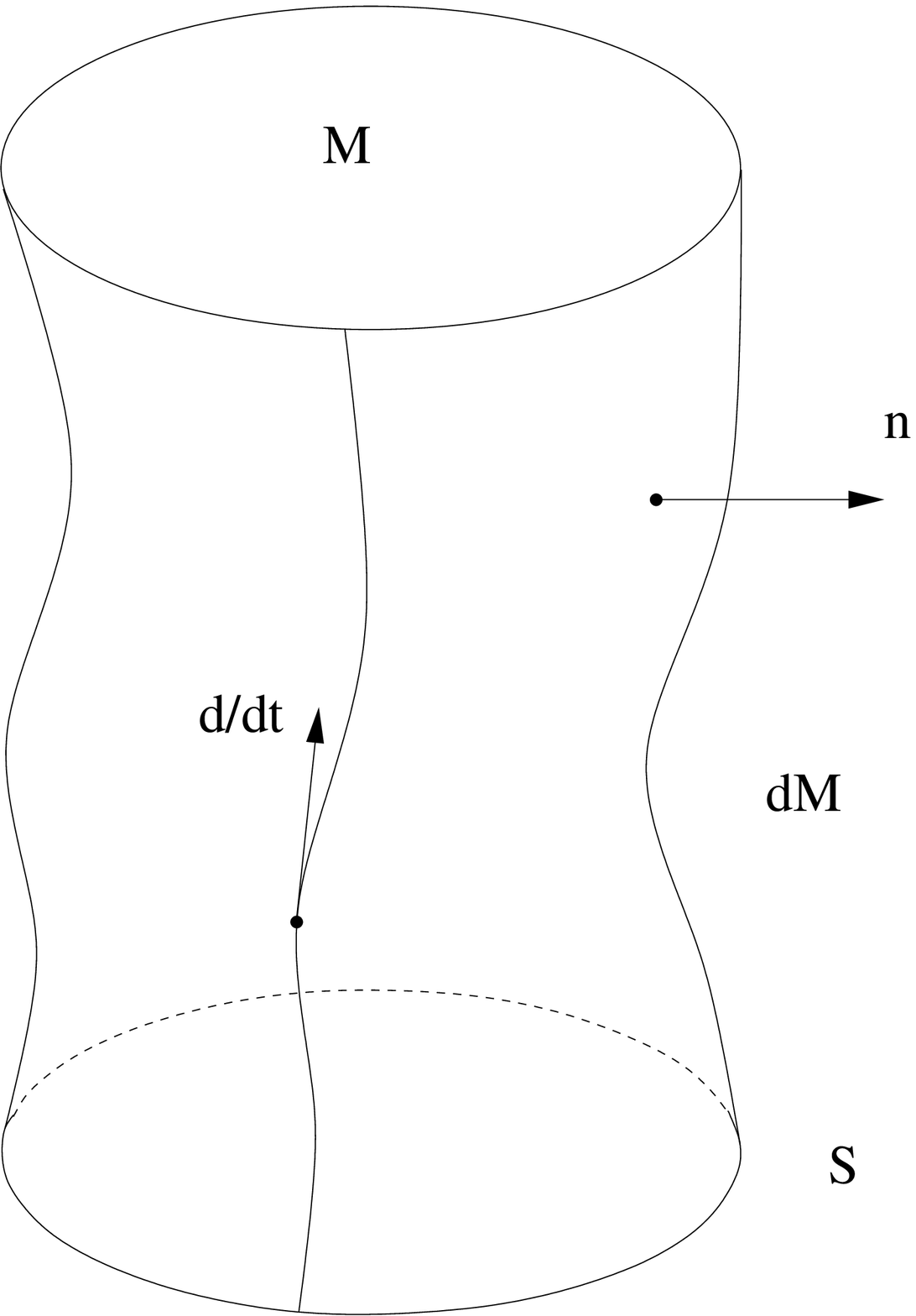}
\end{center}
\caption{Computing the boundary terms for the Einstein-Hilbert action.} \label{boundary}
\end{figure}

To try to obtain a nonzero quantity we reconsider the boundary terms that were discarded when varying the Einstein-Hilbert action:
\[
\int_M \nabla^\mu \left( -\nabla_\mu \delta g_{\nu}^{\,\,\,\,\nu} + \nabla^\nu \delta g_{\mu\nu}\right) \epsilon = \int_{\partial M} \left( -\nabla_\mu \delta g_{\nu}^{\,\,\,\,\nu} + \nabla^\nu \delta g_{\mu\nu}\right) n^\mu \omega,
\]
where we take $M$ to be a manifold with a timelike boundary $\partial M$ at infinity, tangent to $\frac{\partial}{\partial t}$, with unit normal $n$ and volume element $\omega$ (Figure~\ref{boundary}). Note that it is not necessary to consider the flux of the vector field $X$ which we discarded above as it is orthogonal to $n$. The boundary integral can be written as
\[
\int_\bbR \int_{\Sigma} \left( - \partial_i (h^{jk}\delta h_{jk}) + h^{jk}\bar\nabla_k \delta h_{ij}\right) n^i \sigma dt,
\]
where we take $\partial M$ to be the flow by $\frac{\partial}{\partial t}$ of a spacelike surface $\Sigma$ and $\omega = dt \wedge \sigma$. Suppose that $h$ approaches the Euclidean metric at infinity, so that  in an appropriate coordinate system $(x^1,x^2,x^3)$ we have
\[
h_{ij} = \delta_{ij} + O\left(r^{-p}\right), \quad \delta h_{ij} = O\left(r^{-p}\right), \quad \partial_k h_{ij}, \partial_k \delta h_{ij}, \Gamma^k_{ij} = O\left(r^{-p-1}\right) 
\]
as $r \to +\infty$, with $r^2 = {(x^1)}^2 + {(x^2)}^2 + {(x^3)}^2$ and $p>\frac12$. Then the boundary integral can be written as
\begin{align*}
& \lim_{r \to \infty} \int_\bbR \int_{S_r} \left( - \partial_i \delta h_{jj} + \partial_j \delta h_{ij} \right) \frac{x^i}{r} dt \\
& = \delta \lim_{r \to \infty} \int_\bbR \int_{S_r} \left( - \partial_i h_{jj} + \partial_j h_{ij} \right) \frac{x^i}{r} dt = \delta I,
\end{align*}
where $S_r$ is a coordinate sphere of radius $r$. Note that $\delta$ can be brought outside the integral because we have replaced the metric-dependent terms $\bar\nabla$, $n$ and $\sigma$. 

If $S$ is the Einstein-Hilbert action, we then have
\[
\delta S = - \int_M G_{\mu\nu} \delta g^{\mu\nu} + \delta I,
\]
and so the Einstein equations hold if and only if
\[
\delta(S-I) = 0.
\]
One can think of $I$ as the integral of a singular Lagrangian density $\mathcal{I}$. The Einstein-Hilbert Lagrangian density $\cL$ should therefore be replaced by $\cL - \mathcal{I}$, and so, since $\mathcal{I}$ does not depend on $K_{ij}$, the corresponding Hamiltonian density $\mathcal{H}$ should be replaced by $\mathcal{H} - \mathcal{I} = - \mathcal{I}$. The Hamiltonian should then be
\[
H = - \lim_{r \to \infty} \int_{S_r} \left( \partial_j h_{ij} - \partial_i h_{jj} \right) \frac{x^i}{r}.
\]
This Hamiltonian suggests a definition of the total energy of the gravitational field at a given time slice, provided that $h$ approaches the Euclidean metric at infinity.

\begin{Def}
A $3$-dimensional Riemannian manifold $(S,h)$ is said to be {\bf asymptotically flat} if there exist:
\begin{enumerate}[(i)]
\item
A compact set $K \subset S$ such that $S \setminus K$ is diffeomorphic to $\bbR^3 \setminus \overline{B_1}(0)$;
\item
A chart $(x^1,x^2,x^3)$ on $S \setminus K$ (called a {\bf chart at infinity}) such that 
\[
|h_{ij} - \delta_{ij}| + r |\partial_k h_{ij}| + r^2 |\partial_k \partial_l h_{ij}|= O(r^{-p}) \text{ and } \bar{R}=O(r^{-q})
\]
for some $p > \frac12$ and $q>3$, where $r^2 = {(x^1)}^2 + {(x^2)}^2 + {(x^3)}^2$ and $\bar{R}$ is the scalar curvature of $h$.
\end{enumerate}

\end{Def}

\begin{figure}[h!]
\begin{center}
\psfrag{S}{$S$}
\psfrag{K}{$K$}
\epsfxsize=1.0\textwidth
\leavevmode
\epsfbox{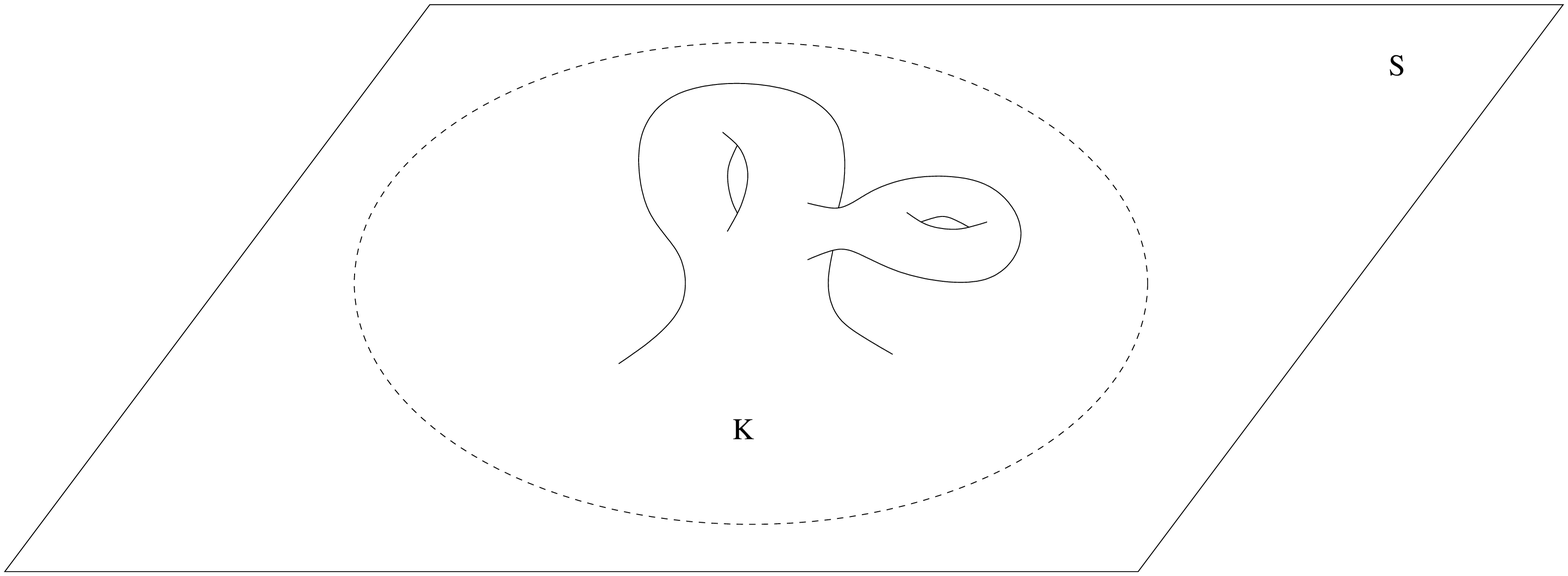}
\end{center}
\caption{Asymptotically flat manifold.} \label{asymp_flat}
\end{figure}

\begin{Def} ({\bf Arnowitt-Deser-Misner \cite{ADM61}})
The {\bf ADM mass} of an asymptotically flat Riemannian manifold $(S,h)$ is
\[
M = \lim_{r \to + \infty} \frac{1}{16 \pi} \int_{S_r}  \left( \partial_j h_{ij} - \partial_i h_{jj} \right) \frac{x^i}r,
\]
where $S_r$ is a sphere of radius $r$ in the chart at infinity $(x^1,x^2,x^3)$.
\end{Def}

Note that in this definition the Hamiltonian has been multiplied by the factor $-\frac1{16\pi}$ that is used when coupling to matter fields.

\begin{Thm} ({\bf Ashtekar \cite{AA79}}) 
If $(S,h)$ is asymptotically flat and the maximal Cauchy development of $(S,h,K)$ is stationary then the Komar mass of the maximal Cauchy development coincides with the ADM mass of $(S,h)$.
\end{Thm}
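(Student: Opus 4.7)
The plan is to work in coordinates adapted to both the Killing symmetry and the asymptotic chart on $S$. Let $\xi$ denote the timelike Killing vector field on the maximal Cauchy development, normalized so that $\langle\xi,\xi\rangle\to -1$ at infinity. Flowing $S$ by $\xi$ produces coordinates $(t,x^1,x^2,x^3)$ on a neighborhood of infinity in which $\xi=\partial/\partial t$, $S=\{t=0\}$, and all metric components are $t$-independent. Asymptotic flatness of $(S,h)$ together with the Killing equation propagates the decay $h_{ij}=g_{ij}-\delta_{ij}=O(r^{-p})$ on $S$ to the full spacetime metric, giving $g_{tt}+1=O(r^{-p})$ and $g_{ti}=O(r^{-p})$, with derivatives decaying one order faster; the spatial metric induced on $S$ is exactly the given asymptotically flat $h$.

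Next, I would evaluate $-\frac{1}{8\pi}\int_{\Sigma}\star d\xi^\sharp$ on the coordinate sphere $\Sigma=S_r\subset S$. Stationarity kills the $\partial_t$-derivatives in $d\xi^\sharp$, leaving
\[
d\xi^\sharp = \partial_i g_{tt}\,dx^i\wedge dt + \tfrac12(\partial_j g_{ti}-\partial_i g_{tj})\,dx^j\wedge dx^i.
\]
The deviation of the true Hodge star from the flat Minkowski Hodge star is $O(r^{-p})$, contributing only subleading terms; pulled back to $S_r\subset\{t=0\}$, the second piece of $d\xi^\sharp$ dualizes to a form proportional to $dt\wedge(\cdots)$ and drops out, while $\star(dx^i\wedge dt)=\tfrac12\epsilon_{ijk}\,dx^j\wedge dx^k$, whose restriction to $S_r$ equals $(x^i/r)\,dA$. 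Collecting everything,
\[
M_{\text{Komar}}=-\frac{1}{8\pi}\lim_{r\to\infty}\int_{S_r}\partial_i g_{tt}\,\frac{x^i}{r}\,dA,
\]
the error terms being $O(r^{-2p})$ against the area element and thus vanishing for $p>\tfrac12$.

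The final step is to identify this boundary integral with the ADM mass. Setting $V^i := -2\partial_i g_{tt} - (\partial_j h_{ij}-\partial_i h_{jj})$, equality $M_{\text{Komar}}=M_{\text{ADM}}$ is equivalent to $\lim_{r\to\infty}\int_{S_r}V^i(x^i/r)\,dA=0$. A direct linearized computation in the stationary asymptotic chart yields
\[
\partial_i V^i = 4R_{tt} + \bar R + O(r^{-2p-2}),
\]
where at linear order $R_{tt}=-\tfrac12\Delta g_{tt}$ by stationarity and $\bar R$ is the scalar curvature of $(S,h)$. Asymptotic flatness gives $\bar R=O(r^{-q})$ with $q>3$, and the Einstein equations combined with stationarity force $R_{tt}$ to decay at the same rate as the relevant stress-energy components, so $\partial_i V^i$ is integrable at infinity. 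A divergence-theorem argument on annular regions then shows that $\int_{S_r}V^i(x^i/r)\,dA$ converges to a finite limit, and to identify that limit as zero I would invoke gauge invariance of both masses: it suffices to verify the equality in any convenient chart, for instance harmonic coordinates, where the linearized stationary vacuum solution is forced into the symmetric form $g_{tt}\sim -1+2M/r$, $g_{ij}\sim(1+2M/r)\delta_{ij}$, in which $V^i$ vanishes identically at the order $O(r^{-2})$ that actually contributes to the flux.

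The main obstacle is this last step — showing that the asymptotic flux of $V^i$ is zero. The difficulty is twofold: first, at a decay rate $p$ just above $\tfrac12$, the nonlinear corrections to $\partial_i V^i$ are only marginally integrable, so the divergence-theorem estimate is delicate; second, executing the gauge-invariance reduction rigorously requires a matter-decay hypothesis that is implicit rather than explicit in the statement of the theorem, and one must check that the stationarity together with asymptotic flatness supplies it.
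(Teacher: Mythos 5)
The paper states this theorem without proof --- it is quoted from Ashtekar's work, just as the companion chart-invariance result is quoted from Bartnik --- so there is no in-text argument to measure yours against; I can only assess the proposal on its own terms.

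Your reduction of the Komar integral is sound: with $\xi=\partial/\partial t$ and $t$-independent components, the shift terms $\partial_j g_{ti}\,dx^j\wedge dx^i$ do dualize into forms that pull back to zero on $S_r\subset\{t=0\}$, and one is left with $M_{\text{Komar}}=-\frac{1}{8\pi}\lim_{r\to\infty}\int_{S_r}\partial_i g_{tt}\,(x^i/r)\,dA$, which checks out against Schwarzschild. (One quibble: the error from replacing the true Hodge star by the flat one is $O(r^{-2p-1})$ in the integrand, hence $O(r^{1-2p})$ after integration; as you state it, $O(r^{-2p})$ ``against the area element'' would give $O(r^{2-2p})$, which does not vanish for $\tfrac12<p<1$.) The genuine gap is the last step, and you have diagnosed it yourself: proving that the flux of $V^i$ vanishes is not a technical refinement but the entire content of the theorem. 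Your route delegates it to two external facts: (i) chart-invariance of the ADM mass, i.e.\ Bartnik's theorem, itself only cited in the text; and (ii) the assertion that a stationary, asymptotically flat, vacuum-near-infinity metric admits harmonic coordinates in which $g_{tt}=-1+2M/r+o(r^{-1})$ and $g_{ij}=(1+2M/r)\delta_{ij}+o(r^{-1})$ with the \emph{same} constant $M$ and with derivatives of the remainders decaying one order faster. Item (ii) is precisely what Ashtekar, Beig and Chru\'sciel prove: it requires bootstrapping the a priori decay $p>\tfrac12$ up to the Schwarzschildian leading order via elliptic estimates for the reduced stationary field equations, and then using those equations together with the gauge condition to tie the $1/r$-coefficient of $g_{tt}$ to that of $g_{ij}$. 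As written, this step is asserted rather than proved, so what you have is an accurate road map for the proof rather than the proof itself. You are also right that the statement tacitly needs vacuum (or sufficiently decaying matter) near infinity for the Komar integral even to be surface-independent.
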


\begin{Thm} ({\bf Bartnik \cite{B86}}) 
The ADM mass is well defined, that is, it does not depend on the choice of the chart at infinity.
\end{Thm}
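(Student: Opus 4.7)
The plan is to prove Bartnik's theorem in two stages: first, use Stokes' theorem to show that the limit defining $M$ exists and can be re-expressed as a bulk integral whose leading term is the intrinsic scalar curvature; second, show that any two admissible charts at infinity are related by an asymptotic Euclidean rigid motion, under which the ADM integrand transforms in a controlled way that leaves the limit invariant.

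First I would set $U^i = \partial_j h_{ij} - \partial_i h_{jj}$ in a chart at infinity. A direct computation of $\bar{R}$ for a perturbation of the Euclidean metric $h_{ij}=\delta_{ij}+e_{ij}$ gives
\[
\bar{R} = \partial_i U^i + Q(h,\partial h),
\]
where $Q$ is a sum of quadratic contractions of $e$ and $\partial e$ with coefficients that are smooth bounded functions of $h$. Under the decay hypotheses $|e_{ij}|=O(r^{-p})$, $|\partial e_{ij}|=O(r^{-p-1})$ with $p>\tfrac12$, and $\bar{R}=O(r^{-q})$ with $q>3$, one has $Q=O(r^{-2p-2})$ and $\bar{R}=O(r^{-q})$, both integrable on $\bbR^3\setminus\overline{B_1}$. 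Applying the divergence theorem on the annulus $B_r\setminus B_1$,
\[
\int_{S_r} U^i \frac{x^i}{r}\,d\sigma - \int_{S_1} U^i \nu^i\,d\sigma = \int_{B_r\setminus B_1} \left(\bar{R} - Q\right) dx,
\]
so the limit as $r\to\infty$ exists. This confirms the well-definedness of $M$ in any fixed chart.

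Next I would address chart-independence. Let $\phi_1,\phi_2$ be two admissible charts at infinity, with transition diffeomorphism $\Phi=\phi_2\circ\phi_1^{-1}$ defined outside some compact set of $\bbR^3$. The key step is to show that $\Phi$ is asymptotic to a rigid motion, i.e.\ there exist $O\in O(3)$ and $a\in\bbR^3$ such that
\[
\Phi^i(x) = O^i_{\ j} x^j + a^i + \psi^i(x), \qquad |\psi^i|=o(r),\quad |\partial\psi^i|=O(r^{-p}),\quad |\partial^2\psi^i|=O(r^{-p-1}).
\]
Granted this, one checks by direct substitution that the ADM integrand is invariant under the leading rigid piece (rotation and translation preserve $(\partial_j h_{ij}-\partial_i h_{jj})\tfrac{x^i}{r}\,d\sigma_r$ in the limit), while the contribution from $\psi$ produces boundary terms on $S_r$ that decay like $r^{-2p}\cdot r^2=r^{2-2p}$ after integration, which is compatible with the limit only if one sharpens the derivative estimates — in fact the quadratic remainder goes like $\int_{S_r}O(r^{-2p-1})\cdot r^2 \to 0$ for $p>\tfrac12$. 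Combining this with the invariance of the bulk integrand $(\bar R + Q)$ under the rigid part (since $\bar R$ is intrinsic and $Q$ transforms up to a total divergence) then forces the two ADM values to agree.

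The main obstacle is establishing the asymptotic-rigid-motion property of $\Phi$. Both metrics $\Phi^*(\delta+e^{(2)})=\delta+e^{(1)}$ are close to the flat metric, which forces $J=D\Phi$ to satisfy $J^T J = I + O(r^{-p})$, so $J$ is close to orthogonal. To extract constancy of $J$ at infinity, the cleanest approach is Bartnik's: pass to \emph{harmonic} coordinates near infinity, i.e.\ functions $y^i$ solving $\Delta_h y^i=0$ with $y^i-x^i\to 0$ and suitable derivative decay, obtained by a fixed-point argument in weighted Sobolev spaces using the isomorphism properties of $\Delta_h$ at infinity. One then proves that any two such harmonic asymptotic charts differ by an affine Euclidean isometry, via a Liouville-type theorem for $h$-harmonic functions with at most linear growth (again weighted Sobolev theory on the end). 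Writing an arbitrary admissible chart as a harmonic chart plus a correction of lower order reduces the general case to the rigid-motion case handled above. The weighted-Sobolev analysis needed to run this harmonic-gauge argument rigorously is the technically demanding part of the proof and is where most of the effort lies.
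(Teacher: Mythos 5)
The paper does not actually prove this theorem; it is stated with a bare citation to Bartnik's article, so there is no in-text argument to measure yours against. Your two-stage plan is the standard architecture of Bartnik's proof: existence of the limit from the identity $\bar{R}=\partial_i U^i + Q$ with $Q$ quadratic in $\partial h$ and integrable for $p>\frac12$, together with the integrability of $\bar{R}$; and chart-independence by showing (via harmonic coordinates, weighted Sobolev theory and a Liouville-type theorem) that the transition map is asymptotic to a Euclidean isometry. The first stage is essentially complete as written, and you correctly identify the harmonic-gauge analysis as the technical core of the second stage, though you only describe it rather than carry it out.

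There is, however, a concrete gap in your accounting of the non-rigid part $\psi$ of the transition map. Writing the coordinate change as $x\mapsto x+\psi$, the induced first-order change in the metric components is $\delta h_{ij}=\partial_i\psi_j+\partial_j\psi_i+\dots$, so the ADM integrand changes, \emph{to first order in} $\psi$, by
\[
\partial_j\,\delta h_{ij}-\partial_i\,\delta h_{jj} \;=\; \Delta\psi_i-\partial_i(\dive\psi) \;=\; \partial_j\!\left(\partial_j\psi_i-\partial_i\psi_j\right).
\]
This is only $O(r^{-p-1})$ pointwise, so its integral over $S_r$ is a priori $O(r^{1-p})$, which \emph{diverges} for $\frac12<p<1$: the decay estimates you invoke (which only control the quadratic remainder, of size $r^{1-2p}\to 0$) cannot dispose of it. The linear term vanishes for a structural reason, not a decay reason: it is the normal flux of $\partial_j A_{ij}$ with $A_{ij}=\partial_j\psi_i-\partial_i\psi_j$ antisymmetric, i.e.\ the flux of $\curl\alpha$ for the vector field $\alpha$ dual to $A$, and hence integrates to zero over every closed surface $S_r$ by Stokes' theorem. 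Without this observation the invariance argument fails precisely in the range $\frac12<p<1$ that the theorem is designed to cover; with it, your estimate on the genuinely quadratic remainder finishes the reduction to the rigid-motion case.
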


\section{Positive mass theorem} \label{sec6.5}

Since the gravitational field is attractive, its energy is presumably negative, at least for bound states. On the other hand, we expect gravitational waves to carry positive energy. It is therefore an important question to decide whether there is a lower bound for the ADM mass (the inexistence of which would signal an instability). In the simplest case of time-symmetric initial data ($K=0$) the restriction equations reduce to
\[
\bar{R} = 16 \pi \rho,
\]
and we expect $\rho \geq 0 \Leftrightarrow \bar{R} \geq 0$ for reasonable matter fields. In this case, we have the following famous result.

\begin{Thm} ({\bf Schoen and Yau \cite{SY81}}) 
Let $(S,h)$ be a complete asymptotically flat Riemannian $3$-manifold with scalar curvature $\bar{R} \geq 0$. Then:
\begin{enumerate}[(i)]
\item
Its ADM mass is nonnegative, $M \geq 0$.
\item
If $M=0$ then $S=\bbR^3$ and $h$ is the Euclidean metric.
\end{enumerate}
\end{Thm}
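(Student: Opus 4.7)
The plan is to follow the original Schoen--Yau minimal hypersurface argument. First I would reduce the problem to the scalar-flat case. Given $(S,h)$ asymptotically flat with $\bar R\geq 0$, solve the linear conformal Laplacian equation $\bar\Delta u-\tfrac{1}{8}\bar R\,u=0$ with the asymptotic condition $u\to 1$ at infinity; existence and the bound $u>0$ follow from maximum-principle arguments together with the decay of $\bar R$. The rescaled metric $\tilde h=u^{4}h$ is still asymptotically flat, has $\tilde R\equiv 0$, and expanding $u=1+A/r+o(r^{-1})$ shows that the ADM mass transforms as $\tilde M=M-\tfrac{1}{2\pi}\int_S \bar R\,u\,dV_h\leq M$, with equality exactly when $\bar R\equiv 0$ and $u\equiv 1$. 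Hence a negative-mass counterexample to (i) would give a negative-mass counterexample with $\bar R=0$, and the rigidity in (ii) will follow once we know that any scalar-flat asymptotically flat $(S,\tilde h)$ of zero mass is flat Euclidean.

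To prove (i), I argue by contradiction: assume $M<0$ and (after the above reduction and possibly a further small density perturbation) that $h$ is scalar-flat and harmonic asymptotically flat in the end. The heart of the argument is the construction of a complete, properly embedded, two-sided, area-minimizing (hence stable) minimal surface $\Sigma\subset(S,h)$ that is asymptotic to a coordinate plane. This is produced by solving a Plateau problem in geodesic balls $B_{R_k}$ with boundary a large nearly-planar Jordan curve on $\partial B_{R_k}$, applying Federer--Fleming compactness for integral currents, and extracting a smooth limit via Allard-type regularity in dimension three. The crucial geometric input is the expansion of the mean curvature of large coordinate spheres in the asymptotic end, $H_{S_r}=\tfrac{2}{r}+\tfrac{2M}{r^{2}}+O(r^{-3})$: when $M<0$ these large spheres are mean-convex barriers from the exterior, which traps the minimizers $\Sigma_k$ in a fixed compact region and prevents the limiting surface from either disappearing to infinity or degenerating to a plane.

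Given such a stable minimal $\Sigma$, I substitute the Gauss identity $\mathrm{Ric}(N,N)=\tfrac{1}{2}\bar R-K_\Sigma-\tfrac{1}{2}|A|^{2}$ (valid because $H=0$) into the stability inequality $\int_\Sigma (|A|^{2}+\mathrm{Ric}(N,N))\phi^{2}\,dA\leq \int_\Sigma |\nabla\phi|^{2}\,dA$, obtaining
\[
\int_\Sigma \Bigl(\tfrac{1}{2}|A|^{2}+\tfrac{1}{2}\bar R\Bigr)\phi^{2}\,dA \;\leq\; \int_\Sigma K_\Sigma\,\phi^{2}\,dA + \int_\Sigma |\nabla\phi|^{2}\,dA.
\]
Plugging in logarithmic cut-off functions $\phi_k$ that tend to $1$ on compacta and applying Gauss--Bonnet together with the Cohn--Vossen inequality (using that $\Sigma\cong\mathbb{R}^{2}$ is asymptotic to a plane, so $\int_\Sigma K_\Sigma\leq 2\pi$), I drive the cut-off gradient term to zero and force $|A|\equiv 0$ and $\bar R_{|\Sigma}\equiv 0$, with equality $\int_\Sigma K_\Sigma=2\pi$. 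A refined asymptotic computation of the total Gauss curvature of $\Sigma$ inside the chart at infinity, however, yields a strictly smaller value whenever $M<0$ (the angular defect at infinity is proportional to $M$), producing the contradiction. For the rigidity (ii), I use a density-type argument: if $M(h)=0$, then for every compactly supported symmetric $2$-tensor $k$ the conformally normalized family $\tilde h_t$ built from $h+tk$ satisfies $M(\tilde h_t)\geq 0$ and $M(\tilde h_0)=0$, so the linear variation $\left.\tfrac{d}{dt}\right|_{t=0}M(\tilde h_t)$ vanishes for all $k$; a direct computation shows this derivative equals a multiple of $\int_S\langle\mathrm{Ric}(h),k\rangle\,dV$, forcing $\mathrm{Ric}(h)\equiv 0$. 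In dimension three Ricci-flat implies flat, and completeness plus asymptotic flatness upgrade this to $(S,h)\cong(\mathbb{R}^{3},\delta)$.

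The main obstacle is undoubtedly the construction and regularity of the asymptotically planar stable minimal $\Sigma$, and in particular the proof that the Plateau-problem minimizers $\Sigma_k$ do not escape the compact part of $S$ as $R_k\to\infty$. This requires showing that the negative-mass asymptotic acts as an effective barrier in a quantitative way, handling possible multiplicity and branching in the weak limit, and verifying that the limit surface is genuinely complete, embedded and asymptotic to a flat plane at the right rate so that Cohn--Vossen and the total-curvature comparison apply. A secondary technical difficulty is making the density/perturbation arguments in the reduction step compatible with the mass estimate, so that one can freely assume harmonic asymptotics of $h$ without losing the sign of the mass.
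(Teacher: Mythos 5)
Your proposal is the original Schoen--Yau minimal-hypersurface argument, and as an outline of that argument it is essentially sound; but it takes a genuinely different, and far more demanding, route than the text. The proof given here establishes only part (i), and only for the special class of asymptotically flat manifolds that arise as graphs of smooth functions $f:\bbR^3\to\bbR$ with the metric $h_{ij}=\delta_{ij}+\partial_i f\,\partial_j f$ induced from Euclidean $\bbR^4$: for such metrics one checks by direct computation that $\bar{R}$ is the divergence of an explicit vector field on $\bbR^3$, and the ADM boundary integral is precisely the flux of that same vector field, so the divergence theorem gives $16\pi M=\int_{\bbR^3}\bar{R}\geq 0$ in a few lines; the rigidity statement (ii) is explicitly not proved in the text. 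What your route buys is the full theorem for arbitrary complete asymptotically flat $3$-manifolds, including rigidity; what it costs is all the geometric measure theory you correctly identify as the main obstacle (existence, regularity, completeness and confinement of the area-minimizing surface when $M<0$), none of which is needed in the graphical case. Two details to tighten if you pursue your version: the mass shift under the conformal normalization $\tilde{h}=u^4h$ with $u=1+A/r+o(r^{-1})$ is $\tilde{M}=M+2A=M-\frac{1}{16\pi}\int_S\bar{R}\,u$ (your constant $\frac{1}{2\pi}$ is off, though the sign, which is all that matters, is right); and in Schoen--Yau the minimizers are trapped not by mean-convex coordinate spheres but inside a slab $|x^3|\leq\Lambda$, via an area comparison with coordinate planes that exploits the sign of $M$ in the expansion of the area element --- the final contradiction is then simply that stability forces $\int_\Sigma K_\Sigma\geq 0$ while the asymptotic geometry of a plane in a negative-mass end forces $\int_\Sigma K_\Sigma<0$, rather than the equality-case analysis you describe.
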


\begin{proof}
Following \cite{L10}, we give a proof of $(i)$ only for asymptotically flat Riemannian $3$-manifolds $S$ that are graphs of smooth functions $f:\bbR^3 \to \bbR$ with the metric $h$ induced by the Euclidean metric of $\bbR^4$. Using the Cartesian coordinates $(x^1,x^2,x^3)$ of $\bbR^3$ as global coordinates on the graph we have
\[
h_{ij} = \delta_{ij} + \partial_i f \partial_j f.
\]
From that expression one can show that the scalar curvature of the graph is the divergence of a vector field on $\bbR^3$:
\[
\bar{R} = \partial_i \left( \frac{1}{1 + |\grad f|^2} (\partial_i f \partial_j\partial_j f - \partial_i \partial_j f \partial_j f)\right).
\]
On the other hand,
\begin{align*}
M & = \lim_{r \to + \infty} \frac{1}{16 \pi} \int_{S_r}  \left( \partial_j h_{ij} - \partial_i h_{jj} \right) \frac{x^i}r \\
& = \lim_{r \to + \infty} \frac{1}{16 \pi} \int_{S_r}  \left( \partial_i \partial_j f \partial_j f + \partial_i f \partial_j \partial_j f - 2 \partial_i \partial_j f \partial_j f \right) \frac{x^i}r \\
& = \lim_{r \to + \infty} \frac{1}{16 \pi} \int_{S_r}  \left( \partial_i f \partial_j \partial_j f - \partial_i \partial_j f \partial_j f \right) \frac{x^i}r .
\end{align*}
Since $(S,h)$ is asymptotically flat, the derivatives of $f$ approach zero with certain decays as $r \to + \infty$. One can then easily show that
\begin{align*}
M & = \lim_{r \to + \infty} \frac{1}{16 \pi} \int_{S_r}  \left( \frac{1}{1 + |\grad f|^2} (\partial_i f \partial_j \partial_j f - \partial_i \partial_j f \partial_j f) \right) \frac{x^i}r \\
& = \frac{1}{16 \pi} \int_{\bbR^3}  \partial_i \left( \frac{1}{1 + |\grad f|^2} (\partial_i f \partial_j \partial_j f - \partial_i \partial_j f \partial_j f) \right) \\
& = \frac{1}{16 \pi} \int_{\bbR^3} \bar{R} \geq 0.
\end{align*}
We do not prove the rigidity statement $(ii)$. It is interesting to note that this statement, together with the formula above for the ADM mass, implies that any graph with zero scalar curvature is flat.
\end{proof}

The volume element of the graph is
\[
\epsilon = \sqrt{1 + |\grad f|^2} \, dx^1 \wedge dx^2 \wedge dx^3,
\]
since the eigenvalues of the matrix $(h_{ij})$ are $1 + |\grad f|^2$ for the eigenvector $\grad f$ and $1$ for the eigenvectors orthogonal to $\grad f$. Consequently the ADM mass is
\[
M = \frac{1}{16 \pi} \int_{S} \frac{\bar{R}}{\sqrt{1 + |\grad f|^2}} \, \epsilon =  \int_{S} \frac{\rho}{\sqrt{1 + |\grad f|^2}} \, \epsilon < \int_{S} \rho \, \epsilon. 
\]
The difference
\[
M - \int_{S} \rho \, \epsilon < 0
\]
can be thought of as the (negative) gravitational binding energy.

\section{Penrose inequality} \label{sec6.6}

The positive mass theorem admits a refinement in the case when black holes are present, known as the Penrose inequality. The idea is that black hole horizons correspond to minimal surfaces $\Sigma$ on the Riemannian manifold $(S,h)$, each contributing with a mass $M$ at least as big as the mass of a Schwarzschild black hole with the same event horizon area $A$:
\[
M \geq \sqrt{\frac{A}{16\pi}}.
\]
To understand the motivation for this inequality, recall that in the proof of the Penrose singularity theorem we defined the outward null expansion of a $2$-surface $\Sigma$ on a Cauchy hypersurface $S$ as
\[
\theta = \frac12 \gamma^{AB} \frac{\partial \gamma_{AB}}{\partial r} = \frac12 \tr \left(\cL_{\frac{\partial}{\partial r}} g\right)_{|_{T\Sigma}}
\]
If $N$ is the future-pointing unit normal to $S$ and $n$ is the outward unit normal to $\Sigma$ on $S$ we then have
\[
\theta = \frac12 \tr \left(\cL_{N + n} g\right)_{|_{T\Sigma}} = \tr K_{|_{T\Sigma}} + \frac12 \tr \left(\cL_{n} g\right)_{|_{T\Sigma}},
\]
where we used the fact that $\frac{\partial}{\partial r} = N + n$ on $\Sigma$, and also that if $X,Y$ are tangent to $\Sigma$ then
\[
\left(\cL_Z g\right) (X,Y) = \left\langle X, \nabla_Y Z \right\rangle + \left\langle Y, \nabla_X Z \right\rangle
\]
depends only on $Z$ along $\Sigma$. For time-symmetric initial data ($K=0$) this becomes
\[
\theta = \frac12 \tr \left(\cL_{n} g\right)_{|_{T\Sigma}} = \frac12 \tr \left(\cL_{n} h\right)_{|_{T\Sigma}} = \tr \kappa,
\]
where $h$ is the metric induced by $g$ on $S$ and $\kappa$ is the second fundamental form of $\Sigma$ on $S$. We conclude that $\Sigma$ is {\bf marginally trapped}, that is, has zero outward null expansion, if and only $\tr \kappa = 0$, which is precisely the condition for $\Sigma$ to be a minimal surface. Now a marginally trapped surface anticipates the formation of trapped surfaces, which will lead to geodesic incompleteness of the resulting spacetime, presumably due to singularities. If one believes the {\bf weak cosmic censorship conjecture}, these singularities should only occur inside black holes, and so there should be a black hole horizon envelloping any marginally trapped surface.

\begin{Thm} ({\bf Huisken and Ilmanen \cite{HI01}, Bray \cite{B01}}) 
Let $(S,h)$ be a complete asymptotically flat Riemannian manifold with scalar curvature $\bar{R} \geq 0$. Then: 
\begin{enumerate}[(i)]
\item
Its ADM mass satisfies $M \geq \sqrt{\frac{A}{16\pi}}$, where $A$ is the sum of the areas of the outer minimal surfaces.
\item
If $M=\sqrt{\frac{A}{16\pi}}$ then the restriction of $(S,h,0)$ to the exterior of the outer minimal surfaces coincides with the initial data for the Schwarzschild solution of mass $M$ outside the event horizon.
\end{enumerate}
\end{Thm}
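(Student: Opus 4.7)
The plan is to extend the graph-based proof of the positive mass theorem just given (following Lam) to accommodate minimal surfaces, following the same philosophy used by Lam for the Penrose inequality for graphs. Full proofs in the general case, due to Huisken--Ilmanen (via inverse mean curvature flow) and Bray (via a conformal flow), require machinery far beyond what is available at this point in the text, so I would only sketch the argument in the graph setting, which already captures the key geometric content.

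First I would set up the graph case. Let $\Omega \subset \bbR^3$ be the complement of bounded open sets $U_1, \ldots, U_N$ with smooth boundaries $\Sigma_i = \partial U_i$, and let $f: \Omega \to \bbR$ be a smooth function, asymptotically flat in the sense used for the positive mass theorem, with $|\grad f(x)| \to +\infty$ as $x \to \Sigma_i$ for each $i$. The graph $S$ of $f$ is then a Riemannian $3$-manifold with boundary, and the blow-up of $|\grad f|$ on each $\Sigma_i$ forces the tangent planes of $S$ to become vertical there; a short computation shows that the corresponding boundary components of $S$ are minimal surfaces, whose induced areas $A_i$ coincide with the Euclidean areas $|\Sigma_i|$. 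Reusing the divergence identity
\[
\bar R = \partial_i\!\left( \frac{1}{1+|\grad f|^2}\bigl(\partial_i f\,\partial_j\partial_j f - \partial_i \partial_j f\,\partial_j f\bigr)\right)
\]
and repeating the integration performed for the positive mass theorem, but now over $\Omega$ rather than all of $\bbR^3$, the divergence theorem produces an extra flux term on each $\Sigma_i$:
\[
16 \pi M = \int_\Omega \bar R + \sum_{i=1}^N \oint_{\Sigma_i} \frac{\partial_i \partial_j f\, \partial_j f - \partial_i f\, \partial_j \partial_j f}{1+|\grad f|^2}\, \nu^i\, dA,
\]
where $\nu$ is the outward Euclidean unit normal to $\Sigma_i$ (pointing into $U_i$).

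Second I would identify the boundary flux in the limit $|\grad f| \to \infty$. On $\Sigma_i$ the unit normal $\nu$ must be parallel to $\grad f/|\grad f|$, and a careful expansion, keeping track of the tangential Hessian of $f$ along $\Sigma_i$, shows that each boundary integral above reduces to
\[
\frac12 \oint_{\Sigma_i} H_{\Sigma_i}\, dA,
\]
the total Euclidean mean curvature of $\Sigma_i$ (with respect to the outward normal). Combined with $\bar R \geq 0$ this gives
\[
16 \pi M \;\geq\; \frac12 \sum_{i=1}^N \oint_{\Sigma_i} H_{\Sigma_i}\, dA.
\]
Third I would invoke the classical Minkowski-type inequality for outer-minimizing (in particular, for convex) surfaces in Euclidean $3$-space,
\[
\oint_\Sigma H\, dA \;\geq\; \sqrt{16 \pi |\Sigma|},
\]
which converts the total mean curvature bound into the area bound $16\pi M \geq \sum_i \sqrt{16 \pi A_i}$, and hence, after elementary manipulations (or a componentwise argument when several horizons are present), into $M \geq \sqrt{A/(16\pi)}$. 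The rigidity statement $(ii)$ follows by tracing equality through both the divergence identity (forcing $\bar R = 0$, hence $f$ flat off $\bigcup \Sigma_i$) and the Minkowski inequality (forcing each $\Sigma_i$ to be a round sphere with $f$ the Schwarzschild profile $f(r) = 2\sqrt{2M(r-2M)}$).

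The main obstacle is the geometric inequality in the third step: the Minkowski inequality as stated holds only under a convexity or outer-minimizing hypothesis, and in the non-graph case one has no direct analogue. This is precisely where Huisken--Ilmanen's monotonicity of the Hawking mass along inverse mean curvature flow, and Bray's conformal flow of metrics, enter the general proof; they replace the elementary Minkowski inequality by a far deeper statement that the relevant quasi-local mass is non-decreasing along a carefully chosen deformation connecting the horizon to the asymptotic region. I would flag this as the true analytic heart of the theorem and leave its treatment to the references, since it lies well beyond the elementary techniques available here.
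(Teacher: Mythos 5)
Your proposal follows exactly the same route as the paper's proof: restrict to graphs over $\bbR^3$ minus finitely many convex bodies with $|\grad f|\to+\infty$ on the boundaries, integrate Lam's divergence identity for $\bar R$ over the exterior region, identify the resulting boundary flux with the total mean curvature of each $\partial K_a$, and close with Minkowski's inequality $\int_{\partial K_a}\tr\kappa \geq \sqrt{16\pi A_a}$ together with $\sqrt{a}+\sqrt{b}\geq\sqrt{a+b}$. The paper likewise only treats the graph case and defers the general Huisken--Ilmanen and Bray arguments to the references, so your framing of where the real analytic difficulty lies is consistent with the text.

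One concrete numerical point: you assert that each boundary integral reduces to $\frac12\oint_{\Sigma_i}H_{\Sigma_i}\,dA$, while simultaneously using the Minkowski inequality in the form $\oint_\Sigma H\,dA\geq\sqrt{16\pi|\Sigma|}$, which is only correct if $H$ denotes the \emph{sum} of principal curvatures, i.e.\ $H=\tr\kappa$. With that convention the flux term is exactly $\oint_{\Sigma_i}\tr\kappa\,dA$, with no factor of $\frac12$: the paper's computation gives $(\partial_i f\,\partial_j\partial_j f-\partial_i\partial_j f\,\partial_j f)n^i=|\grad f|^2\tr\kappa$ on a level set of $f$, so after dividing by $1+|\grad f|^2$ and letting $|\grad f|\to\infty$ the limit is $\tr\kappa$ itself. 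If you keep your $\frac12$, the chain of inequalities only yields $M\geq\frac12\sqrt{A/16\pi}$; you in fact silently drop the $\frac12$ when passing to $16\pi M\geq\sum_i\sqrt{16\pi A_i}$, so the slip is internal to the flux identification rather than to the strategy. Separately, in the rigidity sketch, $\bar R=0$ does not force the graph to be flat off the horizons --- by the paper's own remark, a zero-scalar-curvature graph of this type is a Flamm paraboloid, which is what the Schwarzschild profile you quote describes; note also that the paper explicitly declines to prove statement $(ii)$, so your rigidity discussion goes beyond what the text actually establishes.
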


\begin{proof}
Following \cite{L10}, we give a proof of only for asymptotically flat Riemannian $3$-manifolds $S$ that are graphs of smooth functions $f:U\subset \bbR^3 \to \bbR$, with the metric $h$ induced by the Euclidean metric of $\bbR^4$. Here
\[
U = \bbR^3 \setminus \bigcup_{a=1}^N K_a,
\]
where $K_1, \ldots, K_N$ are disjoint convex compact sets with smooth boundaries $\partial K_1, \ldots, \partial K_N$ to which $f$ extends as a constant and where $|\grad f|\to+\infty$ (so that they are minimal surfaces of the graph). Applying the divergence theorem as in the proof of the positive mass theorem for graphs we now obtain
\[
M = \frac{1}{16 \pi} \int_{U} \bar{R} + \frac{1}{16 \pi} \sum_{a=1}^N \int_{\partial K_a}  \left( \frac{1}{1 + |\grad f|^2} (\partial_i f \partial_j \partial_j f - \partial_i \partial_j f \partial_j f) \right) n^i,
\]
where $n$ is the outward unit normal to $\partial K_a$. Now as is well known the Laplacian $\Delta f$ of $f$ in $U$ is related to the Laplacian $\bar{\Delta} f$ of $f$ in $\partial K_a$ by
\[
\Delta f = \bar{\Delta} f + Hf(n,n) +  (n \cdot f) \tr \kappa,
\]
where $Hf$ is the Hessian of $f$ and $\kappa$ is the second fundamental form of $\partial K_a$ in $\bbR^3$. Since $f$ is constant on $\partial K_a$ we have $\bar{\Delta} f = 0$ and so
\begin{align*}
& (\partial_i f \partial_j \partial_j f - \partial_i \partial_j f \partial_j f) n^i = (n \cdot f) \Delta f - Hf(n, \grad f) \\
& = (n \cdot f) Hf(n,n) + (n \cdot f)^2 \tr \kappa - \langle \grad f, n \rangle Hf(n,n) \\
& = \langle \grad f, n \rangle^2 \tr \kappa = |\grad f|^2 \tr \kappa,
\end{align*}
where we used $\grad f = \langle \grad f, n \rangle n$. Therefore
\begin{align*}
M & = \frac{1}{16 \pi} \int_{U} \bar{R} + \frac{1}{16 \pi} \sum_{a=1}^N \int_{\partial K_a}  \left( \frac{|\grad f|^2}{1 + |\grad f|^2} \right) \tr \kappa \\
& = \frac{1}{16 \pi} \int_{U} \bar{R} + \frac{1}{16 \pi} \sum_{a=1}^N \int_{\partial K_a}  \tr \kappa.
\end{align*}
Now {\bf Minkowski's inequality} for the smooth boundaries of compact convex sets states that
\[
\int_{\partial K_a}  \tr \kappa \geq \sqrt{16 \pi A_a},
\]
where $A_a$ is the area of $\partial K_a$. Since $\bar{R} \geq 0$ we conclude that
\begin{align*}
M & \geq \sum_{a=1}^N \sqrt{\frac{A_a}{16 \pi}} \geq \sqrt{\frac{1}{16 \pi} \sum_{a=1}^N A_a}, 
\end{align*}
where we used $\sqrt{a+b} \leq \sqrt{a} + \sqrt{b}$ for $a,b>0$.

We do not prove the rigidity statement $(ii)$. It is interesting to note that this statement, together with the formula above for the ADM mass, implies that any graph of the type considered above with zero scalar curvature is a Flamm paraboloid.
\end{proof}

\begin{figure}[h!]
\begin{center}
\psfrag{S}{$S$}
\psfrag{A1}{$A_1$}
\psfrag{A2}{$A_2$}
\epsfxsize=1.0\textwidth
\leavevmode
\epsfbox{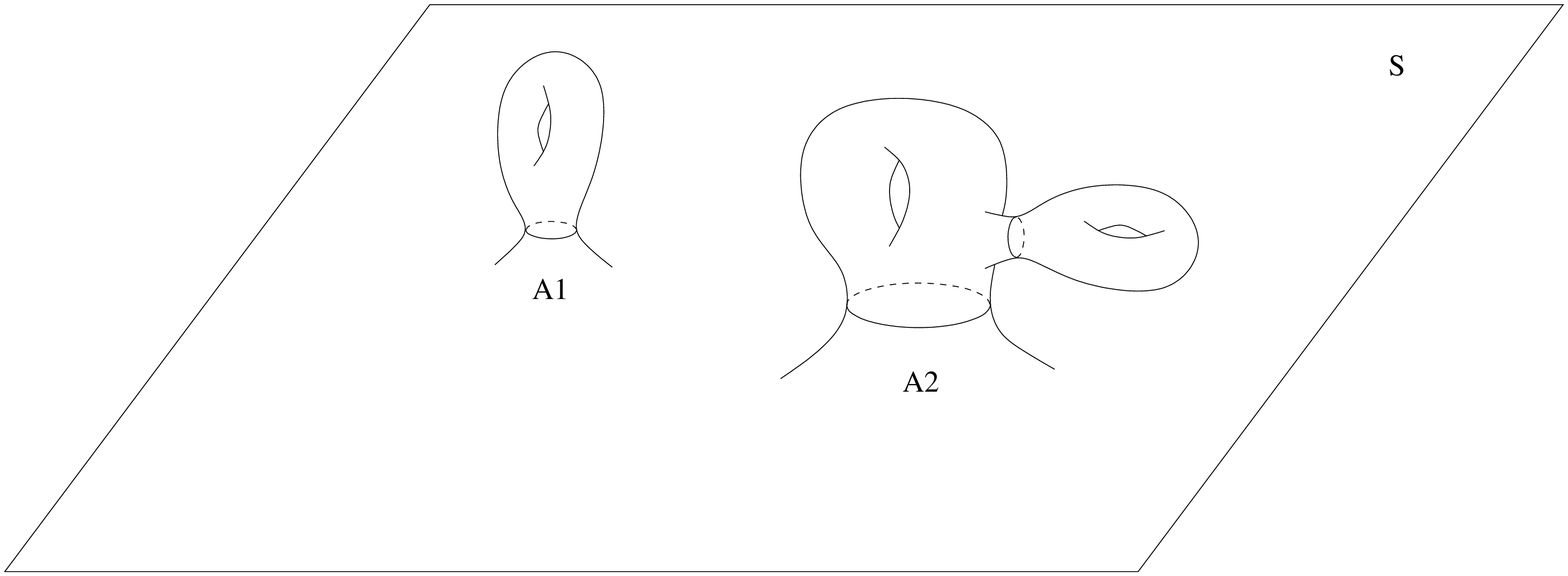}
\end{center}
\caption{Penrose inequality; the sum of the areas of the outer minimal surfaces is $A=A_1 + A_2$.} \label{Penrose_ineq}
\end{figure}

\section{Exercises} \label{sec6.7}

\begin{enumerate}

\item
Let $g$ be a static spherically symmetric Lorentzian metric on $\bbR^4$ whose matter fields have spatially compact support and satisfy the dominant energy condition. There exist smooth functions $\phi=\phi(r)$ and $m=m(r)$ such that
\[
g = - e^{2\phi(r)} dt^2 + \frac{dr^2}{1-\frac{2m(r)}{r}} + r^2 \left( d\theta^2 + \sin^2\theta d\varphi^2 \right).
\]
Show that:
\begin{enumerate}
\item
The Einstein equations imply
\begin{align*}
& \frac{dm}{dr}=4\pi r^2\rho; \\
& \frac{d\phi}{dr}=\frac{m+4\pi r^3p}{r(r-2m)},
\end{align*}
where $\rho$ and $p$ are the energy density and the radial pressure as measured by the static observers. 
\item
There exist constants $M\geq 0$ and $\Phi \in \bbR$ such that
\[
\lim_{r \to +\infty} m(r)=M \quad \text{ and } \quad \lim_{r \to +\infty} \phi(r)=\Phi.
\]
\item
If we choose the coordinate $t$ such that $\Phi=0$ then $M$ is the Komar mass of $g$ with respect to the timelike Killing vector field $\frac{\partial}{\partial t}$.
\item
The constant $M$ satisfies $M \leq E$, where
\[
E = \int_{\{t=0\}} \rho,
\]
with equality exactly when $g$ is the Minkowski metric. 
\end{enumerate}

\item
Starting with the appropriate Lagrangian density, show that the Klein-Gordon equation can be written as
\[
\hspace{2cm} \frac{1}{\sqrt{|\det(g_{\mu\nu})|}} \partial_\alpha \left( \sqrt{|\det(g_{\mu\nu})|} \, \partial^\alpha \phi \right) - m^2 \phi = 0.
\]

\item
Starting with the Einstein-Hilbert-Klein-Gordon action
\[
\hspace{2cm} S = \int_M \left[ \frac12\left(g^{\mu\nu}\partial_\mu \phi \, \partial_\nu \phi + m^2 \phi^2\right) - \frac1{16\pi} R \right] \epsilon
\]
obtain the energy-momentum tensor for $\phi$:
\[
\hspace{2cm} T_{\mu\nu} = \partial_\mu\phi\,\partial_\nu\phi - \frac12 g_{\mu\nu} \left(\partial_\alpha\phi\,\partial^\alpha\phi+m^2 \phi^2\right).
\]
Note that this agrees with what one would expect from the canonical energy-momentum tensor.

\item
Show that the Einstein-Hilbert action is equivalent to the (first order, non-geometric) {\bf Einstein action}
\[
\hspace{2cm} S = \int_{M} g^{\alpha\beta} \left( \Gamma_{\alpha\gamma}^\delta \Gamma_{\beta\delta}^\gamma - \Gamma_{\alpha\beta}^\gamma \Gamma_{\gamma\delta}^\delta \right) \sqrt{|\det(g_{\mu\nu})|} \, dx^0 dx^1 dx^2 dx^3,
\]
by showing that they differ by a term of the form $\partial_\alpha Q^\alpha$, where
\[
\hspace{2cm} Q^\alpha = \left( g^{\beta \gamma} \Gamma^{\alpha}_{\beta\gamma} - g^{\alpha \beta} \Gamma^{\gamma}_{\beta\gamma} \right) \sqrt{|\det(g_{\mu\nu})|}.
\]
The following formulae will be useful:
\[
\hspace{2cm} \partial_\alpha \sqrt{|\det(g_{\mu\nu})|} = - \frac12 g_{\beta\gamma} \partial_\alpha g^{\beta\gamma} \sqrt{|\det(g_{\mu\nu})|}
\]
and
\[
\hspace{2cm} \nabla_\alpha g^{\beta\gamma} = 0 \Leftrightarrow \partial_\alpha g^{\beta\gamma} = - \Gamma^{\beta}_{\alpha\delta} g^{\delta\gamma} - \Gamma^{\gamma}_{\alpha\delta} g^{\beta\delta}.
\] 

\item
Let $f:\bbR^3 \to \bbR$ be a smooth function and consider the metric $h$ induced on its graph $S$ by the Euclidean metric in $\bbR^4$,
\[
h_{ij} = \delta_{ij} + \partial_i f \partial_j f.
\]
Show that:
\begin{enumerate}
\item
The inverse metric is
\[
h^{ij} = \delta_{ij} - \frac{\partial_i f \partial_j f}{1 + |\grad f|^2}.
\]
\item
The Christoffel symbols are
\[
\bar{\Gamma}^k_{ij} = \frac{\partial_k f \partial_i\partial_j f}{1 + |\grad f|^2}.
\]
\item
The Ricci tensor is
\begin{align*}
\hspace{2cm} \bar{R}_{ij} = & \frac{\partial_i\partial_j f \partial_k \partial_k f - \partial_i\partial_k f \partial_j \partial_k f }{1 + |\grad f|^2} \\
& + \frac{\partial_k f \partial_l f \partial_i\partial_k f \partial_j \partial_l f - \partial_k f \partial_l f \partial_i\partial_j f \partial_k \partial_l f}{\left(1 + |\grad f|^2\right)^2}.
\end{align*}
\item
The scalar curvature is
\begin{align*}
\hspace{2cm} \bar{R} = & \frac{\partial_i\partial_i f \partial_j \partial_j f - \partial_i\partial_j f \partial_i \partial_j f }{1 + |\grad f|^2} \\
& - \frac{2 \partial_j f \partial_k f (\partial_i\partial_i f \partial_j \partial_k f - \partial_i\partial_j f \partial_i \partial_k f )}{\left(1 + |\grad f|^2\right)^2}.
\end{align*}
\item
The scalar curvature can be written as
\[
\hspace{2cm} \bar{R} = \partial_i \left( \frac{1}{1 + |\grad f|^2} (\partial_i f \partial_j\partial_j f - \partial_i \partial_j f \partial_j f)\right).
\]
\end{enumerate}

\item
Let $h$ be the spherically symmetric Riemannian metric defined in $\bbR^3$ by
\[
h = \frac{dr^2}{1-\frac{2m(r)}{r}} + r^2 \left( d\theta^2 + \sin^2\theta d\varphi^2 \right),
\]
where $m$ is a smooth function whose derivative has compact support.
\begin{enumerate}
\item
Check that in Cartesian coordinates we have
\[
h_{ij} = \delta_{ij} + \frac{\frac{2m(r)}{r^3}}{1-\frac{2m(r)}{r}} x^i x^j.
\]
\item
Show that if the limit
\[
M=\lim_{r \to +\infty} m(r)
\]
exists then $h$ is asymptotically flat with ADM mass $M$ (which in particular coincides with the Komar mass when appropriate).
\item
Check that $h$ has scalar curvature
\[
\bar{R} = \frac{4}{r^2}\frac{dm}{dr},
\]
and use this to prove the Riemannian positive mass theorem for $h$.
\item
Show that $r=r_0$ is a minimal surface if and only if $m(r_0)=\frac{r_0}2$ (in which case $r$ is a well-defined coordinate only for $r>r_0$), and use this to prove the Riemannian Penrose inequality for $h$. 
\end{enumerate}

\item
Consider a Riemannian metric given in the Gauss Lemma form
\[
g = dt^2 + h_{ij}(t,x) dx^i dx^j,
\]
so that the hypersurface $t=0$ is a Riemannian manifold with induced metric $h=h_{ij}dx^i dx^j$ and second fundamental form 
\[
K=\frac12\frac{\partial h_{ij}}{\partial t}dx^idx^j.
\]
Show that:
\begin{enumerate}
\item
The Laplacian operators $\Delta$ and $\bar{\Delta}$ of $g$ and $h$ are related by
\[
\Delta f = \bar{\Delta} f + Hf(\partial_0,\partial_0) + (\partial_0f) \tr K,
\]
where $Hf$ is the Hessian of $f$.
\item
The metric induced on the hypersurface $t=\lambda f(x)$ is
\[
h(\lambda) = \left[h_{ij}(\lambda f(x),x) + \lambda^2 \partial_i f \partial_j f\right] dx^idx^j.
\]
\item
The first variation of this metric is
\[
\delta h \equiv \frac{d}{d\lambda}_{|_{\lambda=0}} h(\lambda)= 2fK_{ij}dx^i dx^j.
\]
\item
The first variation of the volume element is
\[
\delta \sigma \equiv \frac{d}{d\lambda}_{|_{\lambda=0}} \sigma(\lambda) = f \tr K \sigma.
\]
\item
The second variation of the volume element is
\begin{align*}
\hspace{2cm} \delta^2 \sigma & \equiv \frac{d^2}{d\lambda^2}_{|_{\lambda=0}} \sigma(\lambda) \\
& = \left[ \frac12 f^2 \left(\bar{R}-R+\left(K^{i}_{\,\, i}\right)^2-K_{ij}K^{ij}\right) + |\grad f|^2 \right] \sigma,
\end{align*}
where $R$ and $\bar{R}$ are the scalar curvatures of $g$ and $h$.
\item
There is no metric on the $3$-torus with positive scalar curvature. (You will need to use the fact that any metric in the $3$-torus admits a minimizing $2$-torus; this type of idea is used in the proof of the rigidity statement in the positive mass theorem.)
\end{enumerate}

\item
Consider the surfaces $\Sigma_t$ obtained from the boundary $\partial K \equiv \Sigma_0$ of a compact convex set $K \subset \bbR^3$ by flowing a distance $t$ along the unit normal. Let $A(t)$ and $V(t)$ be the area of $\Sigma_t$ and the volume bounded by $\Sigma_t$.
\begin{enumerate}
\item
Show that
\[
\dot{A}(0) = \int_{\partial K} \tr \kappa,
\]
where $\kappa$ is the second fundamental form of $\Sigma$.
\item
Prove that $\ddot{A}(t)=8\pi$, implying $A(t)=4\pi t^2 + \dot{A}(0)t + A(0)$.
\item
Conclude that $V(t)=\frac{4\pi}{3}t^3 + \frac{\dot{A}(0)}{2}t^2 + A(0)t + V(0)$.
\item
Use the isoperimentric inequality $A(t)^3 \geq 36 \pi V(t)^2$ to prove Minkowski's inequality:
\[
\dot{A}(0) \geq \sqrt{16 \pi A(0)}.
\]
\end{enumerate}

\end{enumerate}


\chapter{Black holes} \label{chapter7}

In this chapter we study black holes and the laws of black hole thermodynamics, following \cite{Townsend97} (see also \cite{BCH73, Poisson07}). An elementary discussion of quantum field theory in curved spacetime and the Hawking radiation can be found in \cite{Carroll03}.

\section{The Kerr solution} \label{sec7.1}

General rotating black holes are described by the {\bf Kerr metric}, given in the so-called {\bf Boyer-Lindquist coordinates} by
\begin{align*}
ds^2 = & - \left( 1 - \frac{2Mr}{\rho^2} \right) dt^2 - \frac{4Mar\sin^2\theta}{\rho^2} dt d\varphi +  \frac{\rho^2}{\Delta} dr^2 \\
& + \rho^2 d \theta^2 + \left( r^2 + a^2 + \frac{2Ma^2r\sin^2\theta}{\rho^2} \right) \sin^2\theta d\varphi^2,
\end{align*}
where 
\begin{align*}
& \rho^2 = r^2 + a^2 \cos^2 \theta, \\
& \Delta = r^2 - 2Mr + a^2,
\end{align*}
and $M,a \in \bbR$ are constants. Note that the Schwarzschild metric is a particular case, corresponding to $a=0$. It is possible to prove that the Kerr metric solves the vacuum Einstein field equations (see for instance \cite{ONeill95}). 

The Kerr metric is not spherically symmetric, but admits a two-dimensional group of isometries, generated by the Killing vector fields $X = \frac{\partial}{\partial t}$ and $Y=\frac{\partial}{\partial \varphi}$. The Komar mass associated to $X$ is
\[
M_{\text{Komar}} = - \frac1{8\pi} \int_{\Sigma} \star d X^\sharp,
\]
where $\Sigma$ is a $2$-surface of constant $(t,r)$, and can be computed to be
\[
M_{\text{Komar}} = M.
\]

The expression for the Komar mass in terms of the energy-momentum tensor, given in Chapter~\ref{chapter6}, suggests the definition of the {\bf Komar angular momentum} as
\[
J_{\text{Komar}} = \frac1{16\pi} \int_{\Sigma} \star d Y^\sharp
\]
(note the change in sign and absolute value of the constant, due to the fact that $Y$ is now spacelike and essentially orthogonal to the timelike unit normal $N$). The same exact argument as was done for the Komar mass shows that $J_{\text{Komar}}$ does not depend on the choice of $\Sigma$. Performing the calculation for the Kerr metric yields
\[
J_{\text{Komar}} = Ma,
\]
and so the parameter $a$ can be interpreted as the angular momentum per unit mass.

The Killing vector $X$ becomes null on the hypersurface given by the equation
\[
r = M + \sqrt{M^2-a^2\cos^2\theta},
\]
known as the {\bf ergosphere}. However, it is easy to show that the metric induced on this hypersurface is Lorentzian, and so it cannot be the black hole event horizon (since it can be crossed both ways by timelike curves). The event horizon corresponds to the hypersurface $r=r_+$, where
\[
r_+ = M + \sqrt{M^2-a^2}.
\]
Indeed, the function $\Delta$ changes sign on this hypersurface, and so $\grad r$ becomes timelike (meaning that $r$ must decrease along causal curves). Note that the ergosphere encloses the event horizon, touching it only at the poles, as shown in Figure~\ref{ergosphere}. The region in between, where $X$ is spacelike, is called the {\bf ergoregion}, because matter fields satisfying the dominant energy condition can have negative energy there, and so extract energy from the black hole when absorbed (a mechanism known as the {\bf Penrose process} in the case of particles or {\bf superradiance} in the case of fields). Note also that the existence of an event horizon requires that $|a|\leq M$. If $|a|< M$ the black hole is said to be {\bf subextremal}, and if $|a|=M$ it is called {\bf extremal}.

\begin{figure}[h!]
\begin{center}
\psfrag{event horizon}{event horizon}
\psfrag{ergosphere}{ergosphere}
\psfrag{ergoregion}{ergoregion}
\psfrag{black hole}{black hole}
\epsfxsize=0.8\textwidth
\leavevmode
\epsfbox{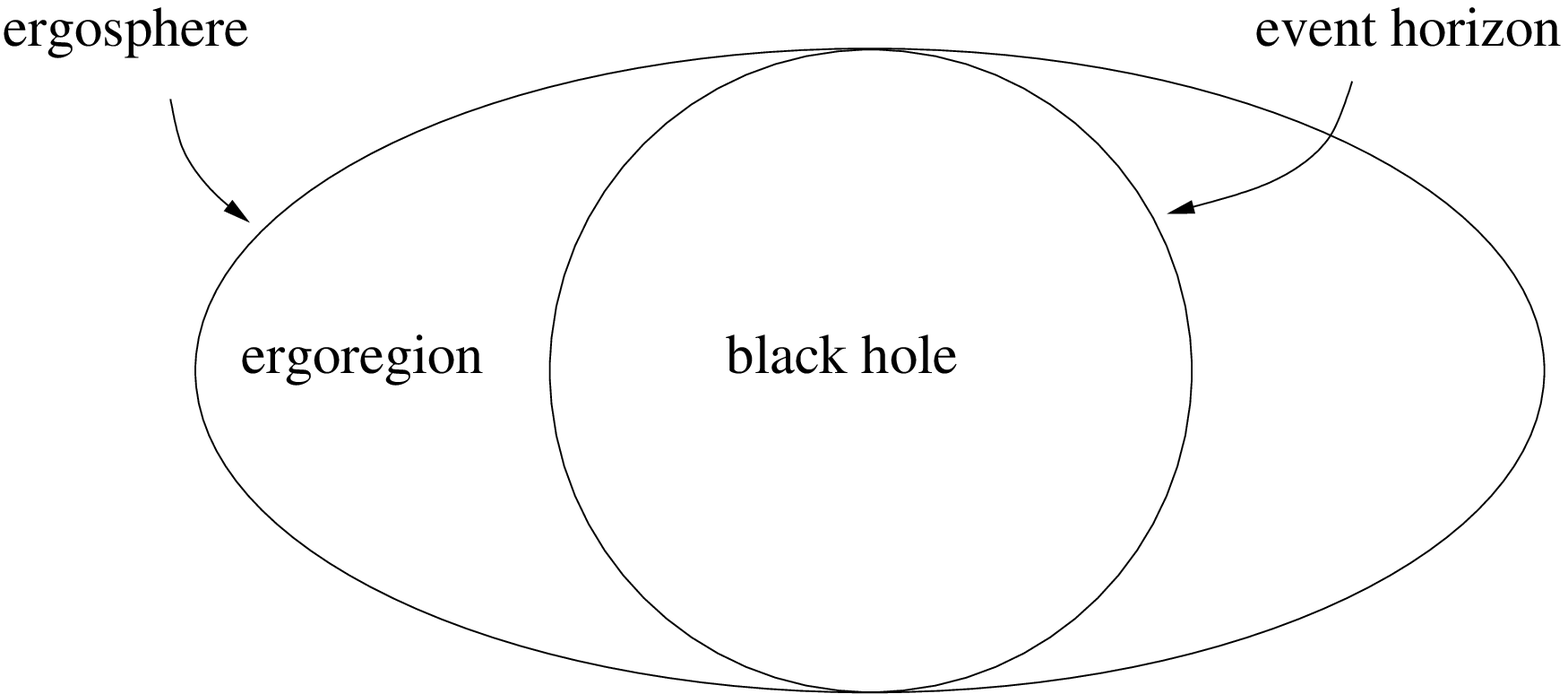}
\end{center}
\caption{Spacelike cross-section of the Kerr solution.} \label{ergosphere}
\end{figure}

A time-oriented spacetime $(M,g)$ is said to be {\bf asymptotically flat} if it contains an open set $\mathcal{I}$ where the metric is well approximated (in a certain sense which we will not make precise) by the Minkowski metric in the region $\{x^2 + y^2 + z^2 > R^2\}$, for sufficiently large $R>0$. For such spacetimes we can define the {\bf black hole region} as $\mathcal{B}=M\setminus J^-(\mathcal{I})$, so that $\mathcal{B}$ consists of the events which cannot send signals to $\mathcal{I}$. If $\mathcal{B}\neq \varnothing$, the spacetime is called a {\bf black hole spacetime}, and $\mathscr{H^+}=\partial\mathcal{B}$ is called the {\bf event horizon}. Finally, an asymptotically flat spacetime is called {\bf stationary} if it admits a Killing vector field which is timelike on $\mathcal{I}$.

The importance of the Kerr solution stems from the following result.

\begin{Thm} ({\bf Israel \cite{Israel67}, Carter \cite{Carter71}, Hawking \cite{H72}, Robinson \cite{Robinson75}, Chru\'sciel and Costa \cite{CC08}}) 
The Kerr solution is the only real-analytic, stationary black hole spacetime satisfying the vacuum Einstein equations.
\end{Thm}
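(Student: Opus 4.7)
The plan is to follow the classical three-step strategy: reduce to an axisymmetric problem by a rigidity argument, reformulate the vacuum Einstein equations as a harmonic map system on a two-dimensional quotient, and then show that the boundary value problem determined by the horizon and the asymptotic region has a unique solution, namely the Kerr family.

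First I would establish Hawking's rigidity theorem: on a real-analytic stationary black hole spacetime, the event horizon $\mathscr{H^+}$ is a Killing horizon, i.e.\ there exists a Killing vector field $K$ which is null and tangent to its generators. If the stationary Killing field $X$ is not already tangent to the generators, one produces a second Killing field $Y$ whose orbits are closed, so that the spacetime is stationary and axisymmetric with commuting Killing vectors $\{X,Y\}$. Analyticity is crucial here: one first constructs $K$ as a Killing vector in a neighbourhood of $\mathscr{H^+}$ by solving a characteristic initial value problem on the horizon, and then extends it globally by analytic continuation. In parallel I would invoke Hawking's topology theorem to show that cross-sections of $\mathscr{H^+}$ are topological $2$-spheres (in the non-degenerate case), and the structure theorem identifying the domain of outer communications as simply connected with a well-behaved orbit space. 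This separates the proof into a \emph{static} branch (when $X$ itself is tangent to the generators, so $X\wedge dX^\sharp=0$ holds on the domain of outer communications) and a \emph{stationary axisymmetric} branch.

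For the static branch I would run Israel's argument: in the domain of outer communications introduce the lapse $V=\sqrt{-\langle X,X\rangle}$ and the quotient metric $h$ on the hypersurface orthogonal to $X$. The vacuum equations reduce to $\bar{\Delta}V=0$ and a coupled equation for $h$ with boundary conditions $V\to 1$ at infinity and $V\to 0$ on $\mathscr{H^+}$. By applying the divergence theorem to carefully chosen combinations of $V$, $|\bar{\nabla}V|$, and the scalar curvature $\bar R$ of $h$, and using the Gauss--Bonnet theorem on the horizon cross-section, one obtains an integral inequality saturated only by the Schwarzschild solution; this identifies the static case with Schwarzschild, which is the $a=0$ member of the Kerr family. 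For the stationary axisymmetric branch I would perform Carter's reduction: the commuting Killing fields $X,Y$ yield a $2$-dimensional orbit space $\mathcal{Q}$, on which one introduces Weyl--Papapetrou coordinates $(\rho,z)$ with $\rho^2=-\det\bigl(\langle X_i,X_j\rangle\bigr)$. The vacuum Einstein equations then collapse to the Ernst equation, which is nothing but the equation for a harmonic map $\Phi:\mathcal{Q}\setminus A\to \bbH^2$, where $A$ is the rotation axis, valued in the hyperbolic plane (modelled as $SU(1,1)/U(1)$).

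The core and hardest step is the uniqueness of this harmonic map under the prescribed boundary data: $\Phi$ is regular on the axis away from the horizon, has prescribed ``rod structure'' on $\{\rho=0\}$ (two semi-infinite axis rods separated by a horizon rod of length $\sqrt{M^2-a^2}$), and tends to the Minkowski value at infinity. Here I would invoke the Robinson--Mazur identity: for any two solutions $\Phi_1,\Phi_2$ of the harmonic map equation into $\bbH^2$ with matching boundary data, the hyperbolic distance $d_{\bbH^2}(\Phi_1,\Phi_2)$ satisfies a divergence identity of the form $\bar{\Delta}\bigl(\rho\,\sinh^2\tfrac12 d\bigr) = \rho\,|D|^2 \geq 0$ on $\mathcal{Q}$, so that integration against a suitable test function together with the boundary behaviour forces $d\equiv 0$. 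Since the Kerr Ernst potential realises the prescribed boundary data, any other solution must coincide with it, proving uniqueness within the non-degenerate, connected, axisymmetric class. The main obstacle throughout is global: ensuring that the orbit space has the right topology (simple connectedness of the domain of outer communications, connectedness of the horizon, absence of closed timelike curves in the exterior), that Weyl--Papapetrou coordinates are globally defined on $\mathcal{Q}\setminus A$, and that the boundary integrals in the Robinson--Mazur argument actually vanish. Finally, to promote this to the full statement I would cite the Chru\'sciel--Costa analysis handling the degenerate case $|a|=M$, multiple horizon components, and the regularity issues that arise when relaxing from real-analytic to merely smooth data, thereby completing the identification with the Kerr family $|a|\leq M$.
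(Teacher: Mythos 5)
The paper does not actually prove this theorem: it is stated as a result imported from the literature, with the proof residing in the cited references, so there is no in-text argument to compare yours against. Judged on its own terms, your outline correctly reproduces the architecture of the classical uniqueness proof: Hawking rigidity (with analyticity used to extend the horizon Killing field from a neighbourhood of $\mathscr{H^+}$ to the whole domain of outer communications), the static versus stationary-axisymmetric dichotomy, Israel's argument in the static branch, the Carter reduction to an $\bbH^2$-valued harmonic map on the orbit space in Weyl--Papapetrou coordinates, the Robinson--Mazur divergence identity forcing coincidence with the Kerr Ernst potential, and the Chru\'sciel--Costa treatment of degenerate horizons and of the global and topological gaps in the older arguments. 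Two caveats. First, as written this is a roadmap rather than a proof: every step you invoke --- rigidity, the topology theorem, Israel's inequality, the global existence of Weyl--Papapetrou coordinates on the quotient, the Mazur identity and the vanishing of its boundary terms, the near-horizon classification in the extremal case --- is itself a substantial theorem that you cite rather than establish, so the proposal has essentially the same logical status as the paper's citation list, organized into a narrative. Second, you elide the staticity theorem: your dichotomy is ``$X$ tangent to the generators'' versus ``not'', but passing from a non-rotating horizon to the statement that $X$ is hypersurface-orthogonal on the entire domain of outer communications (which is what Israel's argument requires) is a separate nontrivial result (Sudarsky--Wald), not a restatement of the case hypothesis; similarly, in the rotating case the existence of the axial Killing field with \emph{closed} orbits commuting with $X$ requires an argument, not just the existence of the horizon Killing field $K$.
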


\section{Killing horizons and the zeroth law} \label{sec7.2}

\begin{Def}
A {\bf Killing horizon} is a null surface which is orthogonal to a nonvanishing Killing vector field.
\end{Def}

Note that the Killing vector field is therefore null on the corresponding Killing horizon, and tangent to it. Killing horizons are important due to the following result.

\begin{Thm} ({\bf Hawking \cite{H72}})
The event horizon of a stationary black hole spacetime whose matter fields satisfy hyperbolic equations and the weak energy condition is a Killing horizon.
\end{Thm}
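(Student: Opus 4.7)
The plan is to combine the area theorem with the Raychaudhuri equation for null geodesics to force the event horizon to be a non-expanding null hypersurface, then use the isometry property of the stationary Killing field to produce a null Killing field tangent to $\mathscr{H^+}$.

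First I would set up the geometry of $\mathscr{H^+}$. As the boundary of $J^-(\mathcal{I})$, the event horizon is an achronal null hypersurface ruled by future-inextendible null geodesic generators $c_p(r)$ (if a generator terminated in the interior, one could deform the broken causal curve into a timelike curve reaching $\mathcal{I}$, contradicting $p \in \partial J^-(\mathcal{I})$, as in the proof of Proposition~\ref{conjugate_null_I+}). Letting $l = \dot{c}_p$ be the tangent, the null Raychaudhuri equation derived in Section~\ref{sec4.6},
\[
\frac{\partial\theta}{\partial r} + \frac12 \theta^2 + \sigma_{AB}\sigma^{AB} + R_{\mu\nu}l^\mu l^\nu = 0,
\]
together with the weak energy condition (which implies the NEC, hence $R_{\mu\nu}l^\mu l^\nu \ge 0$), shows that if $\theta$ were ever negative then the generator would reach a conjugate point in finite affine parameter, contradicting the fact just established that it has no future endpoint. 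Hence $\theta \ge 0$ on $\mathscr{H^+}$; this is Hawking's area theorem.

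Next I would exploit stationarity. The Killing field $K$ is an isometry, hence preserves the causal structure, the asymptotic region $\mathcal{I}$, and therefore $J^-(\mathcal{I})$ and its boundary $\mathscr{H^+}$. Thus the flow of $K$ maps $\mathscr{H^+}$ to itself and, being isometric, preserves the area of any spacelike cross-section. Picking a cross-section $\Sigma$ and flowing it along the null generators yields a family of cross-sections whose area grows at the rate $\int_{\Sigma}\theta$; since the family can also be produced by iterating the isometries generated by $K$ (up to reparametrization of generators), the integrated expansion must vanish. Combined with $\theta \ge 0$ this forces $\theta \equiv 0$, and then Raychaudhuri forces $\sigma_{AB} \equiv 0$ and $R_{\mu\nu}l^\mu l^\nu \equiv 0$. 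Hence $\mathscr{H^+}$ is a non-expanding, shear-free null hypersurface whose null direction field $l$ is rigidly preserved by the flow of $K$.

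From here I would analyze $K$ on $\mathscr{H^+}$. Since $K$ is tangent to $\mathscr{H^+}$ and the induced metric on the null hypersurface is degenerate along $l$, decompose $K|_{\mathscr{H^+}} = \alpha\, l + \eta$ with $\eta$ tangent to the cross-sections. If $\eta \equiv 0$ then $K$ is null and tangent to the generators of $\mathscr{H^+}$, so by definition $\mathscr{H^+}$ is a Killing horizon of $K$ and we are done (this is the static case). Otherwise, because $K$ is a Killing field and $\theta = \sigma = 0$, $\eta$ generates a one-parameter group of isometries of each cross-section; compactness of the cross-sections (from Proposition~\ref{compact} applied to the relevant domain of dependence) plus the rigidity of isometry groups of compact Riemannian $2$-manifolds forces these orbits to be periodic, producing an additional Killing vector field $\psi$ defined on $\mathscr{H^+}$ whose orbits are closed and such that $K - \Omega\psi$ is tangent to the null generators for some constant $\Omega$ (the angular velocity of the horizon). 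The main obstacle, and the reason the theorem requires a hyperbolic matter model, is showing that $\psi$ extends to a genuine Killing field on a spacetime neighborhood of $\mathscr{H^+}$: one constructs the candidate extension by solving a characteristic initial value problem for the Killing equation (which reduces to a linear wave equation $\Box\psi_\mu + R_\mu{}^\nu\psi_\nu = 0$ coupled to the matter evolution), using the constraint that $\mathscr{L}_\psi g$ vanishes on $\mathscr{H^+}$ together with the null constraints implied by $\theta=\sigma=0$ and $R_{\mu\nu}l^\mu l^\nu = 0$, and then invokes unique continuation from the bifurcate cross-section (or, in the real-analytic setting, analytic continuation) to conclude that $\mathscr{L}_\psi g \equiv 0$ in a full neighborhood. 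Once $\psi$ is a true Killing field, $K - \Omega\psi$ is a Killing field that is null and tangent to $\mathscr{H^+}$, exhibiting it as a Killing horizon.
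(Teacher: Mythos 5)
The paper states this theorem without proof --- it is quoted as one of Hawking's deep results, and the only trace of its difficulty in the text is the ``real-analytic'' hypothesis carried by the uniqueness theorem in Section~\ref{sec7.1} --- so there is no in-text proof to compare against. Your outline does follow the strategy of Hawking's actual argument (area theorem plus stationarity force $\theta\equiv 0$ and $\sigma\equiv 0$ on $\mathscr{H^+}$, then one manufactures the horizon-generating Killing field), but two of your steps have genuine gaps. In the area-theorem step, ``no future endpoint'' is not the same as ``future complete'': Proposition~\ref{conjugate_null} only produces a conjugate point within affine parameter $-2/\theta_0$ \emph{assuming the generator can be extended that far}, and the paper's own statement of the second law explicitly assumes future completeness of the generators. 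You must either hypothesize or derive this. Relatedly, deducing $\theta\equiv 0$ from area preservation requires knowing that $\phi_t(\Sigma)$ lies to the \emph{future} of $\Sigma$ along every generator; since $K$ is generically spacelike on the horizon (it is so on the Kerr horizon, which lies inside the ergoregion), this needs a separate argument rather than the phrase ``up to reparametrization of generators.''

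The more serious gap is the extension of the candidate rotational Killing field $\psi$ off $\mathscr{H^+}$. Knowing that $\mathcal{L}_\psi g$ vanishes on the horizon and then solving the linear wave equation $\Box\psi_\mu + R_{\mu}^{\,\,\,\,\nu}\psi_\nu=0$ does \emph{not} yield $\mathcal{L}_\psi g=0$ in a neighborhood: the Killing equation is overdetermined, and propagating it off a null (characteristic) hypersurface is precisely the ill-posed unique-continuation problem that forces Hawking's proof to assume real-analyticity of the metric. Treating ``unique continuation from the bifurcate cross-section'' as routine hides the central difficulty of the theorem; removing analyticity is the subject of comparatively recent research and requires additional hypotheses. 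Finally, periodicity of the orbits of the induced isometry on the space of generators does not follow from compactness alone (an irrational flow on a flat torus is a one-parameter isometry group with non-closed orbits); one needs either the spherical topology of the cross-sections or a closure argument inside the compact isometry group to extract a circle action.
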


\begin{Prop}
The integral curves of a nonvanishing normal to a null hypersurface (e.g.\ the Killing vector field orthogonal to a Killing horizon) are reparameterized null geodesics.
\end{Prop}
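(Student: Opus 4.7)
The plan is to show that $\nabla_X X$ is proportional to $X$, where $X$ is the given nonvanishing null normal to the hypersurface $N$; once this is established, the integral curves of $X$ are automatically (reparameterized) null geodesics, because they are null (since $X$ is null) and because $\nabla_X X = \kappa X$ is exactly the condition for a curve to be a pregeodesic.

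The key preliminary observation is that for a null hypersurface $N$, any normal vector $X$ automatically lies in $TN$. Indeed, $X$ is null, so $\langle X, X \rangle = 0$, which (combined with $X$ being orthogonal to a complement of $X$ in $TN$) forces $X$ itself to be tangent to $N$. Thus $X$ plays a dual role: it is simultaneously the normal to $N$ and a tangent field along $N$. In particular, the normal line bundle to $N$ is spanned pointwise by $X$.

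The proof then proceeds in two steps, both relying on the fact that $X$ is tangent to $N$, so differentiating a function along $X$ is the same as differentiating its restriction to $N$. First, since $\langle X, X\rangle$ vanishes identically on $N$ and $X \in TN$, we get $0 = X\cdot \langle X,X\rangle = 2\langle \nabla_X X, X\rangle$, so $\nabla_X X \perp X$. Second, for any vector field $Y$ which is tangent to $N$, I would compute
\[
\langle \nabla_X X, Y\rangle = X\cdot \langle X,Y\rangle - \langle X, \nabla_X Y\rangle = - \langle X, \nabla_Y X\rangle - \langle X, [X,Y]\rangle,
\]
where the first term vanishes because $\langle X,Y\rangle \equiv 0$ on $N$ and $X$ is tangent to $N$, and then I rewrote $\nabla_X Y = \nabla_Y X + [X,Y]$. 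The term $\langle X, \nabla_Y X\rangle = \tfrac12 Y \cdot \langle X,X\rangle$ vanishes for the same reason with $Y\in TN$; and $[X,Y]$ is tangent to $N$ (as the Lie bracket of two fields tangent to a submanifold is tangent to it), so $\langle X, [X,Y]\rangle = 0$ by the normality of $X$. Hence $\nabla_X X$ is orthogonal to every element of $TN$, meaning it is itself normal to $N$, and therefore a multiple of $X$.

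The step I expect to require the most care is the argument that $[X,Y] \in TN$, which rests on viewing $X$ and $Y$ as fields on the ambient manifold whose restrictions to $N$ are tangent to $N$; this is standard but needs a clean invariant formulation (equivalently, one can work with a local defining function $f$ with $N = \{f=0\}$, write $X^\flat = g\,df$, and verify that $X\cdot f = 0$ and $Y\cdot f$ vanishes on $N$, so that $[X,Y]\cdot f = 0$ on $N$). Having secured $\nabla_X X = \kappa X$, a standard reparameterization $s = s(t)$ satisfying $\ddot{s} = \kappa(t)\dot{s}$ converts the integral curves of $X$ into affinely parameterized null geodesics, completing the argument.
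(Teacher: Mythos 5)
Your proof is correct and follows essentially the same route as the paper: show that $\nabla_X X$ is orthogonal to every vector tangent to the null hypersurface, hence normal, hence proportional to $X$. The only cosmetic difference is that the paper sidesteps the bracket term by choosing the tangent field $V$ to commute with the normal, whereas you keep an arbitrary tangent field $Y$ and observe that $[X,Y]$ is again tangent and therefore orthogonal to $X$; both are valid.
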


\begin{proof}
Let $Z$ be nonvanishing and normal to a null hypersurface $S$, and let $p \in S$. For any tangent vector $v \in T_pS$ it is easy to construct a local vector field $V$ which is tangent to $S$, commutes with $Z$ and satisfies $V_p=v$. We have
\[
\left\langle V, \nabla_Z Z \right\rangle = - \left\langle \nabla_Z V, Z \right\rangle = - \left\langle \nabla_V Z, Z \right\rangle = -\frac12 \, V \cdot \left\langle Z, Z \right\rangle = 0.
\]
Since $p$ and $v$ are arbitrary, we conclude that $\nabla_Z Z$ is orthogonal to $S$, i.e.
\[
\nabla_Z Z = k Z,
\]
for some function $k:S \to \bbR$.
\end{proof}

\begin{Def}
If $\mathscr{H}$ is a Killing horizon associated to the Killing vector field $Z$ then the function $k:\mathscr{H} \to \bbR$ such that
\[
\nabla_Z Z = k Z,
\]
on $\mathscr{H}$ is called the {\bf surface gravity} of $\mathscr{H}$ relative to $Z$.
\end{Def}

\begin{Thm} ({\bf Zeroth law of black hole thermodynamics})
The surface gravity of a Killing horizon on a spacetime satisfying the dominant energy condition is a constant function. 
\end{Thm}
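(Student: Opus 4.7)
Write $Z$ for the Killing field whose orbits generate $\mathscr{H}$, so that $\nabla_Z Z = k Z$ on $\mathscr{H}$. My goal is to reduce the statement to the claim that, on $\mathscr{H}$,
\[
\nabla_\mu k = f\, Z_\mu \quad \text{for some scalar } f.
\]
This will suffice: since $\mathscr{H}$ is a null hypersurface with null normal $Z$, the vector $Z$ is simultaneously normal and tangent to $\mathscr{H}$, and $\langle Z, t\rangle = 0$ for every $t \in T\mathscr{H}$. Hence $t^\mu \nabla_\mu k = f\, Z_\mu t^\mu = 0$ for all tangent directions, i.e.\ $k$ is constant on $\mathscr{H}$.

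I would assemble three ingredients. First, the Killing identity already derived for the Komar mass, $\nabla_\alpha \nabla_\mu Z_\nu = - R_{\mu\nu\alpha\beta}\, Z^\beta$. Second, since $Z^\mu Z_\mu \equiv 0$ on $\mathscr{H}$, the Killing equation gives the characterization
\[
\nabla_\mu (Z^\nu Z_\nu) = 2 Z^\nu \nabla_\mu Z_\nu = -2 Z^\nu \nabla_\nu Z_\mu = -2 k\, Z_\mu \quad \text{on } \mathscr{H}.
\]
Third, because $Z^\flat$ is normal to the null hypersurface $\mathscr{H}$, Frobenius gives the hypersurface-orthogonality relation $Z_{[\alpha} \nabla_\beta Z_{\gamma]} = 0$ on $\mathscr{H}$, which (using antisymmetry of $\nabla Z$) is equivalent to
\[
Z_\alpha \nabla_\beta Z_\gamma + Z_\beta \nabla_\gamma Z_\alpha + Z_\gamma \nabla_\alpha Z_\beta = 0 \quad \text{on } \mathscr{H}.
\]

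Next I would differentiate the Frobenius relation in a direction tangent to $\mathscr{H}$, substitute the Killing identity to turn second derivatives of $Z$ into the Riemann tensor, and take an appropriate contraction (with $\nabla^\beta Z^\gamma$ or with another copy of $Z$) to isolate the antisymmetric-in-$(\alpha,\beta)$ piece. After the purely algebraic terms cancel via the second characterization of $k$ above and the Bianchi symmetries, the expected outcome is a clean identity of the form
\[
Z_{[\alpha}\, \nabla_{\beta]} k \;=\; Z_{[\alpha}\, R_{\beta]\nu}\, Z^\nu \quad \text{on } \mathscr{H},
\]
which reduces the whole problem to controlling $R_{\mu\nu} Z^\nu$ on $\mathscr{H}$. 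I expect this algebraic step to be the main obstacle: the manipulation is routine in principle but requires juggling the Killing antisymmetries, the Frobenius relation, and the first Bianchi identity simultaneously.

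For the last input I would apply the Raychaudhuri equation to the null congruence of generators of $\mathscr{H}$. Because $Z$ is Killing, its flow is an isometry that maps $\mathscr{H}$ to itself and preserves the induced degenerate metric, so the expansion $\theta$ and the shear $\sigma$ of the generators vanish identically on $\mathscr{H}$; the vorticity $\omega$ vanishes as well, because $Z^\flat$ is hypersurface-orthogonal on $\mathscr{H}$ by Frobenius. Along a generator, Raychaudhuri then collapses to $R_{\mu\nu} Z^\mu Z^\nu = 0$ on $\mathscr{H}$. By the Einstein equations this gives $T_{\mu\nu} Z^\mu Z^\nu = 0$. Invoking DEC, $-T^\mu_{\ \nu} Z^\nu$ is a future-causal vector whose inner product with the future-null $Z$ vanishes, hence it is itself null and parallel to $Z$; equivalently $T_{\mu\nu} Z^\nu = \lambda Z_\mu$, and therefore $R_{\mu\nu} Z^\nu = \lambda' Z_\mu$ on $\mathscr{H}$. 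Plugging this into the identity of the previous paragraph gives $Z_{[\alpha}\nabla_{\beta]} k = 0$ on $\mathscr{H}$, so $\nabla_\mu k = f Z_\mu$ there, and the opening reduction closes the proof.
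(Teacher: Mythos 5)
Your overall strategy is sound and is in essence the same as the paper's: both arguments combine the Frobenius condition for the hypersurface-orthogonal Killing field $Z$ with the identity $\nabla_\mu\nabla_\alpha Z_\beta=-R_{\alpha\beta\mu}^{\;\;\;\;\;\;\,\nu}Z_\nu$ to show that the tangential derivative of $k$ is controlled by $R_{\alpha\beta}V^\alpha Z^\beta$, and then use the Einstein equations together with the dominant energy condition (a causal vector orthogonal to the null $Z$ must be null and parallel to $Z$) to kill that term. There are two points worth comparing. First, you obtain $R_{\mu\nu}Z^\mu Z^\nu=0$ from the Raychaudhuri equation for the horizon generators, arguing that expansion, shear and vorticity all vanish because the flow of $Z$ is an isometry preserving $\mathscr{H}$; the paper instead reads this off directly from the curvature identity $R_{\alpha\beta\mu\nu}X^\alpha Y^\beta V^\mu Z^\nu=0$ it has already established for tangent $X,Y,V$. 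Your route is valid (and note that the non-affine parameterization $\nabla_ZZ=kZ$ only adds a $k\theta$ term, which vanishes anyway), and it has the virtue of connecting to the Chapter 4 machinery, but it imports a null-congruence version of Raychaudhuri that the text only develops in the specific coordinates of Section 4.6. Second, and more importantly, the identity $Z_{[\alpha}\nabla_{\beta]}k=Z_{[\alpha}R_{\beta]\nu}Z^\nu$ that you describe as the ``expected outcome'' of a routine manipulation is the actual core of the proof, and you do not carry it out. The paper does: it writes the Frobenius condition in the explicit form $\nabla_\alpha Z_\beta=Z_\alpha U_\beta-U_\alpha Z_\beta$ on $\mathscr{H}$, introduces an auxiliary null vector $W$ normalized by $Z_\mu W^\mu=-1$ so that $k=-W^\nu Z^\mu\nabla_\mu Z_\nu$, and differentiates this expression along a tangent direction, using the frame decomposition $g_{\mu\nu}=X_\mu X_\nu+Y_\mu Y_\nu-Z_\mu W_\nu-W_\mu Z_\nu$ to exhibit the cancellations and arrive at $V\cdot k=-R_{\alpha\beta}V^\alpha Z^\beta$. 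If you want your version to stand as a complete proof, that computation (or Wald's equivalent one starting from $k^2=-\frac12\nabla^\alpha Z^\beta\nabla_\alpha Z_\beta$) has to be done explicitly; everything surrounding it in your proposal is correct.
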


\begin{proof}
Let $Z$ be a Killing vector field associated to a Killing horizon $\mathscr{H}$. Since $Z$ is orthogonal to $\mathscr{H}$, we have from the Frobenius theorem that
\[
Z^\sharp \wedge dZ^\sharp = 0
\]
on $\mathscr{H}$. Because $Z^\sharp$ is nonvanishing, and can therefore be completed to a coframe, we necessarily have
\[
dZ^\sharp = 2 Z^\sharp \wedge U^\sharp \Leftrightarrow \nabla_\alpha Z_\beta = Z_\alpha U_\beta - U_\alpha Z_\beta
\]
on $\mathscr{H}$, for some vector field $U$. If the vector fields $X$ and $Y$ are tangent to $\mathscr{H}$ then
\begin{equation} \label{FrobeniusKilling}
X^\alpha Y^\beta \nabla_\alpha Z_\beta = 0,
\end{equation}
and consequently
\begin{equation} \label{FrobeniusKilling2}
Y^\alpha Z^\beta \nabla_\alpha X_\beta = - Y^\alpha X^\beta \nabla_\alpha Z_\beta = 0.
\end{equation}
on $\mathscr{H}$. Taking the derivative of \eqref{FrobeniusKilling} along any vector field $V$ tangent to $\mathscr{H}$ yields
\begin{align*}
V^\mu X^\alpha Y^\beta \nabla_\mu \nabla_\alpha Z_\beta & = - V^\mu (\nabla_\mu X^\alpha) Y^\beta \nabla_\alpha Z_\beta - V^\mu X^\alpha (\nabla_\mu Y^\beta) \nabla_\alpha Z_\beta\\
& = - V^\mu ( Y^\beta \nabla_\mu X^\alpha + X^\alpha \nabla_\mu Y^\beta) (Z_\alpha U_\beta - U_\alpha Z_\beta) = 0,
\end{align*}
in view of \eqref{FrobeniusKilling2}. On the other hand, we saw in Chapter~\ref{chapter6} that any Killing vector field $Z$ satisfies
\begin{equation} \label{RiemannKilling}
\nabla_\mu \nabla_\alpha Z_\beta = - R_{\alpha\beta\mu}^{\,\,\,\,\,\,\,\,\,\,\,\,\nu} Z_\nu,
\end{equation}
whence
\begin{equation} \label{RiemannzeroonH}
R_{\alpha\beta\mu\nu} X^\alpha Y^\beta V^\mu Z^\nu = 0
\end{equation}
for any three vector fields $X,Y,V$ tangent to $\mathscr{H}$. If $X$ and $Y$ are orthonormal then they can always be completed to a local frame $\{X,Y,Z,W\}$ such  $W$ is null, orthogonal to $X$ and $Y$, and normalized against $Z$,
\[
Z_\mu W^\mu = -1.
\]
In this frame the metric is written
\[
g_{\mu\nu} = X_\mu X_\nu + Y_\mu Y_\nu - Z_\mu W_\nu - W_\mu Z_\nu,
\]
and so we obtain, for any vector field $V$ tangent to $\mathscr{H}$,
\begin{align} \label{Ralphabeta}
R_{\alpha\beta} V^\alpha Z^\beta & = R_{\alpha\mu\beta\nu} V^\alpha Z^\beta g^{\mu\nu} = - R_{\alpha\mu\beta\nu} V^\alpha Z^\beta Z^\mu W^\nu \\
& = V^\alpha Z^\beta W^\nu \nabla_\alpha \nabla_\beta Z_\nu, \nonumber
\end{align}
where we used \eqref{RiemannzeroonH} and \eqref{RiemannKilling}.

We have
\begin{equation} \label{expressionfork}
k = - \left\langle W, \nabla_Z Z \right\rangle = - W^\nu Z^\mu \nabla_\mu Z_\nu.
\end{equation}
Differentiating \eqref{expressionfork} along a vector field $V$ tangent $\mathscr{H}$ yields
\[
V \cdot k = - (V^\alpha \nabla_\alpha W^\nu) Z^\mu \nabla_\mu Z_\nu - W^\nu (V^\alpha \nabla_\alpha Z^\mu) \nabla_\mu Z_\nu - W^\nu Z^\mu V^\alpha \nabla_\alpha \nabla_\mu Z_\nu.
\]
The first term in the right-hand side of this equation is
\[
- (V^\alpha \nabla_\alpha W^\nu) k Z_\nu = k W^\nu V^\alpha \nabla_\alpha Z_\nu = k W^\nu V^\alpha (Z_\alpha U_\nu - U_\alpha Z_\nu) = k V^\alpha U_\alpha, 
\]
whereas the second term is
\[
- W^\nu V^\alpha (Z_\alpha U^\mu - U_\alpha Z^\mu) \nabla_\mu Z_\nu = V^\alpha U_\alpha W^\nu Z^\mu \nabla_\mu Z_\nu = - k V^\alpha U_\alpha.
\]
Therefore these terms cancel out, and, using \eqref{Ralphabeta}, we obtain
\[
V \cdot k = - R_{\alpha\beta} V^\alpha Z^\beta.
\]
From \eqref{Ralphabeta} it is clear that
\[
R_{\alpha\beta} Z^\alpha Z^\beta = 0,
\]
implying that the vector field
\[
I_\alpha = R_{\alpha\beta} Z^\beta
\]
is tangent to $\mathscr{H}$. By Einstein's equation, we have
\[
I_\alpha = \left(\frac12 R g_{\alpha\beta} - \Lambda g_{\alpha\beta} + 8 \pi T_{\alpha\beta} \right) Z^\beta = \frac12 R Z_\alpha - \Lambda Z_\alpha + 8 \pi T_{\alpha\beta} Z^\beta,
\]
where $R$ is the scalar curvature, $\Lambda$ is the cosmological constant and $T_{\alpha\beta}$ is the energy-momentum tensor. Therefore the vector field
\[
J_\alpha = T_{\alpha\beta} Z^\beta
\]
is also tangent to $\mathscr{H}$. Since $T_{\alpha\beta}$ satisfies the dominant energy condition, the vector field $J$ must be causal, and since it is tangent to $\mathscr{H}$ it can only be null and parallel to $Z$. We conclude that the vector field $I$ is also proportional to $Z$, and so
\[
V \cdot k = - I_{\alpha} V^\alpha = 0,
\] 
that is, $k$ is constant along $\mathscr{H}$.
\end{proof}

Let $\lambda$ be a local coordinate that is an affine parameter for the null geodesics along $\mathscr{H}$, that is,
\[
\left \langle \frac{\partial}{\partial \lambda}, \frac{\partial}{\partial \lambda} \right \rangle = 0 \qquad \text{ and } \qquad  \nabla_{\frac{\partial}{\partial \lambda}} \frac{\partial}{\partial \lambda} = 0
\] 
on $\mathscr{H}$. Then it is easy to see that
\[
Z = k (\lambda-\lambda_0) \frac{\partial}{\partial \lambda}
\]
on $\mathscr{H}$, where $\lambda_0$ may depend on the null geodesic. In other words, $Z$ vanishes on some cross-section of $\mathscr{H}$. If $Z$ and $\frac{\partial}{\partial \lambda}$ are future-pointing then $Z$ vanishes to the past when $k$ is positive (that is, $\lambda>\lambda_0$), and to the future if $k$ is negative (that is, $\lambda<\lambda_0$).

If $t$ is a local coordinate in a neighborhood of $\mathscr{H}$ such that
\[
Z = \frac{\partial}{\partial t}
\]
then we have on $\mathscr{H}$
\[
Z = \frac{d\lambda}{dt} \frac{\partial}{\partial \lambda} = k (\lambda-\lambda_0) \frac{\partial}{\partial \lambda},
\]
implying that
\[
\lambda-\lambda_0 = C e^{kt},
\]
where $C$ may depend on the null geodesic. By rescaling $\lambda$ conveniently we may assume that
\[
Z = e^{kt} \frac{\partial}{\partial \lambda}
\]
on $\mathscr{H}$.

Now consider a timelike congruence crossing $\mathscr{H}$, with tangent unit timelike vector field $U$ satisfying
\[
[Z,U] = 0.
\]
This can be accomplished by taking a timelike hypersurface transverse to $Z$, ruled by timelike curves, and moving each curve by the flow of $Z$. The quantity
\[
E = - \left \langle \frac{\partial}{\partial \lambda}, U \right \rangle = - \left \langle e^{-kt} Z, U \right \rangle
\]
represents the energy of a given null geodesic as measured by an observer of the congruence when crossing the Killing horizon, and is related to the frequency of the associated wave. Since $Z$ is a Killing field, we have
\begin{align*}
Z \cdot E & = - \mathcal{L}_Z  \left \langle e^{-kt} Z, U \right \rangle =  - \left \langle \mathcal{L}_Z (e^{-kt} Z), U \right \rangle -  \left \langle e^{-kt} Z, \mathcal{L}_Z U \right \rangle \\
& = - \left \langle [Z, e^{-kt} Z], U \right \rangle -  \left \langle e^{-kt} Z, [Z, U] \right \rangle = \left \langle k e^{-kt} Z, U \right \rangle = - k E,
\end{align*}
implying that
\[
E = E_0 e^{-kt}.
\]
In other words, the energy of the null geodesic as measured by the observers of the congruence decreases exponentially if $k>0$ ({\bf redshift effect}), and increases exponentially if $k<0$ ({\bf blueshift effect}).

\section{Smarr's formula and the first law} \label{sec7.3}

Unlike the case of the Schwarzschild black hole, where the event horizon is the Killing horizon corresponding to $X=\frac{\partial}{\partial t}$, the event horizon of the Kerr black hole is a Killing horizon for the Killing vector field
\[
Z = X + \Omega Y,
\]
where $Y=\frac{\partial}{\partial \varphi}$ and $\Omega \in \bbR$ is an appropriate constant.

\begin{Def}
$\Omega$ is called the {\bf angular velocity of the event horizon}.
\end{Def}

To find $\Omega$, we write the quadratic equation in $\omega$ for the vector $X + \omega Y$ to be null at some point with $r>r_+$ (so that we can use Boyer-Lindquist coordinates):
\[
- \left( 1 - \frac{2Mr}{\rho^2} \right) - \frac{4Mar\sin^2\theta}{\rho^2} \, \omega + \left( r^2 + a^2 + \frac{2Ma^2r\sin^2\theta}{\rho^2} \right) \sin^2\theta \, \omega^2 = 0.
\]
The discriminant of this equation has the simple form $\Delta \sin^2 \theta$, and consequently vanishes when we take $r=r_+$, in which case the quadratic equation has the single solution
\[
\Omega = \frac{a}{r_+^2 + a^2} = \frac{a}{2Mr_+}.
\]
This solution is the limit as $r \to r_+$ of the values of $\omega=\omega(r,\theta)$ such that $X + \omega Y$ is null, and so it must coincide with the angular velocity of the event horizon.

From the expressions of the Komar mass and angular momentum, it is clear that
\[
M - 2 \Omega J = -\frac1{8\pi} \int_{\Sigma} \star d Z^\sharp
\]
for any compact orientable $2$-surface $\Sigma$ enclosing the event horizon $\mathscr{H}$. Let us consider the case when $\Sigma$ is a spacelike cross-section of $\mathscr{H}$ (Figure~\ref{Smarr}). We can uniquely define a future-pointing unit timelike vector field $N$ and an unit spacelike vector field $n$, both orthogonal to $\Sigma$, such that $Z = N+n$. Because $Z$ is a Killing vector field, $\nabla_\mu Z_\nu$ is a $2$-form; more precisely,
\[
(d Z^\sharp)_{\mu\nu} = \nabla_\mu Z_\nu - \nabla_\nu Z_\mu = 2 \nabla_\mu Z_\nu.
\]
If $E_1$ and $E_2$ are two unit vector fields tangent to $\Sigma$ such that $\{N,n,E_1,E_2\}$ is a positive orthonormal frame, and so $\{-N^\sharp,n^\sharp,E_1^\sharp,E_2^\sharp\}$ is a positive orthonormal coframe, then we can expand
\[
\nabla Z^\sharp = - \nabla Z^\sharp(N,n) N^\sharp \wedge n^\sharp + \ldots.
\]
Therefore,
\[
M - 2 \Omega J = - \frac1{4\pi} \int_{\Sigma} \star \nabla Z^\sharp = \frac1{4\pi} \int_{\Sigma} \nabla Z^\sharp(N,n) E_1^\sharp \wedge E_2^\sharp.
\]
Since
\[
\nabla Z^\sharp(N,n) = \nabla Z^\sharp(Z,n) = \left\langle \nabla_Z Z, n \right\rangle = \left\langle k Z, n \right\rangle = k,
\]
we finally obtain the {\bf Smarr formula}:
\[
M = \frac{kA}{4\pi} + 2 \Omega J,
\]
where $A$ is the area of the cross-section $\Sigma$ (which in particular is the same for all cross-sections of $\mathscr{H}$).

\begin{figure}[h!]
\begin{center}
\psfrag{n}{$n$}
\psfrag{N}{$N$}
\psfrag{H}{$\mathscr{H}$}
\psfrag{E}{$E_1$}
\psfrag{S}{$\Sigma$}
\psfrag{Z}{$Z$}
\epsfxsize=0.8\textwidth
\leavevmode
\epsfbox{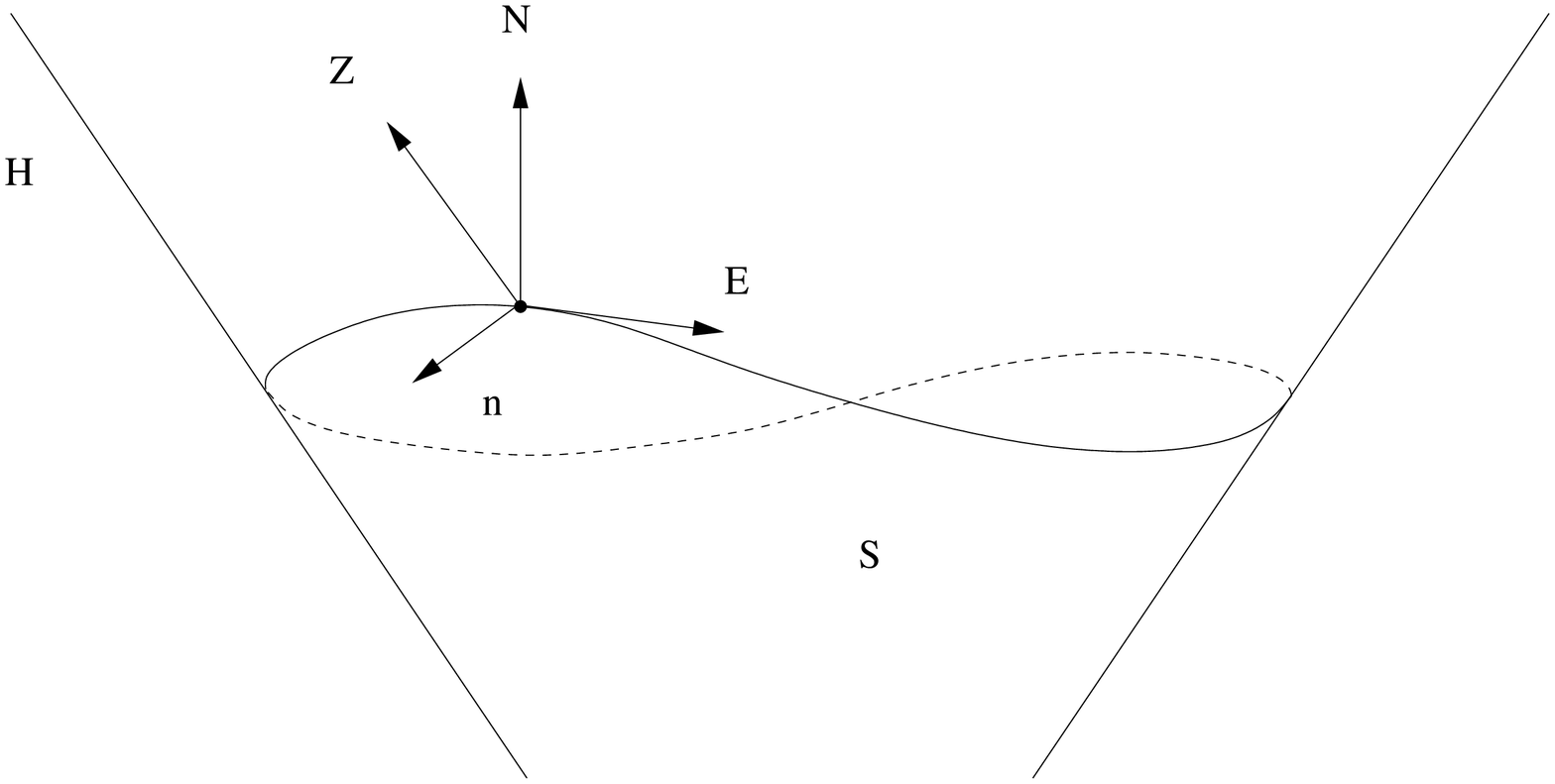}
\end{center}
\caption{Spacelike cross-section of the event horizon.} \label{Smarr}
\end{figure}

A cross section of the event horizon can be obtained by taking the limit as $r \to r_+$ of a surface of constant $(t,r)$. The induced metric is then
\[
ds^2 = (r_+^2 + a^2 \cos^2 \theta) d \theta^2 + \left( r_+^2 + a^2 + \frac{2Ma^2r_+\sin^2\theta}{r_+^2 + a^2 \cos^2 \theta} \right) \sin^2\theta d\varphi^2,
\]
and its area is
\[
A = 4 \pi (r_+^2 + a^2) = 8 \pi (M^2 + M \sqrt{M^2 - a^2}),
\]
whence
\[
M = \sqrt{\frac{A}{16\pi} + \frac{4\pi J^2}{A}}.
\]
Noting that $M=M(A,J)$ is homogeneous of degree $1/2$, we know from Euler's homogeneous function theorem that
\[
\frac12 M = \frac{\partial M}{\partial A} A + \frac{\partial M}{\partial J} J.
\]
On the other hand, it is easy to check that
\[
\frac{\partial M}{\partial J} = \Omega.
\]
Smarr's formula then implies the following result:

\begin{Thm} ({\bf First law of black hole thermodynamics})
The function $M=M(A,J)$ giving the mass of a Kerr black hole as a function of the area of (spacelike cross-sections of) its horizon and its angular momentum satisfies
\[
dM = \frac{k}{8\pi} dA + \Omega dJ.
\]
\end{Thm}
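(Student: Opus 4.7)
The plan is to combine the Smarr formula with Euler's homogeneous function theorem (both recalled in the excerpt just before the statement) together with the identity $\partial M/\partial J = \Omega$. First I would observe that the explicit formula
\[
M(A,J) = \sqrt{\frac{A}{16\pi} + \frac{4\pi J^2}{A}}
\]
makes $M$ homogeneous of degree $1/2$ in $(A,J)$: under $(A,J)\mapsto(\lambda A,\lambda J)$ one has $M\mapsto \lambda^{1/2}M$. Euler's theorem therefore yields
\[
\tfrac12 M = A\,\frac{\partial M}{\partial A} + J\,\frac{\partial M}{\partial J}.
\]

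Next I would verify the stated relation $\partial M/\partial J=\Omega$ by a direct computation: differentiating $M^2 = A/(16\pi)+4\pi J^2/A$ in $J$ gives $\partial_J M = 4\pi J/(AM)$, while using $J=Ma$, $A = 8\pi M r_+$, and $\Omega = a/(2Mr_+)$ yields the same value $4\pi J/(AM)$. Substituting $\partial_J M=\Omega$ into Euler's relation gives
\[
\tfrac12 M = A\,\frac{\partial M}{\partial A} + \Omega J.
\]
Halving the Smarr formula from the preceding paragraph, $M = \frac{kA}{4\pi}+2\Omega J$, produces
\[
\tfrac12 M = \frac{kA}{8\pi} + \Omega J.
\]
Comparing the two expressions for $\tfrac12 M$ and cancelling $\Omega J$ leaves $A\,\partial_A M = kA/(8\pi)$, hence $\partial M/\partial A = k/(8\pi)$.

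Finally, since $M$ is a smooth function of $(A,J)$, the usual chain rule gives
\[
dM = \frac{\partial M}{\partial A}\,dA + \frac{\partial M}{\partial J}\,dJ = \frac{k}{8\pi}\,dA + \Omega\,dJ,
\]
which is the first law. There is essentially no hard step here: the Smarr identity and Euler's relation are two linear equations in the two partial derivatives $\partial_A M$ and $\partial_J M$, so once the value of one partial derivative is known the other is forced. The only place where a small verification is needed is the check that $\partial M/\partial J = \Omega$, which follows from the explicit algebraic relation between $A$, $M$, $J$ and $r_+$ on the horizon.
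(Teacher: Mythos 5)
Your proposal is correct and follows essentially the same route as the text: Euler's homogeneous function theorem applied to the degree-$\tfrac12$ homogeneous function $M(A,J)$, the verification that $\partial M/\partial J=\Omega$, and comparison with Smarr's formula to pin down $\partial M/\partial A = k/(8\pi)$. Your explicit check of $\partial M/\partial J=\Omega$ via $2M\,\partial_J M = 8\pi J/A$ together with $A=8\pi Mr_+$ and $r_+^2+a^2=2Mr_+$ is exactly the computation the text leaves to the reader.
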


This formula provides an easy way to compute the surface gravity of a Kerr black hole horizon:
\[
k = \frac1{4M} - M \Omega^2.
\]
In particular, for an extremal black hole ($r_+ = a = M$) we have
\[
\Omega = \frac{1}{2M} \Rightarrow k = 0.
\]

\section{Second law} \label{sec7.4}

We consider arbitrary test fields propagating on a Kerr background. Apart from ignoring their gravitational backreaction, we make no further hypotheses on the fields: they could be any combination of scalar or electromagnetic fields, fluids, elastic media, or other types of matter. By the Einstein equation, their combined energy-momentum tensor $T$ must satisfy
\[
\nabla_\mu T^{\mu\nu} = 0.
\]
Using the symmetry of $T$ and the Killing equation, 
\[
\nabla_\mu X_\nu + \nabla_\nu X_\mu = 0,
\]
we have
\[
\nabla_\mu (T^{\mu\nu} X_\nu) = 0.
\]
This conservation law suggests that the total field energy on a given spacelike hypersurface $S$ extending from the black hole event horizon $\mathscr{H^+}$ to infinity (Figure~\ref{Penrose_energy}) should be
\[
E = \int_S T^{\mu\nu} X_\nu N_\mu ,
\]
where $N$ is the future-pointing unit normal to $S$.

\begin{figure}[h!]
\begin{center}
\psfrag{i+}{$i^+$}
\psfrag{i0}{$i^0$}
\psfrag{i-}{$i^-$}
\psfrag{H}{$H$}
\psfrag{H+}{$\mathscr{H^+}$}
\psfrag{H-}{$\mathscr{H^-}$}
\psfrag{I+}{$\mathscr{I^+}$}
\psfrag{I-}{$\mathscr{I^-}$}
\psfrag{I}{$\mathscr{I}$}
\psfrag{S0}{$S_0$}
\psfrag{S1}{$S_1$}
\epsfxsize=0.5\textwidth
\leavevmode
\epsfbox{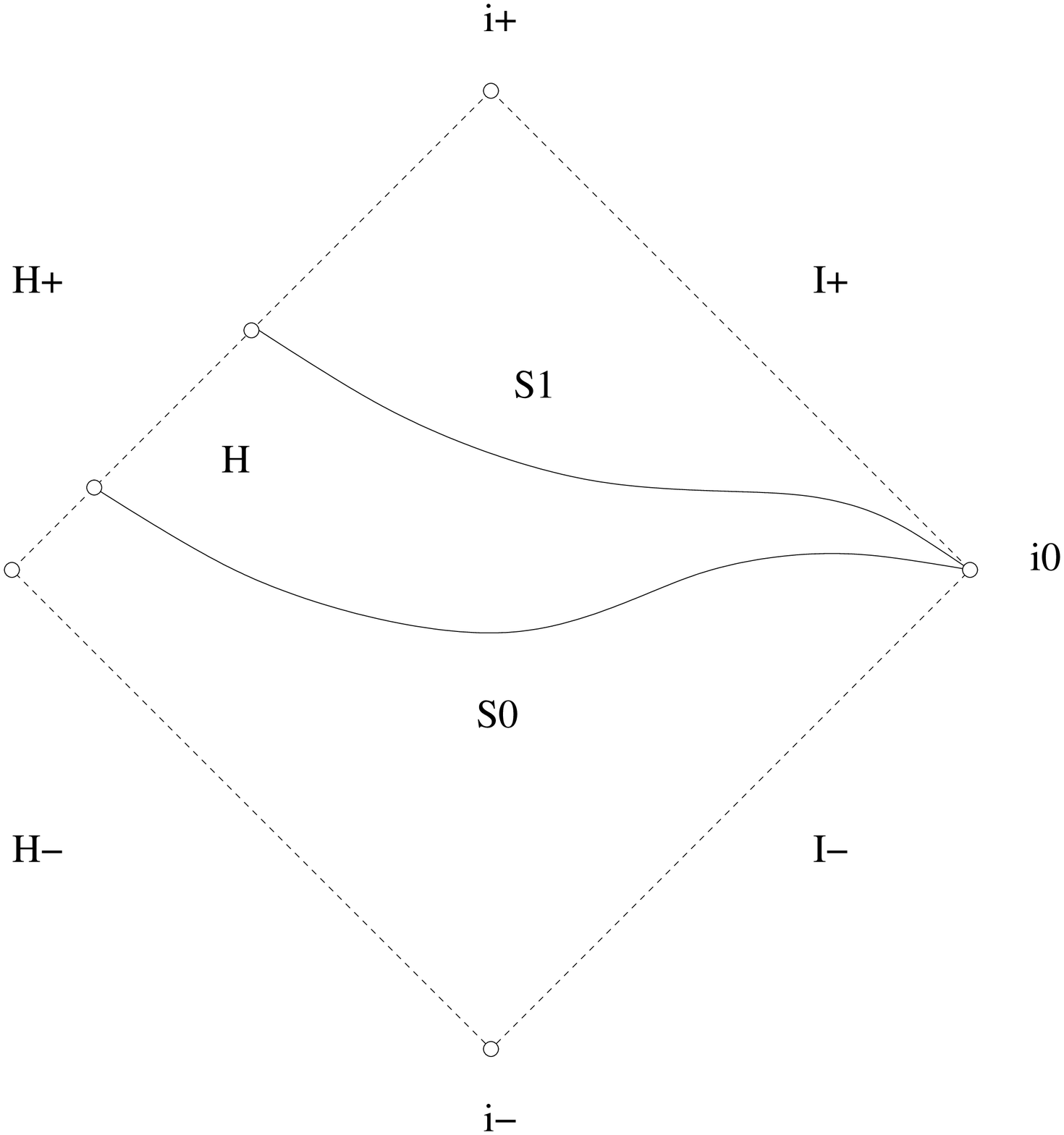}
\end{center}
\caption{Penrose diagrams for the region of outer communication of the Kerr spacetime.} \label{Penrose_energy}
\end{figure}

Analogously, the total field angular momentum on a spacelike hypersurface $S$ extending from the event horizon to infinity is
\begin{equation}
L = - \int_S T^{\mu\nu} Y_\nu N_\mu ,
\end{equation}
where the minus sign accounts for the timelike unit normal. 

Consider now two such spacelike hypersurfaces, $S_0$ and $S_1$, with $S_1$ to the future of $S_0$ (Figure~\ref{Penrose_energy}). The energy absorbed by the black hole across the subset $H$ of $\mathscr{H^+}$ between $S_0$ and $S_1$ is then
\[
\Delta M = \int_{S_0} T^{\mu\nu} X_\nu N_\mu  - \int_{S_1} T^{\mu\nu} X_\nu N_\mu ,
\]
whereas the angular momentum absorbed by the black hole across $H$ is
\[
\Delta J = - \int_{S_0} T^{\mu\nu} Y_\nu N_\mu  + \int_{S_1} T^{\mu\nu} Y_\nu N_\mu .
\]
Therefore, we have
\[
\Delta M  - \Omega \Delta J = \int_{S_0} T^{\mu\nu} Z_\nu N_\mu  - \int_{S_1} T^{\mu\nu} Z_\nu N_\mu .
\]
Because $Z$ is also a Killing vector field,
\[
\nabla_\mu (T^{\mu\nu} Z_\nu) = 0,
\]
and so the divergence theorem, applied to the region bounded by $S_0$, $S_1$ and $H$, yields
\[
\Delta M  - \Omega \Delta J = \int_{H} T^{\mu\nu} Z_\nu Z_\mu 
\]
(we use $-Z$ as the null normal on $H$). Therefore we have the following result.

\begin{Thm} ({\bf Second law of black hole thermodynamics, test field version})
If the energy-momentum tensor $T$ corresponding to any collection of test fields propagating on a Kerr background satisfies the null energy condition at the event horizon then the energy $\Delta M$ and the angular momentum $\Delta J$ absorbed by the black hole satisfy
\[
\Delta M  \geq \Omega \Delta J.
\]
\end{Thm}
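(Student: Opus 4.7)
The plan is direct, because the identity
\[
\Delta M - \Omega \Delta J = \int_H T^{\mu\nu} Z_\mu Z_\nu
\]
has essentially been derived in the paragraph immediately preceding the statement: setting $W^\mu = T^{\mu\nu} Z_\nu$ with $Z = X + \Omega Y$, the divergence $\nabla_\mu W^\mu$ vanishes by conservation of $T$, symmetry of $T$, and the Killing equation $\nabla_{(\mu} Z_{\nu)} = 0$. Applying the divergence theorem to the region bounded by $S_0$, $S_1$, and the null hypersurface $H \subset \mathscr{H}^+$ between them, with the convention of Section 5.1 that takes $-Z$ as the null normal on $H$, converts the difference of fluxes on the two spacelike hypersurfaces into the single integral above. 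Hence all that is left is a pointwise sign check on $H$.

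For that sign check I would invoke the key structural fact used throughout Section 7.3: the event horizon of the Kerr solution is a Killing horizon for $Z = X + \Omega Y$, so $\langle Z, Z \rangle = 0$ on $H$. Consequently $Z|_H$ is a null vector field, and the null energy condition hypothesized in the statement yields
\[
T^{\mu\nu} Z_\mu Z_\nu \geq 0
\]
pointwise on $H$. Integrating against the positive induced measure coming from the boundary convention above gives $\Delta M - \Omega \Delta J \geq 0$, as claimed.

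The only delicate point — and hence the place where I would spend the most care — is the bookkeeping for the divergence theorem on a null boundary, so that the identity really has the sign written above rather than its opposite. One must use the convention of Section 5.1 for null boundaries, namely $N = E_1 + E_2$ with $E_1$ timelike and outward and $E_2$ spacelike and inward, then take $n = -N$, and verify that on a future event horizon this prescription picks out $-Z$ (up to a positive multiple) and produces a positive induced volume form. A secondary technical caveat is to note that the test-field hypothesis tacitly assumes enough decay of $T$ at spatial infinity for the divergence theorem to close without an $i^0$ contribution; once these conventions are settled, the rest is a one-line application of the null energy condition.
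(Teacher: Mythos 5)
Your proposal follows exactly the paper's own argument: the identity $\Delta M - \Omega\Delta J = \int_H T^{\mu\nu}Z_\mu Z_\nu$ is obtained from the divergence theorem applied to the divergence-free current $T^{\mu\nu}Z_\nu$ on the region bounded by $S_0$, $S_1$ and $H$ (with $-Z$ as the null normal on $H$, per the convention of Section 5.1), and the conclusion then follows pointwise from the null energy condition because $Z$ is null on the Killing horizon. Your extra care about the sign bookkeeping on the null boundary is exactly the implicit step in the paper's treatment, so the proof is correct and essentially identical.
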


If we think of Kerr black holes as stationary states and imagine that the interaction of a Kerr black hole with test fields results in a new Kerr black hole, then, in view of the first law of black hole thermodynamics, we can rewrite the result above as
\[
dA = \frac{8\pi}{k} (dM - \Omega dJ) \geq 0,
\]
that is, the area of the event horizon of a Kerr black hole can only increase as a result of its interaction with test fields. 

In fact, it is possible to prove a general result about the area of the event horizon of a black hole spacetime.

\begin{Prop}\label{Prop_horizon}
The event horizon of a black hole spacetime is ruled by null geodesics.
\end{Prop}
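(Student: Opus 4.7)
The plan is to show that through every point $p \in \mathscr{H^+}$ there passes a future-directed null geodesic contained in $\mathscr{H^+}$, obtained as a limit of timelike curves escaping to $\mathcal{I}$, and then to invoke Corollary~\ref{null_geodesic} to identify it as a null geodesic.

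First, I would observe that $\mathscr{H^+}=\partial\mathcal{B}=\partial J^-(\mathcal{I})$, since boundaries of a set and of its complement coincide. Given $p\in \mathscr{H^+}$, I would choose a sequence $p_n\in I^-(\mathcal{I})$ converging to $p$, together with future-directed timelike curves $c_n$ starting at $p_n$ and reaching $\mathcal{I}$. Applying the standard limit curve lemma of Lorentzian causality (an Arzelà–Ascoli compactness argument for causal curves, parameterized for instance by the global time function guaranteed by global hyperbolicity in the relevant region), I would extract a future-directed causal curve $\gamma$ starting at $p$. The most delicate issue here is ensuring that $\gamma$ is not trivial, i.e.\ that it has nonzero extent to the future of $p$; this is where the limit curve argument must be made precise, and it is the main technical obstacle in the proof.

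Next I would argue that $\gamma \subset \mathscr{H^+}$. On the one hand, every point $q$ of $\gamma$ is a limit of points on the $c_n$, which lie in $J^-(\mathcal{I})$, so $q\in\overline{J^-(\mathcal{I})}$. On the other hand, no point $q$ of $\gamma$ can lie in $I^-(\mathcal{I})$: otherwise there would exist $r \in \mathcal{I}$ with $r \in I^+(q)$, and since $q \in J^+(p)$, item~(4) of Proposition~\ref{global} would yield $r \in I^+(p)$, i.e.\ $p \in I^-(\mathcal{I}) = \inte J^-(\mathcal{I})$, contradicting $p \in \partial J^-(\mathcal{I})$. Hence $q \in \overline{J^-(\mathcal{I})}\setminus I^-(\mathcal{I}) \subset \mathscr{H^+}$.

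Finally, for any two points $p, q$ on $\gamma$ with $q$ to the future of $p$, the same argument shows that $q \notin I^+(p)$: if $q$ were in $I^+(p)$, composing a timelike segment from $p$ to $q$ with the portion of $\gamma$ from $q$ to $\mathcal{I}$ (in the limiting sense) would again force $p$ into $I^-(\mathcal{I})$ via item~(5) of Proposition~\ref{global}. Thus the segment of $\gamma$ between $p$ and $q$ lies in $J^+(p)\setminus I^+(p)$, and Corollary~\ref{null_geodesic} forces it to be a reparameterized null geodesic. Since $p$ was arbitrary, $\mathscr{H^+}$ is ruled by null geodesics, which moreover have no future endpoints on $\mathscr{H^+}$ (they continue as generators for as long as they remain on the horizon).
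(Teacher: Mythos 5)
Your overall strategy is the same as the paper's — exhibit, through each $p\in\mathscr{H^+}$, a future-directed causal curve obtained as a limit of causal curves escaping to $\mathcal{I}$, show it stays on $\partial J^-(\mathcal{I})$, and conclude it is null — and your achronality argument (no point of $\gamma$ lies in $I^+(p)$, hence the segment sits in $J^+(p)\setminus I^+(p)$ and Corollary~\ref{null_geodesic} applies) is a perfectly good, and slightly slicker, way to identify the limit curve as a null geodesic than the paper's direct elimination of the timelike case. But you have left open exactly the step you flag as ``the main technical obstacle'': producing a \emph{nondegenerate} limit curve starting at $p$. As written, an appeal to ``the standard limit curve lemma'' is not a proof; without controlling the curves $c_n$ near $p$ there is nothing to prevent the limit from collapsing to the single point $p$, and the rest of the argument has nothing to act on.

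The paper closes this gap with a concrete device you should adopt: fix a simple neighborhood $U\ni p$ (a geodesically convex set with compact boundary $\partial U$ contained in a larger convex set, as in Proposition~\ref{compact}), let $q_n$ be the \emph{first} intersection of $c_n$ with $\partial U$, and replace the portion of $c_n$ inside $U$ by the unique causal geodesic $\gamma_n$ from $p_n$ to $q_n$ in $U$. Compactness of $\partial U$ gives an accumulation point $q\in\partial U$ of the $q_n$, and since the exponential map is a diffeomorphism on $U$ the geodesics $\gamma_n$ converge to a causal geodesic $\gamma$ from $p$ to $q$ — nontrivial by construction, since $q\in\partial U$ and $p\in\inte U$. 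From there your argument goes through verbatim: $\gamma\subset\overline{J^-(\mathcal{I})}$ as a limit of points of $J^-(\mathcal{I})$, $\gamma$ avoids the open set $J^-(\mathcal{I})=I^-(\mathcal{I})$ by your Proposition~\ref{global} argument, so $\gamma\subset\mathscr{H^+}$ and is null. Note also that your closing assertion that the generators have no future endpoints on the horizon requires the same covering-by-simple-neighborhoods argument repeated along the maximal extension of $\gamma$; this is stated but not free.
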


\begin{proof}
Given the causal structure of Minkowski's spacetime, it is clear that $J^-(\mathcal{I})$ coincides with $I^-(\mathcal{I})$, an open set. Let $p \in \mathscr{H}$ be any point in the event horizon and let $U \ni p$ be a simple neighborhood (see Proposition~\ref{compact} in Chapter~\ref{chapter4}). Given a sequence $\{p_n\} \subset U \cap J^-(\mathcal{I})$  converging to $p$, let $c_n$ be a future-pointing causal curve connecting $p_n$ to $\mathcal{I}$, let $q_n$ be the first intersection of $c_n$ with $\partial U$, and let $\gamma_n$ be the future-pointing causal geodesic connecting $p_n$ to $q_n$ in $U$ (see Figure~\ref{horizon}). Since the exponential map centered on any point in $U$ is a diffeomorphism, these geodesics converge to a future-pointing causal geodesic $\gamma$ in $U$ with initial point $p$. Note that $\gamma$ cannot enter $J^-(\mathcal{I})$, because then we would have $p \in J^-(\mathcal{I})\equiv\inte J^-(\mathcal{I})$, in contradiction with $p \in \mathscr{H}\equiv\partial J^-(\mathcal{I})$. Moreover, every point in $\gamma$ is the limit of points in $\gamma_n$, hence points in $J^-(\mathcal{I})$. We conclude that $\gamma$ is a curve on $\overline{J^-(\mathcal{I})} \setminus \inte J^-(\mathcal{I}) = \mathscr{H}$. Finally, $\gamma$ cannot be a timelike geodesic, because then the sequence $q_n$ would enter the open set $I^+(p)$, and we would again have $p \in J^-(\mathcal{I})$. We conclude that $\gamma$ is a null geodesic. If we extend $\gamma$ maximally towards the future we obtain a future-inextendible null geodesic; covering this curve with simple neighborhoods and applying similar arguments to the above, one can easily show that it never leaves $\mathscr{H}$.
\end{proof}

\begin{figure}[h!]
\begin{center}
\psfrag{U}{$U$}
\psfrag{p}{$p$}
\psfrag{pn}{$p_n$}
\psfrag{qn}{$q_n$}
\psfrag{H}{$\mathscr{H}$}
\epsfxsize=0.4\textwidth
\leavevmode
\epsfbox{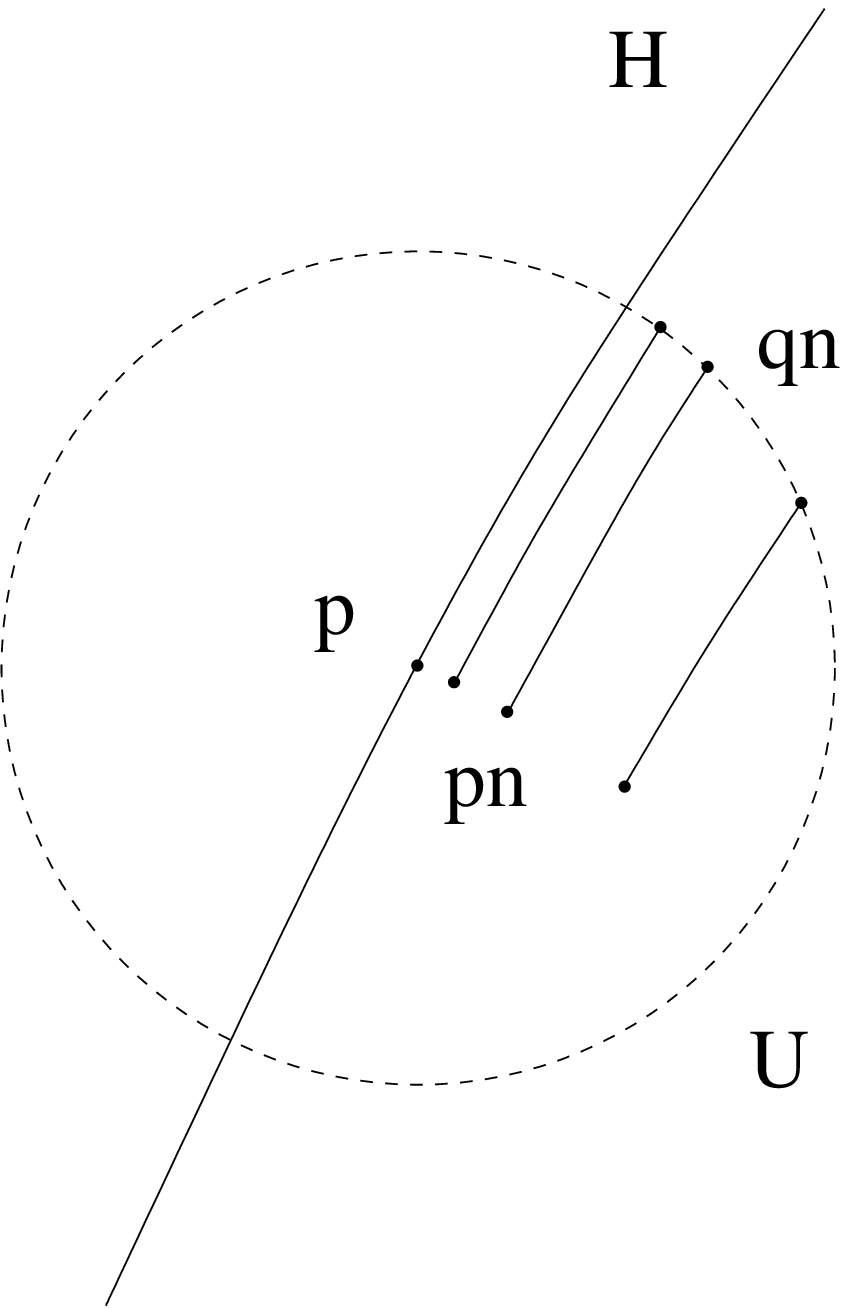}
\end{center}
\caption{Proof of Proposition~\ref{Prop_horizon}.} \label{horizon}
\end{figure}

\pagebreak

\begin{Thm} ({\bf Second law of black hole thermodynamics, Hawking's version \cite{H72}})
If the energy-momentum tensor of a black hole spacetime satisfies the null energy condition at the event horizon and the null geodesics ruling the event horizon are complete towards the future then their expansion is never negative. In particular, the area of any spacelike cross-section of the event horizon cannot decrease towards the future.
\end{Thm}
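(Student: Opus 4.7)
The plan is to reduce the second statement (monotonicity of area) to the first (pointwise non-negativity of the expansion $\theta$) using the formula
\[
\theta = \frac{\partial}{\partial r} \log \sqrt{\det(\gamma_{AB})}
\]
derived in the discussion preceding Proposition~\ref{conjugate_null}. First I would invoke Proposition~\ref{Prop_horizon} to fix a ruling of $\mathscr{H^+}$ by future-complete affinely parameterized null geodesics $c_p(r)$, and pick a spacelike cross-section $\Sigma_0$. Flowing $\Sigma_0$ along the generators gives a family $\Sigma_r$; integrating the displayed ODE along each generator shows $\sqrt{\det \gamma(r)} = \sqrt{\det \gamma(0)} \exp\!\left(\int_0^r \theta\, ds\right)$. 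Once $\theta \geq 0$ is known, this gives pointwise non-decrease of the area element along each generator, and so any two spacelike cross-sections $\Sigma_0,\Sigma_1$ with $\Sigma_1$ to the future satisfy $\mathrm{Area}(\Sigma_1) \geq \mathrm{Area}(\Sigma_0)$ (one must also observe that generators cannot leave the horizon toward the future, by completeness, though new generators may join, which only increases $\Sigma_1$).

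Thus the heart of the proof is showing $\theta \geq 0$ on $\mathscr{H^+}$. I would argue by contradiction: suppose $\theta(p) = \theta_0 < 0$ at some $p \in \mathscr{H^+}$. The Raychaudhuri-type computation in the proof of Proposition~\ref{conjugate_null}, applied to the generator $c_p$, gives
\[
\frac{\partial \theta}{\partial r} + \frac12 \theta^2 \leq -R_{rr} - \left(\text{shear squared}\right) \leq 0,
\]
where the null energy condition supplies $R_{rr} \geq 0$. Integrating yields $\theta(r)^{-1} \geq \theta_0^{-1} + r/2$, so $\theta$ diverges to $-\infty$ at some affine parameter $r_\ast \leq -2/\theta_0$. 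Since $c_p$ is future-complete, $q = c_p(r_\ast)$ is a genuine point of $\mathscr{H^+}$, and is a focal point of a small smooth spacelike 2-surface $\Sigma \subset \mathscr{H^+}$ chosen transversely through $p$.

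The final step is the generalization of Proposition~\ref{conjugate_null_I+}: any point $q' = c_p(r_\ast + \varepsilon)$ strictly past the focal point lies in $I^+(\Sigma)$, by the same smoothing-the-corner argument (a nearby generator meeting $c_p$ at $q$ plus a small piece of $c_p$ between $q$ and $q'$ forms a causal but non-geodesic curve from $\Sigma$ to $q'$, which may then be perturbed to a timelike curve). But future-completeness forces $q'$ to lie on $\mathscr{H^+}$, and $\mathscr{H^+} = \partial I^-(\mathcal{I})$ is achronal: if one had $q' \in I^+(\Sigma)$ with both $p_0 \in \Sigma$ and $q'$ in $\partial I^-(\mathcal{I})$, then the open set $I^+(p_0)$ would contain a point $q'' \in I^-(\mathcal{I})$ (since $q' \in \partial I^-(\mathcal{I})$ is a limit of such points), and concatenating the timelike curves $p_0 \to q''$ and $q'' \to \mathcal{I}$ would place $p_0$ in the open set $I^-(\mathcal{I})$, contradicting $p_0 \in \partial I^-(\mathcal{I})$. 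This contradiction forces $\theta \geq 0$ everywhere.

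The main obstacle I anticipate is the generalization of Proposition~\ref{conjugate_null_I+} to 2-surfaces on the horizon itself rather than on a spacelike Cauchy hypersurface: one must justify that the ``corner-smoothing'' argument still produces a timelike curve when the initial surface $\Sigma$ is null-orthogonal to $c_p$ (as it is, sitting in the null $\mathscr{H^+}$) instead of spacelike-orthogonal. A secondary difficulty is that $\mathscr{H^+}$ is only a Lipschitz hypersurface a priori, so the expansion $\theta$ and the cross-sectional geometry must be interpreted along individual smooth generators rather than as smooth fields on $\mathscr{H^+}$; this is consistent with the way the theorem is stated and does not affect the contradiction, but any quantitative version of the area-monotonicity statement would require taking a limit of approximating smooth cross-sections.
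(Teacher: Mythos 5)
Your proposal is correct and follows essentially the same route as the paper: assume $\theta<0$ at some point of a generator, use the Raychaudhuri inequality from Proposition~\ref{conjugate_null} together with the null energy condition and future-completeness to force a conjugate point, then use the analogue of Proposition~\ref{conjugate_null_I+} to push the generator into $I^+(\Sigma)$, contradicting Proposition~\ref{Prop_horizon}. You in fact supply two details the paper leaves implicit — the achronality argument showing that entering $I^+(\Sigma)$ really does mean leaving $\mathscr{H^+}$, and the integration of $\theta=\partial_r\log\sqrt{\det\gamma}$ that converts pointwise non-negativity of the expansion into monotonicity of the cross-sectional area — both of which are handled correctly.
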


\begin{proof}
Note that the null geodesics ruling the event horizon are orthogonal to any spacelike cross-section $\Sigma$, so that the discussion preceding the proof of Penrose's singularity theorem in Chapter~\ref{chapter4} applies. Suppose that the expansion of some null geodesic were negative at some point $p \in \mathscr{H}$. Then, by the analogue of Proposition~\ref{conjugate_null} in Chapter~\ref{chapter4}, that null geodesic would have a conjugate point to the future of $p$, after which, by the analogue of Proposition~\ref{conjugate_null_I+} in Chapter~\ref{chapter4}, it would leave $\mathscr{H}$. Since this would contradict Proposition~\ref{Prop_horizon}, the expansion can never be negative.
\end{proof}


\section{Hawking radiation and black hole thermodynamics} \label{sec7.6}

Recall the three laws of thermodynamics for, say, a gas:

\begin{itemize}
\item
{\bf Zeroth law:} The temperature is constant throughout the gas when thermal equilibrium has been reached.
\item
{\bf First law:} The internal energy $U$ as a function of the gas entropy $S$ and volume $V$ satisfies
\[
dU = T dS - p dV,
\]
where $T$ is the temperature and $p$ is the pressure.
\item
{\bf Second law:} The entropy of the gas cannot decrease towards the future.
\end{itemize}

These are remarkably similar to the three laws of black hole thermodynamics, if we identify the black hole mass $M$ with the internal energy, (some multiple of) the horizon's surface gravity with the black hole's temperature and (some multiple of) the horizon's area with the black hole's entropy. Inspired by this analogy, Bekenstein \cite{Bek72} proposed in 1972 that black holes are indeed thermodynamic systems. Hawking initially resisted this suggestion, since objects in thermal equilibrium at a given temperature must emit black body radiation, which black holes cannot (classically) do. In 1974, however, Hawking \cite{H74} applied methods of quantum field theory in curved spacetime to show that black holes do indeed emit particles with a thermal spectrum corresponding to the temperature
\[
T = \frac{k}{2\pi}.
\]
In particular, this fixed the black hole entropy as
\[
S = \frac{A}4.
\]

\section{Exercises} \label{sec7.7}

\begin{enumerate}

\item
To compute the Komar mass and angular momentum of the Kerr solution we consider the region $r \gg M,a$.
\begin{enumerate}
\item
Show that a positive orthonormal coframe is approximately given in this region by
\begin{align*}
& \omega^0 \sim dt, \qquad \omega^r \sim dr, \qquad \omega^\theta \sim r d\theta, \\
& \omega^\varphi \sim r \sin\theta d\varphi - \frac{2Ma\sin\theta}{r^2} dt.
\end{align*}
\item
Establish the following asymptotic formulas:
\begin{align*}
& X^\sharp \sim - \left( 1 - \frac{2M}{r} \right) dt - \frac{2Ma\sin^2\theta}{r} d\varphi; \\
& Y^\sharp \sim - \frac{2Ma\sin^2\theta}{r} dt + r^2 \sin^2\theta d\varphi; \\
& dX^\sharp \sim \frac{2M}{r^2} \omega^0 \wedge \omega^r + \ldots ; \\
& dY^\sharp \sim - \frac{6Ma\sin^2\theta}{r^2} \omega^0 \wedge \omega^r + \ldots .
\end{align*}
\item
Prove that $M_{\text{Komar}} = M$ and $J_{\text{Komar}} = Ma$.
\end{enumerate}

\item
Show that the metric induced on the ergoshpere is
\begin{align*}
ds^2 = & - 2a\sin^2\theta dt d\varphi + \frac{2M^3r}{(r-M)^2} d \theta^2 \\
& + \left( r^2 + a^2 + a^2\sin^2\theta \right) \sin^2\theta d\varphi^2,
\end{align*}
where $r = M + \sqrt{M^2-a^2\cos^2\theta}$. Prove that this metric is Lorentzian.

\item
The symmetry semiaxis $\theta=0$ is a totally geodesic submanifold of the Kerr solution, with metric
\[
ds^2 = - \frac{\Delta}{r^2 + a^2} dt^2 + \frac{r^2 + a^2}{\Delta} dr^2.
\]
Obtain the maximal analytical extension of this submanifold. Note that $r=0$ is not a singularity, and so the metric can be continued for negative values of $r$. 

\item
Consider the static and spherically symmetric metric given in local coordinates by
\[
\hspace{2cm} ds^2 = - V(r) dt^2 + V(r) dr^2 + r^2 \left( d\theta^2 + \sin^2 \theta d\varphi^2 \right).
\]
\begin{enumerate}
\item
Show that this metric can be written in the form
\[
\hspace{2cm} ds^2 = - V(r) dv^2 + 2 dv dr + r^2 \left( d\theta^2 + \sin^2 \theta d\varphi^2 \right),
\]
where
\[
v = t + \int \frac{dr}{V(r)},
\]
\item
Assume that there $V$ has an isolated zero at $r=r_H$, so that this hypersurface is a Killing horizon. Show that the corresponding surface gravity relative to $\frac{\partial}{\partial v}$ is
\[
k = \frac12 V'(r_H).
\]
\end{enumerate}

\item
Consider a one-parameter family of null geodesics $\gamma(\tau,\lambda)$ connecting two timelike curves $c_0(\tau)=\gamma(\tau,0)$ and $c_1(\tau)=\gamma(\tau,1)$. Show that
\[
\frac{\partial}{\partial \lambda} \left\langle \frac{\partial \gamma}{\partial \tau}, \frac{\partial \gamma}{\partial \lambda} \right\rangle = 0,
\]
and use this to prove that if $\tau$ is the proper time for the curve $c_0$ and
\[
\tau'(\tau) = \int_{\tau_0}^\tau \left|\dot{c}_1(t)\right|dt
\]
is the proper time for the curve $c_1$ then
\[
\frac{d\tau'}{d\tau} = \frac{E_0}{E_1},
\]
where
\[
E_0 = - \left\langle \dot{c}_0, \frac{\partial \gamma}{\partial \lambda} \right\rangle \quad \text{ and } \quad E_1 = - \left\langle \frac{\dot{c}_1}{|\dot{c}_1|}, \frac{\partial \gamma}{\partial \lambda} \right\rangle 
\]
are the energies of the null geodesic $\gamma(\tau,\lambda)$ as measured by the observers corresponding to $c_0$ and $c_1$.

\item
Compute the area, the angular velocity and the surface gravity of a Kerr black hole horizon (you may find the relation $r_+^2 + a^2 = 2Mr_+$ to be useful here).

\item
Prove that test fields satisfying the null energy condition at the event horizon cannot destroy an extremal Kerr black hole. More precisely, prove that if an extremal black hole is characterized by the physical quantities $(M,J)$, and absorbs energy and angular momentum $(\Delta M,\Delta J)$ by interacting with the test fields, then the metric corresponding to the physical quantities $(M+\Delta M, J + \Delta J)$ represents, to first order in $\Delta M$ and $\Delta J$, either a subextremal or an extremal Kerr black hole.

\item
Show that if a test field satisfying the null energy condition at the event horizon extracts energy from a Kerr black hole then $a=J/M$ always decreases. What fraction of the black hole's mass can be extracted?

\item
Check that the second law holds for the black hole resulting from an Oppenheimer-Snyder collapse.

\item
Use the second law of black hole thermodynamics to:
\begin{enumerate}
\item
Prove that a Schwarzschild black hole cannot split into two Kerr black holes;
\item
Give an upper bound to the energy released in the form of gravitational waves when two Kerr black holes coalesce to form a Schwarzschild black hole. Can the efficiency of this process ever exceed $50\%$?
\end{enumerate}

\end{enumerate}


\chapter*{Appendix: Mathematical concepts for physicists} \label{appendix}

In this appendix we list some mathematical concepts which will be used in the main text, for the benefit of readers whose background is in Physics.

\section*{Topology} \label{secA.1}

\begin{Def}
A {\bf topological space} is a set $M$ with a {\bf topology}, that is, a list of the {\bf open subsets} of $M$, satisfying:
\begin{enumerate}
\item Both $\varnothing$ and $M$ are open;
\item Any union of open sets is open;
\item Any finite intersection of open sets is open.
\end{enumerate}
\end{Def}

All the usual topological notions can now be defined. For instance, a {\bf closed set} is a set whose complement is open. The {\bf interior} $\inte A$ of a subset $A \subset M$ is the largest open set contained in $A$, its {\bf closure} $\overline{A}$ is the smallest closed set containing $A$, and its {\bf boundary} is $\partial A = \overline{A} \setminus \inte A$.

The main object of topology is the study of limits and continuity.

\begin{Def}
A sequence $\{ p_n \}$ is said to {\bf converge} to $p \in M$ if for any open set $U \ni p$ there exists $k \in \bbN$ such that $p_n \in U$ for all $n \geq k$.
\end{Def}

\begin{Def}
A map $f:M \to N$ between two topological spaces is said to be {\bf continuous} if for each open set $U\subset N$ the preimage $f^{-1}(U)$ is an open subset of $M$. A bijection $f$ is called a {\bf homeomorphism} if both $f$ and its inverse $f^{-1}$ are continuous.
\end{Def}

A system of local coordinates on a manifold is an example of a homeomorphism between the coordinate neighborhood and an open set of $\bbR^n$.

Two fundamental concepts in topology are compactness and connectedness.

\begin{Def}
A subset $A \subset M$ is said to be {\bf compact} if every cover of $A$ by open sets admits a finite subcover. It is said to be  {\bf connected} it is impossible to write $A = (A \cap U) \cup (A \cap V)$ with $U,V$ disjoint open sets and $A \cap U, A \cap V \neq \varnothing$.
\end{Def}

The following result generalizes the theorems of Weierstrass and Bolzano.

\begin{Thm}
Continuous maps carry compact sets to compact sets, and connected sets to connected sets.
\end{Thm}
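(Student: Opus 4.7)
The plan is to treat the two assertions separately, since each reduces to pulling back the relevant open cover or disconnection through $f$ and using continuity to transfer the topological property from the image back to the domain.

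For the compactness statement, I would start with a continuous map $f: M \to N$ and a compact subset $A \subset M$, and show that $f(A) \subset N$ is compact directly from the open-cover definition. Given an arbitrary family $\{V_\alpha\}_{\alpha \in I}$ of open subsets of $N$ whose union contains $f(A)$, the preimages $U_\alpha = f^{-1}(V_\alpha)$ are open in $M$ by continuity, and their union contains $A$. Compactness of $A$ yields a finite subfamily $U_{\alpha_1}, \ldots, U_{\alpha_k}$ still covering $A$; then $V_{\alpha_1}, \ldots, V_{\alpha_k}$ cover $f(A)$, giving the desired finite subcover. This step is essentially mechanical once one remembers the basic set-theoretic identity $f(A) \cap V_\alpha \supset f(A \cap U_\alpha)$.

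For the connectedness statement, I would argue by contradiction. Suppose $f$ is continuous, $A \subset M$ is connected, but $f(A)$ fails to be connected. Then there exist open sets $U, V \subset N$ with $U \cap V = \varnothing$, such that $f(A) \subset U \cup V$ and both $f(A) \cap U$ and $f(A) \cap V$ are nonempty. Pulling back, the sets $f^{-1}(U)$ and $f^{-1}(V)$ are open in $M$, and one checks that $A \cap f^{-1}(U)$ and $A \cap f^{-1}(V)$ are disjoint, nonempty, and cover $A$. This contradicts the connectedness of $A$, completing the argument.

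Neither half presents a genuine obstacle; the only subtlety is to be careful that the definition of connectedness quoted in the excerpt is for subsets $A \subset M$ rather than for the whole space, so the pulled-back open sets must be intersected with $A$ before checking nonemptiness and disjointness. A minor point to watch is that $f^{-1}(U) \cap f^{-1}(V) = f^{-1}(U \cap V) = \varnothing$, which guarantees disjointness of the two pieces of $A$ without any further hypothesis on $f$.
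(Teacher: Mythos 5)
Your proof is correct and is the standard open-cover/pullback argument; the paper states this theorem without proof in its appendix, so there is no in-text argument to compare against. Both halves are handled properly, and you rightly adapted the disconnection step to the paper's definition of connectedness (which requires the separating open sets $U,V$ themselves to be disjoint, not merely their traces on $A$) by observing that $f^{-1}(U)\cap f^{-1}(V)=f^{-1}(U\cap V)=\varnothing$.
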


\section*{Metric spaces} \label{secA.2}

\begin{Def}
A {\bf metric space} is a set $M$ and a {\bf distance function} $d:M\times M \to [0,+\infty)$ satisfying:
\begin{enumerate}
\item
{\bf Positivity:} $d(p,q) \geq 0$ and $d(p,q)=0$ if and only if $p=q$;
\item
{\bf Symmetry:} $d(p,q)=d(q,p)$;
\item
{\bf Triangle inequality:} $d(p,r) \leq d(p,q) + d(q,r)$,
\end{enumerate}
for all $p,q,r \in M$.
\end{Def}

The {\bf open ball} with center $p$ and radius $\varepsilon$ is the set
\[
B_{\varepsilon}(p)=\{ q \in M \mid d(p,q) < \varepsilon \}.
\]
Any metric space has a natural topology, whose open sets are unions of open balls. In this topology $p_n \to p$ if and only if $d(p_n,p) \to 0$, $F \subset M$ is closed if and only if every convergent sequence in $F$ has limit in $F$, and $K \subset M$ is compact if and only if every sequence in $K$ has a sublimit in $K$.

A fundamental notion for metric spaces is completeness.

\begin{Def}
A sequence $\{ p_n \}$ in $M$ is said to be a {\bf Cauchy sequence} if for all $\varepsilon > 0$ there exists $N \in \bbN$ such that $d(p_n,p_m)<\varepsilon$ for all $m,n>N$. A metric space is said to be {\bf complete} if all its Cauchy sequences converge.
\end{Def}

In particular any compact metric space is complete.

\section*{Hopf-Rinow theorem} \label{secA.3}

\begin{Def}
A Riemannian manifold is said to be {\bf geodesically complete} if any geodesic is defined for every value of its parameter.
\end{Def}

\begin{Def}
Let $(M,g)$ be a connected Riemannian manifold and $p,q \in M$. The {\bf distance} between $p$ and $q$ is defined as
\[
d(p,q)=\inf \{ l(\gamma) \mid \gamma \text{ is a smooth curve connecting } p \text{ to } q \}.
\]
\end{Def}

It is easily seen that $(M,d)$ is a metric space. Remarkably, the completeness of this metric space is equivalent to geodesic completeness.

\begin{Thm}
{\em (Hopf-Rinow)} A connected Riemannian manifold $(M,g)$ is geodesically complete if and only if $(M,d)$ is a complete metric space.
\end{Thm}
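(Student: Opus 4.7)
The plan is to prove the two implications separately, the easy one first. Assuming $(M,d)$ is complete, suppose for contradiction that some geodesic $\gamma:[0,a)\to M$, parametrized by arclength, is maximally extended with $a<+\infty$. For any sequence $t_n\to a^-$ we have $d(\gamma(t_n),\gamma(t_m))\leq|t_n-t_m|$, so $\{\gamma(t_n)\}$ is Cauchy and converges to some $p\in M$. The plan is then to invoke the standard uniform-neighborhood lemma for the geodesic equation: there exist a neighborhood $U\ni p$ and $\varepsilon>0$ such that for every $q\in U$ and every unit vector $v\in T_qM$, the geodesic with initial data $(q,v)$ is defined on $(-\varepsilon,\varepsilon)$. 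Picking $t_n$ with $\gamma(t_n)\in U$ and $a-t_n<\varepsilon/2$ produces an extension of $\gamma$ past $a$, contradicting maximality.

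The serious direction is the converse. The plan is to establish first the central technical lemma: if $(M,g)$ is geodesically complete and connected, then any two points $p,q\in M$ are joined by a \emph{minimizing} geodesic, i.e.\ one whose length equals $d(p,q)$. Given this lemma, completeness of $(M,d)$ follows swiftly. A Cauchy sequence $\{p_n\}$ is bounded, say $\{p_n\}\subset\overline{B_R(p_1)}$. The lemma combined with geodesic completeness identifies $\overline{B_R(p_1)}$ with the image $\exp_{p_1}(\{v\in T_{p_1}M:|v|\leq R\})$, a continuous image of a compact Euclidean ball, hence compact. Any Cauchy sequence in a compact metric space converges, which is what we want.

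The hard part is the minimizing-geodesic lemma. Set $r=d(p,q)$, fix $\delta>0$ small enough that $\exp_p$ is a diffeomorphism on the ball of radius $2\delta$ in $T_pM$, and let $S_\delta(p)=\exp_p(\{|v|=\delta\})$, a compact geodesic sphere. Pick $x_0\in S_\delta(p)$ minimizing the continuous function $x\mapsto d(x,q)$, write $x_0=\exp_p(\delta v_0)$ with $|v_0|=1$, and define $\gamma(t)=\exp_p(tv_0)$, which by hypothesis exists for all $t\in\bbR$. The goal is to prove $\gamma(r)=q$. The plan is to analyze
\[
A=\{\,t\in[0,r]\;:\;d(\gamma(t),q)=r-t\,\}.
\]
Using that every curve from $p$ to $q$ must cross $S_\delta(p)$ together with the triangle inequality, one shows $\delta\in A$; $0\in A$ is trivial; and $A$ is closed by continuity. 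Let $t_0=\sup A\in A$; I would then argue $t_0=r$ by contradiction. If $t_0<r$, repeat the construction at $\gamma(t_0)$ with a small radius $\delta'>0$, obtaining $y_0$ on the geodesic sphere $S_{\delta'}(\gamma(t_0))$ minimizing distance to $q$, so that $d(y_0,q)=(r-t_0)-\delta'$. The triangle inequality then forces $d(p,y_0)\geq t_0+\delta'$, while the broken curve from $p$ to $\gamma(t_0)$ along $\gamma$ and then to $y_0$ along the short geodesic has length exactly $t_0+\delta'$; hence this broken curve is length-minimizing and therefore smooth, so $y_0=\gamma(t_0+\delta')$ and $t_0+\delta'\in A$, contradicting the definition of $t_0$.

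The main obstacle will be the rigidity step in the last paragraph: one must justify that a length-minimizing broken curve is smooth (hence a genuine geodesic). This reduces to the fact, established earlier in the notes via the Gauss Lemma and normal coordinates, that inside a geodesically convex neighborhood the unique length-minimizing curve between two points is the radial geodesic. Everything else is a matter of carefully chasing the triangle inequality and exploiting the compactness of small geodesic spheres.
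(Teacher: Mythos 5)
The text states the Hopf--Rinow theorem without proof (it sits in the appendix as background material), so there is no argument of the paper's to compare yours against; judged on its own, your proposal is the standard proof and it is correct. The easy direction, via the Lipschitz estimate $d(\gamma(t_n),\gamma(t_m))\le|t_n-t_m|$ for a unit-speed geodesic and the uniform lower bound $\varepsilon$ on the existence time of geodesics with initial data in a neighborhood of the limit point, is fine (one should add that the extension agrees with $\gamma$ on the overlap by uniqueness of geodesics). The hard direction correctly isolates the key lemma --- geodesic completeness implies any two points are joined by a minimizing geodesic --- and the connectedness argument on the set $A=\{t:d(\gamma(t),q)=r-t\}$ is the classical one. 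Three details deserve care when writing it out: (i) dispose separately of the case $r\le\delta$, where $q$ already lies in the normal ball and the radial geodesic is minimizing; (ii) the base step $d(x_0,q)=r-\delta$ needs both that every curve from $p$ to $q$ meets $S_\delta(p)$ and that $d(p,x)=\delta$ for $x\in S_\delta(p)$; the latter rests on radial geodesics being minimizing in normal balls, and note that these notes only prove the Lorentzian (maximizing) analogue of this local length comparison, so you must adapt the Gauss Lemma argument to the Riemannian minimizing case rather than cite it; (iii) the rigidity step you flag --- a broken minimizing curve has no corner --- is handled exactly as you indicate, by restricting to a geodesically convex neighborhood of the putative corner and using that every sub-arc of a minimizing curve is minimizing, hence coincides with the unique geodesic between its endpoints. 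With these details filled in, the proof is complete.
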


\section*{Differential forms} \label{secA.4}

\begin{Def}
A {\bf differential-form} $\omega$ {\bf of degree} $k$ is simply a completely anti-symmetric $k$-tensor: $\omega_{\alpha_1 \cdots \alpha_k} = \omega_{[\alpha_1 \cdots \alpha_k]}$.
\end{Def}

For instance, covector fields are differential forms of degree $1$. Differential forms are useful because of their rich algebraic and differential structure.

\begin{Def}
If $\omega$ is a $k$-form and $\eta$ is an $l$-form then their {\bf exterior product} is the $(k+l)$-form
\[
(\omega \wedge \eta)_{\alpha_1 \cdots \alpha_k \beta_1 \cdots \beta_l} = \frac{(k+l)!}{k!\,l!} \omega_{[\alpha_1 \cdots \alpha_k} \eta_{\beta_1 \cdots \beta_l]},
\]
and the {\bf exterior derivative} of $\omega$ is the $(k+1)$-form
\[
(d\omega)_{\alpha \alpha_1 \cdots \alpha_k} = (k+1) \nabla_{[\alpha} \omega_{\alpha_1 \cdots \alpha_k]},
\]
where $\nabla$ is any symmetric connection.
\end{Def}

It is easy to see that any $k$-form $\omega$ is given in local coordinates by
\begin{equation} \label{kform}
\omega = \sum_{\alpha_1 < \cdots < \alpha_k} \omega_{\alpha_1 \cdots \alpha_k}(x) dx^{\alpha_1} \wedge \cdots \wedge dx^{\alpha_k},
\end{equation}
and therefore has $n \choose k$ independent components on an $n$-dimensional manifold.

\begin{Prop} \label{propertiesforms}
If $\omega$, $\eta$ and $\theta$ are differential forms then:
\begin{enumerate}
\item
$\omega \wedge (\eta \wedge \theta) = (\omega \wedge \eta) \wedge \theta$;
\item
$\omega \wedge \eta = (-1)^{(\deg \omega)(\deg \eta)} \eta \wedge \omega$;
\item
$\omega \wedge (\eta + \theta) = \omega \wedge \eta + \omega \wedge \theta$;
\item
$d(\omega + \eta) = d \omega + d \eta$;
\item
$d(\omega \wedge \eta) = d \omega \wedge \eta + (-1)^{\deg \omega} \omega \wedge d \eta$;
\item
$d^2 \omega = 0$.
\end{enumerate}
\end{Prop}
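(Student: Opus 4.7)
The plan is to prove all six properties directly from the index-based definitions
\[
(\omega \wedge \eta)_{\alpha_1 \cdots \alpha_k \beta_1 \cdots \beta_l} = \frac{(k+l)!}{k!\,l!}\, \omega_{[\alpha_1 \cdots \alpha_k} \eta_{\beta_1 \cdots \beta_l]}, \qquad (d\omega)_{\alpha \alpha_1 \cdots \alpha_k} = (k+1)\, \nabla_{[\alpha} \omega_{\alpha_1 \cdots \alpha_k]},
\]
exploiting the fact that the definitions involve only antisymmetrization and a symmetric connection. Properties (3) and (4) are immediate, since antisymmetrization and covariant differentiation are linear operations. The remaining four properties all follow from one underlying lemma, which I would state and prove first: if $T_{\alpha_1\cdots\alpha_k}$ is already totally antisymmetric in some subset of its indices, then antisymmetrizing $T$ over any larger set containing that subset yields the same result as antisymmetrizing $T$ over the larger set without the ``pre-antisymmetrization.'' This is a combinatorial fact about the symmetric group, and once available it trivializes most sign bookkeeping.

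For property (2), I would observe that the permutation carrying $\omega_{[\alpha_1\cdots\alpha_k}\eta_{\beta_1\cdots\beta_l]}$ to $\eta_{[\beta_1\cdots\beta_l}\omega_{\alpha_1\cdots\alpha_k]}$ consists of $kl$ transpositions (each index of $\eta$ crosses each index of $\omega$ exactly once), producing the sign $(-1)^{kl}$; the combinatorial prefactor is symmetric in $k$ and $l$. For property (1), using the lemma both sides reduce to $\frac{(k+l+m)!}{k!\,l!\,m!}\omega_{[\alpha_1\cdots\alpha_k}\eta_{\beta_1\cdots\beta_l}\theta_{\gamma_1\cdots\gamma_m]}$, establishing associativity.

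Property (5) follows by applying the definition of $d(\omega\wedge\eta)$, using the Leibniz rule for $\nabla$ to split into two terms, and then recognizing each term as a wedge product up to repositioning $\nabla$; moving $\nabla_\alpha$ past the $k$ indices of $\omega$ inside the antisymmetrization contributes the sign $(-1)^k$ by the same permutation count as in (2), yielding the graded Leibniz formula. Finally, for property (6), I would note that since the definition of $d$ is independent of the choice of symmetric connection $\nabla$ (a standard check: the Christoffel correction terms drop out under antisymmetrization because $\Gamma^\alpha_{\mu\nu}=\Gamma^\alpha_{\nu\mu}$), one may compute $d^2\omega$ using partial derivatives, and $\partial_{[\alpha}\partial_{\beta]}=0$ kills everything. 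Equivalently, one can observe that $2\nabla_{[\alpha}\nabla_{\beta]}\omega_{\gamma_1\cdots\gamma_k}$ is a sum of Riemann curvature contractions with $\omega$, and antisymmetrizing further over $\alpha,\beta,\gamma_1,\ldots,\gamma_k$ annihilates each term via the first Bianchi identity.

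The main obstacle will be keeping the combinatorial bookkeeping clean, especially in (1) and (5); the rest is essentially formal once the antisymmetrization lemma is in place. The calculation for (6) is the only place where one might worry about connection-dependence, but the symmetry of $\nabla$ makes this trivial.
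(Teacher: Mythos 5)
Your proposal is correct, and there is nothing to compare it against: the paper states Proposition~\ref{propertiesforms} in the appendix as standard background material and gives no proof at all. Your outline — linearity for (3) and (4), the nested-antisymmetrization lemma $T_{[[\alpha_1\cdots\alpha_k]\beta_1\cdots\beta_l]}=T_{[\alpha_1\cdots\alpha_k\beta_1\cdots\beta_l]}$ as the engine for (1), (2) and (5), the sign count $(-1)^{kl}$ for block transposition, and the observation that the symmetric Christoffel terms cancel under antisymmetrization so that $d^2\propto\partial_{[\alpha}\partial_{\beta]}=0$ — is the standard and complete argument, and your alternative route to (6) via the first Bianchi identity is consistent with how the paper itself manipulates $\nabla_{[\mu}\nabla_{\nu]}$ in Section~\ref{sec1.2}.
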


It is clear from these properties that if the $k$-form $\omega$ is given in local coordinates by \eqref{kform} above then
\[
d\omega = \sum_{\alpha_1 < \cdots < \alpha_k} \sum_{\alpha} \partial_\alpha \omega_{\alpha_1 \cdots \alpha_k} dx^{\alpha} \wedge dx^{\alpha_1} \wedge \cdots \wedge dx^{\alpha_k}.
\]

The last property in Proposition~\ref{propertiesforms} has a converse, known as the {\bf Poincar\'e Lemma}.

\begin{Lemma}{\em (Poincar\'e)}
If $d \omega = 0$ then locally $\omega = d \eta$.
\end{Lemma}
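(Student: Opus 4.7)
The plan is to construct an explicit local primitive via the Poincar\'e homotopy operator. Since the statement is local, I would work in a coordinate chart $(x^1,\ldots,x^n)$ centered at the given point $p$, and choose a star-shaped neighborhood $U$ of the origin (for instance, a coordinate ball). On such a $U$ I would build, for each $k \geq 1$, a linear map $K$ sending $k$-forms to $(k-1)$-forms, and establish the \emph{chain homotopy identity}
\[
dK\omega + K\,d\omega \;=\; \omega
\]
for every smooth $k$-form $\omega$ on $U$. Once this is in hand, the lemma follows immediately: if $d\omega = 0$ then $\omega = d(K\omega)$, so $\eta = K\omega$ is the desired local primitive.

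To construct $K$, I would use the radial retraction $F:[0,1]\times U \to U$ defined by $F(t,x) = tx$, which interpolates between the constant map to the origin ($t=0$) and the identity ($t=1$). Given a $k$-form $\omega = \sum \omega_{\alpha_1\cdots\alpha_k}(x)\,dx^{\alpha_1}\wedge\cdots\wedge dx^{\alpha_k}$, the pullback $F^*\omega$ is a form on $[0,1]\times U$ which decomposes as $F^*\omega = \alpha(t,x) + dt\wedge\beta(t,x)$, with $\beta$ a $(k-1)$-form on $U$ depending on $t$. I would then define
\[
(K\omega)(x) \;=\; \int_0^1 \beta(t,x)\,dt,
\]
or equivalently, in components,
\[
(K\omega)_{\alpha_2\cdots\alpha_k}(x) \;=\; \int_0^1 t^{k-1}\, x^{\alpha_1}\, \omega_{\alpha_1\alpha_2\cdots\alpha_k}(tx)\,dt.
\]

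The key step — and the main obstacle — is the verification of the homotopy identity. I would derive it by applying Cartan's magic formula $\mathcal{L}_X = d\iota_X + \iota_X d$ to the time-dependent radial vector field $X_t$ generating $F$ (at $y = tx$ one has $X_t(y) = y/t$), combined with the Lie-derivative formula
\[
\tfrac{d}{dt}\, F_t^*\omega \;=\; F_t^*(\mathcal{L}_{X_t}\omega) \;=\; F_t^*(d\iota_{X_t}\omega) + F_t^*(\iota_{X_t}d\omega).
\]
Integrating from $t=0$ to $t=1$ and using the fundamental theorem of calculus gives
\[
F_1^*\omega - F_0^*\omega \;=\; d\!\!\int_0^1 F_t^*(\iota_{X_t}\omega)\,dt + \int_0^1 F_t^*(\iota_{X_t}d\omega)\,dt.
\]
Since $F_1 = \mathrm{id}$ and $F_0^*\omega = 0$ for $k \geq 1$ (the pullback of any positive-degree form under a constant map vanishes), the left-hand side is $\omega$; the two integrals on the right are precisely $dK\omega$ and $K\,d\omega$, which establishes the identity.

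The delicate point worth checking is that the integrands extend smoothly down to $t=0$ despite the factor $1/t$ in $X_t$: the contraction $\iota_{X_t}\omega$ at $y=tx$ picks up a factor of $y=tx$ from the components of $X_t$, cancelling the singularity and leaving the smooth factor $t^{k-1}$ visible in the component formula above. Once smoothness and the identity are verified, the conclusion $\omega = d(K\omega)$ on $U$ follows at once, proving the lemma.
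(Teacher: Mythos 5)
Your argument is correct. Note, however, that the paper states the Poincar\'e Lemma in the appendix without proof (it is used as a black box, e.g.\ in the construction of isothermal coordinates in Chapter~2), so there is no proof of record to compare against; what you have supplied is the standard homotopy-operator proof, and it is complete. The essential ingredients are all in place: the reduction to a star-shaped coordinate neighborhood (a coordinate ball works), the definition of $K$ via the radial retraction $F(t,x)=tx$ and the decomposition $F^*\omega = \alpha + dt\wedge\beta$, the chain homotopy identity $dK\omega + K\,d\omega = \omega$ for $k\geq 1$, and the observation that $F_0^*\omega=0$ in positive degree. You are also right to flag the apparent $1/t$ singularity in the generating vector field $X_t$ as the one point requiring care: the contraction $\iota_{X_t}\omega$ together with the factor $t$ from each pushed-forward tangent vector produces the smooth integrand $t^{k-1}x^{\alpha_1}\omega_{\alpha_1\cdots\alpha_k}(tx)$, which is perfectly regular at $t=0$ for $k\geq 1$. (If you wish to avoid the time-dependent Lie derivative formalism entirely, the identity $dK+Kd=\mathrm{id}$ can also be verified by differentiating your explicit component formula under the integral sign and integrating by parts in $t$; this is slightly more computational but requires no limiting argument as $t\to 0^+$, where $F_t$ ceases to be a diffeomorphism.) The only implicit restriction worth stating is $\deg\omega\geq 1$, which is how the lemma is meant to be read.
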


A related result is the {\bf Frobenius Theorem}. Here we present a particular case of this result.

\begin{Thm}{\em (Frobenius)}
The nonvanishing $1$-form $\omega$ is locally orthogonal to a family of hypersurfaces if and only if $\omega \wedge d \omega = 0$.
\end{Thm}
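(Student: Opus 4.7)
The plan is to prove both implications by reducing them to the fact that, locally, $\omega$ is orthogonal to a family of hypersurfaces if and only if $\omega = g\,df$ for some smooth functions $g$ (nonvanishing) and $f$.

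First I would dispatch the easy direction. Suppose $\omega$ is locally orthogonal to a family of hypersurfaces. Shrinking the neighborhood, these hypersurfaces can be written as level sets of a smooth function $f$ with $df$ nowhere zero. At each point $p$ the kernel of $\omega_p$ must coincide with the kernel of $df_p$, and since both are codimension-one subspaces of $T_pM$, we get $\omega = g\,df$ for some smooth nonvanishing function $g$. Then a direct computation using $d^2 f = 0$ and $df \wedge df = 0$ gives
\[
d\omega = dg \wedge df, \qquad \omega \wedge d\omega = g\,df \wedge dg \wedge df = 0.
\]

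For the converse, assume $\omega \wedge d\omega = 0$. Since $\omega$ is nonvanishing, the pointwise kernels $D_p = \ker\omega_p \subset T_pM$ form a smooth codimension-one distribution $D$. The key observation is that $\omega \wedge d\omega = 0$ implies $d\omega$ vanishes on $D$: indeed, for any vector fields $X,Y \in D$ (so $\omega(X) = \omega(Y) = 0$) and any vector field $Z$ with $\omega(Z) \neq 0$ (which exists near any point since $\omega$ is nonvanishing), expanding
\[
0 = (\omega \wedge d\omega)(X,Y,Z) = \omega(X)\,d\omega(Y,Z) - \omega(Y)\,d\omega(X,Z) + \omega(Z)\,d\omega(X,Y)
\]
yields $\omega(Z)\,d\omega(X,Y) = 0$, hence $d\omega(X,Y) = 0$. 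Combined with the identity $d\omega(X,Y) = X\cdot\omega(Y) - Y\cdot\omega(X) - \omega([X,Y])$, this gives $\omega([X,Y]) = 0$, i.e.\ $[X,Y] \in D$.

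The main step, and the place where the real analytic content lives, is then to invoke the classical Frobenius integrability theorem for distributions: an involutive smooth distribution admits integral submanifolds, so around each point there exist local coordinates $(x^1,\ldots,x^n)$ in which $D$ is spanned by $\partial/\partial x^1,\ldots,\partial/\partial x^{n-1}$. In these coordinates the annihilator of $D$ is spanned by $dx^n$, so $\omega = g\,dx^n$ for some smooth nonvanishing function $g$, and the hypersurfaces $\{x^n = \text{const}\}$ are precisely the integral hypersurfaces orthogonal to $\omega$. The hard part is this appeal to the integrability theorem for distributions; everything else is bookkeeping with the algebraic identities for $d$ and $\wedge$.
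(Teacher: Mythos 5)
Your proof is correct, and it actually does more than the text itself: the notes only sketch the easy direction (observing that local orthogonality to the level sets of $f$ is equivalent to $\omega = g\,df$, whence $\omega\wedge d\omega = g\,df\wedge dg\wedge df = 0$), and leave the converse unproved. Your treatment of the easy direction coincides with that sketch. For the converse, your reduction is the standard one and is carried out correctly: the expansion of $(\omega\wedge d\omega)(X,Y,Z)$ for $X,Y\in\ker\omega$ gives $d\omega(X,Y)=0$, and combining this with the identity $d\omega(X,Y)=X\cdot\omega(Y)-Y\cdot\omega(X)-\omega([X,Y])$ (which is proved in the same appendix) yields involutivity of the distribution $\ker\omega$. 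The remaining content is then entirely outsourced to the integrability theorem for involutive distributions, which you flag explicitly; that is a legitimate move, but be aware that it is precisely the part the notes choose not to address, so in a self-contained account you would still owe a proof of that theorem (e.g.\ by straightening one vector field at a time, or by an induction on the rank of the distribution). One small point worth making explicit: smoothness of the factor $g$ in $\omega = g\,df$ follows from writing $g = \omega(Z)/df(Z)$ for any local vector field $Z$ transverse to the level sets.
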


To prove the easy direction in this equivalence it suffices to note that $\omega$ is locally orthogonal to a family of hypersurfaces $\{ f = \text{constant} \}$ if and only if $\omega = g df$ for some nonvanishing function $g$. Note that in particular this is always true for $1$-forms in $2$-dimensional manifolds.

We will now assume that our $n$-dimensional manifold is {\bf oriented}, that is, that an orientation can be, and has been, consistently chosen on every tangent space. Any $n$-form $\omega$ is written in local coordinates as
\[
\omega = a(x) dx^1 \wedge \cdots \wedge dx^n.
\]
If the coordinate system is positive, that is, if the coordinate basis $\{ \partial_1, \ldots, \partial_n \}$ has positive orientation at all points, we define
\[
\int_U \omega = \int_{x(U)} a(x) dx^1 \ldots dx^n,
\]
where $U$ is the coordinate neighborhood. This formula does not depend on the choice of local coordinates because $dx^1 \wedge \cdots \wedge dx^n$ transforms by the determinant of the change of variables. 

\begin{Thm}{\em (Stokes)}
If $M$ is an oriented $n$-dimensional manifold with boundary $\partial M$ and $\omega$ is an $(n-1)$-form then
\[
\int_M d\omega = \int_{\partial M} \omega.
\]
\end{Thm}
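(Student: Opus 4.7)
The plan is to reduce the theorem to a direct calculation in a single coordinate chart via a partition of unity argument, and then invoke the fundamental theorem of calculus. First I would fix a locally finite cover of $M$ by positively oriented coordinate charts $\{(U_\alpha, x_\alpha)\}$ such that each $U_\alpha$ is diffeomorphic either to an open subset of $\bbR^n$ (interior charts) or to an open subset of the half-space $\bbH^n = \{x \in \bbR^n \mid x^1 \leq 0\}$, with $\partial M \cap U_\alpha$ corresponding to $\{x^1 = 0\}$ (boundary charts). Choosing a subordinate partition of unity $\{\rho_\alpha\}$, and writing $\omega = \sum_\alpha \rho_\alpha \omega$, both sides of the identity split as convergent sums (locally finite, assuming $\omega$ has compact support, which is the standard hypothesis one needs), and by linearity it is enough to prove the identity for each $\rho_\alpha \omega$ separately. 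Thus I may assume $\omega$ has compact support inside a single coordinate chart.

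Next I would do the local calculation. Write
\[
\omega = \sum_{i=1}^n (-1)^{i-1} f_i(x) \, dx^1 \wedge \cdots \wedge \widehat{dx^i} \wedge \cdots \wedge dx^n,
\]
where the hat indicates omission of that factor and the signs are chosen so that
\[
d\omega = \left( \sum_{i=1}^n \partial_i f_i \right) dx^1 \wedge \cdots \wedge dx^n.
\]
For an interior chart, supp$(\omega)$ is compactly contained in $\bbR^n$, and by the fundamental theorem of calculus each integral $\int_{\bbR^n} \partial_i f_i \, dx^1 \cdots dx^n$ vanishes (integrate first in the $i$-th variable over $\bbR$, using that $f_i$ vanishes outside a bounded set). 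Since the chart does not meet $\partial M$, the boundary integral is also zero, so both sides agree trivially.

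For a boundary chart, supp$(\omega)$ is compactly contained in $\bbH^n$. For $i \geq 2$ the integral $\int_{\bbH^n} \partial_i f_i$ still vanishes by the same argument (the integration domain is unrestricted in directions $i \geq 2$). For $i = 1$, integrating first in $x^1$ from $-\infty$ to $0$ gives
\[
\int_{\bbH^n} \partial_1 f_1 \, dx^1 \cdots dx^n = \int_{\bbR^{n-1}} f_1(0, x^2, \ldots, x^n) \, dx^2 \cdots dx^n.
\]
On the other hand, pulling $\omega$ back to $\partial M$ annihilates every term except the $i=1$ term (since $dx^1$ vanishes on $\partial M$), leaving exactly the $(n-1)$-form $f_1(0,x^2,\ldots,x^n) \, dx^2 \wedge \cdots \wedge dx^n$, whose integral over $\partial M$ equals the right-hand side above, provided the induced orientation on $\partial M$ is the one making $(x^2,\ldots,x^n)$ positive. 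This orientation convention is the main bookkeeping point: one declares the induced orientation on $\partial M$ so that an outward-pointing vector followed by a positive frame of $\partial M$ gives a positive frame of $M$; with $\bbH^n = \{x^1 \leq 0\}$ the outward normal is $\partial_1$, and one checks that $(x^2,\ldots,x^n)$ is then positive on $\partial M$.

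The main obstacle is genuinely just the sign/orientation bookkeeping in the boundary case — making sure the signs $(-1)^{i-1}$ in the expansion of $\omega$, the orientation convention on $\partial M$, and the fundamental theorem of calculus all conspire to give the same sign on both sides. Once this is handled, the rest is either routine (partition of unity, local finiteness) or a direct one-variable integration. A subtlety worth flagging is the hypothesis on $\omega$: the statement as given needs either $M$ compact or $\omega$ compactly supported for the integrals to make sense, and I would state this explicitly at the outset.
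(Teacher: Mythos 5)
Your proof is correct: it is the standard partition-of-unity reduction to the half-space model followed by the fundamental theorem of calculus, and your sign bookkeeping is consistent with the orientation convention the text states immediately after the theorem (outward normal first, so that $(x^2,\ldots,x^n)$ is positive on $\partial M$). The paper itself states this theorem without proof in the appendix of background material, so there is nothing to compare against; your only substantive addition — that one must assume $\omega$ compactly supported or $M$ compact for the integrals to converge — is a hypothesis the paper indeed leaves implicit and is worth flagging.
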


In this theorem the orientations of $M$ and $\partial M$ are related as follows: if $\partial M$ is a level set of $x^1$ in a positive coordinate system and $\partial_1$ points outwards then the coordinate system $(x^2, \ldots, x^n)$ on $\partial M$ is positive.

If $M$ has a metric $g$ then its {\bf volume element} is the $n$-form $\epsilon$ which is $1$ when contracted with a positive orthonormal frame. In positive local coordinates we have
\[
\epsilon = \sqrt{|\det(g_{\mu\nu})|} \, dx^1 \wedge \cdots \wedge dx^n.
\]
It is easily seen that $\nabla \epsilon = 0$, where $\nabla$ is the Levi-Civita connection. If $\omega$ is a $k$-form then its {\bf Hodge dual} is the $(n-k)$-form $\star \omega$ given by
\[
(\star \omega)_{\beta_1 \cdots \beta_{n-k}} = \frac1{k!} \omega^{\alpha_1 \cdots \alpha_k} \epsilon_{\alpha_1 \cdots \alpha_k \beta_1 \cdots \beta_{n-k}}.
\]
The operator $\star$, called the Hodge star, can alternatively be defined as follows: if $\{\omega^1, \ldots, \omega^n \}$ is any positively oriented orthonormal coframe (so that the volume element is $\epsilon = \omega^1 \wedge \ldots \wedge \omega^n$) then
\begin{align*}
\star (\omega^1 \wedge \ldots \wedge \omega^k) & = g(\omega^1, \omega^1) \cdots g(\omega^k, \omega^k) \, \omega^{k+1} \wedge \ldots \wedge \omega^n \\
& = \epsilon((\omega^1)^\sharp, \ldots, (\omega^k)^\sharp, \ldots).
\end{align*}

\section*{Lie derivative} \label{secA.5}

A vector field $X$ can be identified with the differential operator that corresponds to taking derivatives along $X$. In local coordinates, this operator is given by
\[
X \cdot f = X^\mu \partial_\mu f.
\]
It turns out that the commutator of two vector fields $X$ and $Y$, regarded as differential operators, is also a vector field:
\begin{align*}
[X,Y] \cdot f  & = X \cdot (Y^\mu \partial_\mu f) - Y \cdot (X^\mu \partial_\mu f) \\
& = (X \cdot Y^\mu) \partial_\mu f + Y^\mu X^\nu \partial_\nu \partial_\mu f - (Y \cdot X^\mu) \partial_\mu f - X^\mu Y^\nu \partial_\nu \partial_\mu f \\
& = (X \cdot Y^\mu - Y \cdot X^\mu) \partial_\mu f.
\end{align*}

\begin{Def}
The {\bf Lie bracket} of two vector fields $X$ and $Y$ is the vector field
\[
[X,Y] = (X \cdot Y^\mu - Y \cdot X^\mu) \partial_\mu.
\]
\end{Def}

This operation is intimately related with the exterior derivative.

\begin{Prop}
If $\omega$ is a $1$-form then
\[
d \omega(X,Y) = X \cdot \omega(Y) - Y \cdot \omega(X) - \omega([X,Y])
\]
for all vector fields $X$ and $Y$.
\end{Prop}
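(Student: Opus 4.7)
The plan is to verify the identity by a direct computation in local coordinates, exploiting the explicit coordinate formula for $d\omega$ that follows from the exterior-derivative definition given earlier in the appendix. Both sides are manifestly $\bbR$-linear in $\omega$, $X$, and $Y$, so it suffices to expand each term of the right-hand side using $\omega = \omega_\alpha dx^\alpha$, $X = X^\mu \partial_\mu$, $Y = Y^\nu \partial_\nu$, and check that the result agrees with the coordinate formula for $d\omega(X,Y)$.

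First I would recall, from the definition $(d\omega)_{\alpha\beta} = 2\nabla_{[\alpha}\omega_{\beta]}$ for a symmetric connection, that the Christoffel symbols cancel by symmetry, so in local coordinates
\[
(d\omega)_{\alpha\beta} = \partial_\alpha \omega_\beta - \partial_\beta \omega_\alpha,
\]
and hence $d\omega(X,Y) = (\partial_\alpha \omega_\beta - \partial_\beta \omega_\alpha) X^\alpha Y^\beta$. Next, I would expand each piece of the right-hand side using the Leibniz rule for partial differentiation and the definition of the Lie bracket given just above the proposition:
\[
X \cdot \omega(Y) = X^\alpha \partial_\alpha(Y^\beta \omega_\beta) = X^\alpha (\partial_\alpha Y^\beta)\omega_\beta + X^\alpha Y^\beta \partial_\alpha \omega_\beta,
\]
with the analogous formula for $Y \cdot \omega(X)$, and
\[
\omega([X,Y]) = \omega_\beta\bigl(X^\alpha \partial_\alpha Y^\beta - Y^\alpha \partial_\alpha X^\beta\bigr).
\]

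The key step is then to observe that when these three expressions are assembled as $X\cdot\omega(Y) - Y\cdot\omega(X) - \omega([X,Y])$, the four terms containing derivatives of the component functions $X^\beta$ and $Y^\beta$ cancel in pairs against $\omega([X,Y])$, leaving exactly $X^\alpha Y^\beta(\partial_\alpha \omega_\beta - \partial_\beta \omega_\alpha) = d\omega(X,Y)$. There is no real obstacle here; the only thing to watch is bookkeeping of indices and signs, and the fact that the identity is coordinate-independent, which is guaranteed a posteriori because both sides are well-defined geometric objects (alternatively, one checks directly that the right-hand side is $C^\infty$-linear in $X$ and $Y$, so its coordinate expression transforms tensorially). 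A brief remark at the end could note that the same coordinate calculation works without reference to any connection, since $\partial_\alpha \omega_\beta - \partial_\beta \omega_\alpha$ is the coefficient of $dx^\alpha \wedge dx^\beta$ in $d\omega$ obtained directly from $\omega = \omega_\alpha \, dx^\alpha$ and the formula $d(f\, dx^{\alpha_1}\wedge\cdots\wedge dx^{\alpha_k}) = df \wedge dx^{\alpha_1}\wedge\cdots\wedge dx^{\alpha_k}$.
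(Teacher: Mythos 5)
Your proposal is correct and is essentially the same argument as the paper's: a direct computation in local coordinates starting from $(d\omega)_{\alpha\beta}=\partial_\alpha\omega_\beta-\partial_\beta\omega_\alpha$ (with the Christoffel symbols cancelling by symmetry of the connection), followed by the Leibniz rule and the cancellation of the derivative-of-components terms against $\omega([X,Y])$. The only cosmetic difference is that you expand the right-hand side and simplify to the left, whereas the paper runs the same computation in the opposite direction.
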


\begin{proof}
In local coordinates we have
\begin{align*}
d \omega(X,Y) & = ( \partial_\mu \omega_\nu - \partial_\nu \omega_\mu ) X^\mu Y^\nu = Y^\nu X \cdot \omega_\nu - X^\nu Y \cdot \omega_\nu \\
& = X \cdot (\omega_\nu Y^\nu) - \omega_\nu X \cdot Y^\nu - Y \cdot (\omega_\nu X^\nu) + \omega_\nu Y \cdot X^\nu \\
& = X \cdot \omega(Y) - Y \cdot \omega(X) - \omega_\nu (X \cdot Y^\nu - Y \cdot X^\nu).
\end{align*}
\end{proof}

If the vector field $X$ is nonzero at some point $p$ then there exists a coordinate system defined in a neighborhood of $p$ such that $X = \partial_1$. In fact, we just have to fix local coordinates $(x^2, \ldots, x^n)$ on a hypersurface $\Sigma$ transverse to $X$ at $p$ and let $x^1$ be the parameter for the flow of $X$ starting at $\Sigma$. If $T$ is any tensor, we define its {\bf Lie derivative} along $X$ as the tensor with components
\[
(\cL_X T)^{\alpha_1 \cdots \alpha_k}_{\beta_1 \ldots \beta_l} = \partial_1 T^{\alpha_1 \cdots \alpha_k}_{\beta_1 \ldots \beta_l}.
\]
This can be extended to points where $X$ vanishes by continuity. Although this definition seems to depend on the coordinate system, it is actually invariant. To check this, we just have to find an invariant expression for the Lie derivative of functions, vector fields and $1$-forms and then notice that the Leibnitz rule applies.

\begin{Prop}
If $X$ is a vector field then:
\begin{enumerate}
\item
$\cL_X f = X \cdot f$ for functions $f$;
\item
$\cL_X Y = [X,Y]$ for vector fields $Y$;
\item
$\cL_X \omega = X \contr d \omega + d (X \contr \omega)$ for $1$-forms $\omega$
\end{enumerate}
(where $\contr$ means contraction in the first index).
\end{Prop}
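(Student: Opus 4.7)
The plan is to verify each of the three formulas by working in the adapted coordinate system where $X = \partial_1$, in which the Lie derivative is defined simply as $\partial_1$ acting on components. Since the right-hand side of each formula is manifestly coordinate-invariant, establishing the three formulas will simultaneously confirm that the Lie derivative itself is invariantly defined; combined with the Leibniz rule, this determines $\cL_X$ on tensors of arbitrary type.

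First I would dispose of (1): in the adapted chart, $\cL_X f = \partial_1 f$ by definition, and $X \cdot f = X^\mu \partial_\mu f = \partial_1 f$ as well. For (2), in the adapted chart $\cL_X Y$ has components $\partial_1 Y^\mu$, while a direct computation gives
\[
[X,Y] \cdot f = \partial_1(Y^\mu \partial_\mu f) - Y^\mu \partial_\mu \partial_1 f = (\partial_1 Y^\mu)\, \partial_\mu f,
\]
so $[X,Y] = (\partial_1 Y^\mu) \partial_\mu = \cL_X Y$.

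For (3), I would compute both sides in components. Writing $\omega = \omega_\mu \, dx^\mu$ in the adapted chart, the definition gives $(\cL_X \omega)_\beta = \partial_1 \omega_\beta$. On the other hand, with $X = \partial_1$ we have $X \contr \omega = \omega_1$, so $d(X \contr \omega) = \partial_\beta \omega_1 \, dx^\beta$, while
\[
(X \contr d\omega)_\beta = X^\alpha (d\omega)_{\alpha\beta} = X^\alpha(\partial_\alpha \omega_\beta - \partial_\beta \omega_\alpha) = \partial_1 \omega_\beta - \partial_\beta \omega_1.
\]
Adding yields $(X \contr d\omega + d(X \contr \omega))_\beta = \partial_1 \omega_\beta$, matching $(\cL_X \omega)_\beta$.

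There is no real obstacle; the only subtle point is the preliminary fact that whenever $X_p \neq 0$ one can find coordinates with $X = \partial_1$ in a neighborhood of $p$, which is the standard straightening-out lemma sketched in the text using a transverse hypersurface and the flow of $X$. The three identities then extend to the zero set of $X$ by continuity, as already noted in the definition. Since each of (1)--(3) holds in every adapted chart and the right-hand sides are chart-independent, the Lie derivative is a well-defined geometric object on functions, vector fields and $1$-forms, and hence, by the Leibniz rule applied to tensor products and contractions, on all tensor fields.
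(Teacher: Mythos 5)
Your argument is correct, and for parts (1) and (2) it coincides with the paper's proof (immediate for functions; the adapted chart $X=\partial_1$ for the bracket). The only real divergence is in part (3): you verify the identity by a bare-hands component computation in the adapted chart, splitting $X\contr d\omega$ and $d(X\contr\omega)$ into $\partial_1\omega_\beta-\partial_\beta\omega_1$ and $\partial_\beta\omega_1$ and watching the cross terms cancel, whereas the text evaluates both sides on an arbitrary vector field $Y$ and invokes the previously proved invariant formula $d\omega(X,Y)=X\cdot\omega(Y)-Y\cdot\omega(X)-\omega([X,Y])$ together with the Leibnitz rule for $\cL_X$ applied to the scalar $\omega(Y)$. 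Your route is more elementary and self-contained (it does not need that lemma), at the cost of being tied to the adapted chart; the paper's route is coordinate-free once the $d\omega$ formula is in hand and makes the role of the Leibnitz rule explicit, which is the mechanism by which the definition then extends to all tensors. Your closing remarks on the straightening-out lemma, the extension by continuity to the zero set of $X$, and the invariance of the right-hand sides correctly supply the surrounding logic that the paper places in the preceding discussion rather than in the proof itself.
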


\begin{proof}
The formula for functions is immediate. In the coordinate system where $X = \partial_1$,
\[
\cL_X Y = \partial_1 Y^\mu \partial_\mu = (X \cdot Y^\mu - Y \cdot X^\mu) \partial_\mu = [X,Y].
\]
Finally, we have
\begin{align*}
(X \contr d \omega + d (X \contr \omega))(Y) & = d \omega(X,Y) + Y \cdot \omega(X) = X \cdot \omega(Y) - \omega([X,Y]) \\
& = \cL_X (\omega(Y)) - \omega(\cL_XY) = (\cL_X\omega) (Y),
\end{align*}
where we used the Leibnitz rule. This formula is sometimes called {\bf Cartan's magic formula}.
\end{proof}

\section*{Cartan structure equations} \label{secA.6}

Let $\{ E_\mu \}$ be an orthonormal frame, and $\{ \omega^\mu \}$ the corresponding orthonormal coframe, so that
\[
\omega^\mu(E_\nu)=\delta^\mu_{\,\,\,\,\nu}.
\]
Note that the metric can be written as
\[
ds^2 = \eta_{\mu\nu} \, \omega^\mu \otimes \omega^\nu,
\]
where $(\eta_{\mu\nu})=\diag(-1,1,1,1)$ is the flat space metric (which we will use to raise and lower indices).

\begin{Def}
The {\bf connection forms} associated to the orthonormal frame $\{ E_\mu \}$ are the $1$-forms $\omega^\mu_{\,\,\,\,\nu}$ such that
\[
\nabla_X E_\nu = \omega^\mu_{\,\,\,\,\nu}(X) E_\mu
\]
for all vector fields $X$. The {\bf curvature forms} associated this frame are the $2$-forms $\Omega^\mu_{\,\,\,\,\nu}$ such that
\[
R(X,Y) E_\nu =  \Omega^\mu_{\,\,\,\,\nu}(X,Y) E_\mu
\]
for all vector fields $X, Y$.
\end{Def}

Note that the components of the Riemann tensor in the orthonormal frame can be retrieved from the curvature forms by noticing that
\[
\Omega^\mu_{\,\,\,\,\nu} = R_{\alpha\beta\,\,\,\,\nu}^{\,\,\,\,\,\,\,\,\mu} \, \omega^\alpha \otimes \omega^\beta = \sum_{\alpha < \beta} R_{\alpha\beta\,\,\,\,\nu}^{\,\,\,\,\,\,\,\,\mu} \, \omega^\alpha \wedge \omega^\beta.
\]
These forms can be computed by using the so-called {\bf Cartan structure equations}. This is by far the most efficient way to compute the curvature. 

\begin{Thm}
The connection forms are the unique solution of {\bf Cartan's first structure equations}
\[
\begin{cases}
\omega_{\mu\nu}=-\omega_{\nu\mu}\\
d\omega^\mu + \omega^\mu_{\,\,\,\,\nu} \wedge \omega^\nu = 0
\end{cases},
\]
and the curvature forms are given by {\bf Cartan's second structure equations}
\[
\Omega^\mu_{\,\,\,\,\nu} = d\omega^\mu_{\,\,\,\,\nu} + \omega^\mu_{\,\,\,\,\alpha} \wedge \omega^\alpha_{\,\,\,\,\nu} \,.
\]
\end{Thm}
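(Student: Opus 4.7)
The plan is to address the three claims separately: antisymmetry, the first structure equation, its uniqueness, and finally the second structure equation. Throughout, the main computational tool will be the intrinsic formula
\[
d\omega(X,Y) = X \cdot \omega(Y) - Y \cdot \omega(X) - \omega([X,Y])
\]
for a $1$-form $\omega$, which was established earlier in the appendix.

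First I would prove antisymmetry by differentiating the constant function $\langle E_\mu, E_\nu \rangle = \eta_{\mu\nu}$ along an arbitrary vector field $X$ and invoking compatibility of the Levi-Civita connection with the metric:
\[
0 = X \cdot \eta_{\mu\nu} = \langle \nabla_X E_\mu, E_\nu\rangle + \langle E_\mu, \nabla_X E_\nu\rangle = \omega_{\nu\mu}(X) + \omega_{\mu\nu}(X),
\]
which gives $\omega_{\mu\nu} = -\omega_{\nu\mu}$. Next, I would evaluate both sides of the first structure equation on the pair $(E_\alpha, E_\beta)$. Since $\omega^\mu(E_\nu) = \delta^\mu_{\,\,\nu}$ is constant, the intrinsic formula collapses to $d\omega^\mu(E_\alpha, E_\beta) = -\omega^\mu([E_\alpha, E_\beta])$; the torsion-free condition $[E_\alpha, E_\beta] = \nabla_{E_\alpha}E_\beta - \nabla_{E_\beta}E_\alpha$ then expresses this bracket in terms of the connection forms, matching exactly $-(\omega^\mu_{\,\,\nu}\wedge\omega^\nu)(E_\alpha, E_\beta)$.

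For uniqueness, I would suppose that $\tilde\omega^\mu_{\,\,\nu}$ is another antisymmetric solution and set $\theta^\mu_{\,\,\nu} = \tilde\omega^\mu_{\,\,\nu} - \omega^\mu_{\,\,\nu}$, so that $\theta_{\mu\nu}$ is antisymmetric in $(\mu,\nu)$ and $\theta^\mu_{\,\,\nu} \wedge \omega^\nu = 0$. Writing $\theta^\mu_{\,\,\nu} = T^\mu_{\,\,\nu\alpha} \omega^\alpha$, the vanishing exterior product translates into the purely algebraic condition $T_{\mu\nu\alpha} = T_{\mu\alpha\nu}$, i.e.\ symmetry in the last two indices. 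Combined with antisymmetry in the first two, the standard swap-and-cycle argument (identical to the proof that the Levi-Civita connection is unique) forces $T \equiv 0$. This is the step I expect to require the most care to write up cleanly, though it is well-trodden.

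Finally, for the second structure equation, I would evaluate both sides on $(E_\alpha, E_\beta)$. On the left, the definition of $\Omega^\mu_{\,\,\nu}$ together with
\[
\nabla_{E_\alpha}\nabla_{E_\beta} E_\nu = \bigl(E_\alpha \cdot \omega^\mu_{\,\,\nu}(E_\beta)\bigr)E_\mu + \omega^\gamma_{\,\,\nu}(E_\beta)\omega^\mu_{\,\,\gamma}(E_\alpha)E_\mu
\]
and its swap yields
\[
\Omega^\mu_{\,\,\nu}(E_\alpha, E_\beta) = E_\alpha \cdot \omega^\mu_{\,\,\nu}(E_\beta) - E_\beta \cdot \omega^\mu_{\,\,\nu}(E_\alpha) - \omega^\mu_{\,\,\nu}([E_\alpha,E_\beta]) + \omega^\mu_{\,\,\gamma}(E_\alpha)\omega^\gamma_{\,\,\nu}(E_\beta) - \omega^\mu_{\,\,\gamma}(E_\beta)\omega^\gamma_{\,\,\nu}(E_\alpha).
\]
The first three terms are exactly $d\omega^\mu_{\,\,\nu}(E_\alpha, E_\beta)$ by the intrinsic formula, and the last two are exactly $(\omega^\mu_{\,\,\gamma}\wedge \omega^\gamma_{\,\,\nu})(E_\alpha, E_\beta)$, completing the proof. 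This last part is mechanical; the only conceptual obstacle in the whole argument is the uniqueness step, and even there the algebra is the familiar Koszul-style manipulation in disguise.
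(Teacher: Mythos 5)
Your proposal is correct, and for three of the four components (antisymmetry from metric compatibility, the first structure equation from the torsion-free condition via the intrinsic formula for $d\omega$, and the second structure equation by expanding $R(E_\alpha,E_\beta)E_\nu$ on the frame) it follows the paper's proof essentially verbatim. The one place you diverge is the uniqueness step. The paper observes that the two conditions in the first structure equations are \emph{equivalent} to compatibility and symmetry of the connection determined by the forms, and then simply invokes the already-known uniqueness of the Levi-Civita connection; uniqueness of the solution is inherited rather than re-proved. You instead take the difference $\theta^\mu_{\,\,\,\,\nu}$ of two antisymmetric solutions, extract from $\theta^\mu_{\,\,\,\,\nu}\wedge\omega^\nu=0$ the symmetry of $T_{\mu\nu\alpha}$ in its last two indices, and kill $T$ by the standard swap-and-cycle argument. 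Both are valid; the paper's route is shorter and leans on a result the reader already has, while yours is self-contained at the level of forms and makes explicit that the uniqueness is the same algebraic fact underlying the Koszul formula (as you yourself note). Your version of the second structure equation also fills in the computation that the paper leaves as a one-line instruction.
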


\begin{proof}
The first condition is equivalent to
\[
X \cdot \langle E_\mu, E_\nu \rangle = 0
\]
for all vector fields $X$, which in turn is equivalent to the compatibility of the connection with the metric. Using
\[
d \omega^\mu (X,Y) = X \cdot \omega^\mu(Y) - Y \cdot \omega^\mu(X) - \omega^\mu([X,Y])
\]
for all vector fields $X$ and $Y$, it is easy to see that the second condition is equivalent to
\[
[E_\alpha, E_\beta] = \nabla_{E_\alpha} E_\beta - \nabla_{E_\beta} E_\alpha,
\]
which in turn is equivalent to the symmetry of the connection. Since the Levi-Civita connection is the only connection which is symmetric and compatible with the metric, we conclude that Cartan's first structure equations have a unique solution.

Finally, the third condition can be derived by writing 
\begin{align*}
\Omega^\mu_{\,\,\,\,\nu}(X,Y) E_\mu = R(X,Y) E_\nu = \nabla_X \nabla_Y E_\nu - \nabla_Y \nabla_X E_\nu - \nabla_{[X,Y]} E_\nu
\end{align*}
in terms of the connection forms.
\end{proof}


\bibliographystyle{amsalpha}
\bibliography{math_rel}
\printindex
\end{document}